\newcommand{\cd}{\cdot}
\newcommand{\ra}{\rightarrow}
\newcommand{\pr}{\prime}
\newcommand{\de}{\partial}
\newcommand{\te}{\theta}
\newcommand{\C}{\mathbb{C}}
\newcommand{\Q}{\mathbb{Q}}
\newcommand{\R}{\mathbb{R}}
\newcommand{\N}{\mathbb{N}}
\newcommand{\Z}{\mathbb{Z}}
\newcommand{\A}{\mathcal{A}}
\newcommand{\B}{\mathcal{B}}
\newcommand{\OO}{\mathcal{O}}
\newcommand{\co}{\mathcal{O}}
\DeclareMathOperator{\Id}{Id}
\DeclareMathOperator{\Img}{Im}
\DeclareMathOperator{\flux}{flux}
\newcommand{\be}{\begin{equation}}
\newcommand{\ee}{\end{equation}}
\newcommand{\cC}{{\mathcal C}}
\newcommand{\cB}{{\mathcal B}}
\newcommand{\cD}{{\mathcal D}}
\newcommand{\cE}{{\mathcal E}}
\newcommand{\cF}{{\mathcal F}}
\newcommand{\Int}{{\rm Int}}
\newcommand{\Ext}{{\rm Ext}}
\newcommand{\Bd}{{\rm Bd}}
\newcommand{\cPalphaF}{{\mathcal P}(\alpha,F)}
\newcommand{\Ind}{{\rm flux}}
\newcommand{\Swap}{{\rm Swap}}
\DeclareMathAlphabet{\mathcal}{OMS}{cmsy}{m}{n}
\renewcommand{\coprod}{\rotatebox[origin = c]{180}{$\prod$}}
\newcommand{\dbtilde}[1]{\accentset{\approx}{#1}}
\newcommand{\hookuparrow}{\mathrel{\rotatebox[origin=c]{90}{$\hookrightarrow$}}}
\def\myupbracefill#1{\rotatebox{90}{\stretchto{\{}{#1}}}
\def\rlwd{.5pt}
\newcommand\notate[4][B]{
  \if B#1\else\def\myupbracefill##1{}\fi
  \def\useanchorwidth{T}
  \setbox0=\hbox{$\displaystyle#2$}
  \def\stackalignment{c}\stackunder[-6pt]{
    \def\stackalignment{c}\stackunder[-1.5pt]{
      \stackunder[2pt]{\strut $\displaystyle#2$}{\myupbracefill{\wd0}}}{
    \rule{\rlwd}{#3\baselineskip}}}{
  \strut\kern13pt$\rightarrow$\smash{\rlap{$~\displaystyle#4$}}}
}
\newcommand{\verteq}{\rotatebox{90}{$\,=$}}
\newcommand{\equalto}[2]{\underset{\overset{\mkern4mu\verteq}{#2}}{#1}}
\newtheorem{thm}{Theorem}[section]
\newenvironment{customthm}[1]
  {\innercustomthm}
  {\endinnercustomthm}
\theoremstyle{defin}
\newtheorem{defin}{Definition}[section]
\newtheorem*{defin*}{Definition}
\newtheorem{claim}{Claim}
\newtheorem*{note}{Note}
\newtheorem*{fact}{Fact}
\newtheorem*{remark}{Remark}
\theoremstyle{plain}
\newtheorem{lemma}{Lemma}[section]
\begin{document}

\title{Classification of Quantum Cellular Automata}

\author{Michael Freedman}
\address{\hskip-\parindent
	Michael Freedman\\
    Microsoft Research, Station Q, and Department of Mathematics\\
    University of California, Santa Barbara\\
    Santa Barbara, CA 93106\\}
\email{mfreedman@math.ucsb.edu}

\author{Matthew B.~Hastings}
\address{\hskip-\parindent
	Matthew Hastings\\
    Microsoft Research, Station Q\\
    University of California, Santa Barbara\\
    Santa Barbara, CA 93106\\}
\email{mfreedman@math.ucsb.edu}

\begin{abstract}
\end{abstract}

\begin{abstract}
There exists an index theory to classify strictly local quantum cellular automata in one dimension\cite{Gross2012,fermionGNVW1,fermionGNVW2}.  We consider two classification questions.  First, we study to what extent this index theory can be applied in higher dimensions via dimensional reduction, finding a classification by the first homology group of the manifold modulo torsion.  Second, in two dimensions, we show that an extension of this index theory (including torsion)
fully classifies quantum cellular automata, at least in the absence of fermionic degrees of freedom.  This complete classification in one and two dimensions by index theory is not expected to extend to higher dimensions due to recent evidence of a nontrivial automaton in three dimensions\cite{FHH}.
Finally, we discuss some group theoretical aspects of the classification of quantum cellular automata and consider these automata on higher dimensional real projective spaces.
\end{abstract}
\maketitle

\section{Introduction}
Quantum cellular automata (QCA) in one-dimension have been considered by several authors, in particular in Ref.~\cite{Gross2012}, and the existence of a nontrivial index has been shown.
In this paper, we first use topological arguments to push this index theory as far as we can in higher dimensions.  The result is, perhaps not surprisingly, a classification using the first homology group of the manifold; however, the classification does not detect torsion in this group.
These arguments begin with applications of covering space theory to the one dimensional index, but then require a long digression into algebra to establish a key dimension bound.
In a second part of this paper, we push algebraic arguments further for two dimensional QCA and show that the first homology group (including torsion) gives a full classification, at least for QCA without fermionic degrees of freedom.

Before describing our results, let us describe QCA in a general setting, as well as mentioning some possible generalizations.
A QCA is an automorphism of the $*$-algebra of operators acting on the Hilbert space of some system, subject to certain
locality constraints on the map described later.  For a finite system, this map can be described by
by conjugation by a unitary: for a QCA $\alpha$ and an operator $O$, we map $\alpha(O)=U^\dagger O U$ for some unitary $U$.
For this paper, we will restrict to this case of finite systems, giving bounds uniform in system size.

One example of QCA is a {\it finite depth quantum circuit} (fdqc), where the corresponding unitary is obtained by composing several individual unitaries, called {\it gates} such that
each gate acts on a set of some bounded diameter and such that the circuit has a bounded depth.
However, QCA that cannot be described by an fdqc also exist.  More precisely, since on any finite system any QCA can be represented by some quantum circuit, we mean that families of QCA exist on systems of increasing size with uniform bounds on the range of the QCA (defined below) such that the family of QCA cannot be described by a family of quantum circuits with uniformly bounded depth and range.  We describe several different ways of classifying possible QCA later.

For any graph $G$ with vertex set $V$, we can define a QCA (using some additional data) as follows.
We call the vertices of the graph {\it sites}, and 
each site has finite-dimensional Hilbert space (not necessarily the same on all sites) and the Hilbert space of
the whole system is the tensor product of these Hilbert spaces.
We call these finite-dimensional Hilbert spaces "degrees of freedom".
We use the graph metric on this graph to define a distance.
For us, a QCA of range $R$ will be an automorphism $\alpha$ of the algebra of operators on the Hilbert space of the system, such that for any operator $O_x$ supported on a site $x$,
we have that $\alpha(O_x)$ is supported on the set of sites within distance $R$ of site $x$.
Note that every QCA with range $R$ is a QCA with range $R'$ for all $R'\geq R$.
More generally, rather than a graph, we can consider degrees of freedom corresponding to sites  $\{x_i\}$ in a space $X$, called the "control space," where $X$ is a metric space. The simplest cases are when $X$ is a line, circle $S^1$, or graph $G$; we then define the range of the QCA using the metric of the control space.

If the graph is the graph of a $d$-dimensional hypercubic lattice with periodic boundary conditions (i.e., 
sites are labelled by a string of $d$ integers, identifying sites modulo some integer $L$, which is the side length of the cube)
we will call this a QCA in $d$-dimensions.  More generally, if the control space is a $d$-dimensional manifold then this
also is termed a QCA in $d$-dimensions.  
The original case of a one-dimensional QCA is the case that sites of the graph are labelled by a single integer, as studied previously\cite{Gross2012}.

The graph definition of a QCA with a sharp radius $R$ is the definition that we will use.  However, it is interesting to note that there are at least two possible ways that the definition can be generalized.  First, one might consider the case that the locality is not strict: one might require that for any $R$, $\alpha(O_x)$ can be approximated by an operator supported on the sets of sites within distance $R$ of $x$, up to an error that decays rapidly (perhaps exponentially) with $R$.  This case was called a "locality preserving unitary" in Ref.~\cite{Hastings2013}.
Even in the simplest 1D case, with $x_i$ located at the integer site $i \in \R$, it is not yet proven that the GNVW index, discussed below, is well-defined with such an exponential decay assumption. We will shortly see that Wedderburn's basic theorems on the structure of finite dimensional $C^\ast$-algebras and their subalgebras play a key role in defining the index. It appears that the most natural way to deal with exponential tails is to simply truncate them and accept that post-truncation what previously was a $\ast$-subalgebra now is only approximately closed under the operations $+$ and $\cd$. Repairing the proof seems to require a form of Hyers-Ulam-Rassias stability in the context of Wedderburn theory. For example, one would like to know if an approximate $\ast$-subalgebra of a full matrix algebra is in fact near an exact $\ast$-subalgebra.

Another possible generalization is to consider fermionic operators which anti-commute on distinct sites rather than commuting; this case was considered in one-dimension in Ref.~\cite{fermionGNVW1,fermionGNVW2}.
One might also combine these two possibilities, so at least four interesting cases exist, depending on whether one requires strict locality or not and depending on whether one considers fermions or not.  Here we consider the definition above, which is the case of strict locality without fermions.  We expect that the arguments in the topological section can be extend to the fermionic case without difficulty.

There are at least three interesting classification questions one might consider.  The first question is whether a given QCA can be described by a quantum circuit of bounded depth and range.
Two other questions are "path equivalence", i.e., whether there is a continuous path of QCA connecting two given QCA $\alpha,\beta$, and "blending", i.e., given two QCA $\alpha,\beta$ and two vertex sets $S,T$, whether there is a QCA with the same action as $\alpha$ for operators supported on $S$ and the same action as $\beta$ for operators supported on $T$ ("blending" is a generalization of "crossover" from Ref.~\cite{Gross2012}; we use the term blending to match terminology from topology).  We then outline the paper and sketch some of the results.

Let us recall that in Ref.~\cite{Gross2012} all of these classification questions were completely answered for one-dimensional QCA.  The key to their result was defining a certain index that was (in that paper) defined to be a positive rational, with this index being the only obstruction to path equivalence and blending.  It will be slightly more convenient for us to consider the logarithm of the quantity that they defined.  We call this logarithm the GNVW index after the authors of Ref.~\cite{Gross2012}, i.e., the GNVW index is equal to $\log(p/q)$ for integer $p,q$.
Sometimes we also use the physically suggestive word "flux" for this logarithm.

We use $\Id$ to represent the identity QCA and $I$ to represent an identity operator.
All algebras that we consider are $*$-algebras.

\subsection{Path Equivalence}
The second interesting classification question is whether or not two distinct QCA $\alpha_0,\alpha_1$ can be connected by a continuous path $\alpha_s$ while keeping the {\it range} of the QCA $\alpha_s$
bounded by some $R'$ which is a constant times the range $R$ of $\alpha_0$, with the constant independent of system size, local Hilbert space dimension, and all other data.
Since we study finite systems, by a "continuous path" we simply mean a continuous path of unitaries acting in some finite dimensional unitary group, i.e., we use a subspace topology where we restrict to unitaries which obey the locality conditions; for an infinite system it must be defined more carefully.
However, we will consider a slightly different question.  Similarly to what is done when studying phases of free fermion Hamiltonians\cite{Kitaev_2009,Hastings2013},
we consider two modifications of this question.

First we define equivalence of QCA by paths:
\begin{defin}
Given two QCA $\alpha,\beta$ with the same graph $G$ for each and the same Hilbert space on each site for each, we say that they are $R'$-path equivalent if
there exists a continuous path of QCA $\alpha_t$ with range $R'$ with $\alpha_0=\alpha,\alpha_1=\beta$.
\end{defin}

 Our first modification of the question is to modify the notion of equivalence by considering
a broader notion of equivalence that allows
 {\it stabilization} by {\it tensoring with additional degrees of freedom}, just as is done when considering free fermionic phases.
For each site, we will define a way to increase the Hilbert space dimension on that site.  If site $i$ had some Hilbert space dimension $d_i$, we allow one to 
increase the Hilbert space dimension by increasing the dimension to $d_i d'_i$ for any positive integer $d'_i$.  This may be done for any number of sites; we refer to the
tensor factor of dimension $d_i$ as the "original degree of freedom" on that site and the tensor factor of dimension $d'_i$ as the "ancilla degree of freedom on that site".  We replace the QCA $\alpha$ by $\alpha \otimes \Id$, where 
$\alpha$ acts on the original degrees of freedom and $\Id$ acts on the ancilla degrees of freedom.
This motivatives our next definition:
\begin{defin}
Given two QCA $\alpha,\beta$ with the same graph $G$ for each, we say that $\alpha,\beta$ are stably  $R'$-equivalent if
one may tensor with additional degrees of freedom so that
$\alpha \otimes \Id$ is $R'$-path equivalent to $\beta \otimes \Id$.

Note that $\alpha,\beta$ may have different Hilbert space dimensions on each site from each other, and so we may tensor with degrees of freedom with certain dimensions (for example, $\alpha$ may have dimension $3$ for some site, $\beta$ may have dimension $5$, and perhaps for both we increase the dimension to $15$) so that the tensor factor $\Id$ may differ in $\alpha \otimes \Id$
from that in $\beta \otimes \Id$.
\end{defin}

\begin{remark}
For brevity, we will sometimes just say "stably equivalent", dropping the word "path" and the $R'$ if one can pick $R'=O(R)$ independent of system size and Hilbert space dimension.
\end{remark}

\begin{remark}
Later, we will sometimes describe the tensor process in several steps.  For example, for a single site, we may describe tensoring in some additional degree of freedom with dimension $D'$, then another with dimension $D''$, and so on; the net effect is just to tensor in one degree of freedom with dimension $D' \cdot D'' \cdot \ldots$.
\end{remark}

The second way in which we modify the question can be explained by an example.  Suppose we have a system with the topology of a two torus.  For example, the sites are labelled by a pair of integers, $(i,j)$, in the range $0,L-1$ for some $L$ and we have an edge between sites $(i,j)$ and $(k,l)$ if $i=k \pm 1$ mod $L$ and $j=l$ or $i=k$ and $j=l\pm1$ mod $L$.
Since we will refer to this graph several times later, we call it $H(L)$.
Suppose we have the same dimension $d_i$ on each site and suppose $\alpha$ acts trivially on all sites, except the sites with $i=0$, where it acts as a shift: it maps an operator on $(i,j)$ to the corresponding operator on $(i,j+1)$.  This QCA $\alpha$ is not stably$R'$-equivalent to the identity for $R'=O(1)$; this can be seen by ignoring the $i$ coordinate and regarding this QCA as a one-dimensional QCA with sites labelled by $j$.  Then, this one-dimensional QCA has a nontrivial GNVW index, and hence it is not path equivalent to $\Id$.

This nontrivial index for the one-dimensional QCA is a simple example showing the application of the index theory to higher dimensions; section \ref{topologysection} considers this in more generality.
Indeed, two independent GNVW indices can be described for any QCA on a torus as one may ignore either of two coordinates (and other indices can be defined via Eq.~(\ref{fluxadds} later).

However, at the same time, we certainly would not like to consider this QCA as describing some interesting, truly two-dimensional behavior since it is simply an example of a one-dimensional shift QCA.
The way we will deal with this question is that we will show that some QCA is stably equivalent to another
QCA which acts trivially on all sites except those on some lower dimensional subset; for example, in this case, the QCA acts trivially everywhere except on one line while in general on a torus it may act trivially everywhere except on two lines.  We will explain this more precisely later.

Let us describe an interesting modification of this example. Consider the same graph $H(L)$.  Let $\alpha$ act trivially on all sites except those with $i=0$ and those with $i=L/2$.
On the sites with $i=0$, it acts as a shift: it maps an operator on $(i,j)$ to the corresponding operator on $(i,j+1)$, while on the sites with $i=L/2$ it acts as a shift in the opposite
direction, mapping an operator on $(i,j)$ to the corresponding operator on $(i,j-1)$; let us call these "positive" and "negative" flows respectively. It is not hard to show that this is $O(1)$-path equivalent to $\Id$ of QCA, as one may simply deform the two different flows towards each other along the path (i.e., describe a path where first we deform it to a QCA with a positive flow along the line $i=1$ and negative flow along the line $i=L/2-1$, then deform to shifts along lines $i=2$ and $i=L/2-2$, and so on, until the lines meet and we can cancel the shifts).
However, this does not give a quantum circuit of bounded depth and range: the depth diverges with $L$.

However, it is possible to give a quantum circuit of bounded depth and range using a trick reminiscent of the "Eilenberg swindle" in topology as follows. First, find a quantum circuit of bounded depth and range that has a positive flow for $i=0$ and a negative flow for $i=-1$; this follows from results of GNVW as we can treat the two lines $i=0$ and $i=1$ as a single one-dimensional QCA and the flow for that QCA vanishes.
In parallel, create positive flow for $i=2$ and negative flow for $i=3$ and so on, creating positive flow for $2k$ and negative flow for $2k+1$ for $2k=0,\ldots,L/2-1$.  Then, cancel the negative flow for $i=1$ against the positive flow for $i=2$, cancel the negative flow for $i=3$ against the positive flow for $i=4$ and so on, until all that is left is the flows for $i=0$ and $i=L/2$.
We will give a more detailed treatment of this later, once we define the incompressible flow in two dimensions.

Also, another subtlety occurs when we ask whether path equivalence to $\Id$ implies that something can be described by a quantum circuit.  On a finite system, if the path is smooth enough, we can describe the QCA by conjugation by some unitary $U$ written as
$$U={\mathcal S} \exp(\int_0^1 \eta_s {\rm d}s),$$
where the notation ${\mathcal S}$ means that we are computing an $s$-ordered exponential and $\eta_s$ is some anti-Hermitian matrix.
One question is whether $\eta_s$ is a sum of uniformly (in system size) bounded local operators (in the example above, the path where we moved the two paths each a distance of order $L$ would not allow $\eta_s$ to be a sum of uniformly bounded local operators as we deform the paths by a distance $~L$ over the same interval of $s$; however, the other path where we created and cancelled pairs of flows does allow $\eta_s$ to be written as a sum of uniformly bounded local operators).
Another problem though is whether or not this evolution can be described by a quantum circuit; certainly it can be approximated by a quantum circuit, but that only approximates the final unitary.
We do not consider these questions further; for the particular case of two-dimensions, we are fully able to understand when a QCA can be described by a quantum circuit and we do not yet consider higher dimensional cases beyond the discussion in section \ref{topologysection}.

\subsection{Blending}
The third classification question is blending.
We define:
\begin{defin}
Let $\alpha,\beta$ be QCA.
We say that $\alpha,\beta$`agree on $S$ if $\alpha(O)=\beta(O)$ for all $O$ supported on $S$.
\end{defin}

\begin{defin}
Given two QCA $\alpha,\beta$ and given two sets $S,T$, we say that a QCA $\gamma$ blends between $\alpha$ on $S$ and $\beta$ on $T$ if
$\gamma$ agrees with $\alpha$ on $S$ and $\gamma$ agrees with $\beta$ on $T$.
\end{defin}

Now, an interesting question is: given two QCA,
$\alpha,\beta$, both of range $R$, and which are disjoint with large distance between $S$ and $T$, does there exist another QCA $\gamma$ of range $O(R)$ which blends between $\alpha$ on $S$ and $\beta$ on $T$?

\subsection{Families of QCA and Circuits}
As noted at the start, we actually consider {\it families} of QCA, which we will take to be sequences $\{\alpha_i\}$, $0\leq i \leq \infty$.  Our use of the big-O notation $O(R)$ above, for example, implicitly refers to a family of QCA all with the same range.  Families are mostly implicit, rather than explicit, in this paper, but we now discuss this point further.

As an example, consider a QCA with range $R$ on a circle.  So long as $R$ is much smaller than the circumference $C$ of the circle, the GNVW index is well-defined.  However, composition of QCAs will increase the range, so that for any finite system eventually the range will become comparable to $C$ and the index will not be well-defined.  Indeed, for any finite number of degrees of freedom on a circle, a shift (even a shift by a single site) will have finite order. 
Families of QCAs give a mathematically precise solution to this problem: for a control space which is some manifold, each family consists of a sequence of QCAs on the same control space, with the radius of all QCA being fixed at $O(1)$ and with the metric on the control space being rescaled in the family so that the injectivity radius of the control space diverges.  Then, any finite composition will give an index which is well-defined for all but finitely many QCA.

Note that, given an arbitrary family of QCA on the same control space,
it is possible that different QCA in a family have different values of the indices that we define.  We show, however, that given two families $F_1,F_2$ of QCA, such that the indices of the QCA in $F_1$ differ from the corresponding indices in $F_2$, it is not possible to relate $F_1$ to $F_2$ by a family of fdqc.  

Nevertheless, the reader may wonder if there is a good way to define a family of QCA so that all QCA in the family are related by an fdqc (and as a result all but finitely many will have the same value of the indices we define).  We propose the following definition of such a {\it coherent} family.  Before giving the definition, let us rescale the metric for each QCA in the family, so that all QCA in the family have the same metric on the same control space so now the sequence of QCA $\alpha_0,\alpha_1,\ldots$ in the family have ranges $R_0,R_1,\ldots$ with $R_i\rightarrow 0$ as $i\rightarrow \infty$.  Let us fix $R_i=2^{-i}$.  Then, we define:
 
 \begin{defin}
Such a family of QCA is "coherent" if there exists a constant $c$ such that for all $i$, QCA $\alpha_i$ is stably $cR_i$-equivalent to $\alpha_{i+1}$.
\end{defin}

In a subsequent paper\cite{cohf}, we will show how to construct a coherent family starting from any given $\alpha_0$ with the range of $\alpha_0$ sufficiently small compared to the scale of the control space, and we will prove uniqueness of this family (up to paths), allowing us to dispense with families and return to the naive definition of QCAs.
 
Remark: We have chosen to focus only on the equivalence of $\alpha_i$ to $\alpha_{i+1}$.  If we instead consider, for example, the equivalence of $\alpha_i$ to $\alpha_{i+100}$, this involves two QCA with very different scales and so the stable path equivalence involves QCA that are very nonlocal compared to $\alpha_{i+100}$.  However, the definition above implies that $\alpha_i$ is stably $cR_i$ equivalent to $\alpha_j$ for all $j>i$.  Also, we have fixed $R_i=2^{-i}$ so that $R_{i+1}$ is not much smaller than $R_i$.
 
Remark: as an example consider a family where $\alpha_i$ is a shift by one on a circle with $2^i$ sites.  This family is coherent because $\alpha_i$ is stably equivalent to $\alpha_{i+1}$ as follows.  Tensor in an additional $2^i$ sites in between the existing sites so that the result is a shift by two on half the sites.  Then this is connected by an fdqc to $\alpha_{i+1}$.

\subsection{Outline and Results}
In section \ref{topologysection}, we use apply the one dimensional index theory to higher dimensional QCA.  One of the results (on additivity of flux when summing homology classes) will require applying the {\it two} dimensional theory that we develop in section \ref{algebraicsection}.
In this section, we do not refer specifically to the graph when defining a QCA, but simply consider some manifold as a control space, and assume that degrees of freedom correspond to points within the control space, with the QCA being short range with respect to some metric in the control space.
In this section, the existence of a nontrivial index (or indices) for some QCA will imply that it is not stably path equivalent to $\Id$.  However, in this section we do {\it not} consider questions of blending or of whether a QCA is path equivalent to some QCA which acts trivially except on some lower dimensional subspace.
This is because one can generate QCA with nontrivial indices using QCA which are shifts on essential cycles of the manifold.

In section \ref{algebraicsection}, we push the algebraic methods further in two dimensions.  Here we consider questions of blending.  Roughly, we show that every QCA in two dimensions can be blended to the identity QCA outside any disc.  This allows us to give a full classification in terms of the first homology group.

Finally, in section \ref{RPn}, we consider the real projective space $RP^n$ as an interesting examples of a control space.  Here there is an interplay of the topological and algebraic methods.

\section{GNVW Index in One Dimension and Higher}
\label{topologysection}
We now use topological methods to apply the index theory in higher dimensions.
In this section,
all QCA will be considered up to stabilization, whereby we tensor with the identity on any additional, locally finite, degrees of freedom called \textit{ancilla}.
In this section, we describe the locality of the QCA using a more general control space $X$, rather than just a graph $G$.

In subsection \ref{GNVWreview}, we review the GNVW index.
We modify slightly the GNVW discussion and also use the more physical word \textit{flux} for the index.
In subsection \ref{multiplicativesection}, we also establish some multiplicativity properties for flux under finite covers which can be interpretted as naturality for $f_\alpha$ under certain finite covers (Theorems \ref{2dcover}, \ref{3dcover}, and \ref{4dcover}).
In subsection \ref{reductionsection} we will prove (Theorem 2.4) that on any higher dimensional manifold (or simplical complex) $X$, for $R > 0$, but sufficiently small for any $R$ QCA $\alpha$, there is a well-defined flux $f_\alpha \in H_1^{\text{lf}}(X;M)/$torsion. The group structure of QCA are also studied.
Subsection \ref{productsection} gives information on QCAs defined on products $X \times R$, also from topological sources. It shows how to replace a general QCA with a periodic one, retaining a "germ."

Finally we note that although not treated here, \cites{fermionGNVW1,fermionGNVW2} gave a quite satisfactory fermionic generalization of the GNVW index (or flux). The new feature is that the values are now in the $\Z$-module:
\[
M^\pr = M \cup \frac{1}{2} M
\]
The Majorana fermion with quantum dimension $\sqrt{2}$ can now also flow about leading to halves of logarithms. Presumably all the theorems of this section can be put through their fermionic machinery to yield natural generalizations.

\subsection{Review of GNVW Index}
\label{GNVWreview}

We now review the GNVW index.
We will work initially with the case where the \textit{control space} $X = \R$ (reals), $\{x_i\} = \Z$ (integers), and $R = 1$. The other 1D cases with sharp cut-off $R > 1$ readily reduce to this one by clustering sites. Later we progress beyond these 1-dimensional cases but much is still unknown in higher dimensions.

\begin{note}
When $\{x_i\}$ is infinite the $\otimes_i$ symbol stands for the direct limit over the net of finite tensor products. Correctly scaled, this becomes the hyperfinite type II factor. However, the reader may freely assume $\{x_i\}$ have a large period and that $X$ is a large circle, and later that all graphs $G$ are also compact. Infinities play no essential role in our discussion.
\end{note}

We work in the category of finite dimensional, unital $\ast$-algebras over $\C$. Suppose $\mathcal{A} \subset \mathcal{B}_1 \otimes \mathcal{B}_2$.

\begin{defin}
	The \textit{support algebra} $S(\mathcal{A}, \mathcal{B}_1)$ is

	\begin{enumerate}
		\item The smallest $S$ s.t. $\mathcal{A} \subset S \otimes \mathcal{B}_2$, or
		\item The intersection of all $S^\pr$ s.t. $\mathcal{A} \subset S^\pr \otimes \mathcal{B}_2$, or
		\item The algebra generated by $\{A_{v,\mu}\}$, where $\{A_\nu\}$ is a basis for $\mathcal{A}$ and $\{E_\mu\}$ a basis for $\mathcal{B}_2$ and $A_v = \sum_\mu A_{v,\mu} \otimes E_\mu$. Note that diagramatically

		\begin{tikzpicture}
			\node at (-3,0) {$A_{v,\mu} =$};
			\draw  (-2,0.5) rectangle (-0.5,-0.5);
			\node at (-1.25,0) {$A_\mu$};
			\draw (-1.8,0.5) -- (-1.8, 1.4);
			\node at (-1.8,1.7) {$J_1$};
			\draw (-1.8,-0.5) -- (-1.8, -1.4);
			\node at (-1.8,-1.7) {$J_1$};

			\draw (0.5, 0) arc (0:-146:0.8 and 0.9);
			\draw (0.5, 0) arc (0:146:0.8 and 0.9);
			\node at (1, 0) {$E_\mu$};
			\draw [fill = black] (0.5,0) circle (0.3ex);
			\node at (-0.9,1) {$J_2$};
			\node at (-0.9,-0.9) {$J_2$};
			\node at (-1.2,-2.5) {where \hspace{2em} $= \text{Id}_{B_2}$ and \hspace{2em} $= \text{tr}_{B_2}$};
			\draw (-2.25, -2.4) arc (0:-180:0.4 and 0.2);
			\draw (-0.15, -2.5) arc (180:0:0.4 and 0.2);
		\end{tikzpicture}
	\end{enumerate}
\end{defin}

$S(\mathcal{A}, \mathcal{B}_2)$ is similarly defined switching $1 \xleftrightarrow{} 2$.

\begin{lemma}
\label{supportcommute}
	Inside $\mathcal{B}_1 \otimes \mathcal{B}_2 \otimes \mathcal{B}_3$, assume $\mathcal{A} \subset \mathcal{B}_1 \otimes \mathcal{B}_2 \otimes \Id$, and $\mathcal{A}^\pr \subset \Id \otimes \mathcal{B}_2 \otimes \mathcal{B}_3$. If $\A$ and $\A^\pr$ commute, $[\A, \A^\pr] = 0$, then $[S(\A, \B_2), S(\A^\pr, \B_2)] = 0$.
\end{lemma}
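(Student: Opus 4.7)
The plan is to work directly from description (3) of the support algebra. Fix a linear basis $\{E_\mu\}$ of $\B_1$ and a linear basis $\{F_\nu\}$ of $\B_3$. Every element $A \in \A \subset \B_1 \otimes \B_2 \otimes \Id$ then has a unique expansion
\[
A \;=\; \sum_\mu E_\mu \otimes A_\mu \otimes \Id, \qquad A_\mu \in \B_2,
\]
and by the definition the collection of all such $A_\mu$, as $A$ ranges over $\A$, generates $S(\A, \B_2)$ as an algebra. Similarly every $A^\pr \in \A^\pr$ has a unique expansion $A^\pr = \sum_\nu \Id \otimes A^\pr_\nu \otimes F_\nu$ with $A^\pr_\nu \in \B_2$, and these generate $S(\A^\pr, \B_2)$. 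Since two algebras whose generators pairwise commute must themselves commute, it suffices to show $[A_\mu, A^\pr_\nu] = 0$ for every choice of $A \in \A$, $A^\pr \in \A^\pr$, $\mu$, and $\nu$.

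For the key computation, note that $E_\mu$ sits in the first tensor slot and $F_\nu$ in the third, so each commutes with anything in the other two factors. Multiplying out gives
\[
A \cd A^\pr \;=\; \sum_{\mu, \nu} E_\mu \otimes A_\mu A^\pr_\nu \otimes F_\nu, \qquad A^\pr \cd A \;=\; \sum_{\mu, \nu} E_\mu \otimes A^\pr_\nu A_\mu \otimes F_\nu,
\]
and subtracting these,
\[
0 \;=\; [A, A^\pr] \;=\; \sum_{\mu, \nu} E_\mu \otimes [A_\mu, A^\pr_\nu] \otimes F_\nu.
\]
The simple tensors $E_\mu \otimes (\cd) \otimes F_\nu$ are linearly independent in $\B_1 \otimes \B_2 \otimes \B_3$ as $(\mu, \nu)$ varies, because $\{E_\mu\}$ and $\{F_\nu\}$ are bases of their respective factors. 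Hence each middle-factor coefficient $[A_\mu, A^\pr_\nu]$ must vanish.

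There is no deep obstacle here: commutation is being detected coefficient-by-coefficient once the operators are written against bases on the disjoint tensor slots. The only bookkeeping point worth flagging is that description (3) supplies algebra generators rather than a linear spanning set, so at the end one must invoke the elementary fact that if two sets of generators pairwise commute then the algebras they generate also commute; this follows by induction on word length from bilinearity of the commutator together with the Leibniz identities $[xy,z] = x[y,z] + [x,z]y$ and $[x,yz] = [x,y]z + y[x,z]$.
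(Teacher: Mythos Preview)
Your proof is correct and follows essentially the same approach as the paper: expand elements of $\A$ and $\A^\pr$ against bases of the outer tensor factors $\B_1$ and $\B_3$, compute the commutator, and use linear independence of $\{E_\mu \otimes F_\nu\}$ to conclude that each middle-factor commutator vanishes. The paper's version is slightly terser and also offers a diagrammatic variant, but the algebraic content is identical.
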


\begin{proof}
	Write $A_v = \sum_\mu E_\mu \otimes A_{v,\mu}$, \hspace{0.25em} $A^\pr_{v^\pr} = \sum_{\mu^\pr} A^\pr_{v^\pr,\mu^\pr} \otimes E^\pr_{\mu^\pr}$, then

	\[
	0 = [A_v, A^\pr_{v^\pr}] = \sum_{\mu, \mu^\pr} E_\mu \otimes [A_{v,\mu}, A^\pr_{v^\pr,\mu^\pr}] \otimes E^\pr_{\mu^\pr}
	\]

	Since $\{E_\mu \otimes E^\pr_{\mu^\pr}\}$ are linearly independent, in fact a basis of $\B_1 \otimes B_3$, all $[A_{v,\mu}, A^\pr_{v^\pr, \mu^\pr}] = 0$ implying the two support algebras commute.

	Alternative diagramatic proof:

	\begin{center}
	\begin{tikzpicture}
		\node at (-3.2,0) {$A_{v, \mu} A^\prime_{v^\prime, \mu^\prime} =$};
		\draw  (-1,-0.1) rectangle (0.1,-0.5);
		\draw  (-0.3,0.3) rectangle (0.8,0.7);
		\draw (-0.2,1.3) -- (-0.2,0.7);
		\draw (-0.2, 0.3) -- (-0.2, -0.1);
		\draw (-0.2,-0.5) -- (-0.2,-1.1);
		\draw (-1.6, -0.3) arc (180:19:0.4 and 0.6);
		\draw (-1.6, -0.3) arc (-180:-19:0.4 and 0.6);
		\draw [fill = black] (-1.6, -0.3) circle (0.3ex);
		\node at (-1.9,-0.6) {$E_\mu$};
		\node at (-0.45,-0.32) {$A_v$};
		\node at (0.25,0.48) {$A^\prime_{v^\prime}$};
		\draw (1.4, 0.5) arc (0:161:0.4 and 0.6);
		\draw (1.4, 0.5) arc (0:-161:0.4 and 0.6);
		\draw [fill = black] (1.4, 0.5) circle (0.3ex);
		\node at (1.7,0.2) {$E_{\mu^\prime}$};

		\node at (2.4,0) {$=$};
		\draw  (4,-0.1) rectangle (5.1,-0.5);
		\draw  (4.7,0.3) rectangle (5.8,0.7);
		\draw (4.8,1.3) -- (4.8,0.7);
		\draw (4.8, 0.3) -- (4.8, -0.1);
		\draw (4.8,-0.5) -- (4.8,-1.6);
		\draw (3.4, 0) arc (180:-4:0.5 and 1.5);
		\draw (3.4, 0) arc (-180:-19.5:0.5 and 1.5);
		\draw [fill = black] (3.4,0.1) circle (0.3ex);
		\node at (3.1,0.3) {$E_\mu$};
		\node at (4.55,-0.32) {$A_v$};
		\node at (5.25,0.5) {$A^\prime_{v^\prime}$};
		\draw (6.4, 0) arc (0:152:0.5 and 1.5);
		\draw (6.4, 0) arc (0:-191:0.5 and 1.5);
		\draw [fill = black] (6.4, 0.1) circle (0.3ex);
		\node at (6.8,0) {$E_{\mu^\prime}$};

		\node at (7.7,0) {$=$};
		\draw [dashed] (3.8,1.1) rectangle (6,-1.3);
		\node at (7.7,-0.4) {apply};
		\node at (7.7,-0.9) {hypothesis};
		\node at (7.7,-1.4) {in dotted box};

		\draw  (-1.7,-2.3) rectangle (0.1,-2.7);
		\draw  (-0.5,-4) rectangle (1.3,-4.4);
		\draw (-0.2,-1.9) -- (-0.2, -2.3);
		\draw (-0.2,-2.7) -- (-0.2,-4);
		\draw (-0.2,-4.4) -- (-0.2, -4.9);
		\draw (-2.2, -3.4) arc (180:47:0.5 and 1.5);
		\draw (-2.2, -3.4) arc (180:387.5:0.5 and 1.5);
		\node at (-2.7,-3.4) {$=$};
		\draw [fill = black] (-2.1,-2.5) circle (0.3ex);
		\node at (-2.5,-2.5) {$E_\mu$};
		\draw (1.8, -3.4) arc (0:203.5:0.5 and 1.5);
		\draw (1.8, -3.4) arc (0:-138.5:0.5 and 1.5);
		\draw [fill = black] (1.7,-4.2) circle (0.3ex);
		\node at (2.1,-4.1) {$E^\prime_{\mu^\prime}$};
		\node at (-0.8,-2.5) {$A_v$};
		\node at (0.5,-4.2) {$A^\prime_{v^\prime}$};

		\node at (2.5,-3.4) {$=$};
		\node at (-3.2,0) {$A_{v, \mu} A^\prime_{v^\prime, \mu^\prime} =$};
		\draw  (4,-3.2) rectangle (5.1,-2.8);
		\draw  (4.7,-3.6) rectangle (5.8,-4);
		\draw (4.8,-2.2) -- (4.8,-2.8);
		\draw (4.8,-3.2) -- (4.8,-3.6);
		\draw (4.8,-4) -- (4.8,-4.6);
		\draw (3.4,-3) arc (180:19:0.4 and 0.6);
		\draw (3.4,-3) arc (-180:-19:0.4 and 0.6);
		\draw [fill = black] (3.4,-3) circle (0.3ex);
		\node at (3.1,-3.3) {$E_\mu$};
		\node at (4.6,-3) {$A_v$};
		\node at (5.2,-3.82) {$A^\prime_{v^\prime}$};
		\draw (6.4,-3.8) arc (0:161:0.4 and 0.6);
		\draw (6.4,-3.8) arc (0:-161:0.4 and 0.6);
		\draw [fill = black] (6.4,-3.8) circle (0.3ex);
		\node at (6.7,-4.1) {$E_{\mu^\prime}$};

		\node at (7.9,-4.8) {$= A^\prime_{v^\prime, \mu^\prime}A_{v,\mu}$};
	\end{tikzpicture}
	\end{center}
	\vspace{-2em}
\end{proof}

Recall that we have grouped sites if necessary so that $R = 1$. Next we group these sites $\{x_i\} = \Z$ in pairs along the control space $X = \R$. Let us draw a diagram indicating the support of certain source and target operator algebras. Let $\mathcal{O}_i = \mathrm{End}(\mathcal{H}_i)$. The support algebras $S$ lie in $\mathrm{End}(\mathcal{H}_{2i+1}) \otimes \mathrm{End}(H_{2i+2})$. We denoted $\mathrm{End}(\mathcal{H}_i)$ by $\OO_i$.

\begin{figure}[ht]
\centering
\begin{tikzpicture}[scale = 1.1]
	\draw[fill=black] (0,-0.1) circle (0.3ex);
	\draw[fill=black] (0,-0.1) circle (0.3ex);
	\node at (0,-0.4) {$\mathcal{O}_{2i-1}$};
	\draw  (-0.2,0.6) rectangle (1.5,0.2);
	\draw  (-0.2,1.2) rectangle (1.5,0.8);
	\node at (0.7,0.4) {$S_{2i-1}$};
	\node at (0.7,1) {$S_{2i}$};
	\draw[fill=black] (1.3,-0.1) circle (0.3ex);
	\draw[fill=black] (2.6,-0.1) circle (0.3ex);
	\draw[fill=black] (3.9,-0.1) circle (0.3ex);
	\draw[fill=black] (5.2,-0.1) circle (0.3ex);
	\draw[fill=black] (6.5,-0.1) circle (0.3ex);
	\draw  (2.4,1.2) rectangle (4.1,0.8);
	\draw  (2.4,1.8) rectangle (4.1,1.4);
	\draw  (5,2.4) rectangle (6.7,2);
	\draw  (5,1.8) rectangle (6.7,1.4);
	\node at (3.3,1) {$S_{2i+1}$};
	\node at (3.3,1.6) {$S_{2i+2}$};
	\node at (5.9,1.6) {$S_{2i+3}$};
	\node at (5.9,2.2) {$S_{2i+4}$};
	\draw (1.95, -0.1) ellipse (1 and 0.2);
	\draw (4.55, -0.1) ellipse (1 and 0.2);
	\node at (1.3,-0.5) {$\mathcal{O}_{2i}$};
	\node at (2.6,-0.5) {$\mathcal{O}_{2i+1}$};
	\node at (3.9,-0.5) {$\mathcal{O}_{2i+2}$};
	\node at (5.2,-0.5) {$\mathcal{O}_{2i+3}$};
	\node at (6.5,-0.5) {$\mathcal{O}_{2i+4}$};
\end{tikzpicture}
\caption{}
\end{figure}

Legend: $S_{2i+1} = S(\alpha(\OO_{2i} \otimes \OO_{2i+1}), \OO_{2i+1} \otimes \OO_{2i+2})$ and $S_{2i+2} = S(\alpha(\OO_{2i+2} \otimes \OO_{2i+3}), \OO_{2i+1} \otimes \OO_{2i+2})$.

\begin{lemma}["Columns Commute"]
\label{columnscommute}
	$[S_{2i+1}, S_{2i+2}] = 0$.
\end{lemma}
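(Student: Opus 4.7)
The plan is to deduce \ref{columnscommute} as a direct application of Lemma \ref{supportcommute}, once we identify the correct three-fold tensor decomposition of the ambient algebra.

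First I would set up the ambient factorization. Since the QCA $\alpha$ has range $R = 1$, the algebras $\alpha(\OO_{2i} \otimes \OO_{2i+1})$ and $\alpha(\OO_{2i+2} \otimes \OO_{2i+3})$ are each supported on four consecutive sites, namely on $\OO_{2i-1} \otimes \OO_{2i} \otimes \OO_{2i+1} \otimes \OO_{2i+2}$ and on $\OO_{2i+1} \otimes \OO_{2i+2} \otimes \OO_{2i+3} \otimes \OO_{2i+4}$ respectively. These two four-site windows overlap precisely in the two-site slab $\OO_{2i+1} \otimes \OO_{2i+2}$, which is exactly the slab where the support algebras $S_{2i+1}$ and $S_{2i+2}$ live.

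Next I would apply the lemma. Take $\B_1 = \OO_{2i-1} \otimes \OO_{2i}$, $\B_2 = \OO_{2i+1} \otimes \OO_{2i+2}$, and $\B_3 = \OO_{2i+3} \otimes \OO_{2i+4}$, and set $\A = \alpha(\OO_{2i} \otimes \OO_{2i+1}) \subset \B_1 \otimes \B_2 \otimes \Id$ and $\A^\pr = \alpha(\OO_{2i+2} \otimes \OO_{2i+3}) \subset \Id \otimes \B_2 \otimes \B_3$. The commutation hypothesis $[\A, \A^\pr] = 0$ holds because $\OO_{2i} \otimes \OO_{2i+1}$ and $\OO_{2i+2} \otimes \OO_{2i+3}$ live on disjoint sites, so they commute, and $\alpha$ is a $\ast$-algebra automorphism and therefore preserves this commutation. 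Lemma \ref{supportcommute} then gives $[S(\A, \B_2), S(\A^\pr, \B_2)] = 0$, which is exactly $[S_{2i+1}, S_{2i+2}] = 0$.

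There is no real obstacle beyond bookkeeping: the only thing to check carefully is that the indices in the definition of $S_{2i+1}, S_{2i+2}$ really do place both support algebras inside the common $\B_2 = \OO_{2i+1} \otimes \OO_{2i+2}$, which follows from the $R=1$ range assumption after the clustering of sites explained just before the figure. The essential input is that an automorphism carries commuting subalgebras to commuting subalgebras, and Lemma \ref{supportcommute} propagates that commutation down to the minimal supports on the common slab.
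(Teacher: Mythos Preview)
Your proposal is correct and is exactly the paper's own argument: apply Lemma~\ref{supportcommute} with $\B_1 = \OO_{2i-1}\otimes\OO_{2i}$, $\B_2 = \OO_{2i+1}\otimes\OO_{2i+2}$, $\B_3 = \OO_{2i+3}\otimes\OO_{2i+4}$, using that $\alpha$ preserves the commutation of the disjointly supported $\OO_{2i}\otimes\OO_{2i+1}$ and $\OO_{2i+2}\otimes\OO_{2i+3}$. The paper states this in a single sentence; you have simply unpacked the bookkeeping.
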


\begin{proof}
	Apply Lemma \ref{supportcommute} to the two subalgebras $\alpha(\OO_{2i} \otimes \OO_{2i+1})$ and $\alpha(\OO_{2i+2} \otimes \OO_{2i+3})$ of $(\OO_{2i-1} \otimes \OO_{2i}) \otimes (\OO_{2i+1} \otimes \OO_{2i+2}) \otimes (\OO_{2i+3} \otimes \OO_{2i+4})$, using the fact that $\OO_{2i} \otimes \OO_{2i+1}$ and $\OO_{2i+2} \otimes \OO_{2i+3}$ commute by disjointness.
\end{proof}

\begin{lemma}
	All distinct support algebras $S_x$, $x \in \Z$ commute.
\end{lemma}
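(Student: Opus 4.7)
My plan is to reduce the statement to Lemma \ref{columnscommute} by a short case analysis on which pair of sites each $S_x$ is supported in. From the legend, every $S_x$ lives within a single \emph{column} of the form $\OO_{2k+1} \otimes \OO_{2k+2}$; within any such column, the indexing places exactly two support algebras, namely $S_{2k+1}$ (the right tail of $\alpha(\OO_{2k} \otimes \OO_{2k+1})$) and $S_{2k+2}$ (the left tail of $\alpha(\OO_{2k+2} \otimes \OO_{2k+3})$). The assignment $x \mapsto \lfloor (x-1)/2 \rfloor$ records which column $S_x$ lives in, and is two-to-one.

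First I would dispose of the easy case: if $S_x$ and $S_y$ live in different columns, then those columns consist of disjoint pairs of sites, so the two algebras sit in disjoint tensor factors of $\bigotimes_i \OO_i$ and commute for the trivial reason.

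The remaining case is $S_x$ and $S_y$ supported in a common column. The indexing then forces $\{x,y\} = \{2k+1, 2k+2\}$ for some $k$. This is precisely the configuration Lemma \ref{columnscommute} already handles, giving $[S_{2k+1}, S_{2k+2}] = 0$. No further work is needed.

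I do not anticipate any obstacle: the present lemma is a bookkeeping promotion of ``columns commute'' to all distinct index pairs, and the actual algebraic content was already extracted in Lemma \ref{supportcommute} applied to the disjointness of the source blocks $\OO_{2k}\otimes\OO_{2k+1}$ and $\OO_{2k+2}\otimes\OO_{2k+3}$. The only thing to be careful about is to state clearly that distinct columns are disjoint (so that the trivial case really is trivial) and that the same-column case exhausts the remaining possibilities given by the indexing convention.
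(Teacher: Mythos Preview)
Your proposal is correct and matches the paper's own argument exactly: the paper's one-line proof is ``Those not in the same column clearly commute since their supports are disjoint,'' with the same-column case already covered by the preceding Lemma~\ref{columnscommute}. You have simply spelled out the bookkeeping that the paper leaves implicit.
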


\begin{proof}
	Those not in the same column clearly commute since their supports are disjoint.
\end{proof}

\begin{lemma}
\label{full}
	$\OO_{2i+1} \otimes \OO_{2i+2} = S_{2i+1} \otimes S_{2i+2}$. By Wedderburn's classification of simple $\ast$-algebras this implies that each of the support algebras $S_y$ is a full matrix algebra.
\end{lemma}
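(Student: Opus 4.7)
The plan is to split the claim into two subgoals: (i) the join $S_{2i+1}\vee S_{2i+2}$, i.e.\ the $\ast$-subalgebra of $\OO_{2i+1}\otimes \OO_{2i+2}$ generated by $S_{2i+1}$ and $S_{2i+2}$, is the full matrix algebra $\OO_{2i+1}\otimes \OO_{2i+2}$; and (ii) the multiplication map $S_{2i+1}\otimes S_{2i+2}\to \OO_{2i+1}\otimes \OO_{2i+2}$, $a\otimes b\mapsto ab$, is an isomorphism. Together these yield the stated equality; Wedderburn then identifies each $S_y$ with a full matrix algebra.

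For (i), I would exploit that $\alpha$ is an automorphism, so $\alpha^{-1}$ is also a QCA of bounded range; after the standard regrouping $\alpha^{-1}$ is range one, giving $\alpha^{-1}(\OO_{2i+1}\otimes \OO_{2i+2}) \subset \OO_{2i}\otimes \OO_{2i+1}\otimes \OO_{2i+2}\otimes \OO_{2i+3}$. Applying $\alpha$ yields
\[
\OO_{2i+1}\otimes \OO_{2i+2}\ \subset\ \alpha(\OO_{2i}\otimes \OO_{2i+1})\cdot \alpha(\OO_{2i+2}\otimes \OO_{2i+3}).
\]
By the definition of support algebra the two factors on the right sit in $\OO_{2i-1}\otimes \OO_{2i}\otimes S_{2i+1}$ and $S_{2i+2}\otimes \OO_{2i+3}\otimes \OO_{2i+4}$ respectively, and they commute, so the algebra they generate lies in $\OO_{2i-1}\otimes \OO_{2i}\otimes (S_{2i+1}\vee S_{2i+2})\otimes \OO_{2i+3}\otimes \OO_{2i+4}$. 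A normalized partial trace over the outer four tensor factors then extracts the middle component and forces $\OO_{2i+1}\otimes \OO_{2i+2}\subset S_{2i+1}\vee S_{2i+2}$; the reverse inclusion is automatic.

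For (ii), the centers become trivial. If $z\in Z(S_{2i+1})$, then $z$ commutes with $S_{2i+1}$ by definition and with $S_{2i+2}$ by Lemma~\ref{columnscommute}, so $z$ commutes with the join, which by (i) is $\OO_{2i+1}\otimes \OO_{2i+2}$; hence $z\in\C\cdot I$. Thus $Z(S_{2i+1})=Z(S_{2i+2})=\C$, and by Wedderburn both are full matrix algebras. Because the two subalgebras commute, the multiplication map $m$ is a well-defined $\ast$-algebra homomorphism; by (i) it is surjective. Its source $S_{2i+1}\otimes S_{2i+2}$ is a tensor product of two full matrix algebras, hence simple, so the kernel of $m$ vanishes and $m$ is an isomorphism.

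The main obstacle is step (i), where the range boundedness of $\alpha^{-1}$ is the essential input that distinguishes an \emph{automorphism} from a mere endomorphism. In the strict-locality 1D setting this is a standard consequence of the structure theory of QCA; without it, one can only establish containment of the relevant image inside $\OO_{2i-1}\otimes \OO_{2i}\otimes (S_{2i+1}\vee S_{2i+2})\otimes \OO_{2i+3}\otimes \OO_{2i+4}$ as a one-sided bound, rather than the fullness required for the lemma.
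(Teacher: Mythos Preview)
Your argument is correct, but it differs from the paper's. The paper proceeds by a commutant argument: if the join $S_{2i+1}\vee S_{2i+2}$ were a proper $\ast$-subalgebra of the simple algebra $\OO_{2i+1}\otimes\OO_{2i+2}$, there would be a non-scalar $z\in\OO_{2i+1}\otimes\OO_{2i+2}$ commuting with both $S_{2i+1}$ and $S_{2i+2}$; by disjointness $z$ then commutes with every $S_y$, hence with all of $\Img(\alpha)=\A$, contradicting simplicity of $\A$. In contrast, you establish fullness constructively by invoking the range bound on $\alpha^{-1}$ to trap $\OO_{2i+1}\otimes\OO_{2i+2}$ inside $\alpha(\OO_{2i}\otimes\OO_{2i+1})\cdot\alpha(\OO_{2i+2}\otimes\OO_{2i+3})$ and then project down via a partial trace. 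The paper's route is shorter and uses only surjectivity of $\alpha$ together with the support-algebra containments already established, never appealing to the locality of $\alpha^{-1}$. Your route, on the other hand, is more explicit about the distinction between the abstract tensor product $S_{2i+1}\otimes S_{2i+2}$ and the internally generated subalgebra $S_{2i+1}\vee S_{2i+2}$---a point the paper's notation elides---and your step~(ii) makes the passage from ``join equals the full algebra'' to ``tensor product equals the full algebra'' cleanly transparent.
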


\begin{proof}
	Clearly $S_{2i+1} \otimes S_{2i+2} \subset \OO_{2i+1} \otimes \OO_{2i+2}$. If the inclusion is proper, then there is a $z \in \OO_{2i+1} \otimes \OO_{2i+2}$, not a multiple of $\Id$, which commutes with $S_{2i+1} \otimes S_{2i+2}$. But again by disjointness $z$ would commute with all the $S$ and therefore commute with all of $\Img(\alpha) = \A$, contradicting Wedderburn's Theorem that full matrix algebras are simple.
\end{proof}

\begin{lemma}
\label{isproduct}
	$\alpha(\OO_{2i} \otimes \OO_{2i+1}) = S_{2i} \otimes S_{2i+1}$
\end{lemma}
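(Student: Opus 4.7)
The plan is to establish the claimed equality by first recording the easy inclusion and then ruling out any nontrivial commutant of the left-hand side inside the right-hand side, exactly parallel in spirit to the proof of Lemma \ref{full}.

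First I would verify $\alpha(\OO_{2i}\otimes \OO_{2i+1})\subseteq S_{2i}\otimes S_{2i+1}$. By the range-$1$ hypothesis, $\alpha(\OO_{2i}\otimes \OO_{2i+1})$ sits inside $\OO_{2i-1}\otimes \OO_{2i}\otimes \OO_{2i+1}\otimes \OO_{2i+2}$, and by the very definition of the support algebras $S_{2i}$ (on the left column $(2i-1,2i)$) and $S_{2i+1}$ (on the right column $(2i+1,2i+2)$), the image is contained in their tensor product. The outer algebra $S_{2i}\otimes S_{2i+1}$ is a full matrix algebra by Lemma \ref{full}, and the inner algebra $\alpha(\OO_{2i}\otimes \OO_{2i+1})$ is also a full matrix algebra because $\alpha$ is an algebra automorphism of a full matrix algebra restricted to a factor. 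So by Wedderburn one has a decomposition $S_{2i}\otimes S_{2i+1} \cong \alpha(\OO_{2i}\otimes \OO_{2i+1})\otimes C$, where $C$ is the relative commutant. Thus it suffices to show $C=\C\cdot \Id$.

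Next I would take an arbitrary $z\in S_{2i}\otimes S_{2i+1}$ that commutes with $\alpha(\OO_{2i}\otimes \OO_{2i+1})$ and argue that $z$ in fact commutes with $\alpha(\OO_{2k}\otimes \OO_{2k+1})$ for every $k$. For $|k-i|\geq 2$ this is immediate from disjointness of supports. For $k=i+1$ the relevant image lies in $\OO_{2i+1}\otimes \OO_{2i+2}\otimes \OO_{2i+3}\otimes \OO_{2i+4}$; decomposing a generic element via the support algebra basis as $\sum_\mu A_\mu\otimes B_\mu$ with $A_\mu\in S_{2i+2}$ and $B_\mu$ on $(2i+3,2i+4)$, the part $B_\mu$ has disjoint support from $z$, while the $A_\mu\in S_{2i+2}$ commute with the $S_{2i+1}$ factor of $z$ by Lemma \ref{columnscommute}; the $S_{2i}$ factor of $z$ is disjoint. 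The case $k=i-1$ is symmetric. Hence $z$ commutes with $\alpha(\OO_{2k}\otimes \OO_{2k+1})$ for every $k$.

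Finally, since the $\OO_{2k}\otimes \OO_{2k+1}$ generate the whole algebra and $\alpha$ is an automorphism, $z$ commutes with the entire ambient full matrix algebra and is therefore a scalar. This forces $C=\C\cdot \Id$, and combining with the Wedderburn decomposition above yields $\alpha(\OO_{2i}\otimes \OO_{2i+1}) = S_{2i}\otimes S_{2i+1}$. The main thing to be careful about is the case-by-case commutation check in the middle step, where one must track which site each tensor factor acts on and invoke Lemma \ref{columnscommute} exactly once per boundary; no dimension count or finer structure theory is needed.
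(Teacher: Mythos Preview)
Your proof is correct and follows essentially the same approach as the paper: establish the inclusion $\alpha(\OO_{2i}\otimes\OO_{2i+1})\subset S_{2i}\otimes S_{2i+1}$, then take a hypothetical nonscalar $z$ in the relative commutant and show, via Lemma~\ref{columnscommute} and disjointness, that $z$ commutes with all of $\mathrm{Im}(\alpha)=\A$, a contradiction. The only cosmetic difference is that the paper first argues $z$ commutes with every column $S_y$ and then invokes the inclusion $\alpha(\OO_{2j}\otimes\OO_{2j+1})\subset S_{2j}\otimes S_{2j+1}$ for all $j$, whereas you argue commutation with the images $\alpha(\OO_{2k}\otimes\OO_{2k+1})$ directly case-by-case; these are equivalent.
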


\begin{proof}
	From the definition of support algebra,
	\begin{equation}
	\label{isproducteq}
		\alpha(\OO_{2i} \otimes \OO_{2i+1}) \subset S_{2i} \otimes \OO_{2i+1} \otimes \OO_{2i+2} \cap \OO_{2i-1} \otimes \OO_{2i} \otimes S_{2i+1} = S_{2i} \otimes S_{2i+1},
	\end{equation}
	the last equality follows by expanding elements on a tensor basis.

	If the inclusion were proper, there would be by Wedderburn's Theorem a $z$ in the full matrix algebra $S_{2i} \otimes S_{2i+1}$, not a scalar multiple of the identity, such that $[\alpha(\OO_{2i} \otimes \OO_{2i+1}), z] = 0$.

	But by Lemma \ref{columnscommute}, $z$ commutes with $S_{2i+2}$ and $S_{2i-1}$. By disjointness $z$ also commutes with the more distant columns, the other $S$. From Eq.~(\ref{isproducteq}), $z$ commutes with $\alpha(\OO_{2j} \otimes \OO_{2j+1})$ for all $j$. So $z$ commutes with $\Img(\alpha)$. But $\alpha$ is an automorphism of $\A$ so $
	\Img(\alpha) = \A$, implying a non-scalar element $\alpha^{-1}(z)$ of Center$(\A)$, a contradiction.
\end{proof}

Now we take dimensions ("decategorify") the isomorphism. It is actually more convenient to take the $\frac{1}{2} \log \dim$. The $\log$ makes things additive (to agree with our intuition of flux) and the conventional $\frac{1}{2}$ means we are taking $\log$(Hilbert space dimension) rather than $\log$(Endomorphism algebra dimension). Let

\[
o_x = \frac{1}{2} \log(\dim \OO_x), \hspace{1em} s_x = \frac{1}{2} \log(\dim S_x)
\]

From Lemmas \ref{full} and \ref{isproduct} we have
\begin{equation}
	o_{2x} + o_{2x+1} = s_{2x} + s_{2x+1} \text{ and } o_{2x+1} + o_{2x + 2} = s_{2x+1} + s_{2x+2},
\end{equation}
which implies a well-defined \textit{flux}---independent of $x$---$f(\alpha)$:
\begin{equation}
\label{flux}
	f(\alpha) = s_{2x+1} - o_{2x+1} = -(s_{2x} - o_{2x}) = -(s_{2x+2} - o_{2x+2})
\end{equation}
for all $x$.

If we choose base 2 logarithms and all our degrees of freedom are qubits (or tensor products of qubits) then flux is integer valued, and for the generalization to higher dimensions which we soon treat it is sufficient to focus on this case to understand the arguments.
Essentially, the different primes do not interact.

However, in general flux takes values in $M = \{\log \Q\}$ logarithms (say base 2) of rationals, which we regard as a module over $\Z$, with the integers acting via multiplication on the logarithms. What we have just derived is that for a QCA $\alpha$ on $R$, $f(\alpha) \in M$. With very little extra work we have:

\begin{thm}
\label{graphthm}
	Let $G$ be any locally finite graph whose edge lengths all exceed $2R$, then to a $R$-QCA $\alpha$ on $G$ there is a well-defined flux $f(\alpha) \in H^{\text{lf}}_1(G;M)$ in the first homology of $G$ with coefficients in $M$. The correct homology is the one based on "locally finite" chains. If the degrees of freedom are all qubits we may replace $M$ by the integers $\Z$.
	If the graph is finite we may drop the "lf" superscript.
\end{thm}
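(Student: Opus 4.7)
The plan is to define, for each oriented edge $e$ of $G$, a local flux $f_e(\alpha) \in M$ by running the one-dimensional GNVW construction on a segment of $e$ far from its endpoints, and then to verify a vertex conservation (Kirchhoff) law that makes $f(\alpha) := \sum_e f_e(\alpha)\cdot e$ a locally finite 1-cycle. Since a graph is a 1-complex, every 1-cycle is automatically a homology class, and this will place $f(\alpha)$ in $H_1^{\text{lf}}(G;M)$.

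On each edge $e$ of length $> 2R$, the sites whose distance to both endpoints of $e$ exceeds $R$ form a 1-dimensional sub-chain on which $\alpha$ acts as a range-$R$ QCA that is isolated from the rest of the graph. After clustering pairs of sites along this interior segment, the support-algebra machinery of Lemmas \ref{supportcommute}, \ref{full}, \ref{isproduct} and Eq.~(\ref{flux}) applies verbatim and produces an element $f_e(\alpha) \in M$ at any choice of interior cut, independent of that cut by the same telescoping used to derive Eq.~(\ref{flux}), and flipping sign under reversing the orientation of $e$. Because $G$ is locally finite, $\sum_e f_e(\alpha)\cdot e$ is a well-defined element of $C_1^{\text{lf}}(G;M)$, and is a finite chain when $G$ is finite.

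The heart of the proof is the cycle condition at each vertex $v$. Orient all incident edges $e_1,\ldots,e_k$ outward from $v$ and pick on each $e_i$ a cut point $m_i$ at distance strictly greater than $2R$ from $v$. Let $V$ be the star block consisting of $v$ together with the sites on the $v$-side of each $m_i$, so that $\alpha(\OO_V)$ is contained in the algebra generated by $V$ together with one additional pair of sites past each $m_i$. Since the outer segments of distinct edges are disjoint, Lemma \ref{supportcommute} shows that the outward support algebras $T_i := S(\alpha(\OO_V),\text{ pair beyond }m_i)$ are pairwise commuting and commute with the core of $\alpha(\OO_V)$ on $V$. Repeating the Wedderburn/fullness arguments of Lemmas \ref{full} and \ref{isproduct} in this branched $k$-star configuration yields a decomposition $\alpha(\OO_V) = C_V \otimes \bigotimes_{i=1}^{k} T_i$, with $C_V$ a full matrix algebra on $V$. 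Taking $\tfrac{1}{2}\log\dim$ and using $\dim\alpha(\OO_V) = \dim\OO_V$ produces a linear identity that, once Eq.~(\ref{flux}) is used along each edge to cancel bulk $o$'s and $s$'s, collapses to the Kirchhoff relation $\sum_{i=1}^{k} f_{e_i}(\alpha) = 0$.

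With the cycle condition established, $f(\alpha)$ is a locally finite 1-cycle and hence represents an element of $H_1^{\text{lf}}(G;M)$, and of $H_1(G;M)$ when $G$ is finite. For the qubit refinement, each $\OO_x$ and each support algebra has dimension a power of $2$, so every $f_e(\alpha)$ lies in $\Z$ and the coefficient module $M$ may be replaced by $\Z$. The main obstacle is the vertex step: the 1D GNVW arguments rely essentially on the linear order of sites, whereas at a vertex of degree $k\geq 3$ the combinatorics is branched. Verifying that the $T_i$'s tensor together to give the full image of $\OO_V$ under $\alpha$, rather than merely commuting up to a central twist, is the delicate point, and this is exactly where Lemma \ref{supportcommute} has to be applied carefully to each pair of incident edges simultaneously.
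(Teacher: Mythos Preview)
Your proposal is correct in outline and would work, but it takes a more laborious route than the paper at the one nontrivial step, the vertex (Kirchhoff) condition.

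You attack the vertex directly: form the star block $V$, build outward support algebras $T_i$ on each incident edge, and argue a tensor factorization $\alpha(\OO_V)\cong C_V\otimes\bigotimes_i T_i$ whose $\tfrac12\log\dim$ yields $\sum_i f_{e_i}=0$. This is a legitimate generalization of Lemmas~\ref{columnscommute}--\ref{isproduct} to a $k$-star, and the ``delicate point'' you flag (that the $T_i$ really tensor up to the full image, not merely commute) is handled by exactly the same central-element/Wedderburn contradiction used in Lemma~\ref{isproduct}. So the argument goes through, but you are essentially re-proving the 1D lemmas in a branched setting.

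The paper instead reduces the vertex to the already-established 1D case by \emph{projection} (dimensional reduction): map the star to an oriented line by sending one edge to one half-line and collapsing the remaining edges onto the opposite half-line. The sites on the identified edges are now stacked (tensored) at common locations. Since $\alpha$ does not couple sites on distinct edges once you are beyond range $R$ of the vertex, the support algebra across a cut on the collapsed side factors as a tensor of the individual edge contributions, so the 1D flux there is the sum of the individual edge fluxes. Well-definedness of the 1D flux (Eq.~(\ref{flux})) then forces the two sides to agree, giving $f_1=f_2+\cdots+f_k$. This buys you the Kirchhoff law with no new algebra at all---just a relabeling of control-space positions---whereas your approach requires re-running the Wedderburn machinery in the star geometry. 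Both are valid; the paper's is shorter and foreshadows the ``dimensional reduction'' technique used throughout \S\ref{multiplicativesection}.
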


\begin{proof}
	The key principle was established in Eq.~(\ref{flux}): the computation of flux is local and the same answer is obtained by doing the computation in different places. To check that flux, when locally computed on (temporarily oriented) edges of sufficient length ($>2R$), obeys the cycle condition, we must see that oriented sum around any vertext vanish as in Figure 2:

	\begin{figure}
		\centering
		\begin{tikzpicture}[scale = 1.2]
		\draw (-1,0) -- (1,0);
		\draw [fill=black] (1,0) circle (0.3ex);
		\draw (1,0) -- (3,0.6);
		\draw (1,0) -- (3, -0.6);
		\node at (0,0.4) {$f_1$};
		\node at (2,0.6) {$f_2$};
		\node at (2,-0.6) {$f_3$};
		\node at (1,-1.1) {$f_1 = f_2 + f_3$};
		\draw (-0.15, -0.15) -- (0,0) -- (-0.15, 0.15);
		\draw (1.93, 0.12) -- (2, 0.3) -- (1.85, 0.4);
		\draw (1.93, -0.12) -- (2, -0.3) -- (1.85, -0.4);
		\end{tikzpicture}
		\caption{}
	\end{figure}

	This is proven by locally projecting the degrees of freedom to an oriented line---in the case above identify the two segments labeled $f_1$ and $f_2$ and then applying (3).

	\textbf{Note on stability:} The quantity $\flux(\alpha$) is stable for two reasons, with potentially different scope of application. First, since $\flux(\alpha$) can be calculated near any point on $R$ and the same value will be obtained at different points, any deformation which can be expressed as a composition of deformations $\alpha_t$ with support smaller than $2R$ must leave $\flux(\alpha_t$) unchanged. Also if there is a QCA $\overline{\alpha}$ which agrees with $\alpha_0$ in one interval in $R$ and $\alpha_1$ in another disjoint interval, then $\flux(\alpha_0$) = $\flux(\alpha_1$); QCAs of different fluxes cannot be blended together. Local compatibility is a very general principle and one could imagine applying it beyond the case of finite dimensional degrees of freedom treated above, e.g. to type II factors.

	An even sharper source of stability, but one dependent on the integral dimensions of the local Hilbert spaces, is the total disconnectedness of the set $M = \log(\Q)$ in which possible flux values lie. If $\alpha_t$ changes continuously the values of $\flux(\alpha_t$) must stay constant.
\end{proof}

\subsection{Coverings, Immersions, and Homology}
\label{multiplicativesection}
In this subsection we employ topological tricks to extract as much information as possible from the 1D definition (GNVW) of index or "flux." The most fundamental trick is \textit{dimensional reduction} given a control space $X$ and a map $g: X \ra G$, to a 1D space, a graph $G$, if all point preimages of $g$ are compact we may replace degrees of freedom in $X$ located at $\{x_i\}$ with their images $\{g(x_i)\}$ located in $G$. (It is immaterial if $g(x_i) = g(x_j), i \neq j$.) Control in $X$ will imply control in $G$. Then in $G$ we may apply the 1D theory recalled in subsection \ref{GNVWreview}. A space such as a torus can be reduced to a 1D circle in many ways, and this will be important for us.

Another topological idea is that of \textit{germ}. The word is from function theory, if $f: \R \ra \R$, $f(0) = 0$, is any real function fixing the origin, its germ is the equivalence class of $f|_\textrm{U}$, U any open set containing 0, where equivalent means equal on $\textrm{U} \cap \textrm{U}^\pr$. Intuatively germ means a tiny, but important, piece of something larger.

In that spirit, suppose we have QCA on a $d$-manifold $Y$ containing a $(d-1)$-submanifold $W$ with trivial normal bundle neighborhood $W \times (-1, 1) \subset Y$ and the interaction radius $R$ of a QCA $\alpha$ on $Y$ is small w.r.t. the thickness of the normal bundle, say, $R << 0.1$, then $\alpha|_{W \times (-1/2, 1/2)}$, the \textit{germ} of $\alpha$ near $W$, is not strictly speaking always a QCA because the edges may be ragged: d.o.f. hopping on and off under $\alpha$. Instead of an automorphism of algebras we have embedding ($\ast$-isometries into):

\[
\A_{W \times (-0.5, 0.5)} \xrightarrow{\alpha|} \A_{W \times (-0.6, 0.6)} \xrightarrow{\alpha^{-1}|} \A_{W \times (-0.7, 0.7)}
\]

\noindent
so that the composition is $\Id_{\A_{W \times (-0.5, 0.5)}}$ composed with the inclusion Inc: $\A_{W \times (-0.5, 0.5)}$ $\ra$ $\A_{W \times (-0.7, 0.7)}$.

\begin{note}
If we dimensionally reduce $W \times (-0.5, 0.5)$ to $(-0.5, 0.5)$ the discussion of subsection \ref{GNVWreview} produces a well-defined GNVW Index (flux) near 0 despite the fact that the system has edges. Thus we may speak of the flux of $\alpha$ crossing $W$ when $R$ is sufficiently small.
\end{note}

We will use the fact\cite{Hastings2013} that QCAs can be pulled back under covering space projections and, more generally, immersions. The latter case is conceptually similar to the germ concept in that there is also a ragged edge so the QCA is not pulled back precisely to an automorphism but rather a germ on a somewhat reduced region as in the previous discussion.

As the methods will be different, we divide this subection into subsubsections \ref{2dsubsection},\ref{3dsubsection},\ref{4dsubsection},
dealing respectively with QCAs of manifolds of dimensions 2, 3, and 4, respectively. However, in each subsection the result is the same homological multiplicativity formula that would arise if the flux $f$ of quantum information were merely a classical flow. Since we know the situation to be richer than that\cite{FHH} these results supply a "lower bound" on the classical behavior.

\begin{thm}
\label{cover}
	Let $\alpha$ on $W \times [-\frac{1}{2}, \frac{1}{2}]$ be the germ of a QCA on a closed orientable hypersurface $W$ of dimension $d-1$ for $d\leq 4$. Let $\widetilde{W} \xrightarrow{\pi} W$ be an $n$-sheeted covering space, $n$ a natural number, and $\widetilde{\alpha}$ on $\widetilde{W} \times [-\frac{1}{2}, \frac{1}{2}]$ be the pulled back QCA, $\widetilde{\alpha} = \pi^\ast(\alpha)$. Then the flux across $W$ and $\widetilde{W}$ satisfy $\flux(\widetilde{\alpha}$) = $n \flux(\alpha$). Said homologically $\flux(\alpha$) $\in H_1^{\text{lf}}(W \times [-\frac{1}{2}, \frac{1}{2}]; \Z)$ and $\flux(\widetilde{\alpha}$) $\in H_1^{\text{lf}}(\widetilde{W} \times [-\frac{1}{2}, \frac{1}{2}]; \Z)$ satisfy:

	\[
	\pi_\ast(\flux(\widetilde{\alpha})) = \mathrm{deg}(\pi)(\flux(\alpha))
	\]
\end{thm}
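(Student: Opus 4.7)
The plan is to exploit the locality of the flux definition from Theorem~\ref{graphthm}: flux is computed by taking any small transverse arc, dimensionally reducing a thin tube around it to a 1D QCA, and applying the GNVW formula~(\ref{flux}). If I can show that the local flux of $\widetilde{\alpha}$ across $\widetilde{W}$ agrees, sheet by sheet, with the local flux of $\alpha$ across $W$, then the pushforward $\pi_\ast$ collapses the $n$ sheets into one and produces the desired factor $\deg(\pi) = n$.

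Concretely, I would first pick a small transverse arc $\gamma$ in $W$, chosen small enough that its preimage splits as a disjoint union $\pi^{-1}(\gamma) = \gamma_1 \sqcup \cdots \sqcup \gamma_n$ of arcs, each mapped homeomorphically by $\pi$, lying inside some evenly covered neighborhood $U$ of $\gamma$. By definition of $\pi^\ast$, the pulled-back QCA $\widetilde{\alpha}$ on $\pi^{-1}(U) \times [-\tfrac12,\tfrac12]$ is $n$ disjoint isomorphic copies of $\alpha$ on $U \times [-\tfrac12,\tfrac12]$. Dimensional reduction of each tube $\gamma_i \times [-\tfrac12,\tfrac12]$ then produces exactly the same 1D QCA as dimensional reduction of $\gamma \times [-\tfrac12,\tfrac12]$, so the local flux value of $\widetilde{\alpha}$ on each $\gamma_i$ equals $\flux(\alpha)$ on $\gamma$.

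To globalize, the local data assemble via the cycle condition of Theorem~\ref{graphthm} into a $1$-cycle $c(\widetilde{\alpha})$ in $\widetilde{W} \times [-\tfrac12,\tfrac12]$ representing $\flux(\widetilde{\alpha})$, and similarly a cycle $c(\alpha)$ representing $\flux(\alpha)$. By the sheet-by-sheet equality established above, $c(\widetilde{\alpha})$ is, in each evenly covered chart, the $\pi^{-1}$-lift of $c(\alpha)$, i.e., the formal sum over the $n$ sheets. At the chain level this gives $\pi_\ast(c(\widetilde{\alpha})) = n\cdot c(\alpha)$, and passing to homology yields $\pi_\ast\flux(\widetilde{\alpha}) = \deg(\pi)\cdot \flux(\alpha)$, exactly the statement of the theorem.

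The main obstacle I anticipate is making $\widetilde{\alpha} = \pi^\ast\alpha$ rigorously a germ of a QCA on $\widetilde{W} \times [-\tfrac12,\tfrac12]$ with controlled locality: since $\pi$ is only a local homeomorphism, one must choose both the tube thickness and the interaction radius $R$ small relative to the injectivity radius of $\pi$, and handle the ragged boundary edges with the germ formalism introduced earlier in the subsection. The reason the theorem is stated only for $d \leq 4$ is presumably that actually producing the trivializing charts and choosing enough transverse submanifolds to detect all of $H_1$ requires distinct dimension-specific constructions in each of dimensions $2$, $3$, and $4$, rather than a uniform argument; this is reflected in the division into the three subsubsections announced above.
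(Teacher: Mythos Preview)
There is a genuine gap in your argument, and it stems from a confusion about what quantity is being computed. The ``flux across $W$'' is a \emph{single number}: one dimensionally reduces \emph{all} of $W \times [-\tfrac12,\tfrac12]$ to the interval $[-\tfrac12,\tfrac12]$ by collapsing the entire hypersurface $W$ to a point, and then applies the 1D GNVW formula. It is \emph{not} computed by choosing a small evenly covered patch $U \subset W$ and reducing the tube $U \times [-\tfrac12,\tfrac12]$. Your step ``dimensional reduction of each tube $\gamma_i \times [-\tfrac12,\tfrac12]$ produces exactly the same 1D QCA as dimensional reduction of $\gamma \times [-\tfrac12,\tfrac12]$'' is true but irrelevant: that local 1D QCA is not the one whose flux equals $\flux(\alpha)$. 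After collapsing all of $\widetilde{W}$, the resulting 1D QCA is \emph{not} simply $(\alpha_{1D})^{\otimes n}$; the nontrivial monodromy of the cover mixes the sheets, and you have given no argument that this mixing preserves the support-algebra dimensions entering Eq.~(\ref{flux}). Your appeal to Theorem~\ref{graphthm} to ``assemble local data into a 1-cycle'' does not apply here: that theorem is for 1-dimensional control spaces, and its higher-dimensional analogue (Theorem~\ref{thmcohomology}) is proved \emph{using} the present theorem, so invoking it would be circular.

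The paper's proofs are genuinely global and dimension-specific. For $d=2$ one immerses a twice-punctured torus $T^{--}$ near $\gamma$ and constructs explicit irregular $n$-fold covers of $T^{--}$ in which one boundary circle lifts connectedly (to $\gamma_n$) while the other lifts to $n$ disjoint copies of $\gamma$; equality of the 1D flux through the two ends of the cover then yields $nf_\gamma = f_{\gamma_n}$. For $d=3,4$ one builds a null-homologous cycle $T \amalg (-Y_1 \amalg \cdots \amalg -Y_n)$, bounds it by a manifold $X$, punctures $X$ and immerses $X^-$ via Smale--Hirsch theory, and then must show the flux through the puncture sphere vanishes---which for $d=3$ uses $\pi_2(SO(3))=0$ and for $d=4$ requires connect-summing with $K3$, $S^2\times S^2$, and $S^1\times S^3 \,\#\, S^1\times S^3$ to kill the framing obstruction in $\pi_3(SO(4))\cong \Z\oplus\Z$. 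A telling sign that your local argument cannot be right: it would work verbatim in every dimension, whereas the paper explicitly notes that $d\geq 5$ remains open precisely because the framing obstructions become intractable.
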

The three cases $d=2,3,4$ are theorems \ref{2dcover}, \ref{3dcover}, \ref{4dcover}.

\subsubsection{Proof of Theorem \ref{cover} for d = 2}
\label{2dsubsection}
Following standard 2D notation, we write $\gamma$ for a simple closed curve (scc) on $W$, $C$ for the collar, $\gamma \times [-0.5, 0.5]$, and $\Sigma$ for an ambient surface containing the germ.

So we consider a QCA with some small range $R$ on a surface $\Sigma$, with scc $\gamma \subset \Sigma$ and $C$ a cylindrical tubular neighborhood around $\gamma$ with injectivity radius much larger than R.

\begin{figure}[ht]
	\centering
	\begin{tikzpicture}[scale = 0.6]
		\draw (8.65,0) arc (90:-90:0.4 and 1);
		\draw [dashed] (8.65,0) arc (90:270:0.4 and 1);
		\draw (4.6, -0.8) arc (-180:0:1.3 and 0.6);
		\draw (4.9, -1.2) arc (180:0:1 and 0.5);
		\draw (1.6, -0.7) arc (180:10:8 and 6);
		\draw (3.24, -0.32) arc(185:20:6.4 and 4);
		\draw (3.24, -0.32) to [out=60, in = 160] (8.65,0);
		\draw (1.6, -0.7) to [out = -90, in = -160] (8.65, -2);

		\draw (15.63,1.4) to [out = -60, in = 10] (13.5, -0.3);
		\draw (17.48, 0.35) to [out = -75, in = 0] (13.5, -2.3);
		\draw [dashed] (13.5, -0.3) arc(90:270:0.3 and 1);
		\draw (13.5, -0.3) arc(90:-90:0.3 and 1);
		\draw (8.65, 0) to [out = -20, in = 180] (11.5,-0.7) to [out = 0, in = 200] (13.5, -0.3);
		\draw (8.65, -2) to [out = 20, in = 180] (10.8,-1.8) to [out =0, in = 180] (13.5, -2.3);

		\draw (11.5, -0.7) arc (90:-90:0.3 and 0.58);
		\draw [dashed] (11.5, -0.7) arc (90:270:0.4 and 0.58);
		\draw (14.2, -0.8) arc (180:360:1 and 0.5);
		\draw (14.41, -1.1) arc (180:0:0.8 and 0.3);
		\draw [rotate=-45] (10.55, 11.5) arc (180:-0:1.15 and 0.45);
		\draw [rotate=-45, dashed] (10.55, 11.5) arc (-180:0:1.15 and 0.45);

		\node at (5.7,0.1) {$\Sigma$};
		\node at (12.2,-1.2) {$\gamma$};
		\draw [decorate,decoration={brace,amplitude=10pt}](13.5,-2.6) -- (8.7,-2.6)node [black,midway, yshift = -17pt] {$C$};
		\draw [decorate,decoration={brace,amplitude=20pt}](17,-3.6) -- (8.7,-3.6)node [black,midway, yshift = -27pt] {$C^\prime$};
		\draw (18.3,-2.5) -- (18.3,-2.9);
		\draw (18.3,-2.7) -- (18.8,-2.7);
		\draw(18.8,-2.5) -- (18.8,-2.9);
		\node at (18.5,-3.1) {$R$};
	\end{tikzpicture}
	\caption{$\alpha$ acts on operator $R$, almost preserving locality}
\end{figure}

\begin{lemma}
	A germ of $\alpha$ near $\gamma$ may be imported into a twice-punctured torus $T^{--}$ with $\gamma$ parallel to one of the punctures.
\end{lemma}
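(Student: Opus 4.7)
The plan is to construct, on the twice-punctured torus $T^{--}$, a QCA $\beta$ whose germ near a scc $\gamma' \subset T^{--}$ parallel to one puncture equals (up to the identification of collars) the given germ of $\alpha$ near $\gamma \subset \Sigma$. The essential point is that the germ is collar-intrinsic, so only the collar $C$ and the $\ast$-embedding data really matter; the ambient surface is merely a convenient host.

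First, I would choose a cylindrical collar $N = \gamma \times [-L,L]$ of $\gamma$ in $\Sigma$ with $L \gg R$, large enough that $\alpha$ applied to any operator supported in the inner sub-collar $\gamma \times [-L/2,L/2]$ lands inside $\A_N$. The germ of $\alpha$ near $\gamma$ is then encoded in the $\ast$-embedding $\A_{\gamma \times [-L/2,L/2]} \hookrightarrow \A_N$ together with the underlying site data. Next, I fix a smooth combinatorial model of $T^{--}$ containing a curve $\gamma'$ parallel to one puncture with a cylindrical tubular neighborhood $N'$ diffeomorphic to $N$. Via this diffeomorphism I transport the degrees of freedom of $N$ into $N' \subset T^{--}$. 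On the complement $T^{--} \setminus N'$, which is a punctured disc (near the puncture parallel to $\gamma'$) plus a once-punctured torus-minus-disc on the other side, I place arbitrary qubit ancillae.

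To upgrade the transported germ to a genuine QCA $\beta$ on $T^{--}$, I define $\beta$ as the identity on sites outside $N'$ and as the transported $\alpha$ on operators supported strictly inside the inner sub-collar. The transition in the buffer zone $N' \setminus (\gamma' \times [-L/2,L/2])$ is handled by Wedderburn's theorem: since $\alpha|_{\A_{\gamma \times [-L/2,L/2]}}$ is a $\ast$-embedding into the full matrix algebra $\A_N$, it extends to an automorphism of $\A_N$ by composition with some unitary $V \in \A_N$; conjugation by $V$ is trivial outside $N$, so extending by the identity on $T^{--} \setminus N'$ yields a well-defined automorphism $\beta$ of $\A_{T^{--}}$. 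By construction its germ near $\gamma'$ coincides with the transported germ of $\alpha$.

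The step I expect to be most delicate is guaranteeing that $\beta$ actually has range $O(R)$, rather than merely being an automorphism whose support happens to lie in $N'$: the Wedderburn extension is only \emph{algebraically} localized in $N'$, not automatically range-$R$. I would address this by observing that $\alpha$ itself is range-$R$ and noting, as in the support-algebra lemmas of Subsection~\ref{GNVWreview}, that the extension unitary $V$ can be chosen as a product of unitaries each supported in a bounded patch of $N'$ of size $O(R)$: one slides across the collar in the $[-L,L]$ direction, gluing the $\alpha$-action in the middle to the identity at the two ends of $N'$ using local unitary adjustments inside successive slabs $\gamma' \times [a,a+O(R)]$. Each slab adjustment keeps the range bounded by $O(R)$, and the composition over $O(L/R)$ slabs still has range $O(R)$ since the slabs are disjoint. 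The resulting $\beta$ is an honest $O(R)$-QCA on $T^{--}$ whose germ on $N'$ imports the germ of $\alpha$, as required.
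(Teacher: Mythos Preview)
Your approach has a genuine gap at exactly the step you flagged as delicate. You propose to glue the transported germ in the middle of the collar $N' = \gamma' \times [-L,L]$ to the identity near both ends via local unitaries in successive slabs. But this is precisely what the GNVW index forbids: if $\alpha$ has nonzero flux $f$ through $\gamma$, then any range-$O(R)$ QCA $\beta$ on $N'$ agreeing with the transported $\alpha$ in the middle must have flux $f$ through every cross-section $\gamma' \times \{t\}$, since flux is computed locally and is constant along the interval direction. A QCA equal to the identity near $\gamma' \times \{\pm L\}$ has flux zero there. Hence no such $\beta$ exists unless $f = 0$, and your slab-by-slab factorization of $V$ cannot succeed in the interesting case. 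The Wedderburn extension does exist as an abstract automorphism of $\A_{N'}$, but the conjugating unitary $V$ is genuinely nonlocal when the flux is nonzero; the support-algebra lemmas of Subsection~\ref{GNVWreview} do not furnish the local factorization you invoke.

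The paper avoids this obstruction by a completely different mechanism: it \emph{immerses} $T^{--}$ into the collar $C$ (the Kirby torus trick of \cite{Hastings2013}), sending one puncture to a parallel of $\gamma$, and then \emph{pulls back} $\alpha$ along the immersion. The resulting object is a ragged QCA defined on all of $T^{--}$, not an extension-by-identity outside a sub-collar; nowhere is there an interpolation to $\Id$. Consequently the flux of the pulled-back $\alpha$ through \emph{both} punctures of $T^{--}$ equals $f_\alpha(\gamma)$, and this is exactly what the subsequent covering-space argument (Lemma~\ref{3fold} and Theorem~\ref{2dcover}) requires. Even if your $\beta$ existed, it would have zero flux through the far puncture and would not serve that purpose.
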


Note that for $\Sigma$ as drawn in Fig. 3, we may just restrict $\alpha$ to (most of) $C^\pr \cong T^{--}$. The work comes when, for example, $\Sigma$ is itself a torus where there is no "extra" genus.

But the "work" is quite easy given the Kirby torus trick, introduced in this context in \cite{Hastings2013}. One merely immerses $T^{--}$ in $C$, sending the (say) left puncture to a parallel copy of $\gamma$ and then pull back the restriction of $\alpha$ to $T^{--}$. Figure 4 shows explicitly how to do this.

\begin{figure}[ht]
	\centering
	\begin{tikzpicture}[scale = 0.7]
		\draw (0,0) ellipse (0.6 and 2);
		\draw (2.5,2) arc(90:-90:0.6 and 2);
		\draw [dashed] (2.5,2) arc(90:270:0.6 and 2);
		\draw (0,2) -- (11,2);
		\draw (0,-2) -- (11,-2);
		\draw (6.5, 2) arc(90:-90:0.6 and 2);
		\draw [dashed] (6.5, 2) arc(90:270:0.6 and 2);
		\node at (3.7,0) {$\gamma$};
		\draw (11,2) arc(90:-90:0.6 and 2);
		\draw [dashed] (11, 2) arc(90:270:0.6 and 2);

		\draw (7.1, 0.4) arc (100:-102.5:1.35 and 0.6);
		\draw (7.05, 0.8) arc (100:-102.5:1.75 and 1);

		\draw (7,1) to [out = 15, in = 90] (9.35,-0.1) to [out = -90, in = -30] (8.15,-1.07);
		\draw (7.84, -0.75) to [out = 155, in = -45] (7.1, -0.23);
		\draw (6.9,1.5) to [out = 20, in = 90] (9.8,-0.1) to [out = -90, in = 45] (9.3,-1.2) to [out = -135, in = -35] (7.6,-1.18);
		\draw (7.3,-0.8) to [out = 155, in = -40] (7.08, -0.67);

		\draw [decorate, decoration={brace,amplitude=10pt}] (2.5, 2.2) -- (6.5, 2.2) node [black, midway, yshift = 18pt] {$C$};
		\draw [decorate, decoration={brace,amplitude=10pt}] (6.5, -2.2) -- (0, -2.2) node [black, midway, yshift = -18pt] {Immersed copy of $T^{--}$ (bands give the genus)};
		\draw [->] (9.3, 2.9) -- (8.9,1.4);
		\node at (9.3,3.9) {bands should be much};
		\node at (9.3,3.2) {thicker than $R$};
	\end{tikzpicture}
	\caption{}
\end{figure}

\begin{note}
Locality of the calculation of flux in a one-dimensional reduction shows that the flux of "pulled back $\alpha$" through the two ends is the same as $f(\alpha)(\gamma)$.
\end{note}

\begin{lemma}
\label{3fold}
	$T^{--}$ admits an irregular 3-fold cover schematically drawn in Figure 5. The preimage of one puncture is a circle; the preimage of the other puncture is three circles.
\end{lemma}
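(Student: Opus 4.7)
The plan is to realize the cover by covering-space theory: construct a homomorphism $\rho \colon \pi_1(T^{--}) \to S_3$ with transitive image, arranged so that the monodromy around the first puncture is a $3$-cycle and the monodromy around the second puncture is the identity, and then take the connected cover associated to the stabilizer of a point under the resulting action on $\{1,2,3\}$.

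First I would fix free generators of $\pi_1(T^{--})$. Take $a,b$ for the two essential loops of the underlying torus and $c_1,c_2$ for small positively oriented loops around the two punctures (with tails connecting them to a basepoint). The fundamental-polygon relation gives $[a,b]\,c_1 c_2 = 1$, and solving $c_2 = c_1^{-1}[a,b]^{-1}$ exhibits $\pi_1(T^{--})$ as a free group on $a,b,c_1$. Next I would define $\rho$ on generators by $\rho(a) = (2\,3)$, $\rho(b) = (1\,2)$, and $\rho(c_1) = (1\,3\,2)$. A direct computation in $S_3$ gives $[\rho(a),\rho(b)] = (1\,2\,3)$, so the forced value of $\rho(c_2)$ is $(1\,2\,3)\cdot(1\,3\,2) = e$, confirming that $\rho$ is a well-defined homomorphism.

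Let $H := \rho^{-1}(\mathrm{Stab}(1)) \leq \pi_1(T^{--})$, of index $3$. Because $\rho(c_1)$ is already a $3$-cycle, the image of $\rho$ acts transitively on $\{1,2,3\}$, so the cover $\pi \colon \widetilde{T^{--}} \to T^{--}$ corresponding to $H$ is connected and of degree $3$. The preimages of the punctures are now read off from the cycle structure of the monodromies: the loop $c_1$ acts as the $3$-cycle $(1\,3\,2)$ on the fiber, so its preimage is a single circle that covers $c_1$ three-to-one; the loop $c_2$ acts trivially, so its preimage consists of three disjoint circles each mapping homeomorphically onto $c_2$. Finally, irregularity is immediate: $\mathrm{Stab}(1) = \langle (2\,3)\rangle$ is not normal in $S_3$, so $H$ is not normal in $\pi_1(T^{--})$ and the cover fails to be Galois.

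The one point that genuinely requires attention is that the target must be non-abelian. Any abelian $\rho$ would kill $[a,b]$, forcing $\rho(c_1) = \rho(c_2)^{-1}$, so the two monodromies would have the same order and the two punctures could not be distinguished in this asymmetric way. The essential computation is therefore just to exhibit two transpositions in $S_3$ whose commutator is a $3$-cycle; this simultaneously explains the existence of $\rho$ and the necessary irregularity of the resulting $3$-fold cover.
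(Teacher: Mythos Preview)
Your proof is correct and takes essentially the same approach as the paper's: both exhibit the cover via a homomorphism from $\pi_1(T^{--})\cong F_3$ to $S_3$ and take the subgroup pulled back from a $\Z_2$ point-stabilizer, checking that one puncture loop maps to a $3$-cycle (single preimage circle) while the other maps to the identity (three preimage circles). The only differences are cosmetic---the paper uses generators $l,m,x$ with $y=[m,l]x$ and sends $l$ to a $3$-cycle and $m$ to a transposition, whereas you send both torus generators to transpositions---and your closing remark on why a non-abelian target is forced is a pleasant addition not in the original.
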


\begin{figure}[ht]
	\centering
	\hspace{2em}
	\begin{tikzpicture}[scale = 0.75]
		\draw (0,0) ellipse (1.25 and 2);
		\draw (0, 2) to [out = 0, in = 180] (3,1.4) to [out = 0, in = 180] (9.1,3.2);
		\draw (0, -2) to [out = 0, in = 180] (3,-1.4) to [out = 0, in = 180] (9.1,-3.2);
		\draw (9.1,2.5) ellipse (0.6 and 0.7);
		\draw (9.1,-2.5) ellipse (0.6 and 0.7);
		\draw (9.1,0) ellipse (0.6 and 0.7);
		\draw (9.1, 1.8) arc(90:270:1.7 and 0.55);
		\draw (9.1,-0.7) arc(90:270:1.7 and 0.55);
		\draw (2.3, -0.1) arc (180:0:0.9 and 0.35);
		\draw (2.45, 0.08) arc (180:360:0.76 and 0.3);
		\draw (4.8, -0.1) arc (180:0:0.9 and 0.35);
		\draw (4.95, 0.08) arc (180:360:0.76 and 0.3);

		\draw [->] (4.3,-2.2) -- (4.3, -3.2);

		\draw (0,-5.5) ellipse (0.5 and 1);
		\draw (9,-5.5) ellipse (0.5 and 1);
		\draw (0, -4.5) to [out = 0, in = 180] (4.5,-3.8) to [out = 0, in = 180] (9, -4.5);
		\draw (0, -6.5) to [out = 0, in = 180] (4.5,-7.2) to [out = 0, in = 180] (9, -6.5);
		\draw (7.6, -4.36) arc (90:270:0.5 and 1.14);
		\draw [dashed] (7.6, -4.36) arc (90:-90:0.5 and 1.14);
		\draw (1.4, -4.36) arc(90:-90:0.5 and 1.14);
		\draw [dashed] (1.4, -4.36) arc (90:270:0.5 and 1.14);

		\draw (1.82,-6.1) to [out = 0, in = 180] (4.7, -6.7) to [out = 0, in = 200] (5.2,-6.3) to [out = -20, in = 180] (6.2,-6.5) to [out = 0, in = 200] (7.2, -6.2);
		\draw (4.5, -5.5) ellipse (1.5 and 0.9);
		\draw (3.6,-5.65) arc (180:0:0.9 and 0.35);
		\draw (3.73,-5.47) arc (180:360:0.76 and 0.3);
		\draw (4.8, -5.75) arc (90:-90:0.4 and 0.72);
		\draw [dashed] (4.8, -5.75) arc (90:270:0.4 and 0.72);

		\node at (5.3,-5.9) {$m$};
		\node at (5.6,-4.6) {$l$};
		\node at (2.1,-4.8) {$x$};
		\node at (6.9,-4.9) {$y$};
		\draw (1.88, -5.13) -- (2.05, -5.25);
		\draw (1.88, -5.13) -- (1.7, -5.25);
		\draw (4.5, -4.6) -- (4.65, -4.45);
		\draw (4.5, -4.6) -- (4.65, -4.75);
		\draw (7.1, -5.4) -- (6.95, -5.55);
		\draw (7.1, -5.4) -- (7.25, -5.55);
		\node at (10.5,-5.5) {$C^\prime$};
	\end{tikzpicture}
	\caption{}
\end{figure}

\begin{proof}
	$\pi_1(T^{--}) \cong \mathrm{Free}(l,m,x)$; define a homomorphism $\te$ from this group to the permutation group $S(3)$ by:

	\vspace{0.5em}
	\noindent
	$\te(l) = (1,3,2)$\\
	$\te(m) = (2,3)$\\
	$\te(x) = (1,2,3)$

	Now the element $y$ corresponding to the right puncture is sent to the identity $e \in S(3)$:
	\begin{equation}
		y = [m,l]x, \hspace{1em} \te(y) = \te[m,l]\te(x) = (2,3)(1,3,2)(2,3)(1,2,3)(1,2,3) = e
	\end{equation}

	Let the irregular cover correspond to the pulled back group $H$:

	\begin{figure}[ht]
		\centering
		\hspace{5em}
		\begin{tikzpicture}
			\node at (0,0) {Free($l,m,x$)};
			\node at (0,-1.6) {$H$};
			\draw [->] (1.2, 0) -- (2.7, 0);
			\node at (3.3,0) {$S(3)$};
			\node at (3.3,-1.6) {$Z_2$};
			\draw [->] (0.5, -1.6) -- (2.7, -1.6);
			\node at (4.6,-1.6) {$= \{e, (1,2)\}$};
			\node at (5.1,-2.2) {(or any other $Z_2$ subgroup)};
			\node at (1.9,0.25) {$\theta$};
			\node at (0,-0.8) {\huge $\hookuparrow$};
			\node at (3.3,-0.8) {\huge $\hookuparrow$};
		\end{tikzpicture}
	\end{figure}

	Since $\te(x)$, $\te(x^2) \notin Z_2$, $x$ and $x^2 \notin H$, explaining why the left puncture is covered by a single circle, whereas $y \in \ker(\te)$ and in particular $y \in H$, so the right puncture is covered by three circles.
\end{proof}

There is a straightforward generalization to $n$-fold irregular covers replacing $S(3)$ with the dihedral group $D(n)$ of order $2n$.

\[
D(n) = \{r,t | r^2 = e, t^n = e, rtr = t^{-1}\}
\]

Define $\te_n:\pi_1(T^{--}) \ra D(n)$ by:

\vspace{0.5em}
\noindent
$\te_n(l) = t^{-1}$\\
$\te_n(m) = r$\\
$\te_n(x) = t^{-2}$

$\te_n(y) = \te_n([m,l]x) = rt^{-1}rtt^{-2} = e$.

Now construct the irregular $n$-fold cover corresponding to the pulled back subgroup $H_n$:

\begin{figure}[ht]
	\centering
	\begin{tikzpicture}
		\node at (0,0) {Free($l,m,x$)};
		\node at (0,-1.6) {$H_n$};
		\draw [->] (1.2, 0) -- (2.7, 0);
		\node at (3.3,0) {$D(n)$};
		\node at (3.3,-1.6) {$Z_2$};
		\draw [->] (0.5, -1.6) -- (2.7, -1.6);
		\node at (4.3,-1.6) {$= \{e, r\}$};
		\node at (1.9,0.25) {$\theta$};
		\node at (0,-0.8) {\huge $\hookuparrow$};
		\node at (3.3,-0.8) {\huge $\hookuparrow$};
	\end{tikzpicture}
\end{figure}

There are two cases. If $n$ is odd $\te(x)$ has order $n$, if $n$ is even $\te(x)$ has order $\frac{n}{2}$. Also the normalizer $N(Z_2)$ of $Z_2$ in $D(n)$ is $Z_2$ for $n$ odd and the Klein group $\{e,r,t^{n/2},rt^{n/2} \cong Z_2 \oplus Z_2\}$ for $n$ even. The \textit{covering group} of this irregular cover is, of course,
\begin{equation}
	N(H_n) / H_n \cong N(Z_2) / Z_2 \cong \left\{
	\begin{array}{lr}
		\{e\}, & n \text{ odd} \\
		\{e, t^{n/2}\} \cong Z_2, & n \text{ even}
	\end{array}\right.
\end{equation}

Consequently, for $n$ odd the preimage of $x$ is one circle of "length" $n$, and for $n$ even is two circles of length $\frac{n}{2}$ permuted by the covering group. Since $\te(y) = e$, $y \in H_n$, for all $n$, the preimage of $y$ consists of $n$ circles. This establishes:

\begin{lemma}
	$T^{--}$ admits an irregular degree $n$ covering space, with preimage $\gamma_n$ circles and preimage $x$ one (two) circle(s) $n$ odd (even).
\end{lemma}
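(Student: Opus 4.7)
The plan is to mimic the $n=3$ construction from Lemma~3.5, replacing $S(3)$ with the dihedral group $D(n)=\langle r,t\mid r^2=t^n=e,\ rtr=t^{-1}\rangle$. Writing $\pi_1(T^{--})\cong\mathrm{Free}(l,m,x)$ with $y=[m,l]x$ the loop around the opposite puncture, the first step is to define $\theta_n\colon\mathrm{Free}(l,m,x)\to D(n)$ by $\theta_n(l)=t^{-1}$, $\theta_n(m)=r$, $\theta_n(x)=t^{-2}$. Using the dihedral relation $rt^{-1}r=t$, a direct computation gives $\theta_n(y)=(rt^{-1}r)\,t\cdot t^{-2}=t\cdot t\cdot t^{-2}=e$, so $y\in\ker\theta_n$.

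Next, set $H_n=\theta_n^{-1}(\{e,r\})$ and let $\widetilde{T}^{--}_n\to T^{--}$ be the connected covering space corresponding to $H_n$. Since $\theta_n$ is surjective (its image contains the generators $t^{-1}$ and $r$ of $D(n)$) and $[D(n):\{e,r\}]=n$, the cover has degree $n$. The fiber over the basepoint is identified with $D(n)/\{e,r\}$, for which $\{t^i:0\le i<n\}$ is a transversal, and monodromy around a loop $g$ acts on the fiber by left multiplication by $\theta_n(g)$. One checks that $\{e,r\}$ is not normal in $D(n)$ for $n\ge 3$ (since $trt^{-1}=rt^{-2}$), so the cover is irregular, as required.

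The final step is the orbit count, using the standard fact that connected components of the preimage of a loop $g$ correspond to orbits of $\langle\theta_n(g)\rangle$ on the fiber, with the length of each preimage circle equal to the orbit size. For $y$, triviality of $\theta_n(y)$ produces $n$ disjoint circles, each mapping homeomorphically onto $y$. For $x$, $\theta_n(x)=t^{-2}$ acts on cosets $t^i\{e,r\}$ by the shift $i\mapsto i-2$ on $\mathbb{Z}/n$; its orbit structure depends on $\gcd(2,n)$, yielding a single $n$-cycle when $n$ is odd (hence one preimage circle of length $n$) and two $(n/2)$-cycles when $n$ is even (hence two preimage circles each of length $n/2$).

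The main obstacle is purely bookkeeping: verifying $\theta_n(y)=e$ in $D(n)$, confirming surjectivity, and carrying out the shift-by-$-2$ orbit analysis on $\mathbb{Z}/n$. The substantive input is the choice of $\theta_n$ on generators, dictated by the need to push $y$ into the kernel while keeping $\theta_n(x)$ outside $\{e,r\}$; this is the natural dihedral analogue of the $S(3)$ assignment used in Lemma~3.5.
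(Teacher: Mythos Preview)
Your proposal is correct and follows essentially the same route as the paper: the identical homomorphism $\theta_n\colon\mathrm{Free}(l,m,x)\to D(n)$ with $\theta_n(l)=t^{-1}$, $\theta_n(m)=r$, $\theta_n(x)=t^{-2}$, the same subgroup $H_n=\theta_n^{-1}(\{e,r\})$, and the same verification that $\theta_n(y)=e$. The only cosmetic difference is that you count preimage circles via the orbit structure of the shift-by-$-2$ monodromy on $\mathbb{Z}/n$, whereas the paper phrases this in terms of the order of $\theta_n(x)$ and computes the normalizer $N(\{e,r\})$ to identify the covering group; these are equivalent bookkeeping devices for the same fact.
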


\begin{proof}
	The total space for $n$ odd is a genus $\frac{n+1}{2}$ surface with one puncture covering $x$ and $n$ punctures covering $y$. For $n$ even the total space is a genus $\frac{n}{2}$ surface with two punctures covering $x$ and $n$ punctures covering $y$ The genus is computed using multiplicity of the Euler characteristic under cover.
\end{proof}

\begin{customthm}{2.1A}
\label{2dcover}
	Given a germ of a 2D QCA $\alpha$ on a surface containing a scc $\gamma$, denote the GNVW flux of $\alpha$ across $\gamma$ by $f_\gamma$. Let $C$ be a cylindrical neighborhood of $\gamma$ (large w.r.t. the interaction radius $R$ of $\alpha$) and $C_n$, $\gamma_n$, and $\alpha_n$ be the n-fold cylic covers. In the cover the GNVW flux $f_{\alpha_n}$ satisfies:
	\[
	f_{\alpha_n} = nf_\alpha
	\]
\end{customthm}

\begin{figure}[ht]
	\centering
	\begin{tikzpicture}[scale = 0.6]
		\draw (0,0) ellipse (0.8 and 2);
		\draw (0, 2) to [out = 0, in = 180] (3,1.8) to [out = 0, in = 180] (9.1,3.2);
		\draw (0, -2) to [out = 0, in = 180] (3,-1.8) to [out = 0, in = 180] (9.1,-3.2);
		\draw (9.1,2.6) ellipse (0.5 and 0.6);
		\draw (9.1,-2.6) ellipse (0.5 and 0.6);
		\draw (9.1,0.5) ellipse (0.5 and 0.6);
		\draw (9.1,2) arc(90:270:1.7 and 0.45);
		\draw (9.1,-0.1) arc(90:180:1.7 and 0.45);
		\draw (7.4,-1.55) arc(180:270:1.7 and 0.45);

		\draw (1.5,1.1) arc (180:0:0.9 and 0.35);
		\draw (1.65,1.3) arc (180:360:0.76 and 0.3);
		\draw (2.1,0.3) arc (180:0:0.9 and 0.35);
		\draw (2.25,0.5) arc (180:360:0.76 and 0.3);
		\draw (3.9,-1.1) arc (180:0:0.9 and 0.35);
		\draw (4.03,-0.9) arc (180:360:0.76 and 0.3);

		\draw [->] (4.3,-2.2) -- (4.3, -3.2);
		\draw (0,-5.5) ellipse (0.5 and 1);
		\draw (9,-5.5) ellipse (0.5 and 1);
		\draw (0, -4.5) to [out = 0, in = 180] (4.5,-3.8) to [out = 0, in = 180] (9, -4.5);
		\draw (0, -6.5) to [out = 0, in = 180] (4.5,-7.2) to [out = 0, in = 180] (9, -6.5);
		\draw (3.6,-5.65) arc (180:0:0.9 and 0.35);
		\draw (3.73,-5.47) arc (180:360:0.76 and 0.3);

		\node at (3.7,-0.3) {\Large{$\ddots$}};
		\node at (1,-1.1) {\small{$\gamma_n$}};
		\node at (5.5,0.6) {\small{genus = $\frac{n+1}{2}$}};
		\node at (4.4,2.8) {\large{$\widetilde{C^\prime}$}};
		\node at (4.5,-7.7) {\large{$C^\prime$}};
		\node at (9.1,-0.9) {\Large{$\vdots$}};
		\node at (2.2,-2.8) {\small{degree $n$, odd}};

		\draw (12,-5.5) ellipse (0.5 and 1);
		\draw (21,-5.5) ellipse (0.5 and 1);
		\draw (12, -4.5) to [out = 0, in = 180] (16.5,-3.8) to [out = 0, in = 180] (21, -4.5);
		\draw (12, -6.5) to [out = 0, in = 180] (16.5,-7.2) to [out = 0, in = 180] (21, -6.5);
		\node at (16.5,-7.7) {\large{$C^\prime$}};
		\draw (15.6,-5.65) arc (180:0:0.9 and 0.35);
		\draw (15.73,-5.47) arc (180:360:0.76 and 0.3);
		\draw [->] (16.3,-2.2) -- (16.3, -3.2);
		\node at (14.2,-2.8) {\small{degree $n$, even}};
		\node at (16.4,2.8) {\large{$\widetilde{C^\prime}$}};

		\draw (12, 2) to [out = 0, in = 180] (15,1.9) to [out = 0, in = 180] (21.1,3.2);
		\draw (12, -2) to [out = 0, in = 180] (15,-1.9) to [out = 0, in = 180] (21.1,-3.2);
		\draw (12,1.3) ellipse (0.5 and 0.7);
		\draw (12,-1.3) ellipse (0.5 and 0.7);
		\draw (12, 0.6) arc (90:-90:1.2 and 0.6);
		\node at (13,1.3) {\small{$\gamma_{n/2}$}};
		\node at (13,-1.3) {\small{$\gamma_{n/2}$}};

		\draw (13.5,1.1) arc (180:0:0.9 and 0.35);
		\draw (13.65,1.3) arc (180:360:0.76 and 0.3);
		\draw (14.1,0.3) arc (180:0:0.9 and 0.35);
		\draw (14.25,0.5) arc (180:360:0.76 and 0.3);
		\draw (15.9,-1.1) arc (180:0:0.9 and 0.35);
		\draw (16.03,-0.9) arc (180:360:0.76 and 0.3);
		\node at (15.7,-0.3) {\Large{$\ddots$}};
		\node at (17.7,0.8) {\small{genus = $\frac{n}{2}$}};

		\draw (21.1,2.6) ellipse (0.5 and 0.6);
		\draw (21.1,-2.6) ellipse (0.5 and 0.6);
		\draw (21.1,0.5) ellipse (0.5 and 0.6);
		\draw (21.1,2) arc(90:270:1.7 and 0.45);
		\draw (21.1,-0.1) arc(90:180:1.7 and 0.45);
		\draw (19.4,-1.55) arc(180:270:1.7 and 0.45);
		\node at (21.1,-0.9) {\Large{$\vdots$}};
	\end{tikzpicture}
	\caption{}
\end{figure}

\begin{proof}
	The proof is to apply Lemmas \ref{3fold}, \ref{welldefined}, and dimensional reduction (i.e. aggregation of sites). We do not decorate $\alpha$ as we modify it. The local nature of computation of flux (not local in the transverse direction but in the linear direction after aggregation of sites) implies:

	\[
	f_{\gamma,C}(\alpha) = f_{\gamma,C^\pr}(\alpha)
	\]

	\noindent
	and

	\[
		nf_\gamma(\alpha) = nf_{\gamma,C^\pr}(\alpha) = f^{\text{right}}_{(y)^{-1},\widetilde{C^\pr}} (\alpha) \notate[X]{{}={}}{2}{\text{well-definedness of GNVW index in 1D}} f^{\text{left}}_{(x)^{-1},\widetilde{C^\pr}} (\alpha) = \left\{
		\begin{array}{lr}
			f_{\gamma_n}(\alpha), & n \text{ odd} \\
			2f_{\gamma_{n/2}}(\alpha), & n \text{ even}
		\end{array}\right.
	\]
\end{proof}

Theorem \ref{2dcover} follows directly for $n$ odd. For the general integer $k$, odd or even divide the equation by 2 to obtain the theorem:

\[
2k f_\gamma(\alpha) = 2f_{\gamma_{n/2}}(\alpha)
\]
\qed

\subsubsection{Proof of Theorem \ref{cover} for d = 3}
\label{3dsubsection}
In the previous subsection \ref{2dcover} we gave a topological proof of Theorem \ref{cover} that for a QCA on a surface the GNVW-flux across any circle is multiplicative under unwrapping the circle in a finite cover. We now give related, but conceptually distinct, topological proof that GNVW flux of a QCA across a hypersurface is again multiplicative in finite covers when the QCA is acting with sufficient locality near the germ of hypersurface $Y$ in a manifold of dimension 3 (the hypersurface is a closed 2D surface). We consider here only oriented manifolds and oriented hypersurfaces. This leaves out an interesting case we would like to understand $RP^2 \times (-0.5, 0.5)$.

We first consider a germ of a QCA of small interaction radius $R$ on $Y \times (-0.5, 0.5)$, $Y$ a closed oriented surface.

\begin{lemma}
\label{welldefined}
	Let $Y$ be a closed oriented $(d-1)$-manifold. Let $Z \overset{f_t}{\longhookrightarrow} Y \times \mathbb{R}$ be a regular homotopy of immersion of a closed oriented $(d-1)$-manifold in $Y \times (-0.04, 0.04)$, where $R << 0.1$, $t \in [0,1]$. For any QCA $\alpha$ with range $R$ sufficiently small (w.r.t. $f_t$) the GNVW flux of $\alpha$ across $f_t(Z)$ is well-defined and independent of $t$.
\end{lemma}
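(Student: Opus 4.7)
The plan is to establish well-definedness of the flux across $f_t(Z)$ for each fixed $t$ first, then argue that the resulting number does not vary with $t$.

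For a fixed $t$, since $Z$ is compact and $f_t\colon Z \hookrightarrow Y \times (-0.04, 0.04)$ is an immersion, there is some $\epsilon_t > 0$ so that the normal exponential map gives an immersion $Z \times (-\epsilon_t, \epsilon_t) \ra Y \times \R$ whose restriction to $Z \times (-\epsilon_t/2, \epsilon_t/2)$ has transverse injectivity radius much larger than $R$. Pulling $\alpha$ back along this normal immersion, in the same germ sense used for the immersion of $T^{--}$ in Subsubsection \ref{2dsubsection}, produces the germ of a QCA on $Z \times (-\epsilon_t/2, \epsilon_t/2)$. Dimensional reduction to the interval factor $(-\epsilon_t/2, \epsilon_t/2)$ lets me apply Theorem \ref{graphthm} together with the locality principle of Eq.~(\ref{flux}) to extract a flux value in $M$. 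Different choices of normal thickening give the same number, because in one dimension the GNVW flux may be computed at any interior point.

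To handle independence of $t$, I would use compactness of $Z \times [0,1]$ and $C^1$-continuity of the regular homotopy to choose a partition $0 = t_0 < t_1 < \cdots < t_N = 1$ fine enough that, on each subinterval $[t_i, t_{i+1}]$, the family $\{f_t(Z)\}_{t \in [t_i, t_{i+1}]}$ is contained in a single tubular neighborhood $U_i \subset Y \times \R$ whose transverse thickness is still much larger than $R$. After pulling $\alpha$ back once to a product $Z \times (-\delta, \delta)$ modelling $U_i$, each hypersurface $f_t(Z)$ becomes a transverse section of the 1D factor, and the locality of the 1D flux computation (the stability remark at the end of the proof of Theorem \ref{graphthm}) shows that the flux across every such section coincides. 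Concatenating the $N$ subintervals gives constancy on $[0,1]$.

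The main obstacle is the compactness/normal bundle bookkeeping in the previous paragraph: one must choose the center hypersurface and the normal thickening in a way uniform in $t$ on each subinterval and compatible across adjacent subintervals, essentially a quantitative statement about regular homotopies of immersions of a fixed compact $Z$. A cleaner finish, which I would include as a fallback, is to observe that the construction makes $t \mapsto \flux(\alpha; f_t(Z))$ a continuous function $[0,1] \ra M = \log(\Q)$. Since $\log(\Q) \subset \R$ is totally disconnected and $[0,1]$ is connected, such a map must be constant, so independence of $t$ follows from well-definedness together with continuity alone, without having to track the geometric comparison between nearby $f_t$ in detail.
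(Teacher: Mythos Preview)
Your argument is correct and close in spirit to the paper's, but the paper organizes the independence-of-$t$ step differently in a way that sidesteps exactly the bookkeeping you flag as an obstacle. Rather than partitioning $[0,1]$ and arranging that all $f_t(Z)$ for $t$ in a subinterval sit inside a common tubular neighborhood (which then forces you to compare fluxes across different sections of a single pullback), the paper partitions the regular homotopy itself into pieces each \emph{supported in a small ball} in $Y\times\R$. During any such piece one simply locates a transverse copy $Z_t$ disjoint from that ball and computes the flux there; since nothing is moving near $Z_t$, the flux is manifestly unchanged, and the 1D locality of the GNVW computation is invoked only in the trivial sense that one may compute at any point along the reduced interval. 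This avoids entirely the question of uniform normal thickenings and compatibility across adjacent subintervals.

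Your fallback via total disconnectedness of $M=\log(\Q)$ is also valid and is explicitly noted by the paper in the stability remark following Theorem~\ref{graphthm}; just be aware that ``continuity'' of $t\mapsto\flux(\alpha;f_t(Z))$ is not automatic from the construction and in practice is established by the same locality reasoning, so it functions more as a reformulation than as an independent argument.
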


\begin{proof}
	The orientation trivializes the normal bundle to $Z$ as $Z \times \mathbb{R}$. The pulling back the QCA $\alpha$ to $Z \times \mathbb{R}$ we obtain a new QCA on $Z \times \mathbb{R}$ and may determine its flux. The regular homotopy $f_t$ may be partitioned so that for any $t$ the homotopy is supported in a small ball whose width in the $R$ direction is $<< 0.1$. This means that at any time $t$ we may locate a representative hypersurface $Z_t$ across which the GNVW flux is not changing at time $t$, because $Z_t \cap (\text{support } f_t) = \varnothing$. This implies that the flux does not vary as a function of $t$.
\end{proof}

It is instructive to see an example of two regularly homotopic immersions (in this case of a circle) which do not co-bound an immersion of a manifold of one higher dimension (in this case a surface). Figure 7 shows two circles $\gamma_1$ and $\gamma_2$, one embedded and one immersed in an annulus. They are regularly homotopic (the dotted line suggests an intermediate stage of the regular homotopy) but do not cobound any immersion of a surface.

\begin{figure}[ht]
	\centering
	\hspace{2em}
	\begin{tikzpicture} [scale = 1.1]
		\draw  (0,0) circle (0.6);
		\draw  (0,0) circle (1.1);
		\draw  (0,0) circle (2.7);
		\draw (-1.9,0) arc (180:340:1.9);
		\draw (-1.9, 0) to [out = 90, in = 220] (-1.2,1.3) to [out = 40, in = 270] (0,2.1);
		\draw (0,2.1) arc (0:220:0.3 and 0.3);
		\draw (-0.54, 1.92) to [out = -58, in = 180] (1.1,1.5);
		\draw (1.1,1.5) arc (90:-180:0.3 and 0.35);
		\draw (0.8,1.15) to [out = 90, in = 180] (1.4, 1.7) to [out = 0, in = 90] (2.1,0.9) to [out = -90, in = 70] (1.78,-0.66);
		\node at (0.6,-0.7) {$\gamma_1$};
		\node at (-1.4,1.5) {$\gamma_2$};
		\draw [dashed] (1.15, 1.65) to [out = 145, in = 0] (-0.3, 2.5) to [out = 180, in = 60] (-0.8, 2.2) to [out = 240, in = 165] (0.8,1.1);
		\draw [decorate,decoration={brace,amplitude=10pt}](3.2,2.7) -- (3.2,-2.7) node [black,midway, xshift = 30pt] {annulus};
	\end{tikzpicture}
	\caption{}
\end{figure}

The reason this is possible is that "regular homotopy" implies a 1-parameter family of maps whose differential has rank one, whereas an immersed surface implies a differential of rank two, a stronger condition.

To prove there is no immersed surface $\Sigma$ connecting $\gamma_1$ and $\gamma_2$ use the maximal principal to show that image($\Sigma$) would be contained within the outer contour of $\gamma_2$. Then inspecting image($\Sigma$) near the leftmost kink yields a contradiction.

\begin{customthm}{2.1B}
\label{3dcover}
	Let $Y$ be a closed surface and $c: T \ra Y$ an $n$-sheeted covering. Let $\alpha$ be a QCA on $Y \times \mathbb{R}$, or germ of a QCA on $Y \times (-0.5, 0.5)$, with GNVW-flux $f(\alpha) = a$. Then the pull back of $\alpha$, $\alpha_n$, on $T \times R$ has GNVW-flux $F(\alpha_n) = na$.
\end{customthm}

\begin{proof}
	Consider the map $C: T \coprod (-Y_1 \coprod \cdots \coprod -Y_n) \coloneqq T^+ \ra Y \times \mathbb{R}$ defined by $C(T) = c(T) \times 0$, $C(-Y_i) = c(Y_i) \times i$. The $-$ sign refers to reversed orientation. Image($C$) is a null homologous cycle so it bounds an integral 3-chain in $Y \times \mathbb{R}$. A three-dimensional integral chain bounding a surface such as $T^+$ can always be assumed to be a map $g$ from an oriented 3-manifold with boundary equal $T^+$. We would like, ideally, to approximate $g$, relative to the boundary $T^+$, by an immersion. Unfortunately immersion theory is only operative in the range where the relative handles to be immersed, the handles of $(X, T^+)$, have index less than the ambient dimension, in this case 3, into which they are to be immersed. Beyond this dimension requirement, the only other hypothesis of the Smale-Hirsch theory is that the map to be approximated by an immersion must be covered by a tangent bundle injection. Since all oriented 3-manifolds are parallelizable, the latter is for free, and the dimension hypothesis is achieved by puncturing $X$, removing the interior of a 3-ball, to obtain $X^-$ with $\de X^- = \de X \cup S^2 = T^+ \cup S^2$, $S^2$ a 2-sphere, now the restricted map $g^-: X^- \ra Y \times \mathbb{R}$ may be approximated by an immersion $\widetilde{g}: X^- \ra Y \times \mathbb{R}$.

	Pull $\alpha$ back via $\widetilde{g}$ and dimensionally reduce $X^-$ to a tree joining $n+2$ vertices, one for each boundary component: $T$, $Y_1$, ..., $Y_n$, $S^2$, to a base point. Apply additivity of GNVW flux to trees and high girth graphs (Theorem \ref{graphthm}) to obtain:

	\[
	f_\alpha(\alpha_n) = nf(\alpha) + f_\alpha(\widetilde{g}(S^2))
	\]

	This is the desired formula except we must show that the correction term $f_\alpha(S^2)$ representing GNVW flux escaping through the immersed 2-sphere $\widetilde{g}(S^2)$ is zero. For this we use Lemma \ref{welldefined}. Since $\widetilde{g} \simeq g^-$ are homotopic, $\widetilde{g}|_{S^2}$ is homotopic to a small round 2-sphere $h$ imbedded in $Y \times \mathbb{R}$. Because $\pi_2(SO(3)) \cong 0$ the choice of tangential data normally associated with an immersion is unique. Consequently $\widetilde{g}|_{S^2}$ is regularly homotopic to the small round 2-sphere $h$. Again by dimensional reduction, there is a "Gauss Law" $f_\alpha(h(S^2)) = 0$ and by Lemma \ref{welldefined}
	\begin{equation}
		f_\alpha(\widetilde{g}(S^2)) = 0,
	\end{equation}
	completing the proof.
\end{proof}

Note that as in Figure 7 one should not expect that there is actually an immersion of $X^-$ into $Y \times \mathbb{R}$ which embeds $\de(X^-)$, but using Lemma \ref{welldefined}, we see that this is not required to complete the argument.

\begin{note}
The proof of Theorem \ref{3dcover} cannot be adapted to prove Theorem \ref{2dcover}. The reason is that $\pi_1(SO(2)) \cong \Z_2 \neq 0$. If we followed the logic of Theorem \ref{3dcover} to construct an immersion of say a punctured "pair of pants" joining one degree 2 circle to two degree 1 circles we would find the framing obstruction around the puncture to be = 2, so the puncture circle would not be regularly homotopic to a trivial embedded circle having framing obstruction $= 1$. The model for this is to take the 2-fold branched cover $b$ of a cylinder $C$ by a pair of points $P$ and notice that if the branch point $p \in P$ is deleted to form $P^-$ then $b|_{P^-}: P^- \ra C$ maps the puncture to a degree 2 loop. See Figure 8.
\end{note}

\begin{figure}[ht]
	\centering
	\hspace{7em}
	\begin{tikzpicture}[scale = 0.7]
		\draw (0,0) ellipse (0.6 and 2);
		\draw (6,1.3) ellipse (0.4 and 0.7);
		\draw (6,-1.3) ellipse (0.4 and 0.7);
		\draw (0, 2) -- (6, 2);
		\draw (0, -2) -- (6, -2);

		\draw [->] (2.8,-2.3) -- (2.8,-3.3);

		\draw (0,-4.8) ellipse (0.5 and 1);
		\draw (6,-4.8) ellipse (0.5 and 1);
		\node at (7.5,-4.8) {$C$};

		\draw (4,0) arc (180:90:2 and 0.6);
		\draw (4,0) arc (180:270:2 and 0.6);
		\node at (7.5,0) {$P$};
		\draw (4,0) [fill=black] circle (0.3ex);
		\node at (3.2,0) {$P$};
		\draw (3.5, 0) arc (180:43:0.5 and 0.5);
		\draw (3.5, 0) arc (-180:-43:0.5 and 0.5);
		\draw [dashed] (3.7, 0) arc (180:55:0.4 and 0.38);
		\draw [dashed] (3.7, 0) arc (-180:-55:0.4 and 0.38);
		\node at (3.6,-2.8) {$b/P^-$};

		\draw (0,-3.8) -- (6,-3.8);
		\draw (0, -5.8) -- (6, -5.8);
		\draw (4, -4.4) [out = 145, in = 0]  to (3.8, -4.35) to [out = 180, in = 90] (3.5, -4.8) to [out = -90, in = 180] (4, -5.3) to [out = 0, in = -90] (4.5, -4.8) to [out = 90, in = 0] (4.2, -4.35) to [out = 180, in = 35] (4, -4.4);
		\draw (4, -4.4) [out = -35, in = 90] to (4.25, -4.8) to [out = -90, in = 0] (4, -5.1);
		\draw (4, -4.4) [out = 215, in = 90] to (3.75, -4.8) to [out = -90, in = 180] (4, -5.1);
		\draw [->] (5.5, -3) -- (4.4, -4.3);
		\node at (9,-2.7) {image of loop around puncture $P$};

		\draw (-1.6, 1) arc (90:-90:0.5 and 1);
		\node at (-1.9,0) {$180^\circ$};
		\draw [->] (-1.59, -1) -- (-1.6, -1);
		\draw [fill=black] (4,-4.8) circle (0.3ex);
	\end{tikzpicture}
	\caption{}
\end{figure}

\subsubsection{Proof of Theorem \ref{cover} for d = 4}
\label{4dsubsection}
\begin{customthm}{2.1C}
\label{4dcover}
	Let $Y$ be a closed oriented 3-manifold and $c:T \ra Y$ an n-sheeted covering. Let $\alpha$ be a QCA on $Y \times \mathbb{R}$, or a germ of a QCA on $Y \times (-0.5, 0.5)$, with GNVW-flux $f(\alpha) = a$. Then the pull back of $\alpha$, $\alpha_n$, on $T \times R$ has GNVW-flux $f(\alpha_n) = na$.
\end{customthm}

\begin{proof}
	Just as before construct a null homologous (integral) cycle $C: T \coprod (Y_1, ..., Y_n) \coloneqq T^+ \ra Y \times \mathbb{R}$. Dimension 4 is the last dimension where a bounding integral chain can always be assumed to be the image $g(X)$ of a connected manifold $X$. In fact, choosing a spin structure on $Y$, $T^+$ inherits a spin structure and there is no loss of generality in assuming $X$ has a spin structure extending the one on $\de X = T^+$. (Technically this can be proven from the Aliyah-Bott spectral sequence which computes spin bordism from (oriented bordism) $\otimes$ spin bordism (pt.) as its $E_1$ term. But it is geometrically rather direct, since spin bordism of a point vanishes through dimension three.)

	Again to apply immersion theory, puncture $X$ to obtain $X^-$ and an immersion $\widetilde{g}(X^-)$ bounding $T^+$ $\coprod$ $\widetilde{g}(S^3)$, and obtain:
	\begin{equation}
	\label{punctured}
		f(\alpha_n) = nf(\alpha) + f_\alpha (\widetilde{g}(S^3)),
	\end{equation}
	where $f(\alpha_n)$ is the flux in through $T$, $f(\alpha)$ the flux out through each copy of $Y$ and the third term any GNVW flux that might pass through $\widetilde{g}(S^3)$.

	Again $\widetilde{g}(S^3)$ is homotopically trivial in $Y \times \mathbb{R}$ but now the framing choices and therefore potential regular homotopy classes for the 3-sphere are innumerated by $\pi_3(SO(4)) \cong Z \oplus Z$. According to Milnor \cite{milnor} the two factors are generated by left and right quaternion multiplication, $l$ and $r$ respectively, and in these coordinates the two four-dimensional characteristic classes of the tangent bundle of a spin 4-manifold are given:
	\begin{equation}
		\begin{split}
			& \text{Pontryagin class: } p_1 = 2(l - r) \text{, and} \\
			& \text{Euler class: } \chi = l + r
		\end{split}
	\end{equation}

	We have the freedom to rechoose the bounding spin manifold $X$ by connect summing to it any closed spin 4-manfiold $M$. Consider three possibilities.

	If $M$ is the K3 surface $K$, we have
	\begin{equation}
	\label{Ms1}
		\begin{split}
			& p_1(K) = 48 \\
			& \chi(K) = 24
		\end{split}
	\end{equation}
	If $M$ is $J = S^1 \times S^3$ $\#$ $S^1 \times S^3$, then
	\begin{equation}
	\label{Ms2}
		\begin{split}
			& p_1(J) = 0 \\
			& \chi(J) = -2
		\end{split}
	\end{equation}
	and if $M$ is $L=S^2 \times S^2$, then
	\begin{equation}
		\label{Ms3}
\begin{split}
			& p_1(L) = 0 \\
			& \chi(L) = 4
		\end{split}
	\end{equation}
	Also note the symmetries:
	\begin{equation}
		\begin{split}
			& p_1(M) = -p_1(-M) \text{, and} \\
			& \chi(M) = \chi(-M)
		\end{split}
	\end{equation}

	Take 48 copies of $T^+$ bounded by 48 copies of $X^-$. Now connect sum the 48 copies of $X^-$ together and tube together the 48 punctures to produce a connected manifold $V^-$ with one puncture bounding $48(T^+)$. Immersing $V^-$ into $Y \times \R$, we find Eq.~(\ref{punctured}) becomes:
	\begin{equation}
		48 f(\alpha_n) = 48nf(\alpha) + f_\alpha(\bar{g}(S^3)),
	\end{equation}
	where $\bar{g}$ is the immersion of $S^3$ made by tubing together the 48 individual copies of $\widetilde{g}(S^3)$.

	From lines \ref{Ms1} through \ref{Ms3}, we can add vectors of the form $(48s, 48t) \in Z \oplus Z$ to the framing obstruction around $\bar{g}(S^3)$ by adding copies of $\pm K$, $L$, and $J$ to $\underset{\scriptscriptstyle{\text{48 copies}}}{\#}(X^-)$. Since $\bar{g}(S^2)$ is made by tubing (connected sum) 48 disjoint copies of $\widetilde{g}(S^2)$ together its framing obstruction vector is already divisible by 48. Consequently this freedom allows us to change $\underset{\scriptscriptstyle{\text{48 copies}}}{\#}(X)$ to bring the framing obstruction around $\bar{g}(S^3)$ to $(0,0)$. Now as in Theorem \ref{3dcover}, $\bar{g}(S^3)$ is regularly homotopic to a small round sphere and the Gauss law now applies to show the correction term in lines (7) and (12) vanishes, proving Theorem \ref{4dcover}.
\end{proof}

If one pushes on to still higher dimensions, the line of reasoning we present runs into difficult questions in algebraic topology, for QCA on manifolds of dimension $\geq 5_i$ multiplicativity in finite covers of GNVW flux across hypersurfaces remains an open question.

While the behavior of flux under finite cover is generally unknown when $\dim(Y) \ d - 1 \geq 4$, some special cases are readily handled by dimensional reduction. For example:

The case of hypersurfaces diffeomorphic to $d-1$ torus $T^{d-1}$, all finite covers can be produced by unwrapping one direction at a time. This is because a $n$-fold cover has a order $n$, abelian group $G$ of covering translations, $G = Z^{d-1}/H$, $H$ a rank $d-1$ subgroup of $Z^{d-1}$ of index $n$. Necessarily $H$ is a normal subgroup and the quotient $G$ abelian. $G$ may be factored as a product $\prod_i^{d-1} C_i \cong G$ of cyclic group, with the sequence of "unwrappings" given by the subgroups $\pi^{-1}(\prod_{i=j}^{d-1}C_1)$, $j = 1, \dots, d-1$, where $\pi: Z^{d-1} \ra G$ is the quotient map.

Then, by repeated use of dimensional reduction one may show that for any $T^\pr \ra T^{d-1}$ which is a degree $n$ cover of the $(d-1)$ torus:
\begin{equation}
	nf_T(\alpha) = f_{T^\pr}(\widetilde{\alpha})
\end{equation}

\subsection{Group structure and additivity of flux}
\label{reductionsection}
As also noted in \cite{FHH} we have the following two results:

\begin{lemma}
	Over any acceptable control space $X$, the finite depth quantum circuits (fdqc) constitute a normal subgroup of the QCA and the quotient group QCA/fdqc, which we denote $Q(X)$, is abelian. As noted in subsection \ref{GNVWreview}, all fdqc and QCA are considered up to stabilization, $\otimes \Id$ with a locally bounded number of degrees of freedom.
\end{lemma}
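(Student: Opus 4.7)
The plan is to prove normality and commutativity separately. Normality follows because conjugation by a QCA preserves locality, so the conjugate of an fdqc is still a bounded-depth circuit of bounded-range gates. Commutativity of the quotient follows from a stabilization/doubling trick that realizes two arbitrary classes by strictly commuting representatives, acting on disjoint tensor factors at each site.

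For normality I would take $\alpha$ a QCA of range $R$ and $\beta = U_D \cdots U_1$ an fdqc whose $i$th layer $U_i = \prod_j V_{i,j}$ is a product of unitaries supported on pairwise disjoint patches of diameter at most $r$. Conjugating gives $\alpha\beta\alpha^{-1} = (\alpha U_D\alpha^{-1})\cdots(\alpha U_1\alpha^{-1})$, and for each fixed $i$ the family $\{\alpha V_{i,j}\alpha^{-1}\}_j$ consists of pairwise commuting unitaries (since $\alpha$ is an automorphism preserves the bracket relations) with supports in the $R$-neighborhoods of the original patches, hence of diameter at most $r+2R$. The enlarged supports may overlap, but only with bounded multiplicity because the underlying control space has bounded geometry and the patches grew by only an additive constant. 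A greedy coloring of the overlap graph therefore partitions each conjugated layer into a bounded number of honest sublayers with pairwise disjoint supports. Thus $\alpha\beta\alpha^{-1}$ is an fdqc of depth $O(D)$ with gate range $r+2R$, showing $\alpha\cdot\mathrm{fdqc}\cdot\alpha^{-1}\subset\mathrm{fdqc}$.

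For commutativity I would take representatives $\alpha,\beta$ with per-site Hilbert spaces $\{\mathcal{H}_x^\alpha\}$ and $\{\mathcal{H}_x^\beta\}$ and stabilize both to the common per-site space $\mathcal{H}_x := \mathcal{H}_x^\alpha \otimes \mathcal{H}_x^\beta$. Set $\hat\alpha := \alpha\otimes\Id$ and $\hat\beta := \Id\otimes\beta$. Because they act on disjoint tensor factors at every site, $\hat\alpha$ and $\hat\beta$ commute exactly as automorphisms. Moreover $\hat\alpha$ is the admissible stabilization of $\alpha$, so $[\hat\alpha]=[\alpha]$ in $Q(X)$; and $\hat\beta$ is the admissible stabilization of $\beta$ conjugated by the per-site SWAP that exchanges the two tensor factors, which is itself a depth-one fdqc, so $[\hat\beta]=[\beta]$ as well. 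Therefore
\[
[\alpha][\beta] \;=\; [\hat\alpha][\hat\beta] \;=\; [\hat\alpha\hat\beta] \;=\; [\hat\beta\hat\alpha] \;=\; [\hat\beta][\hat\alpha] \;=\; [\beta][\alpha].
\]

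The only delicate point, and where the stabilization convention is essential, is the step identifying $\Id\otimes\beta$ with a genuine stabilization of $\beta$ in $Q(X)$ via a site-local SWAP. With that bookkeeping in place both parts are essentially immediate; I do not expect a serious obstacle, since normality is just locality tracking and abelianness is the formal consequence of the freedom to tensor in ancillas on which one of the two QCAs acts trivially.
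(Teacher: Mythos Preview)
Your proposal is correct and follows essentially the same strategy as the paper. For normality, the paper also conjugates gate-by-gate and then uses a $k$-coloring of the (now enlarged) gate supports---guaranteed by the ``acceptable control space'' hypothesis---to reorganize each layer into boundedly many disjoint sublayers; your greedy-coloring argument is the same idea. For abelianness, the paper's argument (given as a separate theorem immediately following the lemma) is exactly your doubling trick: stabilize by an ancilla copy, use the depth-one SWAP fdqc to move one QCA onto the ancilla factor, and observe that composition then agrees with the manifestly commutative $\otimes$-product.
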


\textbf{Comment on acceptable control spaces $X$}: In the proof below we will need to parse the points $\{x_i\}$ in the control space $X$ into a bounded number of groups of $k(L)$ "colors" so that no two $x_i$ of the same color are closer than some constant distance $L$ apart. Let us rescale so that no two distinct $x_i$ are closer than distance $1$. (To make introduction of ancilla convenient in the proof of Theorem \ref{abelian} we should relax this to allow two distinct - or any bounded number- $x_i$ to map to the same the same location in $X$, but this only affects the constant $k(L)$ by that multiple.) To avoid pathological metrics where no such constant $k(L)$ exists we assume our control space is a subset of Euclidean $n$-space or hyperbolic $n$-space for some n, with the metric induced by inclusion.  By the Nash embedding theorem, the first case allows approximate realization of all metrics on compact Riemannian manifolds, and compact real semi-algebraic varieties. We include the hyperbolic case to allow a very broad range of noncompact control spaces, such as hyperbolic $n$-space, itself.
       To sketch the construction of the $k$-coloring in the hyperbolic case ( the Euclidean case is easier since there the domains discussed  can all be taken to be large  $n$-cubes.) Let $D \subset H^n$ be the fundamental domain for a compact hyperbolic $n$-manifold $M$; copies of $D$ tile $H^n$. By the separation requirement, there are a bounded number $b$ of control points in each copy of $D$.  Give each such point a different color. Now since the fundamental group of a hyperbolic manifolds is residually finite\cite{selberg1960discontinuous}, there is a much larger fundamental domain $D'$ for a large injectivity radius ($>>L$) , finite cover $M'$ of $M$, where $i$ copies of $D$ tile $D'$, $i$ the index of the cover. Now divide the original colors, each into $i$ varieties according to the position of $D$ in $D'$. The tiling of $H^n$ by $D'$ now induces an $i*b$ coloring with the property that points with the same color are always distance $>L$ apart. For embeddings the coloring is induced by inclusion from the coloring constructed in the ambient space.

\vspace{0.5em}
\noindent
\textbf{Notation:} We write QCA, an algebra map, in caps, and fdqc lowercase since it is a unitary, and becomes a QCA only when acting by conjugation.

\begin{proof}
	Let $c$ be a fdqc and $\alpha$ a QCA. We may write $c = \cdots g_2 \circ g_1$ of local gates (when a substring is disjointly represented, they can be simultaneously). Conjugating $g_1$ by $\alpha$ expands its range but $g_1^\alpha$ can still be written as a finite composition of local gates of any desired range (e.g. nearest neighbor) so $g_1^\alpha$ is also fdqc of depth at most a constant $c$ depending on the geometry of the control space $\{x_i\} \in X$ and the range $R$ of $\alpha$. Similarly, all $g_j^\alpha$ are depth $c$ fdqc.

	These circuits $g_i^\alpha$ have depth $c$ and width $2R + r$ where $r$ is an upper bound to the range of the original $g_i$. The increase in width causes a small technical problem: How can these wider circuits be fitted together (without overlapping) to build a new fdqc, $c^\alpha$? Depending on the geometry of $\{x_i\} \in X$ and the range $2R + r$, there is a constant $k$ so that we may $k$-partition or
"$k$-color" 
the original gate supports $\{S_j\}$ ($S_j$ might be a pair of nearby sites $x_i \cup x_k$) so that no supports of the same color are closer than $2R + r$. Then the gates $\{g_j^\alpha\}$ can be applied in a partial order which refines the original partial order on $\{g_j\}$ so that each
"layer"
 within $\{g_j\}$ instead of being applied simultaneously is applied in $k$ disjoint layers. Thus if $c$ had depth $d$,
	\begin{equation}
		(\text{depth } c^\alpha) \leq ckd
	\end{equation}
	In particular conjugating a fdqc by $\alpha$ yields a fdqc, establishing normality.
\end{proof}

\begin{thm}
\label{abelian}
	$Q \coloneqq$ \textup{QCA/fdqc} is an abelian group.
\end{thm}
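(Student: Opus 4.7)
The plan is to show that for any two QCAs $\alpha,\beta$ the commutator $\alpha\beta\alpha^{-1}\beta^{-1}$ lies in the fdqc coset, and hence represents the identity in $Q$. Rather than invoking the $k$-coloring/parallelization of the previous lemma, I will use a single "double and SWAP" trick; the coloring work has already been absorbed into the statement that fdqc is a normal subgroup.

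First, using the stabilization remarks of Section 1, I may replace $\alpha$ and $\beta$ by representatives on the same control space with identical local Hilbert spaces $\mathcal{H}_i$. Now double the system by replacing each $\mathcal{H}_i$ with $\mathcal{H}_i \otimes \mathcal{H}_i$; because $Q$ is defined modulo stabilization by ancillas, the class $[\alpha]$ is represented by $\alpha \otimes \Id$ (acting as $\alpha$ on the first copy and trivially on the second), and likewise $[\beta]$ by $\beta \otimes \Id$. Let $S$ be the global on-site SWAP exchanging the two copies at every site $x_i$. Each on-site swap has range $0$ and they can all be executed in parallel, so $S$ is a depth-$1$ fdqc, whence $[S]=e$ in $Q$. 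Because fdqc is a normal subgroup (previous lemma), for any QCA $\gamma$ we have $[S\gamma S^{-1}] = [S][\gamma][S^{-1}] = [\gamma]$; this uses only normality, not the commutativity we are trying to prove.

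A direct computation on elementary tensors gives $S\,(\beta \otimes \Id)\,S^{-1} = \Id \otimes \beta$, so $[\beta \otimes \Id] = [\Id \otimes \beta]$ in $Q$. But $\alpha \otimes \Id$ and $\Id \otimes \beta$ act on disjoint tensor factors at every site, hence commute \emph{as QCAs}, not merely modulo fdqc. Combining,
\[
[\alpha][\beta] \;=\; [\alpha \otimes \Id]\,[\beta \otimes \Id] \;=\; [\alpha \otimes \Id]\,[\Id \otimes \beta] \;=\; [\Id \otimes \beta]\,[\alpha \otimes \Id] \;=\; [\beta \otimes \Id]\,[\alpha \otimes \Id] \;=\; [\beta][\alpha].
\]

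The only point requiring care is bookkeeping in the pre-stabilization: one must enlarge the local Hilbert spaces of $\alpha$ and $\beta$ to a common dimension before performing the doubling, exactly as in the remark following the definition of stable $R'$-equivalence. The substantive content—that SWAP is a genuine fdqc and that conjugation by a fdqc preserves cosets—is essentially immediate, so I do not anticipate a serious obstacle beyond verifying that the doubled QCA $\alpha \otimes \Id$ retains the range bounds needed to be a QCA in its own right, which is clear from its on-site construction.
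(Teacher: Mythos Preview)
Your proof is correct and is essentially the same argument as the paper's. Both proofs double the system via ancillas, use that the global on-site SWAP is a depth-$1$ fdqc (hence trivial in $Q$), and exploit that QCAs acting on disjoint tensor factors commute; the paper packages this as ``composition equals tensor product in $Q$'' via Figure~9, while you unwind the same swap manipulation algebraically to get $[\alpha][\beta]=[\beta][\alpha]$ directly.
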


\begin{proof}
	We show that the group structure under composition agrees with the group structure ($\otimes$-product) under disjoint union, which manifestly is abelian. The argument is summarized in Figure 9 below:

	\begin{figure}[ht]
		\centering
		\begin{tikzpicture}[scale = 0.8]
		\draw (3, 3) -- (3, -3);
		\node at (3.9,0) {$\equiv$};
		\draw  (0.6,1.7) rectangle (2,1);
		\draw  (0.6,-1) rectangle (2,-1.7);
		\draw (1.3, 3) -- (1.3, 1.7);
		\draw (1.3, 1) -- (1.3, -1);
		\draw (1.3, -1.7) -- (1.3, -3);

		\draw  (4.6,1.7) rectangle (6,1);
		\draw  (4.6,-1) rectangle (6,-1.7);
		\draw (4.9,3) to [out = -90, in = 135] (5.3,1.7);
		\draw (5.7,1) to [out = -45, in = 90] (7.2,-0.6) to [out = -90, in = 90] (5.3,-3);
		\draw (7,3) to [out = -90, in = 45] (6.3, .8);
		\draw (5.9, 0.5) to [out = 225, in = 90] (5,-0.5) to [out = -90, in = 125] (5.2,-1);
		\draw (5.7, -1.7) to [out = -45, in = 155] (5.9, -1.9);
		\draw (6.1, -2.1) to [out = -45, in = 90] (6.8, -3);

		\draw (9, 3) -- (9, 1.7);
		\node at (8.1,0) {$=$};
		\draw  (8.3,1.7) rectangle (9.7,1);
		\draw (9, 1) -- (9, -3);
		\draw (10.7, 3) -- (10.7, -1);
		\draw  (10,-1) rectangle (11.4,-1.7);
		\draw (9, 1) -- (9, -3);
		\draw (10.7, -1.7) -- (10.7, -3);
		\node at (11.7,0.1) {$=$};

		\draw (13.2,3) -- (13.2,0.4);
		\draw (13.2,-0.3) -- (13.2,-3);
		\draw (15,3) -- (15,0.4);
		\draw (15,-0.3) -- (15,-3);
		\draw  (12.5,0.4) rectangle (13.9,-0.3);
		\draw  (14.3,0.4) rectangle (15.7,-0.3);

		\node at (1.3,1.3) {$\alpha_1$};
		\node at (1.3,-1.4) {$\alpha_2$};
		\node at (5.3,1.3) {$\alpha_1$};
		\node at (5.3,-1.4) {$\alpha_2$};
		\node at (9,1.3) {$\alpha_1$};
		\node at (10.7,-1.4) {$\alpha_2$};
		\node at (13.2,0) {$\alpha_1$};
		\node at (15,0) {$\alpha_2$};
		\end{tikzpicture}
		\caption{}
	\end{figure}

	The left vertical line represents $\mathcal{H} = \otimes_i \mathcal{H}$ over $\{x_i\} \subset X$ (that is, $X$ is imagined normal to the figure) and the stacking is composition of QCA, $\alpha_2 \circ \alpha_1$. The line to its right represents doubling the degrees of freedom by introducing for every $\mathcal{H}_i$ at $x_i$ an identical ancilla at the same location. The $\equiv$ sign represents equality in the quotient group $Q$. The crossings (under/over has no significance here) are our notation for the swap operators which at $x_i$ \textit{swaps} the local Hilbert space $\mathcal{H}_i$ with its ancillary partner. Notice that a swap is a fdqc of depth one, so may be added at will without changing the element in $Q$. At the far right we have the manifestly abelian group structure.
\end{proof}

It is a natural question which we hope to study later: is $Q(X)$ a grade of some generalized homology theory of the control space $X$? In any case we are at least able to identify a map from $Q(X)$ to a bit of ordinary homology inside $Q$.

Let $H = H_1(X;M)$ when $X$ is compact and $H = H_1^\text{lf}(X;M)$ in the more general case where $X$ is permitted to be noncompact. The essential features of this paper can all be cast in the compact setting where only finitely many degrees of freedom arise but at this juncture we clarify the topological language required to treat noncompact control spaces $X$ such as the Reals, or, for example, an infinite $n$-valent tree.

There is really no loss in assuming $X$ is a manifold, for example, the tree above could be thickened to be 3D and then we could use the boundary of this thickening $\overline{X}$, instead of the original $X$ as our control space. The advantage is that manifolds exhibit Poincar\'{e} duality which is a conceptual convenience, so from here on $X$ will be a $d$-manifold. For any coefficient module $N$ over $\Z$ Poincar\'{e} duality takes these forms:

\begin{align*}
	H_i(X;N) \cong H^{d-i}(X;N), \hspace{4pt} & X \text{ compact} \\
	\begin{rcases}
		H_i^\text{lf}(X;N) \cong H^{d-i}(X;N) \\
		H_i(X;N) \cong H^{d-i}_\text{cs}(X;N) \\
	\end{rcases} & X \text { possibly noncompact}
\end{align*}

\noindent
where lf denotes \emph{locally finite} chains meaning formal chains are allowed to be \emph{infinite} sums as long as no compact set meets more than finitely many formal simplicies, and cs denotes \emph{compact support}, meaning each co-chain is a compactly supported function.

It is known that $H_i^\text{lf}(X;N) \cong \xleftarrow[\scriptscriptstyle{K-\text{compact}}]{\lim} H_i(X, X \backslash K;N)$ and $H^i_\text{cs}(X;N) \cong \xrightarrow[\scriptscriptstyle{K-\text{compact}}]{\lim} H^1(X, X\backslash K; N)$.

Using shifts, it is easy to build a map $s: H(X) \ra Q(X)$.

\begin{thm}
	There is a four term exact sequence where $T$ is some unknown torsion subgroup, possibly trivial.

	\[
	0 \ra T \ra H \xrightarrow{s} Q \ra P \ra 0
	\]

	After dividing by \emph{Tor}, all the torsion in $H$, we get a spit short exact sequence:

	\begin{center}
	\begin{tikzpicture}
		\node at (0,0) {$0 \rightarrow H /$Tor};
		\draw [->] (1.1, 0) -- (1.6, 0);
		\node at (1.35,0.15) {$\scriptstyle \overline{s}$};
		\node at (2.8,0) {$Q \rightarrow P \rightarrow 0$};
		\draw [->] (1.9, -0.2) to [out = 225, in = 0] (1.3, -0.4) to [out = 180, in = -45] (0.7, -0.2);
		\node at (1.3,-0.6) {$\scriptstyle t$};
	\end{tikzpicture}
	\end{center}

	We denote the quotient by $P$, $s$ is for "shift," and $t$ is the splitting.
\end{thm}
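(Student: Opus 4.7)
The plan is to construct the shift map $s$ explicitly from cycles, verify well-definedness on homology via a disc-filling argument in the spirit of the Eilenberg-swindle construction used earlier, and then use the flux map $\flux$ both as the splitting $t$ and as the device that forces $\ker(s)$ into the torsion subgroup of $H$.

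First I would define $s$ on 1-cycles. A locally finite 1-cycle $c$ with $M$-coefficients, after a small perturbation, decomposes as a disjoint union of oriented simple closed curves $\gamma_i$ each carrying a coefficient $m_i=\log(p_i/q_i)\in M$. Choose disjoint tubular neighborhoods $N_i$ of the $\gamma_i$ and place along $\gamma_i$ ancillae of local dimensions $p_i$ and $q_i$ carrying a forward and a backward unit shift respectively. The resulting QCA $\sigma_i$ has logarithmic flux exactly $m_i$ across any arc transverse to $\gamma_i$ (by the construction of rational-flux 1D shifts in subsection~\ref{GNVWreview}) and acts trivially outside $N_i$. Set $s(c) := \prod_i [\sigma_i] \in Q(X)$; the product is well-defined modulo fdqc because the $\sigma_i$ have disjoint supports.

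Second I would verify that $s$ annihilates the image of $\partial_2$. For a small 2-simplex $\tau$ with $\partial\tau$ a loop $\ell$ carrying some $m\in M$, the shift QCA $\sigma_\ell$ can be re-expressed as a bounded-depth circuit by the Eilenberg-swindle construction of the introduction: inside $\tau$ create concentric parallel loops with alternating positive and negative flow, pair adjacent loops into trivial annular circuits, and retain only $\sigma_\ell$ itself. A general 2-chain is triangulated into small simplices and the resulting local circuits are composed in parallel, with overall depth controlled by a $k$-coloring argument as in the proof of Theorem~\ref{abelian}.

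Third I would use $\flux: Q(X) \to H/\mathrm{Tor}$ as the splitting $t$. By the local formula for flux in Eq.~(\ref{flux}), $\flux(\sigma_i)$ is represented by the cycle $m_i\gamma_i$, so $\flux \circ s$ coincides with the canonical quotient $H \to H/\mathrm{Tor}$. This simultaneously forces $T := \ker(s) \subseteq \mathrm{Tor}$ and yields $t \circ \bar s = \Id$ on the induced map $\bar s : H/\mathrm{Tor} \hookrightarrow Q$. Setting $P := \mathrm{coker}(s)$, the four-term sequence $0 \to T \to H \xrightarrow{s} Q \to P \to 0$ is exact by construction, and $t$ splits the short exact sequence obtained after quotienting $H$ by $\mathrm{Tor}$.

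The main obstacle is step two: the Eilenberg-swindle cancellation must produce a circuit of depth and range bounded independently of the size of the 2-chain, so that bounding cycles really land in the fdqc subgroup and not merely in a path-equivalence class of the identity. The treatment in the introduction of cancelling positive and negative flows along parallel lines in $H(L)$ is the template, but extending it to 2-chains in a general manifold, with rational $M$-coefficients realised by stabilization, requires careful bookkeeping. It is also for this reason that one can only conclude $T \subseteq \mathrm{Tor}$ rather than $T = 0$: there is no a priori bounded-range witness that a shift along a torsion cycle is an fdqc, so $T$ may well be a proper subgroup of $\mathrm{Tor}$.
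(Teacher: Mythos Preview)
Your overall architecture matches the paper's: construct $s$ via shifts along cycles, prove well-definedness on homology by an Eilenberg-swindle/bounded-depth-transposition argument (this is the paper's Lemma~\ref{localtranspose} and Figure~10), and split with the flux. Your diagnosis of the obstacle in step two is also accurate and matches the paper's own caveat that the construction only gives $T\subseteq\mathrm{Tor}$.

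The genuine gap is in step three. You write ``use $\flux: Q(X) \to H/\mathrm{Tor}$ as the splitting $t$'' and then verify only that $\flux\circ s$ is the quotient map by evaluating $\flux$ on the specific shift QCAs $\sigma_i$. But for $t$ to be a splitting it must be a homomorphism defined on \emph{all} of $Q$, landing in $H/\mathrm{Tor}$. For a general QCA $\alpha$, one gets a flux value across each codimension-one hypersurface $Y$; the nontrivial content is that these values assemble into an element of $\mathrm{Hom}(H_{d-1}(X;\Z),M)\cong H/\mathrm{Tor}$, i.e.\ that $f_\alpha(Y_1)+f_\alpha(Y_2)=f_\alpha(Y_3)$ whenever $[Y_1]+[Y_2]=[Y_3]$. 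This additivity is precisely Theorem~\ref{thmcohomology}, and its proof is the hardest part of the whole argument: it reduces to the torus case, establishes the estimate $|f_l+f_m-f_\Delta|\le c$ using the boundary-algebra factorization of Theorem~\ref{lltoy} from the two-dimensional section, and then kills the error term by passing to $n\times n$ covers and invoking the multiplicativity Theorems~\ref{2dcover}--\ref{4dcover}. None of this is implied by the local 1D formula in Eq.~(\ref{flux}) alone. Without it, you have not shown that $t$ exists as a map on $Q$, only that it behaves correctly on $s(H)$, which is circular for the purpose of producing a splitting.
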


\vspace{0.5em}
\noindent
\textbf{Note 1.} If we restrict the degrees of freedom to be qubits, then the coefficients $M$ may be replaced by the integers $\Z$. In some cases, for simplicity we will argue just in that case if the generalization is easy.

\vspace{0.5em}
\noindent
\textbf{Note 2.} When $X$ is a surface we will see in section \ref{algebrasection} that $H \cong Q(X)$.

\begin{proof}
	Take graph $G \subset X$ carrying $H$ and on $G$ implement a local permutation of sites $\{x_i\}$ and Hilbert spaces $\{\mathcal{H}_i\}$carrying degree of freedom $H_i$ so as to induce a local algebra endormoprhism with any desired flux in $H_1(G;M)$; then under inclusion, this maps to the general element in $H^\text{lf}_1(X;M)$. This construction turns out to be unique. To explain why, let us restrict, for simplicity, to the qubit case (integer coefficients) and $X$ compact, where it is sufficient to prove the following lemma.

	\begin{lemma}
	\label{localtranspose}
		Let $G$ be a graph embedded in a compact manifold (or finite simplical complex $X$). Let $\{x_i\}$ be finitely many points distributed on $G$ and $P$ a local permutation with $\flux(P) \in H_1(G;Z)$ so that $\textup{Inc}_\ast(\flux(P)) = O \in H_1(X;Z)$. Then we may stabilize $\{x_i\}$ with additional points in $X$ (still denoted $\{x_i\}$) and extend $P$ to be the identity on these. Then $P$ is a \textit{bounded depth} composition of local transposition of $\{x_i\}$. Here "bounded depth" means that if we assume the spacing of the $\{x_i\}$ along $G$ is small enough and the stabilization also dense enough that the diameter of every site-track under this circuit defining $P$ is of arbitrarily small diameter. The diameter will be a small multiple of the maximum distance that $P$ moves a point.
	\end{lemma}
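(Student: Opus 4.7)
The hypothesis $\mathrm{Inc}_\ast(\flux(P)) = 0 \in H_1(X;\Z)$ yields a simplicial 2-chain $\Sigma = \sum_\sigma n_\sigma \sigma \subset X$ with $\partial\Sigma = c := \flux(P)$. The plan is to decompose $P = Q \circ P'$, where $Q$ is an auxiliary permutation supported near $\Sigma$ that carries the full flux $c$, and $P'$ has zero flux, with both factors built from short local transpositions.

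First I would choose a common refinement of the triangulations of $G$, $\Sigma$, and $X$ whose mesh $\delta$ is much smaller than the maximum displacement $d(P) := \max_i d(x_i, P(x_i))$, and stabilize by placing ancillas (on which $P$ is extended as the identity) densely throughout $X$, in particular along the $1$-skeleton of $\Sigma$. For each oriented $2$-simplex $\sigma$ of $\Sigma$, I define $R_\sigma$ to be the permutation that cyclically shifts by one the ancillas lying along $\partial\sigma$ (leaving all other sites fixed). Since this is a shift around a loop of diameter $O(\delta)$, $R_\sigma$ is manifestly a composition of nearest-neighbor transpositions all of whose site tracks have diameter $O(\delta)$, and the computation in Eq.~(\ref{flux}) gives $\flux(R_\sigma) = \partial\sigma$ as a $1$-cycle. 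Setting
\[
Q := \prod_\sigma R_\sigma^{n_\sigma}
\]
in some fixed ordering --- since distinct $R_\sigma$'s interact only along shared edges of $\Sigma$, reordering costs only $O(1)$ extra layers --- $Q$ is a composition of local transpositions whose site tracks all have diameter $O(\delta)$, and additivity of flux gives $\flux(Q) = \sum_\sigma n_\sigma \partial\sigma = \partial\Sigma = c$.

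Next consider $P' := Q^{-1} \circ P$, which has zero flux by additivity. Here I would invoke (establishing it first, if not already at hand) the graph version of the 1D GNVW pairing argument: any local permutation of sites on a graph with zero flux is a composition of nearest-neighbor transpositions whose site tracks have diameter $O(\max_i d(x_i, P'(x_i)))$. The idea is that zero flux on each edge of $G$ means equal numbers of particles cross in the two directions, so such crossings can be paired off and executed as simultaneous local swaps coordinated at each vertex, with ancilla ``shift registers'' along $G$ handling long pairings. Since the tracks of $Q$ are $O(\delta)$ and those of $P$ are $O(d(P))$, we get $\max_i d(x_i, P'(x_i)) = O(d(P))$, and the composition $P = Q \circ P'$ consequently has site tracks of diameter a small multiple of $d(P)$, as claimed.

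The main obstacle is the zero-flux step on a graph with branch points: while the pairing argument is clean on a line or circle, extending it to arbitrary graphs demands careful coordination at vertices of valence $\geq 3$ so that every paired crossing is realized by a genuinely short sequence of transpositions whose site tracks stay of diameter $O(d(P))$. Everything else --- the existence of $\Sigma$ from the null-homologous hypothesis, the explicit construction and flux computation of $R_\sigma$, flux additivity under composition, and the matching of scales between $\delta$ and $d(P)$ --- is furnished by the machinery assembled earlier in Section~\ref{topologysection}.
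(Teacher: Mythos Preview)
Your outline is sound and genuinely differs from the paper's sketch. The paper's argument is the Eilenberg-style swindle illustrated in Figure 10: to realize a shift along a cycle $c$ bounding a disk, fill the disk with concentric ancilla rings $R_1, R_2, \ldots$, create each adjacent counter-rotating pair $(R_{2k-1}, R_{2k})$ by a two-layer pattern of tiny $4$-cycles followed by diagonal swaps, and then cancel $R_{2k}$ against $R_{2k+1}$ by the same trick in reverse, leaving only the outer ring $R_1$. Your route instead absorbs the flux into $Q = \prod_\sigma R_\sigma^{n_\sigma}$ built from small simplex-boundary cycles and pushes all the difficulty into the zero-flux remainder $P'$.

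Two remarks. First, the $R_\sigma$'s as written share ancillas along common edges of $\Sigma$, so under the composition a site can be handed from one simplex to the next and its track need not stay $O(\delta)$; the clean fix is to give each $\sigma$ its own disjoint loop of ancillas, after which the $R_\sigma$'s genuinely commute and your track bound is immediate. Second, and more substantively, the zero-flux step is not dispatched by a bare appeal to 1D GNVW: along the edges of $c$, your $P'$ is precisely a counter-rotating pair (original sites shifting one way, $Q^{-1}$-ancillas the other), and realizing such a pair by short transpositions is exactly the content of the paper's two-layer ring construction. So both arguments bottom out in the same local mechanism; what you have done is trade the swindle's global telescoping of concentric rings for your simplicial decomposition of the bounding chain $\Sigma$. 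That is a reasonable and arguably more modular reorganization, but the ``pairing of crossings'' you sketch for $P'$ still needs an explicit local construction of the same flavor as Figure 10, and the branch-point coordination you flag as the main obstacle is real work rather than a formality.
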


	The well-definedness of the splitting $t$ is proven by applying Lemma \ref{localtranspose} to $G = G_1 \cup G_2$ and $\{x_i\} = \{x_i^1\} \cup \{x_i^2\}$, and $P = P_1 \cup P_2^{-1}$, where $(G_j, \{x_i^j\}, P_j)$ constitute two constructions of $t$ for $j = 1,2$.

	\vspace{0.5em}
	\noindent
	\textit{Proof (sketch) for Lemma \ref{localtranspose}.}
	The main idea is a kind of swindle that allows a bounded sequence of local permutations to build a cyclic permutation along a null homologous cycle. Figure 10 illustrates the principle for a cycle bounding a disk. Further details are left to the reader.

	\begin{figure}[p!]
		\centering
		\begin{tikzpicture}[scale = 0.6]
			\draw (0,0) circle (3);
			\draw (0,0) circle (3.5);
			\foreach \a in {1,2,...,18}{
			\draw [fill=black] (\a*360/18: 3) circle (0.3ex);
			}
			\foreach \a in {1,2,...,18}{
			\draw [fill=black] (\a*360/18: 3.5) circle (0.3ex);
			}
			\draw [<-] (360/18:2.6) arc (360/18:720/18-3:2.6);
			\draw [->] (1080/18:2.6) arc (1080/18:720/18+3:2.6);
			\draw [->] (1440/18:2.6) arc (1440/18:1080/18+3:2.6);
			\draw [->] (360/18:3.8) arc (360/18:720/18-3:3.8);
			\draw [<-] (1080/18:3.8) arc (1080/18:720/18:3.8);
			\draw [<-] (1440/18:3.8) arc (1440/18:1080/18+3:3.8);
		\end{tikzpicture}

		\begin{flushleft}
		\vspace{0.5em}
		Build two counter-rotating rings by a bound sequence of transpositions. To do this note:
		\vspace{0.5em}
		\end{flushleft}

		\begin{tikzpicture}
			\node at (-0.2,0) {\Large$\cdots$};
			\draw (0.5,0.5) rectangle (1.5,-0.5);
			\draw [->] (0.99, 0.5) -- (1,0.5);
			\draw [->] (1.01, -0.5) -- (1, -0.5);
			\draw [->] (0.5, -0.01) -- (0.5, 0);
			\draw [->] (1.5, 0.01) -- (1.5, 0);

			\draw (2,0.5) rectangle (3,-0.5);
			\draw [->] (2.49, 0.5) -- (2.5,0.5);
			\draw [->] (2.51, -0.5) -- (2.5, -0.5);
			\draw [->] (2, -0.01) -- (2, 0);
			\draw [->] (3, 0.01) -- (3, 0);

			\draw (3.5,0.5) rectangle (4.5,-0.5);
			\draw [->] (3.99, 0.5) -- (4,0.5);
			\draw [->] (4.01, -0.5) -- (4, -0.5);
			\draw [->] (3.5, -0.01) -- (3.5, 0);
			\draw [->] (4.5, 0.01) -- (4.5, 0);

			\draw (5,0.5) rectangle (6,-0.5);
			\draw [->] (5.49, 0.5) -- (5.5,0.5);
			\draw [->] (5.51, -0.5) -- (5.5, -0.5);
			\draw [->] (5, -0.01) -- (5, 0);
			\draw [->] (6, 0.01) -- (6, 0);

			\draw (6.5,0.5) rectangle (7.5,-0.5);
			\draw [->] (6.99, 0.5) -- (7,0.5);
			\draw [->] (7.01, -0.5) -- (7, -0.5);
			\draw [->] (6.5, -0.01) -- (6.5, 0);
			\draw [->] (7.5, 0.01) -- (7.5, 0);
			\node at (8.3,0) {\Large$\cdots$};

			\foreach \a in {1,2,3,4,5} {
			\draw [fill = black] (\a*1.5-1, 0.5) circle (0.3ex);
			\draw [fill = black] (\a*1.5, 0.5) circle (0.3ex);
			\draw [fill = black] (\a*1.5-1, -0.5) circle (0.3ex);
			\draw [fill = black] (\a*1.5, -0.5) circle (0.3ex);
			}

			\node at (-0.2,-2) {\Large$\cdots$};
			\node at (4,-1.05) {\small followed by};

			\foreach \a in {1,2,3,4,5} {
			\draw [<->] (\a*1.5-.95, -2.45) -- (\a*1.5-0.05, -1.55);
			\draw [->] (\a*1.5-0.55, -1.95) -- (\a*1.5-.95, -1.55);
			\draw [->] (\a*1.5-0.45, -2.05) -- (\a*1.5-0.05, -2.45);
			\draw [fill = black] (\a*1.5-1, -2.5) circle (0.3ex);
			\draw [fill = black] (\a*1.5, -1.5) circle (0.3ex);
			\draw [fill = black] (\a*1.5-1, -1.5) circle (0.3ex);
			\draw [fill = black] (\a*1.5, -2.5) circle (0.3ex);
			}
			\node at (8.3,-2) {\Large $\cdots$};

			\node at (4,-3.05) {\small is equal};
			\node at (-0.2,-3.8) {\Large$\cdots$};
			\draw (0.5,-3.6) -- (7.5,-3.6);
			\draw (0.5,-4) -- (7.5,-4);
			\foreach \a in {1,2,...,10}{
			\draw [fill = black] (\a*7/9-0.3, -3.6) circle (0.3ex);
			\draw [fill = black] (\a*7/9-0.3, -4) circle (0.3ex);
			}
			\foreach \a in {1,2,...,9}{
			\draw [->] (\a*7/9+0.14, -3.6) -- (\a*7/9+0.15, -3.6);
			\draw [<-] (\a*7/9+0.14, -4) -- (\a*7/9+0.15, -4);
			}
			\node at (8.3,-3.8) {\Large$\cdots$};
		\end{tikzpicture}

		\begin{flushleft}
		\vspace{0.5em}
		This can be made periodic and a difference in number of sites in these adjacent rows can also be accomodated locally a replacement of the form:	
		\vspace{0.5em}	
		\end{flushleft}

		\begin{subfigure}{\textwidth}
		\centering
		\begin{tikzpicture}
			\draw (0.5,0.5) rectangle (1.5,-0.5);
			\draw [->] (0.99, 0.5) -- (1,0.5);
			\draw [->] (1.01, -0.5) -- (1, -0.5);
			\draw [->] (0.5, -0.01) -- (0.5, 0);
			\draw [->] (1.5, 0.01) -- (1.5, 0);

			\draw [->] (1.9,0) -- (2.4,0);

			\draw (2.8, 0.5) -- (4.3, 0.5) -- (3.9, -0.5) -- (3.2, -0.5) -- (2.8, 0.5);

			\draw [->] (3.19, 0.5) -- (3.2,0.5);
			\draw [->] (3.89, 0.5) -- (3.9,0.5);
			\draw [->] (3.51, -0.5) -- (3.5, -0.5);
			\draw [->] (3.005, -0.01) -- (3, 0);
			\draw [->] (4.105, 0.01) -- (4.1, 0);

			\draw [fill = black] (0.5, 0.5) circle (0.3ex);
			\draw [fill = black] (0.5, -0.5) circle (0.3ex);
			\draw [fill = black] (1.5, 0.5) circle (0.3ex);
			\draw [fill = black] (1.5, -0.5) circle (0.3ex);
			\draw [fill = black] (2.8, 0.5) circle (0.3ex);
			\draw [fill = black] (3.5, 0.5) circle (0.3ex);
			\draw [fill = black] (4.3, 0.5) circle (0.3ex);
			\draw [fill = black] (3.2, -0.5) circle (0.3ex);
			\draw [fill = black] (3.9, -0.5) circle (0.3ex);
		\end{tikzpicture}
		\caption{}
		\end{subfigure}

		\begin{flushleft}
		\vspace{0.5em}
		Build $n$ such ring pairs and then cancel the inner one locally. Finally cancel $R_{2n-1}$ with $R_{2n-2}$, ..., $R_3$ with $R_2$ leaving only $R_1$.
		\vspace{0.5em}
		\end{flushleft}

		\begin{subfigure}{\textwidth}
		\centering
		\begin{tikzpicture}[scale = 0.85]
			\draw (0,0) circle (3);
			\draw (0,0) circle (2.3);
			\draw (0,0) circle (1.6);
			\draw (0,0) circle (0.5);
			\draw (0,0) circle (0.2);
			\node at (-2.1,2.6) {\small $R_1$};
			\node at (-1.7,2) {\small $R_2$};
			\node at (-1.3,1.4) {\small $R_3$};
			\node at (-0.6,0.9) {$\ddots$};
			\node at (0,-0.3) {\small $R_{2n-1}$};

			\draw (.1, 3.1) -- (0, 3) -- (.1, 2.9);
			\draw (-0.1, 2.4) -- (0, 2.3) -- (-0.1, 2.2);
			\draw (.1, 1.7) -- (0, 1.6) -- (.1, 1.5);
			\draw (.1, 0.6) -- (0, 0.5) -- (.1, 0.4);
			\draw (-0.1, 0.27) -- (0, 0.2) -- (-0.07, 0.07);
		\end{tikzpicture}
		\caption{}
		\end{subfigure}
		\caption{}
	\end{figure}

	(Note: All symmetric groups $S_n$ are generated by transpositions, so for $n$ small we do not need to bother explicitly expressing local permutations as products of transpositions.)
\end{proof}

It remains to construct the splitting $t$. We will also need the universal coefficient exact sequence (UCT):

\[
0 \ra \mathrm{Ext}_\Z^1(H_{i-1}(W;\Z),N) \ra H^i(W;N) \ra \mathrm{Hom}_\Z(H_i(W,\Z),N) \ra 0
\]

When $i = 1$ the Ext-term vanishes. Set $N = M$. If $X$ is compact set $W = X$, if $X$ is noncompact set $W = (X, X\backslash K)$ for $K$ compact in $X$. We get

\[
H^1(X;M) \xrightarrow{\cong} \mathrm{Hom}_\Z(H_1(X;\Z),M)
\]

\noindent
or in the noncompact case:

\begin{align*}
& H^1(X, X\backslash K;M) \xrightarrow{\cong} \mathrm{Hom}_\Z(H_1(X, X\backslash K; \Z),M) \\
\xrightarrow[K]{\lim} ( & \equalto{H^1(X, X \backslash K; M)}{H^1_\text{cs}(X;M)} \xrightarrow{\cong} \equalto{\mathrm{Hom}_\Z(\xleftarrow[K]{\lim}H_1(X, X\backslash K; \Z), K))}{\mathrm{Hom}_\Z(H_1^\text{lf}(X;\Z),M)}
\end{align*}

The lesson is if $X$ is noncompact we take cohomology with compact supports and locally finite homology. If $X$ is compact one may \emph{ignore} these decorations. Now taking $\mathrm{Hom}_\Z$ we find natural isomorphisms:

\[
\mathrm{Hom}_\Z(H_\text{cs}^1(X;\Z),M) \cong \mathrm{Hom}_\Z(H_\text{cs}^1(X;M),\Z) \ra \mathrm{Hom}_\Z(\mathrm{Hom}_\Z(H_1^\text{lf}(X;Z),M)) \cong H_1^\text{lf}(X;M)/\textup{Torsion}
\]

The first and last isomorphism since the module $M$ is torsion free.

On the right we have our proposed home for flux; on the left we have linear functions from hypersurfaces to $M$. To see this latter fact note that $H^1(X;\Z) \cong [X,S^1]$ homotopy classes of maps to the circle and, more generally, $H^1_\text{cs}(X;\Z) \cong [\hat{X}; S^1]$ where $\hat{X}$ is the one-point-compactification of $X$. By Poincar\'{e} duality $H^1_\text{cs}(X;\Z) \cong H_{d-1}(X;\Z)$, the linearized hypersurfaces. Corresponding to a class $a \in H^1_\text{cs}(X;\Z)$ is the hypersurface $Y = f^{-1}(\ast \in S^1)$ where $[f] \in [\hat{X}, S^1]$ corresponds to $a$.

Given a QCA $\alpha$ with sufficiently small range on $X$, then using dimensional reduction, flux can be defined as a function $f(\alpha): H_{d-1}(X;Z) \ra M$. We can now state:

\begin{thm}
\label{thmcohomology}
	For a QCA $\alpha$ on $X$, the flux $f(\alpha)$ is a linear map $f(\alpha):H_{d-1}(X;\Z) \ra M$. Hence (by the UCT) $\alpha$ determines a cohomology class $[\alpha] \in H^{d-1}(X;M)/\textup{Torsion}$, or equivalently via Poincar\'{e} duality a class $[\alpha]^\wedge \in H_1^\textup{lf}(X;M)/\textup{Torsion}$
\end{thm}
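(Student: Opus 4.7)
The plan is to upgrade the existing assignment $Y \mapsto f(\alpha)(Y)$---built for a closed oriented hypersurface by dimensionally reducing to a 1D problem and invoking the GNVW index---into a genuine $\Z$-linear map out of $H_{d-1}(X;\Z)$, and then to chase through the UCT and Poincar\'{e} duality displays already set up just above the statement. Two things must be checked: well-definedness on homology classes, and additivity.

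For well-definedness, if $Y_0$ and $Y_1$ are homologous oriented $(d-1)$-cycles, then $Y_0-Y_1$ bounds an integral $d$-chain in $X$, which (at least up to torsion) is the image of a map from an oriented $d$-manifold $W$ with $\partial W = Y_0 \sqcup (-Y_1)$. I would then emulate the argument of Theorems \ref{3dcover} and \ref{4dcover}: puncture $W$ to land in the Smale--Hirsch immersion range, approximate by an immersion $\widetilde{g}:W^-\to X$, and absorb the framing obstruction on the new $S^{d-1}$ component by connect-summing enough closed parallelizable (or spin) $d$-manifolds whose characteristic classes span the relevant obstruction lattice, so that the residual framing is trivial. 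Lemma \ref{welldefined} then identifies the flux across the auxiliary sphere with that of a small round sphere, which vanishes by dimensional reduction. Pulling $\alpha$ back to $W^-$, dimensionally reducing to a tree spine joining the boundary components, and invoking Theorem \ref{graphthm} yields $f(\alpha)(Y_0)=f(\alpha)(Y_1)$, up to the torsion we have agreed to quotient out.

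Additivity for disjoint representatives is immediate---the 1D theory applies independently in disjoint tubular neighborhoods---and any two classes in $H_{d-1}$ admit disjoint representatives by transverse perturbation in codimension one. The delicate point is the compatibility of formal $\Z$-linear sums of cycles with the assignment when representatives unavoidably cross or lie in overlapping tubular neighborhoods; this is the additivity flagged in the outline as requiring the two-dimensional algebraic machinery developed in section \ref{algebraicsection}. I expect this to be the main obstacle, the topological arguments of the present section alone being insufficient to close this gap.

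Granting linearity, the UCT recalled just above produces a class in $H^{d-1}(X;M)$ well-defined modulo the $\mathrm{Ext}^1_{\Z}(H_{d-2}(X;\Z),M)$ summand, which for finitely generated $H_{d-2}$ decomposes as a sum of groups $M/n_iM$ and hence is torsion. Thus $\alpha$ determines a canonical element of $H^{d-1}(X;M)/\mathrm{Torsion}$. Poincar\'{e} duality in its compactly-supported/locally-finite form, assembled in the lines preceding the statement, then transports this class to the desired $[\alpha]^\wedge \in H_1^{\mathrm{lf}}(X;M)/\mathrm{Torsion}$.
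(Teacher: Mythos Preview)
Your proposal contains a genuine error and misses the paper's central mechanism.

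First, the error: the claim that ``any two classes in $H_{d-1}$ admit disjoint representatives by transverse perturbation in codimension one'' is false. Two codimension-one submanifolds made transverse meet in a $(d-2)$-dimensional submanifold, not in the empty set; on $T^2$ the meridian and longitude have algebraic intersection $1$ and can \emph{never} be separated. So additivity is never reducible to the disjoint case, and the ``delicate point'' you flag is in fact the entire content of the theorem.

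Second, the paper's approach is quite different from yours on both fronts. Rather than invoking the immersion/puncture machinery of Theorems \ref{3dcover}--\ref{4dcover} for well-definedness (which, as the paper itself notes, is only established for $d\le 4$ and runs into hard framing obstructions beyond that), the paper reduces the \emph{entire} additivity statement to dimension two by classifying maps. Given $Y_1,Y_2\subset X$, their Poincar\'{e}-dual classes are pulled back from $S^1$ along maps $\theta_i:X\to S^1$, so $(X;Y_1,Y_2)=\theta^{-1}(T^2;m,l)$ for $\theta=\theta_1\times\theta_2$. After dimensional reduction along $\theta$, everything becomes a question about a QCA on $T^2$ and the fluxes through $m$, $l$, and the diagonal $\Delta$.

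On $T^2$ the paper then uses a trick you do not mention: it first proves only a \emph{bounded} defect $|f_m+f_l-f_\Delta|\le c$, where $c$ depends on the local Hilbert space dimensions near $m\cap l$ (this step is where Theorem \ref{lltoy} from section \ref{algebraicsection} enters). Then it pulls back to the $n\times n$ cover $\widetilde{T}\to T^2$. By the multiplicativity of Theorem \ref{2dcover}, the fluxes through $\widetilde{m},\widetilde{l},\widetilde{\Delta}$ scale by $n$, while the single resolution point still contributes only the \emph{same} constant $c$. Dividing by $n$ and letting $n\to\infty$ forces exact equality. Your outline has no analogue of this covering amplification step, and it is what converts the soft algebraic bound into a sharp identity uniformly in dimension.
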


\begin{remark}
We do not know if $\alpha$ in some sense determines a divergenceless vector field (or equivalently a closed $d-1$ form), but at least it contains the corresponding cohomological information.
\end{remark}

\begin{proof}
	Curiously it is completely general to give the argument when $X$ is $T^2$, a 2-torus. We need to show that for two embedded $(d-1)$ submanifolds $Y_1$ and $Y_2 \subset X$ that
	\begin{equation}
	 	f_\alpha(Y_1) + f_\alpha(Y_2) = f_\alpha(Y_3),
	\end{equation}
	where $Y_3$ is an embedded $d-1$ submanifold in the sum of the two homology classes:
	\begin{equation}
		[Y_1] + [Y_2] = [Y_3]
	\end{equation}

	A prototypical example of this is $Y_1 = \text{meridian} = m$ and $Y_2 = \text{longitude} = l$ on the torus $T^2$. But more importantly, using the classification property of $S^1$, every example $(X; Y_1; Y_2)$ may be written as the transverse inverse image of $(T^2;m,l)$. Let $\te_i:X \ra S^1$ classify $Y_i, i = 1,2$, so that $Y_i = \te_i^{-1}(\ast) = $ the preimage of the base point. Now setting $\te_1 \times \te_2: X \xrightarrow{\te} S^1 \times S^1 = T^2$, we see that $Y_1 = \te^{-1}(\ast \times S^1)$ and $Y_2 = \te^{-1}(S^1 \times \ast)$.

	In the special case where $(X; Y_1, Y_2) = (T^2, m, l)$ we need to argue that the $\alpha$-flux through the diagonal $\Delta = m + l$ is the sum of the two fluxes through $m$ and $l$ respectively.
In fact, this follows immediately from theorem \ref{applic} later, as the QCA is stably equivalent to a composition of two shift QCA, one shifting on the meridian and one shifting on the longitude, and one can directly compute the fluxes in this case.

However, it is also interesting to see that one can prove the additivity of fluxes using topological methods, without using the full machinery of section \ref{algebraicsection}.  We will need some input from that section below, but not as much.
 We do this via a large but finite covering space $T^2 \ra T^2$ (degree $n^2$). The general case then follows by pulling back all constructions over $\te$ and computing using dimensional reduction (to 1D).

	The 2-torus $T$ may be dimensionally reduced either to a horizontal circle $m$ or a vertical circle $l$. See Figure 11(A) and 11(B).

	\begin{figure}[ht]
		\centering
		\begin{subfigure}{0.3\textwidth}
		\begin{tikzpicture}[scale = 0.7]
	 		\draw (1.8,0) -- (1.8,5) -- (7.8,5) -- (7.8,0) -- (1.8,0);
			\draw (2.8,0) -- (2.8,5);
			\draw (3.8,0) -- (3.8,5);
			\draw (4.8,0) -- (4.8,5);
			\draw (5.8,0) -- (5.8,5);
			\draw (6.8,0) -- (6.8,5);
			\node at (4.9,5.4) {$T^2$};
			\node at (4.8,-0.4) {$\ast$};
			\draw [->] (4.8, -0.7) -- (4.8, -1.5);
			\draw (1.8,-1.8) -- (7.8,-1.8);
			\node at (4.8,-2.1) {$m$};
		\end{tikzpicture}
		\caption{}
		\end{subfigure}
		\hspace{1em}
		\begin{subfigure}{0.3\textwidth}
		\begin{tikzpicture}[scale = 0.7]
			\draw (11.5,-1) -- (11.5,5) -- (17.5,5) -- (17.5,-1) -- (11.5,-1);
			\node at (14.6,5.4) {$T^2$};
			\draw (11.5, 1) -- (17.5, 1);
			\draw (11.5, 2) -- (17.5, 2);
			\draw (11.5, 3) -- (17.5, 3);
			\draw (11.5, 4) -- (17.5, 4);
			\draw (11.5, 0) -- (17.5, 0);
			\node at (11.1,2) {$\ast$};
			\draw [->] (17.7, 2) -- (18.3, 2);
			\draw (18.5,5) -- (18.5,-1);
			\node at (18.9,2) {$l$};
	 	\end{tikzpicture}
	 	\caption{}
	 	\end{subfigure}

		\begin{flushleft}
		We splice these together at the base point of each family to produce 11(C), a decomposition of $T^2$ whose quotient space is a "tadpole," a circle glued to the end point of an arc.
		\vspace{1em}
		\end{flushleft}

	 	\centering
	 	\begin{subfigure}{0.65\textwidth}
	 	\hspace{5em}
	 	\begin{tikzpicture}
	 		\draw (0.5,0) |- (0, 0.5);
			\draw (1,0) |- (0, 1);
			\draw (1.5,0) |- (0, 1.5);
			\draw (1.5, 1.5) -- (3.5,3.5);
			\draw (3.5,5) |- (5,3.5);
			\draw (4,5) |- (5,4);
			\draw (4.5,5) |- (5,4.5);

			\draw (0.5,5) |- (0,4.5);
			\draw (1,5) |- (0,4);
			\draw (1.5, 5) |- (0, 3.5);
			\draw (0,3) -- (1.3,3) -- (2,3.7) -- (2,5);
			\draw (0,2.5) -- (1.4,2.5) -- (2.5,3.6) -- (2.5,5);
			\draw (0, 2) -- (1.5, 2) -- (3, 3.5) -- (3, 5);

			\draw (4.5, 0) |- (5, 0.5);
			\draw (4, 0) |- (5, 1);
			\draw (3.5, 0) |- (5, 1.5);
			\draw (3,0) -- (3,1.3) -- (3.7,2) -- (5,2);
			\draw (2.5,0) -- (2.5, 1.4) -- (3.6, 2.5) -- (5, 2.5);
			\draw (2, 0) -- (2, 1.5) -- (3.5, 3) -- (5, 3);
			\node at (0.5,5.2) {$a$};
			\node at (1,5.2) {$b$};
			\node at (1.5,5.2) {$c$};
			\node at (2,5.2) {$d$};
			\node at (2.5,5.2) {$e$};
			\node at (3,5.2) {$f$};
			\node at (3.5,5.2) {$c$};
			\node at (4,5.2) {$b$};
			\node at (4.5,5.2) {$a$};
			\node at (5.2,3) {$d$};
			\node at (-0.2,3) {$d$};
			\node at (-0.2,2.5) {$e$};
			\node at (2,-0.2) {$d$};
			\node at (2.5,-0.2) {$e$};
			\node at (3,-0.2) {$f$};
			\draw [->] (5.9,2.5) -- (7.8,2.5);
			\node at (6.8,2.7) {quotient};
		\end{tikzpicture}
		\caption{}
		\end{subfigure}
		\begin{subfigure}{0.3\textwidth}
		\hspace{1em}
		\begin{tikzpicture}
			\draw  (10.7,1.2) circle (1);
			\draw (10.7,2.2) -- (10.7,4.6);

			\fill (11.7,1.2) circle[radius=2pt];
			\fill (9.7,1.2) circle[radius=2pt];
			\fill (10.7,2.2) circle[radius=2pt];
			\fill (10.7,0.2) circle[radius=2pt];
			\fill (10.7,3.4) circle[radius=2pt];
			\fill (10.7,4.6) circle[radius=2pt];
			\node at (11,4.6) {$a$};
			\node at (11,3.4) {$b$};
			\node at (10.9,2.3) {$c$};
			\node at (12,1.2) {$d$};
			\node at (10.7,-0.1) {$e$};
			\node at (9.4,1.2) {$f$};
	 	\end{tikzpicture}
	 	\caption{}
	 	\end{subfigure}
	\caption{}
	\end{figure}

	\begin{claim}
	$|f_m(\alpha) + f_l(\alpha) - f_\Delta(\alpha)| < c$, a constant depending on the range $R$ and the local Hilbert space dimensions, where a diagonal representative $\Delta$ may be identified with the preimage of point on the loop $e$ in Figure 11(D), and the constant $c$ can be taken to be the product of the Hilbert space dimensions within radius $2R$ of $\ast = l \cap m$.
	\end{claim}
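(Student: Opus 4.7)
The plan is to establish the bounded inequality using locality of the GNVW index computation, deferring the upgrade to exact equality to the subsequent covering argument mentioned in the paragraph after the claim. First, I verify using the tadpole decomposition of Figure 11(C)--(D) that $\Delta = q^{-1}(e)$ is a simple closed curve on $T^2$ homologous to $[m] + [l]$: tracing strip $e$ in the fundamental domain shows it traverses the boundary wrap-around once in each of the $m$- and $l$-directions. Then I compute $f_\Delta(\alpha)$ as the 1D GNVW flux of the dimensional reduction of $\alpha$ along the loop of the tadpole, noting that the quotient $q$ is a bona fide dimensional reduction (in the sense of subsection \ref{multiplicativesection}) away from a controlled neighborhood $U$ of $q^{-1}(c)$, and that $U$ can be arranged to lie inside the ball of radius $2R$ about $\ast$.

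Separately, I compute $f_m(\alpha)$ and $f_l(\alpha)$ by 1D dimensional reduction onto $m$ and $l$ at points well away from $\ast$. By the locality property underlying Theorem \ref{graphthm}, each of $f_m$, $f_l$, and $f_\Delta$ depends only on the support algebras in a $2R$-tube about the corresponding cycle. Outside the $2R$-neighborhood of $\ast$, one can deform $\Delta$ in its homology class so that it runs parallel to $m$ on one portion and to $l$ on the other, so the tubes about $\Delta$ and about $m \cup l$ can be identified in this outer region. Under this identification, the support algebras governing the 1D reduction along $\Delta$ outside $U$ split as tensor products matching those governing the reductions along $m$ and $l$.

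Applying the structural lemmas of subsection \ref{GNVWreview} --- in particular Lemma \ref{isproduct} and the definition of flux in equation \eqref{flux} --- the discrepancy between the flux $f_\Delta$ and the sum $f_m + f_l$ must come entirely from the mismatch in support algebras inside $U$. The size of this mismatch is bounded, via equation \eqref{flux}, by $\tfrac{1}{2} \log \bigl( \prod_{x : d(x,\ast) \leq 2R} \dim \mathcal{H}_x \bigr)$. Exponentiating, this yields the claimed constant $c$ as the product of local Hilbert space dimensions. The main obstacle is the careful bookkeeping in this step: precisely matching the tensor-factor decompositions of the three 1D reductions outside $U$, and bounding the Wedderburn data inside $U$ by the local dimension. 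Once established, the $n^2$-fold cover argument sharpens the inequality to equality, since the same bound $c$ applies in the cover while Theorem \ref{2dcover} scales all three fluxes by $n$, forcing $|f_m + f_l - f_\Delta| \leq c/n \to 0$.
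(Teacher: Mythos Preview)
Your outline captures the correct overall architecture --- localize the discrepancy to a neighborhood of $\ast$, bound it by local dimension, then kill it with the covering trick --- but there is a genuine gap at the step you flag as ``careful bookkeeping.'' The assertion that ``the support algebras governing the 1D reduction along $\Delta$ outside $U$ split as tensor products matching those governing the reductions along $m$ and $l$'' is not a consequence of the 1D lemmas of subsection~\ref{GNVWreview}. Lemma~\ref{isproduct} factors $\alpha(\mathcal{O}_{2i}\otimes\mathcal{O}_{2i+1})$ along the \emph{linear} direction of a 1D system; it does not give you a factorization of the support algebra $A_\Delta$ into a piece supported near $\ast$ tensor a piece supported away from $\ast$. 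A priori $A_\Delta$ is a single simple algebra sitting inside the full operator algebra on a $2R$-tube about $\Delta$, and nothing in the 1D theory forces it to respect the near/far decomposition you need.

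The paper itself flags this: the sentence immediately following the claim says the derivation ``requires some input from the theory of section~\ref{algebraicsection} later,'' and subsection~\ref{estimate} makes the dependence explicit. What is actually used is Theorem~\ref{lltoy}, the structure theorem for visibly simple algebras: since $A_\Delta$ is $2R$-visibly simple (Theorem~\ref{thmls}), one can split it as $\langle \mathcal{C}_1,\mathcal{C}_2\rangle$ with $\mathcal{C}_1$ supported within $O(R)$ of the dashed circle around $\ast$ and $\mathcal{C}_2$ supported away, the two commuting and each having trivial center. Only then can one argue that $\mathcal{C}_2\subset\langle A_l,A_m\rangle$ (because away from $\ast$ the curves and hence the site algebras coincide) and bound $\dim\mathcal{C}_1$ by the product of local Hilbert space dimensions near $\ast$. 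The reverse inequality is obtained by the same trick applied to $A_l$ and $A_m$. Your proposal should invoke Theorem~\ref{lltoy} (or reprove an equivalent local-factorization statement) at exactly this point; without it the comparison of $f_\Delta$ with $f_m+f_l$ does not go through.
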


	This claim is established by comparing the computation of support algebra dimensions for $\alpha$ near $l$, $m$, and $\Delta$, the preimage of $e$ in Figure 11(D), respectively, where we note that $\Delta$ is in the class of the diagonal. We know nothing detailed of the contribution to support algebras near $\ast$ and this ignorance yields the inequality. The derivation of this estimate is given in subection \ref{estimate} and it requires some input from the theory of section \ref{algebraicsection} later.

	Now let us pull back $\alpha$ on $T = T^2$ to $\widetilde{\alpha}$ on an $n \times n$ square covering $\widetilde{T} \ra T$. By Theorem \ref{2dcover}:

	\[
	f_{\widetilde{l}} = nf_l \hspace{1em} \text{and} \hspace{1em} f_{\widetilde{m}} = nf_m
	\]

	Let $\widetilde{\Delta}$ be the (small scale) resolution of $\widetilde{l} \cup \widetilde{m}$ to a scc on $\widetilde{T}$ (we use the same local resolution used to obtain $\Delta$ as the preimage of $e$ in Figure 11(D)). Over $l$ and $m$ the \textasciitilde \hspace{2pt} indicates a connected component of the inverse image in $\widetilde{T}$ under the covering map. We use $\dbtilde{\Delta}$ to denote any component of the preimage of $\Delta$ in $\widetilde{T}$. Although $\dbtilde{\Delta}$ has undergone $n$ local resolutions it is clearly in the same homology class as $\widetilde{\Delta}$, which has a single resolution. Since the range $R$ of $\alpha$ and the local Hilbert space dimensions are unchanged by pullback, from that single resolution we have:
	\begin{equation}\label{constant}
	 	|f_{\widetilde{l}} + f_{\widetilde{m}} - f_{\widetilde{\Delta}}| \leq c \text{ (same const. as in the claim)}
	\end{equation}

	Directly from Theorem \ref{2dcover},
	\begin{equation}
	\label{nmtimes}
	 	f_{\widetilde{l}} = nf_l, \hspace{1em} f_{\widetilde{m}} = n f_m, \hspace{1em} and \hspace{1em} f_{\dbtilde{\Delta}} = nf_\Delta
	\end{equation}

	\begin{claim}
	Since $\dbtilde{\Delta}$ and $\widetilde{\Delta}$ are homologous, $f_{\dbtilde{\Delta}} = f_{\widetilde{\Delta}}$
	\end{claim}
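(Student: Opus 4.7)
The plan is to upgrade the one-dimensional cycle condition of Theorem \ref{graphthm} to a Gauss law on the surface $\widetilde T$, namely that the flux of $\widetilde\alpha$ across any null-homologous embedded $1$-cycle vanishes. Since $\dbtilde\Delta$ and $\widetilde\Delta$ are assumed homologous on $\widetilde T$, their difference is null-homologous and the claim follows.

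First I would fix a triangulation $K$ of $\widetilde T$ whose edges all have length in a controlled range (say between $2R$ and $10R$, with $R$ the range of $\widetilde\alpha$) and whose $2$-cells are small geodesic discs. After a small ambient isotopy of $\widetilde T$, which by Lemma \ref{welldefined} preserves flux, I may assume both $\dbtilde\Delta$ and $\widetilde\Delta$ are cellular, i.e.\ written as oriented sums of edges of the $1$-skeleton $G$ of $K$. Theorem \ref{graphthm} then assigns to $\widetilde\alpha$ a well-defined element $f_G(\widetilde\alpha) \in H_1(G;M)$ whose pairing with any cellular $1$-cycle $\gamma$ recovers $f_\gamma(\widetilde\alpha)$.

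The key step is the Gauss law: for every $2$-cell $F$ of $K$, $f_{\partial F}(\widetilde\alpha) = 0$. To see this, apply Lemma \ref{welldefined} to shrink $\partial F$ by a regular homotopy of embedded scc's inside a tubular neighborhood of $F$ down to a tiny round circle bounding a disc containing at most one site. Dimensionally reducing that disc to a point (really, to a tree with a single boundary loop) and invoking the cycle condition of Theorem \ref{graphthm} for the trivial cycle in a contractible graph gives the desired vanishing.

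Granted the Gauss law, the claim is immediate: since $\dbtilde\Delta$ and $\widetilde\Delta$ are homologous in $\widetilde T$, their cellular representatives satisfy $\dbtilde\Delta - \widetilde\Delta = \partial c$ for some $2$-chain $c = \sum_i n_i F_i$ in $K$, hence
\[
f_{\dbtilde\Delta}(\widetilde\alpha) - f_{\widetilde\Delta}(\widetilde\alpha) \;=\; \sum_i n_i\, f_{\partial F_i}(\widetilde\alpha) \;=\; 0.
\]
I expect the only real obstacle to be a careful bookkeeping check that the Gauss law for a single small $2$-cell is genuinely local: one must confirm that the regular-homotopy contraction of $\partial F$ inside a disc neighborhood stays within a region in which $\widetilde\alpha$ is still well-behaved (so Lemma \ref{welldefined} applies) and that the final dimensional reduction does not accidentally create a spurious nontrivial cycle in the reduced graph. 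This is achieved by taking the $2$-cells of $K$ to have diameter small compared to the injectivity radius of $\widetilde T$, which is possible because the lift $\widetilde T$ is compact and its metric is fixed independently of $R$.
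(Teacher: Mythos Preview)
Your approach is correct but genuinely different from the paper's. The paper observes that $\dbtilde{\Delta}$ and $\widetilde{\Delta}$ are not merely homologous but \emph{isotopic} simple closed curves on $\widetilde T$, and then argues exactly as in Lemma~\ref{welldefined}: deform one to the other through a sequence of isotopies each supported in a small region, and at every stage compute the flux instead across a distant parallel copy of the curve that the current isotopy does not touch. No cell decomposition, no Gauss law for $2$-cells, no cellular chains --- just the locality of the GNVW computation along the curve.

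Your route, by contrast, proves the stronger statement that flux depends only on the homology class, via a Gauss law $f_{\partial F}=0$ for small discs and linearity over cellular $2$-chains. This is the moral content of the remark the paper makes after the claim (and of the ``Gauss law'' invoked for spheres in the proof of Theorem~\ref{3dcover}), so your argument is in the spirit of the paper, just more systematic. What you gain is generality: your proof does not use that the two curves are isotopic scc's. What the paper's proof gains is brevity: for this particular pair of curves isotopy is available and the argument is a one-liner. One calibration issue in your write-up: edge lengths ``between $2R$ and $10R$'' make the $2$-cells too small for Lemma~\ref{welldefined} to comfortably control the shrinking homotopy of $\partial F$; you want the $2$-cells large compared to $R$ (and small compared to the injectivity radius, as you correctly note at the end), so just drop the upper bound $10R$ and take the mesh at an intermediate scale.
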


	\begin{proof}
		$\dbtilde{\Delta}$ and $\widetilde{\Delta}$ are isotopic scc so the proof is essentially the same as in Lemma \ref{welldefined}. We may deform $\dbtilde{\Delta}$ to $\widetilde{\Delta}$ by a sequence of locally supported isotopies, none of which can alter the transverse flux since at each step it may be alternatively computed as the flux across a distant parallel copy of that scc. (The argument only requires the weaker relation: homology.)
	\end{proof}

	From Eq.~(\ref{nmtimes}) we now have
	\begin{equation}
	 	f_{\widetilde{\Delta}} = nf_\Delta
	\end{equation}
	From (\ref{constant}) and (\ref{nmtimes}):
	\begin{eqnarray}
	\label{fluxadds}\nonumber
	 	|nf_l + nf_m - nf_\Delta| & \leq c \\ \nonumber
	 	|f_l + f_m - f_\Delta| & \leq \frac{c}{n} \text{ , for all $n$} \\
	 	f_l + f_m & = f_\Delta \text{, as desired}
	\end{eqnarray}

	\begin{figure}[ht]
		\centering
		\begin{tikzpicture}[scale = 1.5]
			\draw  (0,3) rectangle (3,0);
			\draw [very thin, gray] (0,2) -- (3,2);
			\draw [very thin, gray] (0,1) -- (3,1);
			\draw [very thin, gray] (1,3) -- (1,0);
			\draw [very thin, gray] (2,3) -- (2,0);

			\draw (0, 1.5) -- (1,1.5);
			\draw (1.5, 3) -- (1.5, 2);
			\draw (1.5,2) arc (0:-90:0.5);
			\draw (3, 1.5) -- (2, 1.5);
			\draw (1.5, 0) -- (1.5, 1);
			\draw (1.5,1) arc (180:90:0.5);
			\draw [dashed] (1.5,1.5) circle (0.4);
			\node at (1.7,2.5) {$\widetilde{e}$};
			\draw [<-] (1.2,1.1) -- (0.6,-0.3);
			\node at (1.5,-0.4) {error $c$ occurs here and does};
			\node at (1.5,-0.7) {not scale with $n$};
		\end{tikzpicture}
		\caption{$\widetilde{\Delta} \subset \widetilde{T}$ for $n = 3$}
		\label{figerrorregion}
	\end{figure}

	As noted at the start of the proof, in the general case, sites in $X$ are aggregated by first pushing forward to $T^2$ via $\te$ and then applying the explicit decomposition of $T^2$ shown in Figure 11(c). All equalities and estimates obtained in $T^2$ now hold without change in $X$ via pullback.
\end{proof}

\begin{remark}
The splitting $t$ shows that something like an incompressible flow; actually just its homology class, a closed $(d-1)$-form $\in H^{d-1}(X,\R) \cong \mathrm{Hom}(H_{d-1}(X;\R),\R)$ is associated to a QCA regardless of the dimension $d$. What we actually map to is $\mathrm{Hom}(H_{d-1}(X;Z),M)$. Recent work of \cite{FHH} show that although QCAs contain this information they also contain more mysterious torsion.
\end{remark}

\begin{remark}
We did not need, in proving Claim 2, a strong fact that we record here for future application.
\end{remark}

\begin{fact}
For two disjoint oriented 1-manifolds $\alpha, \beta$ on a surface $\Sigma$ the equivalence relation $\equiv$ generated by $\alpha \sim \beta$ if $\alpha \cap \beta = \varnothing$ and $\alpha \cup -\beta$ cobound a surface $\Sigma_0 \subset \Sigma$ is the same as $\equiv_H$, $\alpha$ and $\beta$ satisfy $\alpha \equiv_H \beta$ if $[\alpha] = [\beta] \in H_1(\Sigma;Z)$, i.e. the homological equivalence.
\end{fact}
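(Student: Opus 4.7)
The forward inclusion $\equiv \subseteq \equiv_H$ is immediate: if $\alpha \cap \beta = \varnothing$ and $\Sigma_0 \subset \Sigma$ has $\partial \Sigma_0 = \alpha \cup (-\beta)$, then $[\alpha] - [\beta] = [\partial \Sigma_0] = 0$ in $H_1(\Sigma; \mathbb{Z})$; the equivalence closure preserves this. For $\equiv_H \subseteq \equiv$, assume $\alpha, \beta$ are disjoint with $[\alpha] = [\beta]$, so $L := \alpha - \beta$ is null-homologous. Writing $\{R_i\}$ for the components of $\Sigma \setminus (\alpha \cup \beta)$, there is an integer $2$-chain $c = \sum_i n_i R_i$ with $\partial c = L$; normalize so $\min_i n_i = 0$ and set $M := \max_i n_i$.

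The plan is to induct on $M$. When $M \leq 1$, the chain $c$ is $\{0,1\}$-valued and $S := \bigcup_{n_i = 1} R_i$ is an embedded oriented subsurface with $\partial S = \alpha - \beta$, exhibiting $\alpha \sim \beta$ directly. When $M \geq 2$ (an explicit obstruction to direct cobordism is four parallel longitudes on $T^2$ grouped as $\alpha = \gamma_0 + \gamma_{1/4}$, $\beta = \gamma_{1/2} + \gamma_{3/4}$, which force $M = 2$), let $R^{\mathrm{top}} := \bigcup_{n_i = M} R_i$. Its oriented boundary decomposes as $\partial R^{\mathrm{top}} = A - B$ with $A \subset \alpha$ and $B \subset \beta$ each a union of complete components; the crucial simplifying observation is that $\alpha \cap \beta = \varnothing$ forces every component of $\partial R^{\mathrm{top}}$ to lie entirely in one or the other, so $A$ and $B$ are bona fide sub-$1$-manifolds with no corner matching required. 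I construct the intermediate 1-manifold $\alpha^\dagger := (\alpha \setminus A) \sqcup B''$, where $B''$ is a parallel push-off of $B$ into the interior of the $(M-1)$-level region adjacent to $B$ on the \emph{outside} of $R^{\mathrm{top}}$, oriented parallel to $\beta$. Then $\alpha^\dagger \cap \beta = \varnothing$ and $[\alpha^\dagger] = [\alpha] - [A] + [B] = [\alpha]$, using $[A] = [B]$ (which follows from $[\partial R^{\mathrm{top}}] = 0$). Letting $E''$ denote the collar between $B$ and $B''$, the $2$-chain $c' := c - [R^{\mathrm{top}}] - [E'']$ satisfies $\partial c' = \alpha^\dagger - \beta$: on $R^{\mathrm{top}}$ it drops from $M$ to $M-1$, on $E''$ from $M-1$ to $M-2$, and elsewhere is unchanged, giving a normalized range of exactly $M-1$.

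To close the induction I verify $\alpha \equiv \alpha^\dagger$ via a parallel-copy intermediate. Let $\alpha^+$ be a parallel copy of $\alpha$ pushed to its right side (the side opposite $R^{\mathrm{top}}$ along $A$) with tubular half-collar $N$; then $\alpha \sim \alpha^+$ via $\partial N = \alpha^+ - \alpha$. The relation $\alpha^+ \sim \alpha^\dagger$ is witnessed by the embedded subsurface $N \cup R^{\mathrm{top}} \cup E''$: the three pieces are pairwise disjoint (by the choice of right-side push-off, $N$ sits outside $R^{\mathrm{top}}$ globally, while $E''$ lies on the other side of $B$), and a direct computation gives $\partial(N \cup R^{\mathrm{top}} \cup E'') = (\alpha^+ - \alpha) + (A - B) + (B - B'') = \alpha^+ - \alpha^\dagger$. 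The inductive hypothesis applied to $(\alpha^\dagger, \beta)$ yields $\alpha^\dagger \equiv \beta$, hence $\alpha \equiv \beta$.

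The hard part is precisely this orientation- and disjointness-bookkeeping for the push-off construction: one must arrange $B''$ on the correct side of $B$ (outside $R^{\mathrm{top}}$, not inside, since an inside push-off would leave a new collar region of value $M$ and \emph{fail} to reduce the range), and confirm that the tubular neighborhood $N$ chosen on the right side of $\alpha$ avoids $R^{\mathrm{top}}$ everywhere along $\alpha$ (which holds because the $(n_L, n_R)$-values across $\alpha$ satisfy $n_R < M$ globally, with $n_L = M$ exactly on $A$). The range-reduction identity then makes the induction tractable, and the entire argument reduces to the single $0/1$-valued subsurface of the base case.
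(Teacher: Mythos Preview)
The paper explicitly declines to prove this fact: ``It would take us too far afield to prove this topological fact here.'' So there is no paper proof to compare against, and your argument stands on its own.

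Your proof is correct. The induction on the range $M$ of the normalized bounding $2$-chain is the natural approach, and the execution is careful. The points where one would worry all check out: the boundary of $R^{\mathrm{top}}$ really does split cleanly as $A - B$ into full components of $\alpha$ and $\beta$ (because $\alpha \cap \beta = \varnothing$); the orientation bookkeeping giving $\partial E'' = B - B''$ and $\partial N = \alpha^+ - \alpha$ is consistent with $\partial c = \alpha - \beta$; and the three pieces $N$, $R^{\mathrm{top}}$, $E''$ have pairwise disjoint interiors and glue along $A$ and $B$ to give an honestly embedded subsurface with $\partial = \alpha^+ - \alpha^\dagger$. Your use of the global right-side push-off $\alpha^+$ as an intermediate (rather than trying to go directly from $\alpha$ to $\alpha^\dagger$) is exactly the right move: it guarantees $N$ avoids $R^{\mathrm{top}}$ everywhere, since the right side of $\alpha$ has $c$-value strictly below $M$ at every component.

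Two small remarks. First, the argument implicitly assumes $\Sigma$ is orientable, so that the regions $R_i$ carry coherent orientations and the integer chain $c$ is well-defined; this matches the paper's usage. Second, for the induction to be on a well-defined quantity, note that for closed orientable $\Sigma$ any two bounding $2$-chains differ by a multiple of $[\Sigma]$, hence have the same normalized range; for open or non-orientable $\Sigma$ the bounding chain is unique. Neither point is a gap, just worth stating.
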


It would take us too far afield to prove this topological fact here; we only note that it implies that the flux through a scc $\gamma$ of a sufficiently local QCA on a surface $\Sigma$ depends only on the homology class of $\gamma$.

\subsubsection{Derivation of Estimate}
\label{estimate}
We show the existence of a constant $c(\alpha)$ in Claim 1.  See Fig.~\ref{figerrorregion}.
Recall that the flux is defined by the difference between the $\frac{1}{2}\log$ dimension of a support algebra and the $\frac{1}{2}\log$ dimension
of an algebra on local sites.
Consider the curve $\overline \Delta$ shown in the figure.  The relevant support algebra is as follows.  Choose one side of the curve and call this side the "right side".  Take the full algebra on  sites near the curve, apply QCA $\alpha$ to this algebra, and consider its support algebra on the sites to the right of the curve.  Call this support algebra $A_\Delta$.  Similary, for curves $l$ or $m$, we define support algebra $A_l,A_m$.

We will first show that
$ f_\Delta(\alpha)  - f_m(\alpha)- f_l(\alpha) < c$.  Then, by a slight modification of the argument we will show that
$f_m(\alpha) + f_l(\alpha) - f_\Delta(\alpha) < c$, establishing the claim.
To show the first inequality, we use a result, which follows from theorem \ref{lltoy}, that algebra $A_{\Delta}$ is generated by two subalgebras, which we call
${\mathcal C}_1,{\mathcal C}_2$, such that these subalgebras commute with each other
and such that
${\mathcal C}_2$ is supported a distance greater than $R$ form the sites inside the dashed circle in Fig.~\ref{figerrorregion}, while ${\mathcal C}_1$ is supported near (within distance $O(R)$) of sites inside dashed circle.
Thus, the dimension of ${\mathcal C}_1$ is bounded by some constant.
However, we claim that ${\mathcal C}_2$ is a subalgebra of the algebra generated by $A_l,A_m$.  To see this, for any operator $O$ in ${\mathcal C}_2$, note that $\alpha^{-1}(O)$ is supported near $l \cup m$, and hence is in the site algebra used to define $A_l,A_m$, i.e., away from the dashed circle, $\Delta$ and $l \cup m$ contain the same set of sites.
This establishes the bound on the difference in dimensions.

To show that 
$f_m(\alpha) + f_l(\alpha) - f_\Delta(\alpha) < c$, we use a similar argument.  For $A_l$ (and similarly for $A_m$)
we show that it is generated by subalgebras,
${\mathcal C}_1,{\mathcal C}_2$, such that these subalgebras that commute with each other
and such that
${\mathcal C}_2$ is supported a distance greater than $R$ form the sites inside the dashed circle in Fig.~\ref{figerrorregion}, while ${\mathcal C}_1$ is supported near (within distance $O(R)$) of sites inside dashed circle.
Thus, the dimension of ${\mathcal C}_1$ is bounded by some constant.
However, ${\mathcal C}_2$ is a subalgebra of the algebra generated by $A_\Delta$.  A similar argument for $A_m$ establishes the bound on the difference in dimensions.

\subsection{Extending a germ periodically}
\label{productsection}
In this subsection we use a modification of the Kirby torus trick [K] and \cite{Hastings2013} (see the latter reference for the QCA context) to get a better understanding of the role that germs of QCA play in the subject. The results here answer an initial question but there are natural questions that are still unanswered. For example, we do not know how to construct a deformation of an initial germ to the periodic extension constructed next.

\begin{defin}
  A \emph{quantum system $\mathcal{H}$} on a Riemannian manifold $Y$ is a collection of finite dimensional Hilbert spaces $\{H_i\}$ indexed by a locally finite collection of points $\{y_i\}$ in $Y$. It is not necessary to assume that $y_i \neq y_j$ whenever $i \neq j$. However, we assume that in any ball of radius $r$ only finitely many degrees of freedom are associated to points within that ball.

One might also require that for all $y \in Y$, there is a $y_i$ associated to an $H_i$ of $\dim H_i \geq 2$ so that

\[
\text{dist}(y,y_i) < < \text{inj. rad}(Y, y)
\]

But this is not essential to the proof we give. However, without such an assumption it makes little sense to think of $Y$ as parameterizing the quantum system $\mathcal{H}$.
\end{defin}

\begin{defin}
  The operator algebra $\mathcal{O}$ of $\mathcal{H}$ is

  \[
  \mathcal{O} = \mathrm{End}(\underset{\scriptscriptstyle{i}}{\otimes}H_i) \cong \underset{\scriptscriptstyle{i}}{\otimes} \mathrm{End}(H_i)
  \]

  \noindent
  in the case that $i$ is from a finite index $I$ and

  \[
  \mathcal{O} = \lim_{\to} \mathrm{End}(\underset{\scriptscriptstyle{i \in F}}{H_i})
  \]

  \noindent
  where the direct limit is taken over the finite subsets $F$ of $I$. Several topologies are possible on this limit, for example: weak, trace, and norm; but restricted to local operators, they all agree, up to constants.

  For us, all algebras are $\ast$-algebras and all homomorphisms unital.
\end{defin}

\begin{note}
With an appropriately rescaled trace, the infinite case $\mathcal{O}$ becomes the hyperfinite type II factor.
\end{note}

\begin{defin}
  For us, all homomorphisms carry unit to unit, so all "automorphisms" are implicitly \emph{unital}. An automorphism $\alpha$ of $\co$, or equivalently "over $Y$", is said to be \emph{$R$-local} (= \emph{"range $R$"}) if for every operator $A \in \co$ with $\sup(\alpha(A)) \subset \mathcal{N}_R(\sup(A))$ i.e. $\alpha(A)$ has support within an $R$-neighborhood of $\sup(A)$. $\co^{-1}$ is easily seen to be $R$-local as well. Similarly, a homomorphism $\gamma: \co \ra \co^\pr$, $\co$ supported in $\text{U} \subset Y$, $\co^\pr$ supported in $\text{U}^\pr \subset Y$ is $R$-local iff $\sup(\gamma(A)) \subset \mathcal{N}_R(\sup(A))$, for every $A \in \co$.
\end{defin}

\begin{defin}
  Let $\text{U} \subset Y$ be an open set. Then we say the composition $\beta = \gamma_2 \circ \gamma_1$ is a \emph{ragged} $R$-autmorphism over $\text{U}$ if there are $R$-homomorphisms

  \begin{align*}
    \gamma_1: & \co_{\text{U}} \ra \co_{\text{V}}, \text{ V} \subset \mathcal{N}_R(\text{U}) \\
    \gamma_2: & \co_{\text{V}} \ra \co_{\text{W}}, \text{ U} \subset \text{W}
  \end{align*}

  \noindent
  so that $\gamma_2 \circ \gamma_1$ is the map $\mathrm{Inc}_\#: \co_{\text{U}} \ra \co_{\text{W}}$ induced by inclusion of $\text{U}$ into $\text{W}$.
\end{defin}

\begin{note}
Any automorphism $\alpha$ of $\mathrm{End}(\C^n)$ is "inner" in the sense of being conjugation by a unitary: $\alpha(A) = \text{U} A \text{U}^+$. If $\alpha$ is $R$-local so will be the conjugating unitary.
\end{note}

We typically consider systems, endomorphism algebras, and their local automorphisms up to stabilization. This means that we free add additional finite dimensional Hilbert spaces $H_j$, $j \in J$, located at $y_j \in Y$ and extend $\alpha$ to the identity on these. In atomic physics these would be "inner orbitals," remote from the low energy physics.

The theme of this subsection is to begin to define and prove analogs, in this algebraic setting, of famous theorems in manifold topology from the 1960's: the product structure theorem\cite{browder1965structures} and the uniqueness of collars \cite{brown1962locally}. We prove:

\begin{thm}["Extend a germ"]
\label{extend}
  Let $\beta$ be a ragged $R$-automorphism over $X \times (0,1) \subset X \times \R$. Assume $r << R << 1$ and $R << r_0$, the injectivity radius of the Riemannian manifold $X$. Then $\beta|_{X \times [-1/2, 1/2]}$ may be extended to a 2-periodic $R$-local automorphism $\alpha$ over $X \times \R$, or equivalently over $X \times S_2^1$, $S_2^1$ being the circle of length 2, $[-1,1]/_{-1 \cong 1}$.
\end{thm}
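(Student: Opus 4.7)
My plan is to apply a modified Kirby torus trick (in the spirit of \cite{Hastings2013}): construct an immersion $\phi: (X \times S_2^1) \setminus N \to X \times (0,1)$ from the source punctured at a small ball $N$ which is disjoint from the central slab $X \times [-1/2, 1/2]$, pull back $\beta$ by $\phi$, and then close the pullback up across $N$ by stabilization. The goal is to arrange that $\phi$ restricted to $X \times [-1/2, 1/2] \subset X \times S_2^1$ is a translate of the identity into the domain of $\beta$, so $\phi^\ast \beta$ automatically agrees with $\beta$ on the central slab. Elsewhere $\phi$ must fold $S_2^1$ back into $(0,1)$, and this is exactly where the extra $X$-direction is essential, since no pure one-dimensional immersion $S_2^1 \to \R$ exists.

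To build $\phi$, parametrize $S_2^1$ as $[-1,1]/_{-1 \cong 1}$ and set $\phi(x,t) = (x, t+c)$ on $X \times [-1/2, 1/2]$ for a shift $c$ placing the image in $(0,1)$. Over $t \in [-1, -1/2] \cup [1/2, 1]$ I would invoke the Smale-Hirsch immersion theorem on the open manifold $(X \times S_2^1) \setminus N$: its tangent bundle is $TX \oplus \R \cong T(X \times \R)$, so a bundle monomorphism extending the identity on the central slab lifts to an immersion into $X \times (0,1)$. The hypothesis $R \ll r_0$ (the injectivity radius of $X$) ensures that the fold can be arranged to stay within a thin $X$-collar and therefore respect locality on the scale of $R$.

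Pulling back $\beta$ via $\phi$ then gives a ragged $R$-local automorphism $\phi^\ast \beta$ on $(X \times S_2^1) \setminus N$, because the pullback of a ragged $R$-local map under a local isometry is again a ragged $R$-local map. On $X \times [-1/2, 1/2]$ this pullback coincides with $\beta$ by construction. To complete $\phi^\ast \beta$ across $N$, I stabilize: because $N$ is contained in a small ball whose Hilbert-space dimensions are uniformly bounded, the ragged structure near $\partial N$ differs from the identity only by a finite-range correction with a fixed-size algebra, and tensoring in ancilla and completing by the identity on $N$ yields a genuine $R$-local automorphism $\alpha$ on all of $X \times S_2^1$. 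Identifying $S_2^1 = \R/2\Z$ delivers the required $2$-periodic extension on $X \times \R$.

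The main obstacle is the immersion step: one must verify that the fold can be carried out with image staying inside $(0,1)$ and with the central slab inserted as the identity. This is routine once $\dim X \geq 1$, because then the tangent bundle calculation permits using the $X$-direction to ``turn around''; a purely one-dimensional control space would fail, so the hypotheses on $r$, $R$, and $r_0$ are crucial. A secondary issue is producing an honest automorphism from the ragged pullback across the puncture $N$; this is the reason stabilization is built into the framework, and one must just check that the raggedness near $\partial N$ is absorbable by a bounded ancilla -- which it is, because $N$ has a priori bounded diameter compared to $R$.
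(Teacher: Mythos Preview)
Your approach is essentially the same as the paper's: puncture $X \times S_2^1$, invoke Smale--Hirsch to immerse the punctured space into $X \times (-1,1)$ extending the identity on the central slab, pull back $\beta$, and heal the puncture. The paper removes a single point rather than a ball $N$, but this is immaterial for the immersion-theory step.

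The one place where your write-up diverges, and is weaker, is the healing. You propose to ``tensor in ancilla and complete by the identity on $N$''; the paper instead uses Wedderburn directly and needs no stabilization. Letting $\mathcal{A}$ be the full matrix algebra on degrees of freedom at distance $>R$ from the puncture and $\mathcal{B} = \widetilde{\beta}(\mathcal{A})$ its image, both are simple and hence sit as tensor factors inside the total algebra $\mathcal{C}$; since $\dim\mathcal{A} = \dim\mathcal{B}$, the complements $\mathcal{C}/\mathcal{A}$ and $\mathcal{C}/\mathcal{B}$ have equal dimension, and one closes up by choosing \emph{any} isomorphism $\beta^\infty : \mathcal{C}/\mathcal{A} \to \mathcal{C}/\mathcal{B}$ and setting $\bar\beta = \widetilde\beta \otimes \beta^\infty$. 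Your ``complete by the identity'' is not literally correct (the ragged edge forces a nontrivial matching), and while an ancilla-based fix can be made to work, you should articulate the dimension-counting that makes it go through; the Wedderburn route is cleaner and delivers the theorem as stated, without stabilization. Also, your remark that a one-dimensional control space would fail is not quite right: for $X$ a point, $S_2^1 \setminus \{\text{pt}\}$ is an interval and the identity embedding already serves as the required immersion.
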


\begin{proof}
  The chief topological ingredient is the existence of the immersion $\overset{\scriptstyle{e}}{\hookrightarrow}$ extending horizontal and vertical maps in the diagram below:

  \begin{figure}[ht]
    \begin{tikzpicture}
      \node at (-3.2,2.5) {$X\times (-\frac{1}{2}, \frac{1}{2}) \hookrightarrow X \times (-1,1)$};
      \node [rotate = 90, scale = 1.5] at (-5,1.6) {$\longhookleftarrow$};
      \node at (-5,0.6) {$((X \times S^1_2)$\textbackslash pt.)};
      \node [rotate = 45, scale = 2] at (-3.2,1.4) {$\longhookrightarrow$};
      \node [rotate = 45] at (-3.3,1.6) {$e$};
      \draw [->] (-3.5, 0.6) -- (-1.45,2.1);
      \node at (-2.6,1.4) {$f^-$};
    \end{tikzpicture}
    \caption{}
  \end{figure}

  There is a "folding map" $f$ which is the identity on the x-coordinate and "folds" $S^1_2$ into $[-\frac{1}{2}, \frac{1}{2}]$, generically a 2-1 map. Removing a single point reduces the homotopy dimension of the source to less than the target putting us within the scope of the Smale-Hirsh immersion theorem, which reduce the question: "can $f \coloneqq f$ restricted to $(X \times S^1_2)$\textbackslash pt., called $f^-$, be homotopic (rel $X \times [-\frac{1}{2}, \frac{1}{2}]$) to an immersion" to homotopy theory. One must find a tangent bundle injection covering $f^-$ (and restricting to the differential of $f$ on $X \times [-\frac{1}{2}, \frac{1}{2}]$). On the level of stable tangent bundles this is immediate since the composition covers the folding map.
  \begin{equation}
    \tau^S(X \times S^1) \hookrightarrow \tau^S(X \times D^2) \xrightarrow{p} \tau^S(X \times I)
  \end{equation}
  $p$ is projection to the first coordinate $p: D^2 \ra D^1$.

  To destabilize we must continuously select a nonzero section in the fiber of $\tau^S(X \times D^2)$, $R^{d+k}$, for $k \geq 2$, $d = \dim(X)$. There is no obstruction as long as

  \[
  d+1 = \dim(\text{base}) < \dim(\text{fiber}) = d+k
  \]

  \noindent
  so $f$ may be covered by a tangent bundle injection. (Actually we only need to cover $f^-$ whose source has homotopy-dimension $d$, so there is room to spare meeting the hypotheses of immersion theory.) In any case, the immersion $e$ exists.

  Concretely, one may supply the tangential information a factor at a time: on the $X$-factor cover the inclusion by the identity map on tangent bundles, $\Id: \tau(X) \ra \tau(X)$. On the second factor $S^1 \ra [-1, 1]$ can be covered in only one way consistent with orientations: $\frac{\de}{\de \te} \ra \frac{+\de}{\de x}$.

\begin{figure}[ht]
  \centering
  \hspace{3em}
  \begin{tikzpicture}[scale = 1.5]
    \draw  (0,1.5) ellipse (0.9 and 0.15);
    \draw (-0.9,1.5) -- (-0.9,-2.1);
    \draw (0.9, 1.5) -- (0.9, -2.1);
    \draw (-0.9, -2.1) arc (180:360:0.9 and 0.25);
    \draw (-0.9, -1) arc (180:250:0.9 and 0.25);
    \draw (0.9, -1) arc (0:-52:0.9 and 0.25);
    \draw (-0.9, 0.3) arc (180:250:0.9 and 0.25);
    \draw (0.9, 0.3) arc (0:-49:0.9 and 0.25);

    \draw (-0.32, 0.06) to [out = 90, in = 180] (0.1,0.8) to [out = 0, in = 90] (0.6,-0.6) to [out = -90, in = 0] (0.1,-1.7) to [out = 180, in = -90] (-0.3, -1.23);
    \draw (-0.15, 0.05) to [out = 90, in = 180] (0.1, 0.6) to [out = 0, in = 90] (0.43, -0.6) to [out = -90, in = 0] (0.1, -1.55) to [out = 180, in = -90] (-0.15, -1.24);
    \draw (0.45, 0.09) arc (-60:-100:0.9 and 0.25);
    \draw (0.42, -1.2) arc (-60:-98:0.9 and 0.25);
    \draw [dashed] (-0.9, -1) arc (180:65:0.9 and 0.25);
    \draw [dashed] (0.9, -1) arc (0:50:0.9 and 0.25);
    \draw [dashed] (-0.9, 0.3) arc (180:100:0.9 and 0.25);
    \draw [dashed] (0.9, 0.3) arc (0:55:0.9 and 0.25);

    \node at (1.2,1.5) {$1$};
    \node at (1.2,0.3) {$\frac{1}{2}$};
    \node at (1.2,-0.5) {0};
    \node at (1.2,-1.5) {$-\frac{1}{2}$};
    \node at (1.2,-2.1) {$-1$};
    \node at (-0.2,-0.4) {image $e$};

    \draw [->] (1.9,-0.2) -- (0.5, -0.4);
    \node at (2.8,0) {band thickness};
    \node at (2.8,-0.4) {large w.r.t. $R$};
    \node at (-2.2,0) {picture of $e$};
    \node at (-2.2,-0.3) {for the case};
    \node at (-2.2,-0.6) {$X = S^1$};
  \end{tikzpicture}
  \caption{}
\end{figure}

  Now we follow \cite{Hastings2013} (see text above and below his line (53)), pull back $\beta$ along $e$ to $\widetilde{\beta}$ and then "heal the puncture." To do this define $\mathcal{B}$ to be the injective image under $\widetilde{\beta}$ of the full matrix algebra $\mathcal{A}$ of d.o.f. not within distance $R$ of the "the edge of the immersion." $\mathcal{A}$ and $\mathcal{B}$ are both simple algebras and consequently (by Wedderburn's Theorem) sit as tensor subfactors inside the entire pulled back algebra $\mathcal{C}$. Now extend $\widetilde{\beta}: \mathcal{A} \ra \mathcal{B}$ by an arbitrary isomorphism $\beta^\infty: \frac{\mathcal{C}}{\mathcal{A}} \ra \frac{\mathcal{C}}{\mathcal{B}}$ and set
  \begin{equation}
    \bar{\beta} = \widetilde{\beta} \otimes \beta^\infty
  \end{equation}
  to be the healed algebra endomorphism over $X \times S^1$. $\bar{\beta}$ and its unwrapping $\alpha$ satisfy the requirement of Theorem \ref{extend}.
\end{proof}

We now make a comment regarding uniqueness of $\alpha$, which is mostly about how little we know. One might hope to prove that any two periodic extensions $\alpha_0$ and $\alpha_1$ can be suitably stabilized, and then deformed (site separations must be stretched or compressed if $\alpha_0$ and $\alpha_1$ have distinct periods) one into the other. Possibly, but we cannot prove this. However, if one restricts attention to $\alpha_0$ and $\alpha_1$ coming only from the construction we have just given, using two different immersions $e_0$ and $e_1$ and perhaps slightly different cut offs at infinity, then we may reference a topological fact leading to uniqueness under the corresponding notion of "deformation."

\textbf{Topological fact}: Immersion theory does not merely produce immersion in specific homotopy classes but when the hypotheses are satisfied produces a weak homotopy equivalence of spaces

\[
\{\text{immersion of $A$ into $B$}\} \xrightarrow[\cong]{\text{differential}} \{\text{maps $A \ra B$ covered by a tangent bundle injection}\}
\]

Since covering $f$ by bundle data was canonical, we were not required to make any choices, any two immersions $e_0$ and $e_1$ will not only be homotopic to $f$ but regularly homotopic to each other. This regular homotopy translates to a 1-parameter family of partially defined algebraic automorphisms $\widetilde{e}_t$, $0 \leq t \leq 1$. We say "partially defined" because of the usual problem of degrees of freedom near the (moving) edge of the immersion. "Healing" the puncture now can be done w.r.t. a cover $\{\text{U}_i\}$ of $[0,1]$ and on intersections $\text{U}_i \cap \text{U}_j$ we will need to trim to smaller algebras $\mathcal{A}_i \cap \mathcal{A}_j$ and $\mathcal{B}_i \cap \mathcal{B}_j$, with $\mathcal{A}_k$ and $\mathcal{B}_k$ associated to $U_k$, $k = i$ or $i$. This leads to some suitable notion of uniqueness for the extension $\alpha$. One would like to know more.

\section{Algebraic Methods and Triviality in Two Dimensions}
\label{algebraicsection}
We now apply algebraic methods in to two dimensional QCA.
In this section, we will often refer to QCA defined by a specific graph $G$ rather than considering an abstract control space.

To classify QCA, we begin in subsection \ref{ba} by constructing a certain algebra that we call a boundary algebra.  This boundary algebra in fact is a special case of the support algebra considered previously, but the alternative definition here makes it simpler to consider certain properties of this algebra.
We then define some basic properties of this algebra, defining properties that we call "visibly simple" and "locally factorizable".  This subsection considers an arbitrary graph $G$.

In subsection \ref{gnvw} we use the properties of the boundary algebra to give a brief definition of the GNVW index; of course, we have already defined this previously but this subsection shows how to define it in terms of the boundary algebra considered here.  This subsection considers one-dimensional QCA only.

Then, in subsection \ref{vsstruct} we further consider some properties of visibly simple algebras, and give a decomposition of such algebras defined on a "linear" one-dimensional graph (i.e., sites are labelled by integers $i$ with ${\rm dist}(i,j)=|i-j|$).
Then, in subsection
\ref{lastruct}, we consider visibly simple, locally factorizable algebras defined on a "circular" one-dimensional graph (i.e., a ring of sites labelled by integers with ${\rm dist}(i,j)=1$ if $i=j\pm 1$ modulo $L$, where $L$ is the number of sites); this is relevant to the case that the algebra is the boundary algebra of a two-dimensional subset.
In subsection \ref{nolo}, we use this to show that in two dimensions we can blend between any QCA $\alpha$ on some (slightly smaller) subset and the identity QCA $\Id$ far away, thus answering the question of blending in two-dimensions.

In subsection \ref{noho} we consider stable equivalence between QCA in two dimensions, showing that any QCA in two-dimensions is stably equivalent to some QCA which acts as the identity everywhere except on some lower dimensional set, answering the question of local equivalence.
\label{algebrasection}
\subsection{Construction of Boundary Algebra}
\label{ba}
We begin by defining a certain {\it boundary algebra} and exploring some of its properties that will hold for an arbitrary graph $G$.

We need a general result.
\begin{lemma} Suppose a QCA $\alpha$ has range $R$.  Then $\alpha^{-1}$ also has range $R$.
\begin{proof}
Let $O_x$ be supprted on a site $x$.  Let $O_y$ be supported on some site $y$ with ${\rm dist}(x,y)>R$.  Then,
$[\alpha(O_y),O_x]=0$.  Hence, $[O_y,\alpha^{-1}(O_x)]=0$.  Since this holds for all sites $y$ with ${\rm dist}(x,y)>R$, $\alpha^{-1}$ has range $R$.
\end{proof}
\end{lemma}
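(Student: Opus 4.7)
The plan is to exploit the fact that $\alpha$, being an automorphism of the full operator algebra, preserves the commutator bracket: $\alpha([A,B]) = [\alpha(A), \alpha(B)]$, and hence so does $\alpha^{-1}$. The locality condition on $\alpha$ is phrased in terms of supports, but it has a dual formulation in terms of commutation with distant operators, and switching between these formulations is the whole game.

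Concretely, I would fix an operator $O_x$ supported on a single site $x$ and try to show that $\alpha^{-1}(O_x)$ is supported within distance $R$ of $x$. First I would pick any site $y$ with ${\rm dist}(x,y) > R$ and any operator $O_y$ supported on $y$. Because $\alpha$ has range $R$, the operator $\alpha(O_y)$ is supported within distance $R$ of $y$, and in particular its support does not contain $x$; hence $[\alpha(O_y), O_x] = 0$. Applying $\alpha^{-1}$ to this commutator gives $[O_y, \alpha^{-1}(O_x)] = 0$, so $\alpha^{-1}(O_x)$ commutes with every operator supported on any site at distance greater than $R$ from $x$.

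The final step is to conclude from this commutation property that $\alpha^{-1}(O_x)$ is actually supported on the ball of radius $R$ around $x$. This uses the standard fact (a direct consequence of Wedderburn, already invoked throughout the paper) that each $\mathcal{O}_y = \mathrm{End}(\mathcal{H}_y)$ is a full matrix algebra with trivial center, so inside the total algebra $\bigotimes_z \mathcal{O}_z$ the commutant of $\bigotimes_{y: {\rm dist}(x,y)>R} \mathcal{O}_y$ is precisely $\bigotimes_{y: {\rm dist}(x,y)\le R} \mathcal{O}_y$. Thus $\alpha^{-1}(O_x)$ lies in this latter tensor factor, which is the definition of being supported within distance $R$ of $x$. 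Since $O_x$ was arbitrary, $\alpha^{-1}$ has range $R$.

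Of the three steps, the only one that requires any care is the last commutant computation; everything else is purely formal. The step is standard and brief, so I do not anticipate any real obstacle. If anything, the place to be slightly cautious is symmetry: the argument as written treats the roles of $\alpha$ and $\alpha^{-1}$ symmetrically only \emph{after} we know $\alpha^{-1}$ is well defined on the whole algebra, which is given since $\alpha$ is an automorphism.
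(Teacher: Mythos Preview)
Your proof is correct and follows exactly the same approach as the paper's: show $[\alpha(O_y),O_x]=0$ from the range condition on $\alpha$, apply $\alpha^{-1}$ to deduce $[O_y,\alpha^{-1}(O_x)]=0$, and conclude that $\alpha^{-1}(O_x)$ is supported within $b_R(x)$. You have in fact been more explicit than the paper about the final commutant step, which the paper leaves implicit.
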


For any site $x$, let $b_R(x)$ be the set of sites within distance $R$ of $x$, i.e., the set of sites $y$ such that ${\rm dist}(x,y)\leq R$.

For any set $S$, let ${\mathcal A}(S)$ be the algebra of operators supported on $S$.

\begin{lemma}
\label{1dalgdef}
Let $F\subset V$ be some set of sites.
Let $\alpha$ be a QCA with range $R$.
Let $\Int(F)$ denote the set of sites $x$ such that $b_R(x)\subset F$; we call this the {\it interior}.  Note that the interior depends on $R$; indeed, in many cases a dependence on $R$ will be implicit in our definitions.
Let $\Ext(F)$ denote the set of sites $x$ such that $b_R(x) \cap F=\emptyset$; we call this the {\it exterior}.
Finally, let $\Bd(F)$ denote $V\setminus (\Int(F)\cup \Ext(F))$; we call this the {\it boundary}.

Then, the algebra $\alpha({\mathcal A}(F))$
is generated by algebras on disjoint sets:
\be
\alpha({\mathcal A}(F)=<{\mathcal A}({\Int(F)}),\cPalphaF>,
\ee 
where $\cPalphaF$ is supported on $\Bd(F)$
and where $<\ldots,\ldots>$ denotes an algebra generated by other algebras.
\begin{proof}
By the definition of the range of $\alpha$, $\alpha({\mathcal A}(F))$ is supported on $\Int(F) \cup \Bd(F)$.

Similarly, $\alpha^{-1}({\mathcal A}(\Int(F)))$ is supported on $F$ so
$\alpha^{-1}({\mathcal A}(\Int(F))) \subset {\mathcal A}(F)$.
So ${\mathcal A}(\Int(F)) \subset \alpha({\mathcal A}(F))$.
Since
${\mathcal A}({\Int(F)})$ has trivial center, this means that $\alpha({\mathcal A}(F))$ is generated by ${\mathcal A}({\Int(F)})$ and by the commutant of
${\mathcal A}({\Int(F)})$ in $\alpha({\mathcal A}(F))$.
(Recall that given two algebras ${\mathcal X},{\mathcal Y}$, with ${\mathcal X}$ a subalgebra of ${\mathcal Y}$, the commutant of ${\mathcal X}$ in ${\mathcal Y}$ is
defined to be the set of elements of ${\mathcal Y}$ that commute with ${\mathcal X}$.)
This commutant is supported on $\Bd(F)$ and is the algebra we call $\cPalphaF$.
\end{proof}
\end{lemma}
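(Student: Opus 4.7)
The plan is to combine the locality of $\alpha$ with the locality of $\alpha^{-1}$ (the preceding lemma) and then extract the desired decomposition from the standard structure theory of finite dimensional $*$-algebras containing a full matrix subalgebra. Concretely, I want to pin down $\alpha(\mathcal{A}(F))$ as an algebra supported on $\Int(F)\cup\Bd(F)$, identify $\mathcal{A}(\Int(F))$ as a sub-factor of it, and then read off $\mathcal{P}(\alpha,F)$ as the relative commutant.

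First I would establish the support statement: for any $x\in F$, the operator $\alpha(O_x)$ is supported in $b_R(x)$, and since $x\in F$, every $y\in b_R(x)$ satisfies $b_R(y)\cap F\neq\emptyset$, i.e.\ $y\notin\Ext(F)$. So $\alpha(\mathcal{A}(F))\subseteq\mathcal{A}(\Int(F)\cup\Bd(F))$. Second, I would prove the reverse-type inclusion $\mathcal{A}(\Int(F))\subseteq\alpha(\mathcal{A}(F))$. For this, take any $O\in\mathcal{A}(\Int(F))$; by the preceding lemma, $\alpha^{-1}(O)$ is supported on the $R$-neighbourhood of $\Int(F)$, which by definition of $\Int(F)$ sits inside $F$. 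Hence $\alpha^{-1}(O)\in\mathcal{A}(F)$ and $O\in\alpha(\mathcal{A}(F))$.

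Third, I would invoke Wedderburn: $\mathcal{A}(\Int(F))$ is a full matrix algebra (trivial center), so inside any finite dimensional $*$-algebra $\mathcal{C}$ containing it, $\mathcal{A}(\Int(F))$ splits off as a tensor factor and $\mathcal{C}$ is generated by $\mathcal{A}(\Int(F))$ together with its relative commutant in $\mathcal{C}$. Applying this with $\mathcal{C}=\alpha(\mathcal{A}(F))$, define
\[
\mathcal{P}(\alpha,F)\;:=\;\{\,x\in\alpha(\mathcal{A}(F))\,:\,[x,\mathcal{A}(\Int(F))]=0\,\}.
\]
Then $\alpha(\mathcal{A}(F))=\langle\mathcal{A}(\Int(F)),\mathcal{P}(\alpha,F)\rangle$. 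To see $\mathcal{P}(\alpha,F)$ is supported on $\Bd(F)$, note it is already supported on $\Int(F)\cup\Bd(F)$ by step one; expanding any element in a basis of local operators and using the fact that it commutes with every single-site algebra on $\Int(F)$ forces the component on every site of $\Int(F)$ to be a scalar, so only the $\Bd(F)$ tensor factor survives.

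The main obstacle I expect is making the third step fully rigorous, i.e.\ justifying the Wedderburn-type splitting rather than just quoting it. The subtlety is that $\mathcal{A}(\Int(F))$ sits inside $\alpha(\mathcal{A}(F))$ but also sits inside the larger ambient algebra on $V$, and one must be careful that the commutant one takes is the relative one inside $\alpha(\mathcal{A}(F))$; otherwise, elements commuting with $\mathcal{A}(\Int(F))$ outside $\alpha(\mathcal{A}(F))$ contaminate the picture. Once one restricts to the relative commutant and uses that $\mathcal{A}(\Int(F))$ is a simple subfactor of the simple algebra $\alpha(\mathcal{A}(F))$ (which is just $\alpha$ applied to the full matrix algebra $\mathcal{A}(F)$), the splitting $\alpha(\mathcal{A}(F))\cong\mathcal{A}(\Int(F))\otimes\mathcal{P}(\alpha,F)$ is automatic, and the disjointness of the supports is the content of the lemma.
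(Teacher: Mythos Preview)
Your proposal is correct and follows essentially the same approach as the paper: establish the support of $\alpha(\mathcal{A}(F))$, use the range of $\alpha^{-1}$ to show $\mathcal{A}(\Int(F))\subseteq\alpha(\mathcal{A}(F))$, then define $\mathcal{P}(\alpha,F)$ as the relative commutant and check its support. If anything, you supply slightly more detail than the paper does (the explicit $y\in b_R(x)\Rightarrow x\in b_R(y)$ argument and the local-basis expansion for the support of the commutant), but the structure is identical.
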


We now further characterize $\cPalphaF$ in the next two subsections, defining properties that we call visibly simple
and locally factorizable.
\subsubsection{Locally Factorizable}

\begin{defin}
\label{llf}
An algebra ${\mathcal A}$ is an $l$-locally-factorizable algebra if given any operator $O\in {\mathcal A}$ supported on $T_1 \cup T_2$ for any two disjoints sets $T_1,T_2$ with ${\rm dist}(T_1,T_2)>l$, if $O$ is decomposed using a singular value decomposition
using the Hilbert-Schmidt inner product as $O=\sum_\beta A(\beta) O_1^\beta O_2^\beta$, for some discrete index $\beta$ and complex scalars $A(\beta) \neq 0$
with $O_i^\beta$ supported on $T_i$, then
$O_i^\beta$ is in ${\mathcal A}$ for all $i,\beta$.
\end{defin}
An equivalent definition is that the support algebra of $O$ on $T_1$ and on $T_2$ are both in ${\mathcal A}$.

To clarify this definition, suppose that ${\mathcal A}$ is generated by two algebras ${\mathcal A}_1$ and ${\mathcal A}_2$, with
${\mathcal A}_1$ supported on some set $S_1$ with diameter $l$ and ${\mathcal A}_2$ supported on some set $S_2$ also with
diameter $l$, with $S_1,S_2$ disjoint and ${\rm dist}(S_1,S_2)$ arbitrarily large (for example, the distance may be much larger than $l$).  In this case, ${\mathcal A}$ is $l$-locally factorizable.  Proof: let $O$ in $A$.  Without loss of generality assume $T_1\subset S_1,T_2\subset S_2$.  Then, $O_1^\beta=A(\beta)^{-1} {\rm tr}_{T_2}(O (O_2^\beta)^\dagger)$.
However, $O=\sum_\alpha P_1^\alpha P_2^\alpha$ with $P_1\in {\mathcal A}_1$ and $P_2\in {\mathcal A}_2$.
Hence, $$O_1^\beta=A(\beta)^{-1} \sum_\alpha P_1^\alpha {\rm tr}_{T_2}\Bigl(P_2^\alpha (O_2^\beta)^\dagger\Bigr).$$
The trace is a scalar, so the result follows.

As another example, consider a line of sites labelled by integers $i$ with $1 \leq i \leq L$, and consider the algebra generated by $X_1 X_L$ and $Z_1$.  This is a simple algebra and it is $(L-1)$-locally factorizable but it is not $(L-2)$-locally factorizable, as given $T_1=\{1\}$ and $T_2=\{L\}$ with ${\rm dist}(T_1,T_2)=L-1$, the operator $X_1 X_L$ is supported on $T_1 \cup T_2$ but does not obey the requirements of the definition for $l=L-2$.

\begin{remark}
If ${\mathcal A}$ is $l$-locally factorizable, and an operator $O\in {\mathcal A}$ is supported on $T_1 \cup T_2 \cup \ldots$ for any finite number of disjoint sets
$T_i$ with ${\rm dist}(T_i,T_j)>l$ for $i \neq j$, then $O$ can be written as a sum of products, $O=\sum_\beta O_1^\beta O_2^\beta \ldots$, with each $O_i^\beta$ supported on $T_i$ and $O_i^\beta\in {\mathcal A}$ and pairwise orthogonal as $\beta$ varies.  This follows by induction from the definition.
\end{remark}

\begin{lemma}
\label{llfundercirc}
Let $\alpha$ be a QCA with range $R$.  Let ${\mathcal A}$ be an $l$-locally-factorizable algebra.  Then $\alpha({\mathcal A})$ is an $l+2R$-locally-factorizable algebra.
\begin{proof}
Consider an operator $O$ in $\alpha({\mathcal A})$.
Decompose $O=\sum_\beta A(\beta) O_1^\beta O_2^\beta$ as in definition \ref{llf} with ${\rm dist}(T_1,T_2) > l+2R$.  Then,
$\alpha^{-1}(O)=\sum_\beta A(\beta) \alpha^{-1}(O_1^\beta) \alpha^{-1}(O_2^\beta)$.
Let $S_1,S_2$ be the set of sites within distance $R$ of $T_1,T_2$, respectively, so that $\alpha^{-1}(O_i^\beta)$ is supported on $S_i$.
Then, since ${\rm dist}(S_1,S_2) > l$ and ${\mathcal A}$ is $l$-locally-factorizable, $\alpha^{-1}(O_i^\beta)\in {\mathcal A}$ for all $i,\beta$ and
so $O_i^\beta\in \alpha({\mathcal A})$.
\end{proof}
\end{lemma}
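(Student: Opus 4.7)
The plan is to pull the problem back through $\alpha^{-1}$, which by the preceding lemma also has range $R$. Given $O \in \alpha(\mathcal{A})$ supported on $T_1 \cup T_2$ with ${\rm dist}(T_1,T_2) > l + 2R$, I will apply the singular value decomposition of Definition \ref{llf} to write $O = \sum_\beta A(\beta)\, O_1^\beta O_2^\beta$, then push through by $\alpha^{-1}$ to obtain
\[
\alpha^{-1}(O) = \sum_\beta A(\beta)\, \alpha^{-1}(O_1^\beta)\, \alpha^{-1}(O_2^\beta),
\]
which sits inside $\mathcal{A}$ since $O \in \alpha(\mathcal{A})$.

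The geometric observation that drives everything is that each $\alpha^{-1}(O_i^\beta)$ is supported inside $S_i$, the $R$-neighborhood of $T_i$, so the hypothesis ${\rm dist}(T_1,T_2) > l + 2R$ forces ${\rm dist}(S_1,S_2) > l$. At this point the $l$-local factorizability of $\mathcal{A}$ should deliver $\alpha^{-1}(O_i^\beta) \in \mathcal{A}$, and applying $\alpha$ then gives $O_i^\beta \in \alpha(\mathcal{A})$, which is exactly what is needed.

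The one place requiring care is that the expansion above is not a priori the singular value decomposition of $\alpha^{-1}(O)$, since the $\alpha^{-1}(O_i^\beta)$ need not be pairwise Hilbert--Schmidt orthogonal. I would resolve this by working with the equivalent support-algebra formulation of Definition \ref{llf}, which merely asks that the support algebra of $\alpha^{-1}(O)$ on $S_i$ lie inside $\mathcal{A}$. Because $\{O_2^\beta\}$ are linearly independent as right singular vectors of $O$ and $\alpha^{-1}$ is a linear bijection, $\{\alpha^{-1}(O_2^\beta)\}$ remains linearly independent; this identifies each $\alpha^{-1}(O_1^\beta)$ as an element of the support algebra of $\alpha^{-1}(O)$ on $S_1$, hence of $\mathcal{A}$, and symmetry handles $O_2^\beta$. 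This bookkeeping is the only genuine obstacle; the core of the proof is simply that an $R$-local automorphism inflates the separation between disjoint supports by at most $2R$.
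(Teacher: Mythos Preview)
Your proof is correct and follows essentially the same route as the paper: pull back through $\alpha^{-1}$, use that the $R$-neighborhoods $S_1,S_2$ are still more than $l$ apart, and invoke $l$-local factorizability of $\mathcal{A}$. Your final paragraph actually makes explicit a point the paper's terse proof leaves implicit, namely that the pushed-forward decomposition need not be an SVD and one must fall back on the support-algebra formulation; your linear-independence argument is a clean way to handle this.
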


\begin{lemma}
\label{llfsubalg}
Let algebra ${\mathcal A}$ be generated by a finite number of algebras ${\mathcal A}_1, {\mathcal A}_2, \ldots$, with ${\mathcal A}_i$ supported on some set $S_i$,
with $S_i \cap S_j = \emptyset$ for $i \neq j$.  Assume that ${\mathcal A}$ is an $l$-locally-factorizable algebra.  Then, each ${\mathcal A}_i$ is also an $l$-locally-factorizable subalgebra.
\begin{proof}
We consider the case that ${\mathcal A}$ is generated by two algebras, ${\mathcal A}_1, {\mathcal A}_2$; the general case follows by induction.

Consider an operator $O \in {\mathcal A}_1$ supported on sets $T_1 \cup T_2$ with ${\rm dist}(T_1,T_2)>l$.  Without loss of generality, we can assume that $T_1,T_2 \subset S_1$.  Since ${\mathcal A}$ is $l$-locally-factorizable,
$O = \sum_{\beta} A(\beta) O_1^\beta O_2^\beta$ with $O_1^\beta,O_2^\beta$ supported on $T_1,T_2$.  Then, $O_1^\beta,O_2^\beta\in {\mathcal A}_1$
since $T_1,T_2\subset S_1$.
\end{proof}
\end{lemma}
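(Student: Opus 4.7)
The plan is to follow the author's reduction to the two-algebra case (the general case handled by induction on the number of generating algebras) and then isolate the single non-trivial step, which is: an element of $\mathcal{A}$ that happens to be supported on $S_1$ must already lie in $\mathcal{A}_1$. Once this is in hand, the lemma follows directly from the definition of $l$-local factorizability applied to $\mathcal{A}$, since any $O\in\mathcal{A}_1$ with support in $T_1\cup T_2$ has $T_i$ that can be replaced by $T_i\cap S_1$ (the distance can only grow), so its singular value factors $O_i^\beta\in\mathcal{A}$ are supported on subsets of $S_1$ and hence land back in $\mathcal{A}_1$.

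To prove the key step, I would first argue that since $S_1$ and $S_2$ are disjoint, the subalgebras $\mathcal{A}_1$ and $\mathcal{A}_2$ commute, so $\mathcal{A}$ coincides with the span of products $A_1 A_2$ with $A_i\in\mathcal{A}_i$. Fix a Hilbert--Schmidt orthonormal basis $\{B_k\}_{k\ge 0}$ of $\mathcal{A}_2$ with $B_0$ proportional to $I_{S_2}$ and all other $B_k$ traceless; then any element of $\mathcal{A}$ has a unique expansion $O=\sum_{j,k}c_{j,k}\,A_j\otimes B_k$, where $\{A_j\}$ is a basis for $\mathcal{A}_1$, because the $\{A_j\otimes B_k\}$ are linearly independent operators on $S_1\cup S_2$.

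Next, if $O$ is supported on some $T\subset S_1$, then $O=\tilde{O}\otimes I_{S_2}$ for some operator $\tilde{O}$ on $S_1$. Taking the partial trace over $S_2$ of the expansion gives $\mathrm{tr}_{S_2}(O)\propto \sum_j c_{j,0}A_j$, while direct computation yields $\mathrm{tr}_{S_2}(O)=\dim(S_2)\,\tilde{O}$. Comparing the two expansions of $O$ and using linear independence of $\{A_j\otimes B_k\}$ forces $c_{j,k}=0$ for all $k\neq 0$, so $O=\sum_j c_{j,0}\,A_j\otimes I_{S_2}\in\mathcal{A}_1$, as required.

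The main obstacle is purely bookkeeping: making sure the tensor factorization $\mathcal{A}\cong \mathcal{A}_1\otimes\mathcal{A}_2$ is used carefully so that ``supported on $S_1$'' really forces the $\mathcal{A}_2$-component to be a multiple of the identity. Once that linear-algebra fact is established, the lemma drops out, and the induction on the number of generating subalgebras is straightforward: at each step group $\mathcal{A}_2,\mathcal{A}_3,\ldots$ together into one subalgebra on the disjoint union $S_2\cup S_3\cup\cdots$, apply the two-algebra case to extract $\mathcal{A}_1$, and then peel off the next factor.
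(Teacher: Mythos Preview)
Your proposal is correct and follows essentially the same approach as the paper: reduce to two generating subalgebras, apply the $l$-local factorizability of $\mathcal{A}$ to $O\in\mathcal{A}_1\subset\mathcal{A}$, and observe that the resulting factors $O_i^\beta\in\mathcal{A}$, being supported on $T_i\subset S_1$, must lie in $\mathcal{A}_1$. The paper simply asserts this last implication (``$O_1^\beta,O_2^\beta\in\mathcal{A}_1$ since $T_1,T_2\subset S_1$''), whereas you supply an explicit partial-trace argument showing that any element of $\mathcal{A}=\langle\mathcal{A}_1,\mathcal{A}_2\rangle$ supported on $S_1$ already belongs to $\mathcal{A}_1$; this is a reasonable detail to spell out but does not constitute a different route.
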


\begin{thm}
\label{thmfac}
The algebra $\cPalphaF$ in theorem \ref{1dalgdef} is a $2R$-locally-factorizable algebra.
\begin{proof}
${\mathcal A}(F)$ is a $0$-locally-factorizable algebra.  By lemma \ref{llfundercirc}, $\alpha({\mathcal A}(F))$ is a $2R$-locally-factorizable algebra.  By lemma \ref{llfsubalg}, the claim follows.
\end{proof}
\end{thm}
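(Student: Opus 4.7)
The plan is to chain together the two lemmas immediately preceding the theorem, bootstrapping from the trivial factorizability of a full algebra on a set of sites. The intuition is that ${\mathcal A}(F)$ is the full algebra on $F$, which is manifestly decomposable into tensor factors in any way you please; applying $\alpha$ can only expand supports by $R$ in each direction, so factorizability degrades from $0$ to $2R$; and restricting from a generated algebra to one of its disjointly supported generating algebras preserves the property.

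First, I would observe that ${\mathcal A}(F)$ is $0$-locally-factorizable: given any operator $O\in{\mathcal A}(F)$ decomposed as $O=\sum_\beta A(\beta) O_1^\beta O_2^\beta$ with $O_i^\beta$ supported on disjoint sets $T_1,T_2$, each $O_i^\beta$ is a bounded operator on sites inside $F\supseteq T_i$ and therefore lies in ${\mathcal A}(F)$ automatically. Next, I would invoke Lemma \ref{llfundercirc} with $l=0$ to conclude that $\alpha({\mathcal A}(F))$ is $2R$-locally-factorizable.

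Now, by Lemma \ref{1dalgdef}, the algebra $\alpha({\mathcal A}(F))$ is generated by the two subalgebras ${\mathcal A}(\Int(F))$ and $\cPalphaF$, supported on the disjoint sets $\Int(F)$ and $\Bd(F)$ respectively. I would then apply Lemma \ref{llfsubalg} directly: since $\alpha({\mathcal A}(F))$ is $2R$-locally-factorizable and it decomposes as a product of subalgebras on disjoint site sets, each generating subalgebra inherits $2R$-local factorizability. In particular, $\cPalphaF$ is $2R$-locally-factorizable, which is what is claimed.

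I do not anticipate any real obstacle: the proof is a two-line composition of the previously established lemmas, and no additional estimate or structural fact about $\alpha$ beyond its range bound is needed. The only subtlety worth double-checking is that the hypotheses of Lemma \ref{llfsubalg} are genuinely met, i.e., that $\Int(F)$ and $\Bd(F)$ are disjoint; this is immediate from the definitions in Lemma \ref{1dalgdef}, where $\Int(F)$, $\Bd(F)$, and $\Ext(F)$ partition $V$.
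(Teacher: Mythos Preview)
Your proposal is correct and follows exactly the same three-step argument as the paper's proof: ${\mathcal A}(F)$ is $0$-locally-factorizable, Lemma~\ref{llfundercirc} upgrades this to $2R$-local factorizability of $\alpha({\mathcal A}(F))$, and Lemma~\ref{llfsubalg} passes this down to the generating subalgebra $\cPalphaF$. You have simply spelled out the verification of hypotheses (disjointness of $\Int(F)$ and $\Bd(F)$) that the paper leaves implicit.
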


\subsubsection{Visibly Simple Algebras}
We now define
\begin{defin}
Let ${\rm dist}(.,.)$ be some metric.  We say that an algebra of operators ${\mathcal A}$ is an $l$-visibly simple algebra
if given any operator $O \in {\mathcal A}$, and given any site $x$ in the support of $O$, there is an operator $P\in {\mathcal A}$ which is supported on the
set of sites within distance $l$ of $x$ such that $[O,P] \neq 0$.

In this definition, when we refer to the "support of $O$", we mean the minimal set such that $O$ is supported on that set; we say that a scalar is supported on the empty set and so no sites are in the support of a scalar.
\end{defin}
Note that any $l$-visibly simple algebra has trivial center.
\begin{remark}
The term "visibly simple" is used because it suggests that one can "see" that $O$ is not a central element of ${\mathcal A}$ by just "looking" at some small set of sites.
\end{remark}

We give three examples of algebras to clarify this definition.
First, let $S$ be any set of sites and use any metric and let ${\mathcal A}={\mathcal A}(S)$.  Then, ${\mathcal A}$ is a $0$-visibly simple algeba, because if $x$ is in the support of an operator $O\in {\mathcal A}$, then there is an operator supported on $x$ which does not commute with $O$ and ${\mathcal A}$ contains all operators supported on $S$ and so it contains that operator supported on $x$.
Second, consider a set $S$ of sites labeled by integers $i$ with $1 \leq i \leq L$.  Let there be a qubit on each site.  Let ${\mathcal A}$ be the algebra generated by operators $X_i X_{i+1}$ for $i$ odd and $Z_i Z_{i+1}$ for $i$ even, where $X_i$ denotes the Pauli $X$-operator on site $i$ and $Z_i$ denotes the Pauli $Z$-operator on site $i$.
Let $L$ be even and use periodic boundary conditions, identifying $Z_{L+1}$ with $Z_1$, and use any metric.
Then, ${\mathcal A}$ is not an $l$-visibly simple algebra for any $l$ because it has non-trivial central elements generated by $X_1 X_2 X_3 X_4 \ldots$ and $Z_1 Z_2 Z_3 Z_4\ldots$
Thirdly, again let $S$ have sites labeled by integers $i$ with $1 \leq i \leq L$.  Let there be a qubit on each site.  Let ${\mathcal A}$ be the algebra generated by operators $X_i X_{i+1}$ for $i$ odd and $Z_i Z_{i+1}$ for $i$ even.
However, let $L$ be {\it odd} and $\geq 3$ and use open (i.e., not periodic) boundary conditions, defining ${\rm dist}(i,j)=|i-j|$.
Then, ${\mathcal A}$ is an $(L-1)$-visibly simple algebra but is not an $(L-2)$-visibly simple algebra.
To see the first statement, note that every operator is supported within distance $L-1$ of every site so the first statement follows from the fact that this algebra has trivial center.
To see the second statement, consider the operator $X_1 X_2 \ldots X_{L-1}$.
This operator commutes with all of the generators except for $Z_{L-1} Z_L$ and indeed it commutes with all operators in the algebra supported on sites $1,...,L-1$.
Hence, taking $x$ to be site $1$ (which is in the support of this operator), there is no operator in ${\mathcal A}$ supported within distance $L-2$ of $x$ which commutes with this operator.

\begin{lemma}
\label{llstrcorr}
Let ${\mathcal A}$ be an $l$-visibly simple algebra.  Then, for any operator $O$, with $O$ not necessarily in ${\mathcal A}$, if there exists an
$X\in {\mathcal A}$ supported on some set $S$ such that
${\rm tr}(XO){\rm tr}(I) \neq {\rm tr}(X){\rm tr}(O)$, then there is some operator $P \in {\mathcal A}$ which is supported on the
set of sites within distance $l$ of $S$ such that $[O,P] \neq 0$.
\begin{proof}
Since ${\mathcal A}$ is $l$-visibly simple it has trivial center.  So, the algebra of all operators is generated by ${\mathcal A}$ and some other algebra ${\mathcal C}$ which also has trivial center
such that ${\mathcal A},{\mathcal C}$ commute with each other.  So, $O$ can be decomposed as sum of products, $O=\sum_\alpha A_\alpha C_\alpha$, with $A_\alpha\in {\mathcal A}$ and $C_\alpha\in {\mathcal C}$.
Then, ${\rm tr}(XO)=\sum_\alpha {\rm tr}(X A_\alpha) {\rm tr}(C_\alpha)/{\rm tr}(I)$ and ${\rm tr}(O)=\sum_\alpha {\rm tr}(A_\alpha) {\rm tr}(C_\alpha)/{\rm tr}(I)$.  Hence, for some $\alpha$, ${\rm tr}(X A_\alpha){\rm tr}(I) \neq {\rm tr}(X){\rm tr}(A_\alpha)$.
Hence, for some $\alpha$, $A_\alpha$ has support on set $S$.
So, by the assumption that ${\mathcal A}$ is $l$-visibly simple, there is a $P\in {\mathcal A} $ which is supported on the
set of sites within distance $l$ of $S$ such that $[A_\alpha,P] \neq 0$ for the given $\alpha$.
Do this decomposition so that ${\rm tr}(C_\alpha^\dagger C_\beta)=\delta_{\alpha,\beta}$, where $\delta$ is the Kronecker $\delta$-function.
Then, since $[A_\alpha,P]\neq 0$, it follows that $[O,P] \neq 0$.
\end{proof}
\end{lemma}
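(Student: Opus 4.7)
The plan is to exploit the fact that an $l$-visibly simple algebra $\mathcal{A}$ has trivial center, so by Artin--Wedderburn the full operator algebra on the Hilbert space decomposes as $\mathcal{A}\otimes \mathcal{C}$, where $\mathcal{C}$ is the commutant of $\mathcal{A}$. This lets me expand the given $O$ as $O=\sum_\alpha A_\alpha C_\alpha$ with $A_\alpha\in \mathcal{A}$ and $C_\alpha\in \mathcal{C}$, where I choose the $\{C_\alpha\}$ to be orthonormal under the rescaled Hilbert--Schmidt inner product ${\rm tr}(C_\alpha^\dagger C_\beta)/{\rm tr}(I) = \delta_{\alpha\beta}$.

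Using cyclicity of the trace together with $[X,C_\alpha]=0$ (because $X\in \mathcal{A}$ and $C_\alpha\in \mathcal{C}$), the traces split as ${\rm tr}(XO)=\sum_\alpha {\rm tr}(XA_\alpha){\rm tr}(C_\alpha)/{\rm tr}(I)$ and ${\rm tr}(O)=\sum_\alpha {\rm tr}(A_\alpha){\rm tr}(C_\alpha)/{\rm tr}(I)$. The hypothesis ${\rm tr}(XO){\rm tr}(I)\neq {\rm tr}(X){\rm tr}(O)$ therefore forces some index $\alpha_0$ with ${\rm tr}(XA_{\alpha_0}){\rm tr}(I)\neq {\rm tr}(X){\rm tr}(A_{\alpha_0})$. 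I would then argue that this inequality forces $A_{\alpha_0}$ to have some site of its support inside $S$: were $A_{\alpha_0}$ supported entirely outside $S$, one could factor the Hilbert space as $\mathcal{H}_S\otimes \mathcal{H}_{S^c}$, with $X$ acting on the first factor and $A_{\alpha_0}$ on the second, and a direct partial-trace calculation combined with ${\rm tr}(I)=\dim\mathcal{H}_S\cdot \dim \mathcal{H}_{S^c}$ gives ${\rm tr}(XA_{\alpha_0}){\rm tr}(I)={\rm tr}(X){\rm tr}(A_{\alpha_0})$, a contradiction.

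Picking a site $x\in S$ in the support of $A_{\alpha_0}$, the $l$-visible simplicity of $\mathcal{A}$ applied to $A_{\alpha_0}$ yields some $P\in \mathcal{A}$ supported within distance $l$ of $x$, and hence within distance $l$ of $S$, with $[A_{\alpha_0},P]\neq 0$. To conclude $[O,P]\neq 0$, I expand $[O,P]=\sum_\alpha [A_\alpha,P]C_\alpha$, using that $P\in \mathcal{A}$ commutes with every $C_\alpha\in \mathcal{C}$. Linear independence (indeed orthogonality) of the $\{C_\alpha\}$ then forces $[O,P]\neq 0$ as soon as at least one $[A_\alpha,P]\neq 0$, which holds for $\alpha=\alpha_0$ by construction.

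The main obstacle I foresee is the partial-trace factorization step for elements of $\mathcal{A}$: one must confirm that an $A_{\alpha_0}\in \mathcal{A}$ supported outside $S$ in the physical, site-wise sense used throughout the paper really does act as the identity on the $\mathcal{H}_S$ tensor factor, even though $\mathcal{A}$ is only a subalgebra and its abstract Wedderburn decomposition $\mathcal{A}\otimes \mathcal{C}$ is not aligned with the physical site decomposition. Once this alignment of the two notions of ``support'' is pinned down, the remaining trace identity is elementary, and the rest is bookkeeping with the Wedderburn factorization.
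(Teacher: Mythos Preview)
Your proposal is correct and follows essentially the same line as the paper's proof: Wedderburn-decompose the ambient algebra as $\mathcal{A}\otimes\mathcal{C}$, expand $O$ with orthonormal $C_\alpha$'s, split the traces to locate an $A_{\alpha_0}$ whose support meets $S$, invoke visible simplicity to get $P$, and use orthogonality of the $C_\alpha$ to conclude $[O,P]\neq 0$.

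Your ``main obstacle'' is not a real obstacle: the notion of support used throughout the paper is the physical, site-wise one---an operator $A$ is supported on $T$ precisely when $A\in \mathcal{A}(T)=\mathrm{End}(\mathcal{H}_T)\otimes I_{T^c}$, and this depends only on $A$ as an operator on the full Hilbert space, not on which subalgebra it happens to lie in. So if $A_{\alpha_0}$ (viewed as an operator on $\mathcal{H}$) has no site of its support in $S$, then automatically $A_{\alpha_0}=I_S\otimes(\cdot)$ and the trace factors; the abstract Wedderburn splitting $\mathcal{A}\otimes\mathcal{C}$ plays no role at this step.
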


\begin{lemma}
\label{llsundercirc}
Let $\alpha$ be a QCA with range $R$.  Let ${\mathcal A}$ be an $l$-visibly simple algebra.  Then $\alpha({\mathcal A})$ is an $l+2R$-visibly simple algebra.
\begin{proof}
Consider an operator $O$ in $\alpha({\mathcal A})$ with a site $x$ in its support.  Then, there is a site $y$ in the support of $\alpha^{-1}({\mathcal A})$ with ${\rm dist}(x,y) \leq R$.  Then, since ${\mathcal A}$ is $l$-visibly simple, there is an operator $P\in {\mathcal A}$ supported on the set of sites within distance $l$ of site $y$ such that $[\alpha^{-1}(O),P] \neq 0$.  Hence, $[O,\alpha(P)] \neq 0$, and $\alpha(P)$ is supported on the set of sites within distance $l+R$ of site $y$ which is within distance $l+2R$ of site $x$.
\end{proof}
\end{lemma}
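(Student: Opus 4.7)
The plan is to transport a witness across the automorphism. Given $O' \in \alpha({\mathcal A})$ with a site $x$ in its support, write $O' = \alpha(O)$ with $O \in {\mathcal A}$, find a site $y \in \mathrm{supp}(O)$ lying within distance $R$ of $x$, use $l$-visible simplicity of ${\mathcal A}$ to produce a $P \in {\mathcal A}$ witnessing non-centrality of $O$ near $y$, and then take $\alpha(P)$ as the desired witness in $\alpha({\mathcal A})$ near $x$.

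First I would argue that such a $y$ exists. Because $\alpha$ has range $R$, the image $\alpha(O)$ is supported in the $R$-neighborhood of $\mathrm{supp}(O)$; equivalently, using the preceding lemma that $\alpha^{-1}$ also has range $R$, one sees that $\mathrm{supp}(O) = \mathrm{supp}(\alpha^{-1}(O'))$ lies in the $R$-neighborhood of $\mathrm{supp}(O')$. Either way, $x \in \mathrm{supp}(O')$ forces a site $y \in \mathrm{supp}(O)$ with $\mathrm{dist}(x,y) \leq R$. Next, apply the assumption that ${\mathcal A}$ is $l$-visibly simple at the site $y$: there exists $P \in {\mathcal A}$, supported within distance $l$ of $y$, with $[O,P] \neq 0$. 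Then $\alpha(P) \in \alpha({\mathcal A})$ satisfies $[O',\alpha(P)] = [\alpha(O),\alpha(P)] = \alpha([O,P]) \neq 0$, since $\alpha$ is an injective algebra map. Finally, the triangle inequality gives that $\alpha(P)$ is supported within distance $R$ of $\mathrm{supp}(P)$, hence within distance $l+R$ of $y$, hence within distance $l+2R$ of $x$.

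There is really no hard step here; the content is bookkeeping about how the QCA $\alpha$ and $\alpha^{-1}$ expand supports by at most $R$. The only point that warrants care is the direction of the support inclusion used to locate $y$: one uses that $\alpha^{-1}$ has range $R$ (the preceding lemma) to pull the site $x$ back into the $R$-neighborhood of $\mathrm{supp}(O)$. Everything else is immediate from the fact that $\alpha$ is an automorphism and therefore preserves non-vanishing of commutators.
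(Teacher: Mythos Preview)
Your proof is correct and essentially identical to the paper's argument: pull $O'$ back through $\alpha^{-1}$, find a nearby site $y$ in the support of the preimage, invoke $l$-visible simplicity of ${\mathcal A}$ to get a local witness $P$, and push $P$ forward through $\alpha$. One small remark: your ``either way'' is not quite right, since the inclusion $\mathrm{supp}(O) \subseteq N_R(\mathrm{supp}(O'))$ coming from $\alpha^{-1}$ having range $R$ points the wrong direction to locate $y$ near $x$; only your first inclusion $\mathrm{supp}(O') \subseteq N_R(\mathrm{supp}(O))$ (from $\alpha$ having range $R$) gives the needed $y$, and since you state that one correctly the argument stands.
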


\begin{lemma}
\label{llssubalg}
Let algebra ${\mathcal A}$ be generated by a finite number of algebras ${\mathcal A}_1, {\mathcal A}_2, \ldots$, with ${\mathcal A}_i$ supported on some set $S_i$,
with $S_i \cap S_j = \emptyset$ for $i \neq j$.  Assume that ${\mathcal A}$ is an $l$-visibly simple algebra.  Then, each ${\mathcal A}_i$ is also an $l$-visibly simple subalgebra.
\begin{proof}
We consider the case that ${\mathcal A}$ is generated by two algebras, ${\mathcal A}_1, {\mathcal A}_2$; the general case follows by induction.

Consider any operator $O$ in ${\mathcal A}_1$ and consider any site $x$ in the support of $O$.  Then, since ${\mathcal A}$ is $l$-visibly simple, there is an operator $P$ in ${\mathcal A}$ such that $[O,P]\neq 0$ and $P$ is supported on the set of sites within distance $l$ of $x$.  $P$ can be decomposed as $P=\sum_\beta P_1^\beta P_2^\beta$, for some discrete index $\beta$ with $P_1^\beta \in {\mathcal A}_1$ and $P_2^\beta\in {\mathcal A}_2$ and 
where the $P_2^\beta$ are chosen orthonormal using the Hilbert-Schmidt inner product and $$P_1^\beta={\rm tr}_2(P (P_2^\beta)^\dagger).$$
 Then, there is some $\beta$ such that $[P_1^\beta,O]\neq 0$ and since $P$ is supported within distance $l$ of $x$, $P_1^\beta$ is supported within distance $l$ of $x$.
\end{proof}
\end{lemma}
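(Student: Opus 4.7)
The plan is to reduce to the two-subalgebra case by induction and then exploit the tensor-product structure that disjointness of the supports forces on ${\mathcal A}$. Suppose ${\mathcal A}$ is generated by ${\mathcal A}_1$ and ${\mathcal A}_2$ on disjoint sets $S_1, S_2$; fix $O \in {\mathcal A}_1$ and any site $x$ in the support of $O$. Since $O$ also lies in ${\mathcal A}$ and ${\mathcal A}$ is $l$-visibly simple, there exists $P \in {\mathcal A}$, supported within distance $l$ of $x$, with $[O,P]\neq 0$. The job is to replace $P$ by an element of ${\mathcal A}_1$ with the same distance property.

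Because $S_1 \cap S_2 = \emptyset$, the subalgebras ${\mathcal A}_1$ and ${\mathcal A}_2$ commute elementwise, so every element of ${\mathcal A}$ admits a decomposition $P = \sum_\beta P_1^\beta P_2^\beta$ with $P_i^\beta \in {\mathcal A}_i$. I would choose the family $\{P_2^\beta\}$ to be Hilbert--Schmidt orthonormal, so that the $P_1^\beta$ are recovered (up to a normalization constant) by partial trace $P_1^\beta \propto {\rm tr}_{S_2}(P (P_2^\beta)^\dagger)$. Since $O$ is supported on $S_1$ and each $P_2^\beta$ on $S_2$, the $P_2^\beta$ commute with $O$, and hence
\[ [O,P] = \sum_\beta [O,P_1^\beta]\, P_2^\beta. \]
Linear independence of the $\{P_2^\beta\}$ together with $[O,P]\neq 0$ force $[O,P_1^\beta]\neq 0$ for at least one $\beta$.

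To finish, I would check the support condition: partial tracing over $S_2$ can only shrink the support of $P$, so each $P_1^\beta$ remains supported within distance $l$ of $x$. This $P_1^\beta \in {\mathcal A}_1$ is the required witness, and a straightforward induction on the number of generating subalgebras extends the result to any finite collection.

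I do not see a substantial obstacle; the work is mostly in unpacking the definitions. The only mild care needed is (i) that the decomposition $P = \sum_\beta P_1^\beta P_2^\beta$ is genuinely available with the factors lying in the subalgebras themselves, which rests on disjointness of $S_1,S_2$ giving ${\mathcal A} \cong {\mathcal A}_1 \otimes {\mathcal A}_2$, and (ii) the observation that partial trace preserves the "distance $\leq l$ from $x$" support constraint. Both are routine.
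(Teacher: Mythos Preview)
Your proposal is correct and follows essentially the same route as the paper's proof: reduce to two generators by induction, use $l$-visible simplicity of ${\mathcal A}$ to get a witness $P$, decompose $P=\sum_\beta P_1^\beta P_2^\beta$ with the $P_2^\beta$ Hilbert--Schmidt orthonormal and $P_1^\beta$ recovered by partial trace over $S_2$, and then observe that linear independence forces $[O,P_1^\beta]\neq 0$ for some $\beta$ while the partial trace only shrinks support. Your write-up is slightly more explicit than the paper's in justifying the commutator expansion and the support-shrinking step, but the argument is the same.
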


Hence,
\begin{thm}
\label{thmls}
The algebra $\cPalphaF$ in lemma \ref{1dalgdef} is a $2R$-visibly simple algebra.
\begin{proof}
${\mathcal A}(F)$ is a $0$-visibly simple algebra.  By lemma \ref{llsundercirc}, $\alpha({\mathcal A}(F))$ is a $2R$-visibly simple algebra.  By lemma \ref{llssubalg}, the claim follows.
\end{proof}
\end{thm}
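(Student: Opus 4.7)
The plan is to string together the two previous lemmas (\ref{llsundercirc} and \ref{llssubalg}) exactly as their statements suggest, using the decomposition of $\alpha({\mathcal A}(F))$ from Lemma \ref{1dalgdef} as the bridge between them. First I would observe that the starting algebra ${\mathcal A}(F)$, consisting of all operators supported on $F$, is trivially $0$-visibly simple: if $x$ is in the support of some $O\in {\mathcal A}(F)$, then by minimality of the support there exists an operator $P$ supported on the single site $x$ with $[O,P]\neq 0$, and $P\in {\mathcal A}(F)$ since $x\in F$.

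Next I would apply Lemma \ref{llsundercirc} with $l=0$ to the range-$R$ automorphism $\alpha$. This immediately yields that $\alpha({\mathcal A}(F))$ is $2R$-visibly simple, since conjugation by a range-$R$ QCA enlarges the "visibility radius" by $2R$ (a factor of $R$ to go from a witness operator $P$ for $\alpha^{-1}(O)$ to $\alpha(P)$, and another $R$ to pass between the site $x$ in the support of $O$ and the corresponding site $y$ in the support of $\alpha^{-1}(O)$).

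Finally, Lemma \ref{1dalgdef} provides the essential decomposition $\alpha({\mathcal A}(F)) = \langle {\mathcal A}(\Int(F)), \cPalphaF\rangle$, where the two generating subalgebras are supported on the disjoint sets $\Int(F)$ and $\Bd(F)$. This is precisely the hypothesis of Lemma \ref{llssubalg}, so that lemma applies and shows that each generating subalgebra, in particular $\cPalphaF$, inherits the $2R$-visibly simple property from the ambient algebra. Putting these three steps together completes the proof.

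There is essentially no obstacle here beyond correctly invoking the prior lemmas; the work was entirely done in proving Lemmas \ref{llsundercirc} and \ref{llssubalg} (and in establishing the disjointness decomposition in Lemma \ref{1dalgdef}). Note the parallel structure with Theorem \ref{thmfac}, which proved the locally factorizable property by the identical pattern: start with the trivially-nice algebra ${\mathcal A}(F)$, push forward by $\alpha$ using the "closure under conjugation" lemma, then restrict to the boundary piece using the "inheritance by disjoint summands" lemma.
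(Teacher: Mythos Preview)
Your proposal is correct and follows exactly the same three-step approach as the paper's proof: ${\mathcal A}(F)$ is $0$-visibly simple, Lemma \ref{llsundercirc} gives that $\alpha({\mathcal A}(F))$ is $2R$-visibly simple, and Lemma \ref{llssubalg} applied to the disjoint decomposition of Lemma \ref{1dalgdef} yields the result for $\cPalphaF$. Your additional commentary spelling out why ${\mathcal A}(F)$ is $0$-visibly simple and noting the parallel with Theorem \ref{thmfac} is accurate and helpful, but the underlying argument is identical.
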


\begin{remark}
We will in fact see later as a corollary in theorem \ref{vsislf} that every
$l$-visibly simple algebra is $O(l)$-locally factorizable.
\end{remark}

\subsubsection{Boundary With Disjoint Components}
In some applications, it will be possible to write the boundary $\Bd(F)$ as union of separate components.
More precisely, suppose there are sets $\Bd_i(F)$,
where $i$ is some discrete index taking a finite set of possible values such that
\be
\cup_i \Bd_i(F)=\Bd(F),
\ee
and such that
${\rm dist}(\Bd_i(F),\Bd_j(F)) \geq 2R$ for $i\neq j$.

Then, using local factorizability of $\cPalphaF$, the algebra $\cPalphaF$ is generated by algebras $\cPalphaF_i$, with $\cPalphaF_i$ being the algebra of operators in $\cPalphaF$ which are supported on $\Bd_i$.
Each of these $\cPalphaF_i$ is $2R$-visibly simple and $2R$-locally-factorizable by 
theorem \ref{thmfac} and
lemma
\ref{llfsubalg}.

\subsection{GNVW Index}
\label{gnvw}
We now show how to define the GNVW index for a one-dimensional QCA $\alpha$ using the algebras $\cPalphaF_i$ defined above.  Although we derive it in a slightly different way than in the original reference, the resulting is the same index as previously defined.
Consider a line of sites indexed by an integer $i$, with a Hilbert space dimension $d_i$ on each site.

We use notation $[i,j]$ to denote a set of sites $\{i,i+1,\ldots,j\}$ for $i\leq j$.
Let $F$ be any interval of sites $[i,j]$ which is sufficiently long that we can write the boundary as separate components which we call $\Bd_L(F)$ and $\Bd_R(F)$.
In this case $\Bd_L(F)=[i-R,i+R-1]$ and $\Bd_R(F)=[j-R+1,j+R]$, so we need $j-i>4R-2$.

\begin{remark}
In fact, the GNVW index can be defined using shorter intervals than this.  In the original GNVW paper, by coarse-graining $R$ was taken equal to $1$ and intervals with $j-i=1$ were used.  That required a slightly more careful treatment of commutation of algebras.  The larger intervals that we use here may help make the commutation slightly more clear.
\end{remark}

Since $\cPalphaF_R$ is simple, it is isomorphic to a $d$-by-$d$ matrix algebra.  We define $d_R(\alpha,F)$ to equal this dimension $d$, i.e.,$\cPalphaF_R$ has dimension $d_R(\alpha,F)^2$.
Similarly, we define $d_L(\alpha,F)$ so that $\cPalphaF_L$ has dimension $d_L(\alpha,F)^2$.
We now define the GNVW index to equal
\be
\Ind(\alpha,F)=\log\Bigl(\frac{d_R(\alpha,F)}{\prod_{i \in \Bd_R(F) \cap F} d_i}\Bigr).
\ee
This GNVW index $\Ind(\alpha,F)$ is indeed equal to the logarithm of the index defined in Ref.~\cite{Gross2012}.

Let us explain why, given the algebra $\cPalphaF_R$, we are naturally led to make this choice for the index (up to choices such as whether or not to take the logarithm).
First,
the only property of $\cPalphaF_R$ that is invariant under unitary transformations supported on $\Bd_R(F)$ is the dimension $d_R(\alpha,F)$, so the index must be a function of $d_R(\alpha,F)$.
Since we wish the index to be invariant under stabilization, we divide by $\prod_{i \in \Bd_R(F) \cap F} d_i$; one may verify that then $\Ind(\alpha \otimes \Id,F)=\Ind(\alpha,F)$ after tensoring with any choice of additional degrees of freedom.
Then, the choice of taking the logarithm is done so that
the index to equals $0$ if $\alpha=\Id$ and is additive under tensor product: $\Int(\alpha \otimes \beta,F)=\Ind(\alpha,F) \otimes \Int(\beta,F)$ for
any choice of $\alpha,\beta$.

One may wonder whether the algebra $\cPalphaF_L$ can be used to define a different index.
However, since the dimension of $\alpha({\mathcal A}(F))$ is the same as that of ${\mathcal A}(F)$, we have
\be
d_R(\alpha,F) d_L(\alpha,F) \prod_{i \in \Int(F)} d_i=\prod_{i \in F} d_i.
\ee
Hence,
\be
\Ind(\alpha,F)=\log\Bigl(\frac{\prod_{i \in \Bd_L(F) \cap F} d_i}{d_L(\alpha,F)}\Bigr).
\ee
So, we are led to the same index (up to a sign) using the two different algebras $\cPalphaF_L,\cPalphaF_R$.

Finally, given any two intervals, $F,G$ with $F=[i,j]$ and $G=[j+1,k]$ with $j-i>3R$ and $k-j>3R$ we can define indexes $\Ind(\alpha,F)$,$\Ind(\alpha,G)$.
Then, $\Bd_R(F)=\Bd_L(G)$ and
$d_R(\alpha,F) d_L(\alpha,G)=\prod_{i \in \Bd_R(F)} d_i$ so
\be
\Ind(\alpha,F)=\Ind(\alpha,G),
\ee
showing the equivalence of the GNVW index for different intervals along the chain as shown in Ref.~\cite{Gross2012}.
Since the GNVW index is independent of choice of interval $F$, we may drop the $F$-dependence from the notation and just write the GNVW index as $\Ind(\alpha)$.

\subsection{Structure of Visibly Simple Algebras}
\label{vsstruct}
We now prove a structure theorem \ref{lltoy} for visibly simple algebras on arbitrary control spaces.  By induction, in the case of "linear" graph as a control space,
this gives us theorem \ref{llinduct}.
In the next subsection, we prove theorem \ref{llsllfthm}, where we use a "circular" graph as a control space.

In the following, the metric used to measure distances between sites is arbitrary:
\begin{thm}
\label{lltoy}
Suppose an algebra ${\mathcal A}$ of operators  is $l$-visibly simple.
Let $S$ be any set of sites.
Then, ${\mathcal A}$ is generated by algebras ${\mathcal A}_1,{\mathcal A}_2$ constructed below where ${\mathcal A}_1$ is supported within distance $2l$ of $S$,
where ${\mathcal A}_2$ is supported within distance $l$ of the complement of $S$, and
where ${\mathcal A}_1,{\mathcal A}_2$ commute with each other and have trivial center.
\begin{proof}
Let $\cB_1$ be the algebra of operators in ${\mathcal A}$ which are supported on $S$.

We begin by considering a special case before considering the general case.  The special case that we consider is that $\cB_1$ is a simple algebra.  If so, then define ${\mathcal A}_1=\cB_1$
and define ${\mathcal A}_2$ to be the commutant of ${\mathcal A}_1$ in ${\mathcal A}$.  Then, these algebras obey the claims of the lemma.  The only claim that is non-obvious is that
${\mathcal A}_2$ is supported within distance $O(l)$ of the complement of $S$.  This claim follows from the assumption that ${\mathcal A}_1$ is visibly simple: let $O$ be an operator in ${\mathcal A}_2$.  If $O$ has support on some site which is distance greater than $l$ from the complement of $S$, then there is some operator $P$ supported on $S$ which does not commute with $O$.  Such an operator $P$ is in ${\mathcal A}_1$ by construction, giving a contradiction since ${\mathcal A}_2$ is the commutant of ${\mathcal A}_1$.

Now consider the general case that $\cB_1$ is not a simple algebra.  We will construct a simple algebra ${\mathcal A}_1$ such that $\cB_1\subset {\mathcal A}_1$ and such that
${\mathcal A}_1$ is supported within distance $O(l)$ of $S$.  We then take ${\mathcal A}_2$ to be the commutant of ${\mathcal A}_1$ in ${\mathcal A}$.  Then, the claims of the lemma follow: the only claim that is non-obvious is the support of ${\mathcal A}_2$, but since $\cB_1\subset {\mathcal A}_1$ this follows in the same way as in the special case above.

To construct ${\mathcal A}_1$ in this special case, define $\cB_2$ to be the subalgebra of ${\mathcal A}$ supported on sites within distance $l$ of $S$.
Any operator in the center of $\cB_2$ must then be supported on the complement of $S$: if $O$ is in the center of $\cB_2$ and there is some site $x$ in the support of $O$ which is
in $S$, then there is some operator $P$ which does not commute with $O$ with $P$ supported within distance $l$ of $x$, i.e., so that $P$ is supported within distance $l$ of $S$ so that $P\in \cB_2$ giving a contradiction.
Hence, for any operator $P_2$ in the center of $\cB_2$ and any operator $P_1$ in the center of $\cB_1$, the supports of $P_1,P_2$ are disjoint
and so
${\rm tr}(P_1 P_2)={\rm tr}(P_1) {\rm tr}(P_2)/{\rm tr}(I)$.
Then, the existence of ${\mathcal A}_1$ follows from lemma \ref{inbetweenlemma} below.
\end{proof}
\end{thm}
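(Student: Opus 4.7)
The plan is to split $\mathcal{A}$ into two commuting pieces by picking one of them to be (essentially) the subalgebra of $\mathcal{A}$ supported on $S$, and the other to be its commutant inside $\mathcal{A}$. The key technical issue is that the obvious candidate---the subalgebra of $\mathcal{A}$ consisting of operators supported on $S$---need not be simple, so both pieces could inherit non-trivial centres, and we will have to enlarge it slightly.

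First I would let $\cB_1$ denote the algebra of operators in $\mathcal{A}$ supported on $S$. If $\cB_1$ happens to be simple, I set $\mathcal{A}_1 := \cB_1$ and take $\mathcal{A}_2$ to be the commutant of $\mathcal{A}_1$ inside $\mathcal{A}$. Since $\mathcal{A}$ has trivial centre (it is visibly simple) and $\mathcal{A}_1$ is simple, standard Wedderburn theory identifies $\mathcal{A}$ with the tensor product $\mathcal{A}_1 \otimes \mathcal{A}_2$, so both factors have trivial centre and generate $\mathcal{A}$. The support of $\mathcal{A}_2$ is then forced by visible simplicity: if some $O \in \mathcal{A}_2$ were supported on a site $x$ with $\mathrm{dist}(x, V \setminus S) > l$, then visible simplicity would produce a $P \in \mathcal{A}$ with $[O,P] \neq 0$ supported in $b_l(x) \subseteq S$, hence $P \in \cB_1 = \mathcal{A}_1$, contradicting the fact that $\mathcal{A}_2$ is the commutant of $\mathcal{A}_1$.

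The main obstacle is the general case in which $\cB_1$ fails to be simple. Here I would introduce the slightly larger algebra $\cB_2$ of operators in $\mathcal{A}$ supported within distance $l$ of $S$, and seek a simple algebra sandwiched between them, $\cB_1 \subseteq \mathcal{A}_1 \subseteq \cB_2$. The crucial observation, again provided by visible simplicity, is that any central element of $\cB_2$ must be supported on the complement of $S$: a site $x \in S$ in the support of such an element would admit an obstructing $P \in \mathcal{A}$ within distance $l$ of $x$, hence $P \in \cB_2$, contradicting centrality. Thus the centre of $\cB_1$ lives on $S$ while the centre of $\cB_2$ lives on $V \setminus S$, giving the trace-independence $\mathrm{tr}(P_1 P_2)\,\mathrm{tr}(I) = \mathrm{tr}(P_1)\,\mathrm{tr}(P_2)$ for $P_i$ in the respective centres.

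This trace-orthogonality of the two centres is exactly the hypothesis that should allow us to produce the desired intermediate simple algebra, and I would isolate it as a separate algebraic lemma: given $\cB_1 \subseteq \cB_2$ whose centres are trace-independent in this sense, there exists a simple $\mathcal{A}_1$ with $\cB_1 \subseteq \mathcal{A}_1 \subseteq \cB_2$. Once this lemma is available, set $\mathcal{A}_2$ to be the commutant of $\mathcal{A}_1$ in $\mathcal{A}$; commutation and triviality of centres then follow exactly as in the simple case, and the support bound on $\mathcal{A}_2$ follows from the inclusion $\cB_1 \subseteq \mathcal{A}_1$ together with the same visible-simplicity argument as above (now yielding a bound of $l$ rather than $2l$ relative to $V \setminus S$). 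The hardest step is the intermediate-algebra lemma itself; I expect it to reduce to a Wedderburn-style statement about how the central decomposition of $\cB_2$ interacts with that of $\cB_1$ when their central projections are trace-independent, and to be provable by block-decomposing $\cB_2$ along the common refinement of the two central decompositions.
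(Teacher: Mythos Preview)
Your proposal is correct and follows essentially the same approach as the paper: you define $\cB_1$ and $\cB_2$ exactly as the paper does, use visible simplicity in the same way to localize the centre of $\cB_2$ on the complement of $S$, deduce trace-independence of the two centres, and then invoke an ``intermediate simple algebra'' lemma---which is precisely the paper's Lemma~\ref{inbetweenlemma}. Your sketch of that lemma via Wedderburn block decomposition also matches the paper's proof strategy.
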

Remark: the condition ${\rm tr}(P_1 P_2)={\rm tr}(P_1) {\rm tr}(P_2)/{\rm tr}(I)$ may seem mysterious.  It can be informally understood as follows: regard the traces as describing the expectation value of some classical random variables; then, the condition says that these variables are uncorrelated with each other.

From this theorem we prove by induction the following theorem.  
We will use the term "linear graph" throughout to refer to a graph where the vertices can be labelled by integers $1,2,\ldots,J$ with edges $(j,j+1)$; this corresponds
to the case of an interval control space with fixed endpoints and scale.
Remark: in words, the theorem says that a visibly simple algebra on a linear graph is generated by a set of simple algebras, which commute with each other, and which have bounded support.  Clearly the converse holds: given any set of simple algebras, each of bounded support, which commute with each other, the algebra generated by that set of simple algebras is visibly simple.
\begin{thm}
\label{llinduct}
Consider a set of sites labelled by integers $1,2,...,J$ for some $J$, using a "linear" graph metric to measure distance where site $j$ is connected to sites $j \pm 1$ and consider a tensor product Hilbert space with a finite dimensional Hilbert space for each site.  Suppose an algebra ${\mathcal A}$ of operators  is $l$-visibly simple.
Then, ${\mathcal A}$ is generated by algebras ${\mathcal C}_i$ constructed below
where each ${\mathcal C}_i$
is an algebra of operators supported on some set $C_i$ of diameter $O(l)$.
Further, the algebras ${\mathcal C}_i$ all commute with each other and have trivial center.
Further, the support of $\cC_i$ is disjoint from the support of all $\cC_j$ for $i\neq j$ except possibly $\cC_{i\pm 1}$
\begin{proof}
This follows from theorem \ref{lltoy} by induction.  Pick $S$ to consist of the set of sites $1,2,\ldots,2l$, so that ${\mathcal A}_2$ is supported on sites $l+1,l+2,\ldots,J$
and ${\mathcal A}_1$ is supported on sites $1,\ldots,3l$.
Construct algebras ${\mathcal A}_1,{\mathcal A}_2$ from that theorem.  Set ${\mathcal C}_1={\mathcal A}_1$.

Call the original linear graph $G$.  We will construct a new linear graph $G'$.
Now, group sites $l+1,l+2,\ldots,3l$ into a new "supersite" which we label $1$.  Relabel the other sites by subtracting $3l-1$, so that site $3l+1$ becomes site $1$, site $3l+2$ becomes site $2$, and so on.
This gives a new linear graph with sites $1,\ldots,J'$ for some $J'=J-O(l)$.  We claim that ${\mathcal A}_2$ is $l$-visibly simple with respect to $G'$.
Indeed, consider some operator $O\in {\mathcal A}_2$.  Let $x'$ be in the support of $O$.  If $x'>l+1$, then $x=x'+3l-1>4l$ is in the support of $O$ on the original graph $G$.
Hence, since ${\mathcal A}$ is $l$-visibly simple, there is some operator $P\in{\mathcal A}$ which does not commute with $O$, such that $x$ is supported on sites within distance $l$ of $x$.
Since $x>4l$, this implies that the support of $P$ is disjoint from the support of ${\mathcal A}_1$, and so $P\in {\mathcal A}_2$.
Suppose instead that $x'\leq l+1$.  Then, $x=x'+3l-1\leq 4l$ is in the support of $O$ on the original graph $G$.
Hence, since ${\mathcal A}$ is $l$-visibly simple, there is some operator $P\in{\mathcal A}$ which does not commute with $O$, such that $x$ is supported on sites within distance $l$ of $x$.
This operator $P$ may not be in ${\mathcal A}_2$ (i.e., it may be some some of products of operators in ${\mathcal A}_1,{\mathcal A}_2$), but since ${\mathcal A}_1$ commutes with $O$,
this means that there is some operator $Q$ in ${\mathcal A}_2$ which does not commute with $O$ which is supported on the union of the support of ${\mathcal A}_1$ and the support of $P$.
So, $Q$ is supported on the set of sites $\{1,\ldots,5l\}$ in $G$.  On graph $G'$, this corresponds to sites $\{1,\ldots,2l+1\}$, and this set is within distance $l$ of $x$.

We then apply theorem \ref{lltoy} to algebra ${\mathcal A}_2$, calling the resulting algebras ${\mathcal A}'_1,{\mathcal A}'_2$.  We set ${\mathcal C}_2={\mathcal A}'_1$.  Proceeding inductively in this fashion gives the desired decomposition.
\end{proof}
\end{thm}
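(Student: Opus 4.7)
The plan is to apply Theorem \ref{lltoy} iteratively, peeling off local pieces from the left end of the graph one at a time.

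First I would apply Theorem \ref{lltoy} with $S = \{1, 2, \ldots, 2l\}$, producing ${\mathcal A}_1$ (supported within distance $2l$ of $S$, hence on a window of diameter $O(l)$ at the left end) and ${\mathcal A}_2$ (supported within distance $l$ of the complement of $S$, hence on sites roughly $\{l+1,\ldots,J\}$). I set $\cC_1 := {\mathcal A}_1$; this is simple (it has trivial center by Theorem \ref{lltoy}) and has the right support. The two algebras generate ${\mathcal A}$ and commute, so the factorization is in place -- I only need to handle ${\mathcal A}_2$ inductively.

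To iterate, I would pass to a new linear graph $G'$ by collapsing the overlap region $\{l+1,\ldots,3l\}$ (where both $\cC_1$ and ${\mathcal A}_2$ can live) into a single supersite, and relabel the remaining sites so $G'$ is again linear. The crux of the induction is to check that ${\mathcal A}_2$ is $l$-visibly simple with respect to $G'$. Once that is established, applying Theorem \ref{lltoy} to ${\mathcal A}_2$ with $S$ taken to be the first $2l$ sites of $G'$ gives $\cC_2 := {\mathcal A}_1'$ and a new remainder ${\mathcal A}_2'$; iterating $O(J/l)$ times exhausts the graph and yields the family $\{\cC_i\}$. The mutual commutation and the ``nearest-neighbor only'' overlap statement both follow, because each $\cC_{i+1}$ is extracted from an algebra supported away from a thick neighborhood of $\cC_i$'s core and only its $O(l)$-thick boundary can meet $\cC_{i+1}$.

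The main obstacle is the visible-simplicity check for ${\mathcal A}_2$ on $G'$. For an operator $O \in {\mathcal A}_2$ and a site $x'$ in its support which is far from the collapsed supersite, the corresponding original site $x$ is far from $\cC_1$'s support; visibility of ${\mathcal A}$ yields a witness $P\in{\mathcal A}$ within distance $l$ of $x$, and because $P$ cannot touch $\cC_1$'s support, $P$ lies in ${\mathcal A}_2$ (since ${\mathcal A}_1$ and ${\mathcal A}_2$ jointly generate ${\mathcal A}$ and $P$ is disjoint from $\cC_1$). The more delicate case is when $x'$ is close to the supersite: here the $\cA$-witness $P$ can involve ${\mathcal A}_1$ factors. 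I would decompose $P = \sum_\beta A_1^\beta A_2^\beta$ with $A_1^\beta \in \cC_1$ and $A_2^\beta \in {\mathcal A}_2$; since $\cC_1$ commutes with $O \in {\mathcal A}_2$, the non-commutation $[O,P]\neq 0$ forces $[O,A_2^\beta]\neq 0$ for some $\beta$. The support of that $A_2^\beta$ is contained in the support of $P$ together with the support of $\cC_1$, and upon passage to $G'$ (where $\cC_1$'s support collapses to the supersite) this becomes a set of diameter $O(l)$ around $x'$, as required.

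Assuming this visibility check goes through, the recursion is straightforward and the decomposition assembles the $\cC_i$ with all the claimed properties. The only slight bookkeeping concern is tracking the constants hidden in $O(l)$ across successive iterations, but since each step only enlarges the implicit constant by an additive amount and the induction terminates after finitely many steps, this poses no real difficulty.
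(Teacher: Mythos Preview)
Your proposal is correct and follows essentially the same route as the paper: apply Theorem \ref{lltoy} with $S=\{1,\ldots,2l\}$, set $\cC_1={\mathcal A}_1$, collapse the overlap region $\{l+1,\ldots,3l\}$ into a supersite to form a new linear graph $G'$, verify that ${\mathcal A}_2$ is again visibly simple on $G'$ by a two-case analysis (far from the supersite versus near it, the latter handled by decomposing the witness $P$ into ${\mathcal A}_1$- and ${\mathcal A}_2$-parts and using $[{\mathcal A}_1,O]=0$), and iterate.

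One caution about your closing remark. Saying the constants are harmless because ``the induction terminates after finitely many steps'' is not a valid justification: the number of iterations is $O(J/l)$, so even an additive drift of a fixed amount per step would blow the diameter bound up to $O(J)$, destroying the conclusion. The correct resolution, which the paper uses, is that the visibility constant does \emph{not} drift: collapsing the support of ${\mathcal A}_1$ into a single supersite means that in $G'$ the witness you extract is still within distance $l$ (not merely $O(l)$ with a growing implied constant) of $x'$. A careful case-by-case check confirms this, and that is what allows the induction to proceed with the same $l$ at every stage. You should make this uniformity explicit rather than appeal to finiteness.
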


From this we have the corollary:
\begin{thm}
\label{vsislf}
Let ${\mathcal A}$ be an $l$-visibly simple algebra.  Then, ${\mathcal A}$ is an $O(l)$-locally factorizable algebra.
\begin{proof}
Let $O\in {\mathcal A}$ be supported on $T_1 \cup T_2$.

Define a linear graph by grouping all sites in $T_1$ into supersite $1$, and grouping all sites at distance $r$ from $T_1$ into supersite $r+1$ for each integer $r$.
Apply theorem \ref{llinduct} to construct algebras ${\mathcal C}_i$.  Let ${\mathcal X}$ be the algebra generated by all algebras ${\mathcal C}_i$
whose support includes supersites corresponding to $T_1$ (from the proof of the theorem, in fact this is only algebra ${\mathcal C}_1$) and let
${\mathcal Y}$ be the algebra generated by all
algebras ${\mathcal C}_i$
whose support includes supersites corresponding to $T_2$.

Let $X$ denote the support of ${\mathcal X}$ and let $Y$ denote the support of ${\mathcal Y}$.  Note that $X$ may be larger than $T_1$ and $Y$ may be larger than $T_2$.
Assume that
${\rm dist}(T_1,T_2)$ is sufficiently large compared to $l$ so that $X\cap Y=\emptyset$.
 
Decompose $O=\sum_\gamma B(\gamma) O_X^\gamma O_Y^\gamma$, where $X\in {\mathcal X}$ and $Y\in {\mathcal Y}$ and $B(\gamma)\neq 0$ is a scalar, using a singular value decomposition and Hilbert-Schmidt inner product.
Then, the set of $O_X^\gamma$ generates the support algebra of $O$ on the complement of $Y$.  Since $O$ is supported on $T_1 \cup T_2$ and $T_2\subset Y$, this means that the $O_X^\gamma$ are supported on $T_1$.
Hence, the support algebra of $O$ on $T_1$ is generated by the set of $X^\gamma$ and so it is in in ${\mathcal A}$, and similarly the support algebra of $O$ on $T_2$ is in ${\mathcal A}$.
\end{proof}
\end{thm}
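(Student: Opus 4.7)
The plan is to deduce $O(l)$-local factorizability of an $l$-visibly simple algebra ${\mathcal A}$ directly from the structure theorem \ref{llinduct}. Concretely, fix $O \in {\mathcal A}$ supported on a disjoint union $T_1 \cup T_2$ with ${\rm dist}(T_1,T_2)$ larger than some constant multiple of $l$ (this multiple will be what determines the $O(l)$ in the conclusion). We need to show that the singular-value factors $O_1^\beta, O_2^\beta$ in any decomposition $O = \sum_\beta A(\beta) O_1^\beta O_2^\beta$ already lie in ${\mathcal A}$.

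First I would reduce to the linear-graph setting by coarse-graining the site set according to distance from $T_1$: put all of $T_1$ into a single supersite and group all sites at graph-distance $r$ from $T_1$ into supersite $r+1$. This produces a linear graph $G'$, and since visibly-simple is defined in terms of a metric (distances to sites in the support of operators), ${\mathcal A}$ is still visibly-simple with a range of the same order $l$ after coarse-graining. Then I would apply theorem \ref{llinduct} to obtain a family of commuting simple subalgebras ${\mathcal C}_i$ of diameter $O(l)$ whose joint algebra is all of ${\mathcal A}$, and such that distinct ${\mathcal C}_i$'s have essentially disjoint supports (they only overlap with their immediate neighbors).

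Next I would partition the ${\mathcal C}_i$'s into two groups based on whether their support meets $T_1$ or $T_2$: let ${\mathcal X}$ be generated by those ${\mathcal C}_i$ whose support intersects the coarse-grained location of $T_1$ and ${\mathcal Y}$ by those meeting the location of $T_2$. Because each ${\mathcal C}_i$ has diameter $O(l)$ and distinct pieces have (nearly) disjoint supports, taking ${\rm dist}(T_1,T_2) > C l$ for large enough $C$ forces the supports $X,Y$ of ${\mathcal X},{\mathcal Y}$ to be disjoint, and together they cover $T_1 \cup T_2$. Then ${\mathcal X}, {\mathcal Y}$ commute, so a Hilbert--Schmidt singular value decomposition of $O$ in the factorization $X \sqcup Y$ yields $O = \sum_\gamma B(\gamma) O_X^\gamma O_Y^\gamma$ with each $O_X^\gamma \in {\mathcal X} \subseteq {\mathcal A}$ and $O_Y^\gamma \in {\mathcal Y} \subseteq {\mathcal A}$. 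Since $O$ is actually supported on $T_1 \cup T_2$ (not all of $X \cup Y$), the support algebras of $O$ on $T_1$ and $T_2$ agree with those generated by the $O_X^\gamma$ and $O_Y^\gamma$ respectively, and these support algebras sit inside ${\mathcal A}$.

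The main obstacle I anticipate is the bookkeeping around the boundary overlaps of the ${\mathcal C}_i$'s: theorem \ref{llinduct} only guarantees disjoint supports except for immediate neighbors, so I need to choose the partition of indices into "${\mathcal X}$-indices" and "${\mathcal Y}$-indices" with a buffer, and then verify that the constant $C$ giving ${\rm dist}(T_1,T_2) > Cl$ is large enough to absorb both the diameter of each ${\mathcal C}_i$ and this one-index neighbor overlap. Once that geometric estimate is pinned down, the algebraic identification of the support algebras is routine from commutativity.
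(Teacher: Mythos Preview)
Your proposal is correct and follows essentially the same route as the paper: coarse-grain to a linear graph by distance from $T_1$, invoke theorem \ref{llinduct} to get the commuting simple pieces $\mathcal{C}_i$, collect those touching $T_1$ into $\mathcal{X}$ and those touching $T_2$ into $\mathcal{Y}$, and then use an SVD of $O$ across this split to see that the support algebras on $T_1$ and $T_2$ lie in $\mathcal{A}$. The buffer concern you flag is exactly the content of the paper's ``Assume that ${\rm dist}(T_1,T_2)$ is sufficiently large compared to $l$ so that $X\cap Y=\emptyset$,'' and the paper likewise does not chase the explicit constant.
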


Finally, we prove lemma \ref{inbetweenlemma}.  However, we first recall
 a structure theorem that is an application of the Artin-Wedderburn theorem.
We will use this structure theorem several times, both in this lemma and in the proof of theorem \ref{llsllfthm}.
Let ${\mathcal M}_n$ denote the algebra of $n$-by-$n$ matrices.
Consider any subalgebra ${\mathcal X}$ of the algebra of operators ${\mathcal Y}$ on a finite-dimensional Hilbert space.  Suppose this subalgebra ${\mathcal X}$ is closed under
adjoint and has an $m$-dimensional center, meaning that the center consists of operators of the form $z_1 P_1 + ... + z_m P_m$ for $z_a$ being complex scalars and $P_a$ being projectors
with $\sum_a P_a=I$.
Then, ${\mathcal X}$ is isomorphic to a direct sum of algebras ${\mathcal M}_{d_1},...,{\mathcal M}_{d_m}$.
Further,
up to a unitary transformation (acting on the Hilbert space of the entire system), the operators in this algebra
are of the form
\be
\label{Mform}
M=
\begin{pmatrix}
M_1 \\
& M_1 \\
&& ... \\
& & & M_2 \\
&&&& M_2 \\
&&&&&...\\
&&&&&&...\\
&&&&&&&M_m\\
&&&&&&&&...
\end{pmatrix}.
\ee
That is, there are some number of copies of a matrix $M_1$ which has dimension $d_1$-by-$d_1$, followed by some number of copies of a matrix $M_2$ which has dimension $d_2$-by-$d_2$, and so on.  Let $c_1,c_2,...$ denote the number of copies of $M_1,M_2,...$.  Note that $c_a \geq 1$; the form above shows $2$ or more copies in each case but one may have only a single copy.
That is, the operator is of the form
\be
\label{Mform2}
M=(M_1 \otimes I_{c_1}) \oplus (M_2 \otimes I_{c_2}) \oplus ...,
\ee
where $I_c$ is the $c$-by-$c$ identity matrix.

Now we prove
\begin{lemma}
\label{inbetweenlemma}
Let ${\mathcal X}\subset {\mathcal Z}$ be subalgebras of some matrix algebra.  Suppose that for any $P_X$ in the center of ${\mathcal X}$ and any $P_Z$ in the center of ${\mathcal Z}$ we have
${\rm tr}(P_X P_Z)={\rm tr}(P_X) {\rm tr}(P_Z)/{\rm tr}(I)$, where all traces are taken in the matrix algebra.
Then, there exists a simple algebra ${\mathcal Y}$ with ${\mathcal X}\subset {\mathcal Y} \subset {\mathcal Z}$.
\begin{proof}
First we
apply the structure theorem to ${\mathcal Z}$.  We write $d(Z)_1,d(Z)_2,\ldots$ rather than $d_1,d_2,\ldots$ for the dimensions for clarity, $P(Z)_1,P(Z)_2,\ldots$ for the projectors,
and $c(Z)_1,c(Z)_2,\ldots$ for the number of copies.  We write ${\mathcal M}_a$ to denote the algebra of $d(Z)_a$-by-$d(Z)_a$ matrices.

Since ${\mathcal X} \subset {\mathcal Z}$ we have a homomorphism from ${\mathcal X}$ to ${\mathcal M}_a$ for each $a$ in the obvious way: given an $O\in {\mathcal X}$, apply the structure theorem to write it in form Eq.~(\ref{Mform}), and then the image of $O$ is the matrix $M_a$.

Consider a given $a$.
We now apply the structure theorem to ${\mathcal X}$, writing $d(X)_1,d(X)_2,\ldots$ for the dimensions, $P(X)_1,P(X)_2,\ldots$ for the projectors, and $c(X)_1,c(X)_2,\ldots$ for the number of copies.  Then,
the image of this homomorphism is of the form
$(M_1 \otimes I_{C_{a,1}}) \oplus (M_2 \otimes I_{C_{a,2}}) \oplus \ldots ...$ where $M_b$ has dimension $d(X)_b$ and $C_{a,b}$ are some integers so
that $$\sum_b d(X)_b C_{a,b}=d(Z)_a.$$
We claim that there is some integer $N$ such that
\be
\label{fac}
C_{a,b}=\frac{d(Z)_a c(X)_b}{N}.
\ee
To see this, note that $C_{a,b} d(X)_b c(Z)_a={\rm tr}(P(X)_b P(Z)_a)$.
By assumption, this equals ${\rm tr}(P(X)_b) {\rm tr}(P(Z)_a)/{\rm tr}(I)=
d(X)_b c(X)_b d(Z)_a c(Z)_a/{\rm tr}(I)$, so
 so Eq.~(\ref{fac}) holds for $N={\rm tr}(I)$.

Let $\tilde c(X)_b=c(X)_b/{\rm gcd}(c(X)_1,c(X)_2,\ldots),$ and let $\tilde N=N/{\rm gcd}(c(X)_1,c(X)_2,\ldots)$.  Note that $\tilde N$ is
also an integer.
Then,
\be
\label{fac2}
C_{a,b}=\tilde c(X)_b \frac{d(Z)_a}{\tilde N}.
\ee
So, since $C_{a,b}$ is an integer and the greatest common divisor of the $\tilde c$ is $1$, $\frac{d(Z)_a}{\tilde N}$ is an integer.
So, for each $a$, the image of the homomorphism from ${\mathcal X}$ to ${\mathcal M}_a$ is of the form
$$\Bigl((M_1 \otimes I_{\tilde c(X)_1}) \oplus (M_2 \otimes I_{\tilde c(X)_2}) \oplus \ldots \Bigr) \otimes I_{\frac{d(Z)_a}{\tilde N}}.$$
Hence, taking a direct sum over $a$, each operator in ${\mathcal X}$ can be written, using the structure theorem for ${\mathcal Z}$, as
$$\oplus_a  \Bigl(\Bigl((M_1 \otimes I_{\tilde c(X)_1}) \oplus (M_2 \otimes I_{\tilde c(X)_2}) \oplus \ldots \Bigr) \otimes I_{\frac{d(Z)_a}{\tilde N}}\otimes I_{c(Z)_a} \Bigr).$$
Now consider the algebra of all operators of the form
$$\oplus_a  \Bigl(M \otimes I_{\frac{d(Z)_a}{\tilde N}} \otimes I_{c(Z)_a} \Bigr).$$
where $M$ is a matrix of dimension $d$-by-$d$ for
\be
d=\sum_a d(X)_b \tilde c(X)_b.
\ee
This is a simple subalgebra of ${\mathcal Z}$.  
We will define ${\mathcal Y}$ to be this subalgebra.
\end{proof}
\end{lemma}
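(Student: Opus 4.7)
My plan is to use the Artin-Wedderburn structure theorem in two stages, letting the trace hypothesis force a uniform multiplicity pattern that makes an intermediate simple algebra visible. First I would apply the structure theorem to $\mathcal{Z}$, writing $\mathcal{Z} \cong \bigoplus_a \mathcal{M}_{d(Z)_a} \otimes I_{c(Z)_a}$ with central projectors $P(Z)_a$, and similarly for $\mathcal{X}$ with dimensions $d(X)_b$, multiplicities $c(X)_b$, and central projectors $P(X)_b$. Composing the inclusion $\mathcal{X} \hookrightarrow \mathcal{Z}$ with the projection onto the $a$-th simple summand gives a $*$-representation of $\mathcal{X}$ on $\C^{d(Z)_a}$; by Artin-Wedderburn this representation is a direct sum of the irreducibles of $\mathcal{X}$ with some multiplicities $C_{a,b} \geq 0$, so the image of $\mathcal{X}$ in the $a$-th block of $\mathcal{Z}$ looks like $\bigoplus_b (M_b \otimes I_{C_{a,b}}) \otimes I_{c(Z)_a}$ with $\sum_b d(X)_b C_{a,b} = d(Z)_a$.

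Next I would cash in the hypothesis. The projector $P(X)_b$ picks out the $b$-th isotypic component of $\mathcal{X}$ across every block of $\mathcal{Z}$, so its rank inside the $a$-th block is $d(X)_b C_{a,b} c(Z)_a$, giving $\mathrm{tr}(P(X)_b P(Z)_a) = d(X)_b C_{a,b} c(Z)_a$. On the other hand $\mathrm{tr}(P(X)_b)\mathrm{tr}(P(Z)_a)/\mathrm{tr}(I) = d(X)_b c(X)_b d(Z)_a c(Z)_a / \mathrm{tr}(I)$, so the hypothesis forces the factorization
\[
C_{a,b} = \frac{c(X)_b \, d(Z)_a}{\mathrm{tr}(I)}.
\]

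With this factorization in hand, the construction of $\mathcal{Y}$ is essentially mechanical. Let $g = \gcd_b c(X)_b$, write $\tilde{c}(X)_b = c(X)_b / g$ (so $\gcd_b \tilde{c}(X)_b = 1$), and set $\tilde{N} = \mathrm{tr}(I)/g$; since each $C_{a,b} = \tilde{c}(X)_b \cdot d(Z)_a/\tilde{N}$ must be an integer and the $\tilde{c}(X)_b$ are globally coprime, $d(Z)_a/\tilde{N}$ is a positive integer for every $a$. I would then define $\mathcal{Y}$ to be the simple subalgebra of $\mathcal{Z}$ whose elements, in the structure-theorem form of $\mathcal{Z}$, have the shape $\bigoplus_a \bigl( M \otimes I_{d(Z)_a/\tilde{N}} \otimes I_{c(Z)_a}\bigr)$, where $M$ ranges over $d \times d$ matrices with $d = \sum_b d(X)_b \tilde{c}(X)_b$. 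Since in the $a$-th block the image of $\mathcal{X}$ is exactly $\bigl(\bigoplus_b M_b \otimes I_{\tilde{c}(X)_b}\bigr) \otimes I_{d(Z)_a/\tilde{N}} \otimes I_{c(Z)_a}$, this sits inside $\mathcal{Y}$, and $\mathcal{Y}$ is visibly contained in $\mathcal{Z}$.

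The only step that is not purely bookkeeping is the divisibility argument: I expect the main obstacle to be checking that the hypothesis (which, as the remark after the lemma notes, is equivalent to an independence statement for the central projectors) really is strong enough to force $C_{a,b}$ to factor as a product of an $a$-depending and a $b$-depending integer. Once that multiplicative factorization is established, defining a simple "in-between" algebra becomes automatic from the structure theorem, because the action of $\mathcal{X}$ on each block of $\mathcal{Z}$ becomes a constant multiple (in each block) of one fixed faithful representation of dimension $d$.
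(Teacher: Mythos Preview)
Your proposal is correct and follows essentially the same route as the paper's proof: apply the structure theorem to both $\mathcal{X}$ and $\mathcal{Z}$, use the trace hypothesis to compute $\mathrm{tr}(P(X)_b P(Z)_a)$ and extract the factorization $C_{a,b} = c(X)_b\, d(Z)_a / \mathrm{tr}(I)$, then divide out $\gcd_b c(X)_b$ and use coprimality to conclude $d(Z)_a/\tilde{N}$ is an integer, and finally define $\mathcal{Y}$ as the obvious copy of $\mathcal{M}_d$ sitting diagonally across the blocks of $\mathcal{Z}$. The only difference is cosmetic (the paper names the integer $N = \mathrm{tr}(I)$ before dividing by the gcd), and you even avoid the paper's typo $\sum_a$ in the formula for $d$.
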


\subsection{Structure of Visibly Simple, Locally Factorizable Algebras in One Dimension: Circular Graph}
\label{lastruct}
We now further characterize $\cPalphaF$. 
Our main result is the following structure theorem for visibly simple, locally factorizable algebras in the case that the sites are on a one-dimensional line with a "circular" metric as defined in the theorem.
This will be relevant to the case of QCA in two dimensions, as we explain later when we use this theorem.  This theorem is more complicated than the "linear" metric considered previously.

\begin{thm}
\label{llsllfthm}
Consider a set of sites labelled by integers $1,2,...,J$ for some $J$, using a "circular" graph metric to measure distance where site $j$ is connected to sites $j \pm 1 \, {\rm mod} \, J$ and consider a tensor product Hilbert space with a finite dimensional Hilbert space for each site.  Suppose an algebra ${\mathcal A}$ of operators  is $l$-visibly simple and $l$-locally factorizable.
Then, ${\mathcal A}$ is generated by algebras ${\mathcal C}_i$ constructed below
where each ${\mathcal C}_i$
is an algebra of operators supported on some set $C_i$ of diameter $O(l)$.
Further, the algebras ${\mathcal C}_i$ all commute with each other and have trivial center.
Further, the support of $\cC_i$ is disjoint from the support of all $\cC_j$ for $i\neq j$ except possibly $\cC_{i\pm 1}$; this $i\pm 1$ is taken modulo the number of ${\mathcal C}_i$
so that the last one may have support overlapping with that of ${\mathcal C}_1$.
\begin{proof}
  For any set $S$ of sites, define ${\mathcal A}_S$ to be
the algebra of operators in ${\mathcal A}$ which are supported on $S$.

We define a sequence of sets, $I_1,I_2,\ldots,I_K$, for some integer $K=O(J/l)$ with the following properties.
Each $I_j$ is a set of sites $\{1,2,\ldots,R_j\}$, where $R_j=j\Delta$
where $\Delta=Cl$ for some sufficiently large constant $C$ chosen later (for example, $C=4$ suffices).
%\nts{check required $C$ below}
We choose $K$ so that $\Delta \leq J-R_K \leq 2 \Delta$.
Note that $I_j \subset I_{j+1}$.
See figure \ref{intervalfig}.

\begin{figure}
\includegraphics[width=3in]{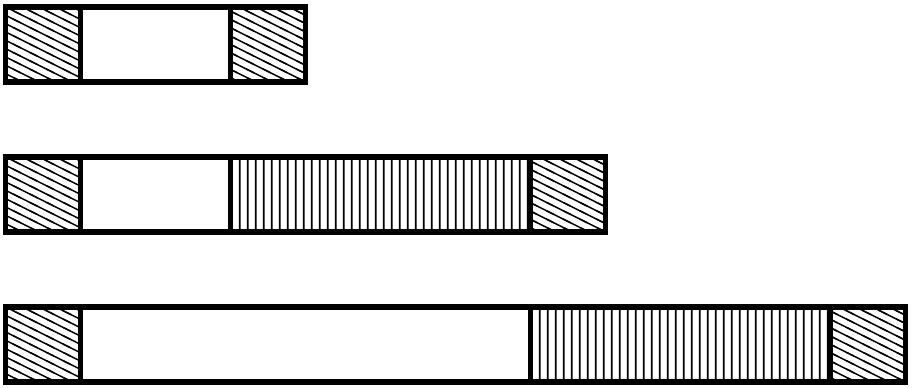}
\caption{Schematic illustration of intervals.  Top line is $I_1$, middle line is $I_2$, bottom line is $I_3$.  Largest rectangular box represents the intervals; the boxes with diagonal lines inside represent the left and right ends.  The extended right end includes the right end as well as the longer box with vertical lines inside.}
\label{intervalfig}
\end{figure}

Let ${\mathcal B}_j={\mathcal A}_{I_j}$ so that ${\mathcal B}_{j-1} \subseteq {\mathcal B}_j$.

{\it Proof in special case when ${\mathcal B}_j$ have trivial center---}
To give the idea of the proof, we consider the case in which all ${\mathcal B}_j$ have trivial center (later we consider the more general case).
In this case, define algebras $\cC_j$ as follows.  Let $\cC_1=\cB_1$.  Let $\cC_j$ for $j>2$ equal the commutant of $\cB_{j-1}$ in $\cB_j$.
That is, $\cC_j$ is equal to the set of elements of $\cB_j$ which commute with all elements of $\cB_{j-1}$.
Then, since all of these algebras $\cB_j$ are assumed to have trivial center, the algebras $\cC_j$ also have trivial center and
$\cC_1,\ldots,\cC_K$ generates $\cB_K$.
Finally, define algebra $\cC_{K+1}$ to equal to commutant of $\cB_K$ in ${\mathcal A}$ so that
$\cC_1,\ldots,\cC_{K+1}$ generate ${\mathcal A}$.

\begin{remark}
It might seem strange to the reader that we use a slightly different notation to define $\cC_{K+1}$ compared to the other algebras $\cC_j$ for $j\leq K$; we could have defined a set $I_{K+1}$ that is the set of all sites and then define $\cB_{K+1}={\mathcal A}_{I_{K+1}}$ and defined $\cC_{K+1}$ as the commutant of $\cB_K$ in $\cB_{K+1}$.  This is the same definition but using a unified notation for all $\cC_j$.  We choose not to do this because in the more general case we will want to treat them slightly differently.
\end{remark}

By construction, then, the algebras ${\mathcal C}_j$ all commute with each other and have trivial center.
It remains then to show the claims about the supports of the algebras. 
For any interval, $I_j$ define the "left end" of that interval to consist of the set of sites $\{1,\ldots,l\}$ and define the "right end" to consist of sites $\{R_j-l+1,\ldots,R_j\}$.  Both the left and right ends consist of $l$ sites.
Define the "extended right end" for $j>1$ to consist of sites
sites $\{R_{j-1}-l+1,\ldots,R_j\}$ so that the extended right end of $I_j$ consists of the right end of $I_{j-1}$ as well as all sites in $I_j\setminus I_{j-1}$.

 Consider algebra $\cC_j$ for $j\leq K$.
Every operator $O\in \cC_j$ is supported on $I_j$.  
We claim that for $j>1$ the operator $O$ is supported on the union of the left end and the extended right end of $I_j$
and that for $j>2$ the operator $O$ is supported on the extended right end of $I_j$.
First left us show that the operator $O$ is supported on the union of the left end and extended right end for $j>1$.
Indeed, if a site $x\in I_j$ is in the support of $O$ with $x$ not in the left or right end, then, by properties of a visibly simple algebra, there is an operator $P$ in ${\mathcal A}$ supported within the set of sites within distance $l$ of $x$ which does not commute with $O$.  However the support of such a $P$ is in $I_{j-1}$ and hence $P$ is in $\cB_{j-1}$, giving a contradiction.
Next, we show that the $O$ is supported on the extended right end of $I_j$ for $j>2$.
By assumption that ${\mathcal A}$ is locally factorizable, since $O$ is supported on the union of the left end and extended right end,
we can decompose
 $O=\sum_\beta A(\beta) O_L^\beta O_R^\beta$ with $O_L,O_R$ supported on the left end and extended right end respectively and $O_L^\beta,O_R^\beta\in {\mathcal A}$, and $A(\beta)$ being a scalar, respectively (here we use that the distance between left end and extended right end is greater than $l$ which requires that $C>3$).
Thus $O_L^\beta$ is in $\cB_1$.
Hence, since $\cB_1$ is simple, if $O_L^\beta$ is not a scalar then there is some operator $P\in \cB_1$ which does not commute with $O_L^\beta$.  Since the extended right end of $I_j$ is disjoint from $I_1$ for $j>2$ for large enough $C$ (again, $C>3$ suffices), this operator $P$ commutes with $O_R^\beta$ and so $P$ does not commute with $O$.
Hence, $O_L^\beta$ must be a scalar and so the support of $O$ is as claimed.

Then, since the extended right ends have diameters $O(l)$, the claim about the support of $\cC_j$ for $j>2$ follows (the same argument also works to bound the support of $\cC_{K+1}$).  For $j=1,2$, the claim about the support of $\cC_j$ follows from the fact that $I_j$ has diameter $O(l)$ for any given $j=O(1)$.

Thus, this proves the theorem under the assumption that all $\cB_j$ have trivial center.

{\it Proof in general case---}
Now we consider the more general case that the $\cB_j$ may have nontrivial center.

Note first that any operator $O$ in the center of $\cB_j$ is supported on the union of the left and right ends of $I_j$.  (Remark: here we indeed mean the right end, not the extended right end.)
Indeed, if not, then there would be some $x$ in the support of $O$ with $x$ not in the union of left and right ends such that the set of sites within distance $l$ of $x$ would be contained within $I_j$ and since ${\mathcal A}$ is visibly simple there would be some operator in ${\mathcal A}$ supported on $I_j$ which does not commute with $O$, contradicting the assumption that $O$ is in the center.
 
\begin{remark}
It seems natural to conjecture that the center of $\cB_j$ is generated by two algebras, one supported on the left end of $I_j$ and one supported on the right end of $I_j$.  In fact, this conjecture will follow from the theorem for sufficiently long intervals.  However, we will {\it not} assume that this conjecture holds.
\end{remark}

Before giving the proof, let us give a rough outline of what we will do.
We will define new algebras $\cD_j$ derived from $\cB_j$ so that the $\cD_j$ have trivial center and we will then use those algebras $\cD_j$ to construct $\cC_j$ by commutants as in the previous case.
We will construct the $\cD_j$ in two different steps, which, roughly, correspond to removing the terms in the center supported on the left and right of $I_j$; of course, as we remarked, we do not assume that the center factorizes into a center on the left and right end.

First we describe the step that, roughly, "removes the terms in the center on the left end".
Let ${\mathcal Q}$ be the algebra of operators in ${\mathcal A}$ which are supported on the set of sites
$\{1,\ldots,l\}$.
Let $\Pi$ be any minimal nonzero projector in ${\mathcal Q}$, i.e., $\Pi$ is any projector in this algebra such that $\Pi\neq 0$ but such that there is no other projector $\Pi'\neq 0$ in ${\mathcal Q}$ with $\Pi'<\Pi$.
Note that $\Pi$ is in $\cB_j$ for all $j$.
For each $j$,
let $\cB'_j$ be the algebra of operators in $\cB_j$ which commute with $\Pi$.
Then, we restrict to the $+1$ eigenspace of $\Pi$, i.e., rather than considering the full Hilbert space which is a tensor product of the Hilbert space on each site, we restrict this full Hilbert space to the $+1$ eigenspace of $\Pi$.

We now apply the structure theorem of Eq.~(\ref{Mform}) to $\cB_j$ for arbitrary $j$; for notational clarity we will add a $j$ in parenthesis to the matrices and projectors in the decomposition, writing $M(j)_a$ and $P(j)_a$.
Recall that $\Pi\in \cB_j$ for all $j$.  The projector $\Pi$, when written in the form (\ref{Mform}), has $M(j)_a$ equal to some projector for each $a$.  If we take a full matrix algebra, consider the subalgebra of operators commuting with some projector, and then restrict to the $+1$ eigenspace of that projector, then the resulting algebra still has trivial center.
Thus, for all $j$ the center of the algebra $\cB'_j$ is generated by operators $\Pi P(j)_a$ for each $a$.  For some $a$, it is possible that $\Pi P(j)_a=0$.

As shown above, any operator $P(j)_a$ in the center of $\cB_j$ is supported on the union of the left and right ends and so can be written as
$P(j)_a=\sum_\beta A(\beta) O_L^\beta O_R^\beta$ with $O_L,O_R$ supported on the left and right ends and $O_L^\beta,O_R^\beta\in \cB_j$, and $A(\beta)$ being a scalar.
The operators $O_L^\beta$ commute with $\Pi$ since $P(j)_a$ is in the center of $\cB_j$
 and the support of $O_R^\beta$ is disjoint from the support of $\cB_1$.
Further, the operators $O_L^\beta$ are in ${\mathcal Q}$.
Restricting to the $+1$ eigenspace of $\Pi$, we claim that $O_L^\beta$ must be proportional to the identity operator for each $\beta$.
If this were not true for some $\beta$, then $\Pi$ would not be minimal as the algebra generated by $\Pi$ and $O_L^\beta$
would contain a nonzero projector $\Pi'$ with $\Pi'<\Pi$.

Hence, the center of $\cB'_j$ is supported on the right end of the interval.
Since we restrict to the $+1$ eigenspace of $\Pi$, note that for any $O$ with $[O,\Pi]=0$ we have $\Pi O = O$.

Now, in the next step, we, roughly speaking, "remove the center on the right".
We remind the reader of the proof of theorem \ref{llinduct} in the linear case; we will use the same technique here to remove the center on the right, in particular using lemma \ref{inbetweenlemma}.
Note that $\cB'_j\subset \cB'_{j+1}$.
We now construct a simple algebra $\cD_j$ such that $\cB'_j \subset \cD_j \subset \cB'_{j+1}$.
Since the right end of $I_j$ is disjoint from the right end of $I_{j+1}$, for any operators $P'(j)$ in the center of $\cB'_j$ and $P'(j+1)$ in the center of $\cB'_{j+1}$, we have that
${\rm tr}(P'(j)_b P'(j+1)_a)={\rm tr}(P'(j)_b) {\rm tr}(P'(j+1)_a)/{\rm tr}(I)$.
Hence, the existence of $\cD_j$ follows from lemma \ref{inbetweenlemma}.

Now define $\cE_j$ to be the commutant of $\cD_{j-1}$ in $\cD_j$.
Since each $\cD_j$ is simple, each $\cE_j$ is also simple, the $\cE_j$ commute with each other, and the set of $\cE_j$ for $j=1,\ldots, K$ generates $\cD_j$.

Let $\cF$ be the algebra generated by $\cE_2,\cE_3,\ldots,\cE_{K-1}$.
Any operator $O$ in ${\mathcal A}$ supported on $I_{K-1}\setminus I_1$ commutes with $\Pi$ and also $\Pi O$ is nonvanishing, i.e., we have an injection $f$ from the set of operators supported on $I_{K-1} \setminus I_1$ to $\cD_K$ given by multiplying the operator by $\Pi$.
Further, $O$ commutes with $\cD_1$, so that $O$ is in $\cF$.

Further, we have an injection $g$ from $\cD_j$ to ${\mathcal A}$, mapping an operator in $\cD_j$ to $\Pi O$ for some $O$ in ${\mathcal A}$ with $[O,\Pi]=0$.
The composition of these two injections $g \circ f$ maps $O$ to $\Pi O$.

Now consider any operator $O$ in $\cE_j$ for $2\leq j \leq K-1$.
This operator commutes with $\cD_{j-1}$.  Hence, $g(O)$ has no support on sites $2l+1,\ldots,C\cdot (j-1)-l$ because if it had support on such sites, by definition of a visibly simple algebra there would be some operator $P$ supported on $l+1,\ldots,C\cdot (j-1)$ which does not commute with $g(O)$ and so
$f(P)$ would not commute with $O$ since the support of $P$ is disjoint from the support of $\Pi$.
So, $g(O)$ is supported on sites $1,\ldots,2l$ and $C\cdot (j-1)-l+1,\ldots,C\cdot(j+1)$.
By the assumption that ${\mathcal A}$ is locally factorizable, we can write $g(O)=\sum_\beta A(\beta) O_L^\beta O_R^\beta$ with $O_L,O_R$ supported on
$1,\ldots,2l$ and $C\cdot (j-1)-l+1,\ldots,C\cdot(j+1)$ respectively.  Then, $O_L^\beta$ commutes with $\Pi$ and indeed $O_L^\beta$ equals the image under $g$ of some operator in $\cD_1$.  However, since $\cD_1$ is simple, this operator must be a scalar.

Hence, $g(O)$ is equal to $\Pi$ times some operator supported on $C\cdot (j-1)-l+1,\ldots,C\cdot (j+1)$.
Hence, every operator in $g(\cF)$ is equal to $\Pi$ times an operator supported on $l+1,\ldots,CK$.

We are now finally ready to define $\cC_j$ for $2\leq j \leq K-1$.
For any $O$ in $\cE_j$, we have that $g(O)$ is equal to $\Pi$ times an operator $P$ supported on 
$C\cdot (j-1)-l+1,\ldots,C\cdot (j+1)$.  We define $\cC_j$ to be the algebra generated by such operators $P$.
The support of the algebras $\cC_j$ is as claimed; the algebras are simple and commute with each other.

Finally, we define $\cC_1$ to be the commutant of the algebra generated by $\cC_2,\ldots,\cC_{K-1}$ in ${\mathcal A}$.
\end{proof}
\end{thm}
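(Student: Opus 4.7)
The plan is to adapt the inductive strategy from the linear case (Theorem \ref{llinduct}) to the circle by building a growing chain of subalgebras $\cB_1 \subset \cB_2 \subset \cdots \subset \cB_K$ where $\cB_j$ is the subalgebra of operators in $\mathcal{A}$ supported on an interval $I_j = \{1, 2, \ldots, j\Delta\}$ with $\Delta = Cl$ for a sufficiently large constant $C$, and where $K$ is chosen so that $I_K$ covers the circle except for a single gap of length $\Theta(\Delta)$. If each $\cB_j$ had trivial center, I would simply set $\cC_1 = \cB_1$, $\cC_j = $ commutant of $\cB_{j-1}$ in $\cB_j$ for $2 \leq j \leq K$, and $\cC_{K+1} = $ commutant of $\cB_K$ in $\mathcal{A}$, exactly mirroring the linear proof. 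Visible simplicity forces each $\cC_j$ to be supported within distance $l$ of the endpoints of $I_j$, and local factorizability then lets me split any $O \in \cC_j$ into a piece near site $1$ and a piece near site $j\Delta$; since $\cB_1$ is simple the left piece must be a scalar, so $\cC_j$ is genuinely localized near the right end of $I_j$ with diameter $O(l)$.

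The main obstacle, and the reason the circular case is genuinely harder than the linear case, is that the $\cB_j$ need not have trivial center. A central element of $\cB_j$ is forced by visible simplicity to sit on the union of the left end $\{1,\ldots,l\}$ and the right end of $I_j$, but need not factorize across these two ends. To remove the left-end ambiguity, I would pick a minimal nonzero projector $\Pi$ inside the algebra $\mathcal{Q}$ of operators of $\mathcal{A}$ supported on $\{1,\ldots,l\}$, replace each $\cB_j$ by the subalgebra $\cB_j'$ of elements commuting with $\Pi$, and restrict the whole construction to the $+1$ eigenspace of $\Pi$. Writing a central projector of $\cB_j$ as $\sum_\beta O_L^\beta O_R^\beta$ via local factorizability, the left factors $O_L^\beta$ live in $\mathcal{Q}$ and commute with $\Pi$; minimality of $\Pi$ then forces $O_L^\beta$ to be scalar on the reduced space, so after restriction the center of $\cB_j'$ is supported only on the right end.

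To remove the remaining right-end center, I would apply Lemma \ref{inbetweenlemma} to the inclusion $\cB_j' \subset \cB_{j+1}'$: the right ends of $I_j$ and $I_{j+1}$ are disjoint (for $C$ large), so the central projectors of the two algebras have disjoint support and therefore satisfy the trace-factorization hypothesis ${\rm tr}(P_X P_Z){\rm tr}(I) = {\rm tr}(P_X){\rm tr}(P_Z)$. This produces a simple $\cD_j$ with $\cB_j' \subset \cD_j \subset \cB_{j+1}'$. Then $\cE_j := $ commutant of $\cD_{j-1}$ in $\cD_j$ is simple, the various $\cE_j$ commute, and together they generate $\cD_K$.

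Finally I would lift everything back to $\mathcal{A}$. The map $g$ sending an operator $O$ (with $[O,\Pi]=0$) to $\Pi O$ is an injection from the reduced-space picture into $\mathcal{A}$, and $\cC_j$ will be defined by the requirement $g^{-1}(\Pi \cC_j) = \cE_j$, together with $\cC_1 := $ commutant in $\mathcal{A}$ of $\langle \cC_2, \ldots, \cC_{K-1} \rangle$. To check supports, apply visible simplicity to $g(\cE_j)$ using that it commutes with $\cD_{j-1}$, then local factorizability to split off a left-end piece, then simplicity of $\cD_1$ to force the left piece to be scalar. The hard part will be precisely this bookkeeping: verifying that the decompositions survive both the restriction to the $\Pi=1$ subspace and the subsequent lift back, and confirming the support diameters are $O(l)$; the constant $C$ must be chosen large enough that all the overlapping end-regions remain in the regime where visible simplicity and local factorizability apply simultaneously.
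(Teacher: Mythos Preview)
Your proposal is correct and follows essentially the same approach as the paper: the same nested intervals $I_j$, the same special-case commutant construction, the same minimal-projector trick with $\Pi\in\mathcal{Q}$ to kill the left-end center, the same application of Lemma~\ref{inbetweenlemma} to produce simple $\cD_j$ sandwiched between $\cB'_j$ and $\cB'_{j+1}$, the same $\cE_j$ as relative commutants, and the same lift back via $g$ with $\cC_1$ defined as the commutant of the remaining $\cC_j$'s. The bookkeeping you flag as the hard part is exactly where the paper spends its effort as well.
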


\subsubsection{Commutant of Visibly Simple Algebra}
We note that
theorem \ref{llsllfthm} has the following corollary.  This corollary will not be needed later, however, so it may be skipped.
\begin{lemma}
\label{commutatntllsllf}
Consider a set of sites labelled by integers $1,2,...,J$, using a graph metric to measure distance where site $j$ is connected to sites $j \pm 1 \, {\rm mod} \, J$ and consider a tensor product Hilbert space with a finite dimensional Hilbert space for each site.  Suppose an algebra ${\mathcal A}$ of operators  is $l$-visibly simple and $l$-locally factorizable.
Let ${\mathcal B}$ be the commutant of ${\mathcal A}$ in the algebra of all operators on the tensor product Hilbert space.
Then, ${\mathcal B}$ is $O(l)$-locally factorizable and $O(l)$-visibly simple.
\begin{proof}
Define algebras ${\mathcal C}_i$ as in theorem \ref{llsllfthm}.
Suppose $O \in {\mathcal B}$ is supported on sets $T_1 \cup T_2$ $T_1,T_2$ and $O$ is decomposed using a singular value decomposition
using the Hilbert-Schmidt inner product as $O=\sum_\beta A(\beta) O_1^\beta O_2^\beta$, for some discrete index $\beta$ and complex scalars $A(\beta) \neq 0$.  
Then, if ${\rm dist}(T_1,T_2)$ is sufficiently large compared to $l$, there is no set $C_i$ which has nonvanishing intersection with both $T_1$ and $T_2$.
In this case, if $O$ commutes with every element of ${\mathcal A}$, it means that $O_1^\beta$ commutes with every element of ${\mathcal C}_i$ for which $C_i$
has non-vanishing intersection with $T_1$.  Hence, $O_1^\beta$ is in ${\mathcal B}$, and similarly for $O_2^\beta$.  This shows that ${\mathcal B}$ is $O(l)$-locally factorizable.

Next, suppose that $O \in {\mathcal B}$ has support on some site $x$.  Then, there is some operator (in the algebra of all operators) supported on site $x$ which does not commute with $O$.  Call this operator $Z$.  Then, since ${\mathcal A}$ is simple, we can decompose every operator (and hence in particular $Z$) as
a sum of products of operators in ${\mathcal A},{\mathcal B}$ as
\be
\label{Zdecomp}
Z=\sum_\alpha A(\alpha) O_{\mathcal A}^\alpha O_{\mathcal B}^\alpha,
\ee
using a singular value decomposition with Hilbert-Schmidt inner product and complex scalars $A(\alpha)$.

Let $C(x)$ denote the set of $i$ such that $x \in C_i$.
The operator $Z$ commutes with ${\mathcal C}_i$ if $i \not \in C(x)$.
So, we can assume in Eq.~(\ref{Zdecomp}) that $O_{\mathcal A}^\alpha$ is in the algebra generated by the set of ${\mathcal C}_i$ for $i \in C(x)$.
Call this algebra ${\mathcal C}(x)$.
This algebra ${\mathcal C}(x)$ is supported within distance $O(l)$ of $x$.
Then, we have that
\be
\label{reconstruct}
O_{\mathcal B}^\alpha={\rm const.} \times A(\alpha)^{-1} {\rm tr}_{{\mathcal C}(x)}\Bigl( O_{\mathcal A}^\alpha Z \Bigr),
\ee
where the trace is over the algebra ${\mathcal C}$.
The constant in front denotes the fact that the Hilbert-Schmidt inner product used to define the decomposition is defined with a different trace; this constant is equal to ${\rm tr}(I)/{\rm tr}_{{\mathcal C}(x)}(I)$.

The trace over ${\mathcal C}(x)$ can be written as, for an arbitrary operator $Q$,
\be
\label{form}
{\rm tr}_{{\mathcal C}(x)}(Q)={\rm const.} \times \int_{U \in {\mathcal C}(x)} {\rm d}U \; U Q U^\dagger,
\ee
where the integral is over unitaries $U$ in ${\mathcal C}(x)$ using a Haar measure, where now the constant is equal to ${\rm tr}_{{\mathcal C}(x)}(I)$.

So, combining Eqs.~(\ref{Zdecomp},\ref{reconstruct},\ref{form}), and using the fact that the unitaries $U$ in Eq.~(\ref{form}) are supported within distance $O(l)$ of $x$, we find that $O_{{\mathcal B}}^\alpha$ is supported within distance $l$ of $x$ for all $\alpha$.
There must be at least one $\alpha$ such that $O_{{\mathcal B}}^\alpha$ does not commute with $O$, so this shows that ${\mathcal B}$ is $O(l)$-visibly simple.
\end{proof}
\end{lemma}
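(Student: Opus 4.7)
The plan is to leverage the structural decomposition provided by Theorem \ref{llsllfthm}: write $\mathcal{A}$ as the algebra generated by a family of simple, pairwise commuting subalgebras $\mathcal{C}_i$, each supported on a set $C_i$ of diameter $O(l)$, with $C_i \cap C_j = \emptyset$ unless $j \in \{i, i\pm 1\}$ (indices cyclic). The commutant $\mathcal{B}$ is then characterized as exactly those operators commuting with every $\mathcal{C}_i$, and this local description is what yields both properties.

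First I would handle local factorizability. Take $O \in \mathcal{B}$ supported on $T_1 \cup T_2$ with $\mathrm{dist}(T_1, T_2)$ large compared to $l$, and perform an SVD decomposition $O = \sum_\beta A(\beta) O_1^\beta O_2^\beta$ with $O_i^\beta$ supported on $T_i$. For sufficiently large separation (say $> Cl$ for a suitable constant), no single $C_i$ intersects both $T_1$ and $T_2$. Partition the relevant $\mathcal{C}_i$'s into those contained in (or near) $T_1$ versus those near $T_2$. Since $O$ commutes with each $\mathcal{C}_i$, and the $O_1^\beta, O_2^\beta$ have disjoint supports compatible with this partition, a standard orthogonality argument for the SVD forces each $O_j^\beta$ to commute with all of $\mathcal{A}$ individually. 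Hence $O_j^\beta \in \mathcal{B}$.

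Next, for visible simplicity, suppose $O \in \mathcal{B}$ has site $x$ in its support. Pick any operator $Z$ (supported on $x$) with $[O, Z] \neq 0$, and decompose $Z = \sum_\alpha A(\alpha) O_\mathcal{A}^\alpha O_\mathcal{B}^\alpha$ via SVD with $O_\mathcal{A}^\alpha \in \mathcal{A}$ and $O_\mathcal{B}^\alpha \in \mathcal{B}$ (possible because $\mathcal{A}$ is simple, so $\mathcal{A}$ and $\mathcal{B}$ together span the full matrix algebra). Let $C(x) = \{ i : x \in C_i\}$ and let $\mathcal{C}(x)$ be the algebra they generate; its support lies within distance $O(l)$ of $x$. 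Since $Z$ is supported at $x$ it commutes with every $\mathcal{C}_i$ for $i \notin C(x)$, so after refining the decomposition one can assume $O_\mathcal{A}^\alpha \in \mathcal{C}(x)$. Inverting the SVD via the trace (using the Haar-measure representation $\mathrm{tr}_{\mathcal{C}(x)}(Q) = \mathrm{const}\cdot \int_{U \in \mathcal{C}(x)} U Q U^\dagger\, dU$) expresses each $O_\mathcal{B}^\alpha$ as a local average of $Z$ conjugated by unitaries in $\mathcal{C}(x)$, hence supported within distance $O(l)$ of $x$. Since $[O, Z] \neq 0$, at least one $O_\mathcal{B}^\alpha$ fails to commute with $O$, establishing $O(l)$-visible simplicity.

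The main obstacle will be the visibility argument: one must be careful that the SVD decomposition can be chosen so $O_\mathcal{A}^\alpha$ truly lies in the small algebra $\mathcal{C}(x)$, rather than merely commuting with the far-away $\mathcal{C}_i$. Resolving this cleanly requires that $\mathcal{A}$ factorize (up to scalars and center issues) as the tensor of $\mathcal{C}(x)$ with the algebra generated by the remaining $\mathcal{C}_i$; the simplicity and near-disjoint support from Theorem \ref{llsllfthm} provide exactly this, but the boundary $\mathcal{C}_{i\pm 1}$ overlaps must be absorbed into a slightly enlarged $O(l)$-neighborhood. The Haar-integral reconstruction formula is the cleanest way to guarantee that the extracted $O_\mathcal{B}^\alpha$ remains supported within the advertised $O(l)$-neighborhood of $x$.
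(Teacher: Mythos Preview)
Your proposal is correct and follows essentially the same approach as the paper's own proof: both parts invoke the $\mathcal{C}_i$ decomposition from Theorem \ref{llsllfthm}, with local factorizability obtained by noting that no $C_i$ meets both $T_1$ and $T_2$, and visible simplicity obtained by SVD-decomposing a single-site witness $Z$ into $\mathcal{A}$- and $\mathcal{B}$-parts, restricting the $\mathcal{A}$-part to the local algebra $\mathcal{C}(x)$, and recovering the $\mathcal{B}$-part via the Haar-integral trace formula. Your identification of the main subtlety (ensuring $O_{\mathcal A}^\alpha$ genuinely lies in $\mathcal{C}(x)$) matches the one implicit step the paper glosses over.
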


\subsection{Blending for QCA in $2d$}
\label{nolo}
We now can consider blending using the results above.

Let us also define what we mean by a "one-dimensional boundary".
\begin{defin}
Let $G$ be a graph.  That say that a set $F$ of vertices has a one-dimensional boundary with bounded distortion if
it is possible to define a mapping $f$ from sites in $\Bd(F)$ to $\{1,\ldots,J\}$ for some $J $ such that the following holds.
Define a metric ${\rm dist}(i,j)$ on elements of $\{1,\ldots,J\}$ using a graph metric where site $j$ is connected to sites $j\pm 1$ mod $J$ such that
\be
{\rm dist}(f(i),f(j)) \leq C {\rm dist}(i,j)+C'R
\ee
and
\be
{\rm dist}(i,j) \leq  C{\rm dist}(f(i),f(j))+C'R
\ee
for some constants $C,C'$.
\end{defin}
\begin{remark}
This notion is sometimes made into the equivalence relation "coarse quasi-isometric".
\end{remark}

\begin{remark}
Certainly one might consider more general definitions of "bounded distortion", allowing different ways in which the distance changes.  This definition will be general enough for us.
This definition is useful as if ${\rm dist}(i,j)=O(R)$ for some pair of sites $i,j\in \Bd(F)$, then ${\rm dist}(f(i),f(j))=O(R)$; then, if one finds some $k$ such that ${\rm dist}(f(i),k)=O(R)$, also
${\rm dist}(f^{-1}(f(i)),f^{-1}(k))=O(R)$.  This mapping is not necessarily one-to-one so $f^{-1}(k)$ may be a set of sites.
\end{remark}

A simple example of this is to consider the graph of a discretization of a torus, $H(L_0)$, for some $L_0$ and take $F$ to be the set of sites $(i,j)$ with $1\leq i,j\leq L$, i.e., a smaller square.
Then, the constants $C,C'$ can both be chosen $O(1)$ and $F$ has a boundary with bounded distortion.  Since we refer to this example below, we refer to this choice of $F$ as $F(L)$.

\begin{thm}
\label{extendtotriv}
Let $\alpha$ be a QCA with range $R$.
Suppose there is a set of sites $F$ such that the boundary has bounded distortion.
Then, we can tensor with additional degrees of freedom on $\Bd(F) \cap F$ and construct a QCA $\beta$ with range $O(R)$ such that $\alpha \otimes \Id$ agrees with $\beta$ on $\Int(\Int(F))$ and
$\beta$ agrees with $\Id$ on $\Ext(F)$.  The $O(R)$ notation hides a dependence of the range of $\beta$ on the constants $C,C'$.
\begin{proof}

By theorems \ref{thmfac},\ref{thmls}, $\alpha({\mathcal A}(F))$ is generated by $<{\mathcal A}({\Int(F)}),\cPalphaF>$ where $\cPalphaF$ acts on $\Bd(F)$ and is $O(R)$-visibly simple and $O(R)$-locally factorizable.  
Using that $F$ has a one-dimensional boundary with bounded distortion, we "coarse-grain", defining a new set of "supersites" labelled by $1,\ldots,J$ for some $J$.
A supersite labelled by $i$ is the tensor product of sites in $f^{-1}(i)$.
The algebra ${\mathcal P}$ can be regarded as acting on the supersites and is still $O(R)$-visibly simple and $O(R)$ locally factorizable (here we use that $F$ has a one-dimensional boundary with bounded distortion).
This coarse-graining is done so that
we can apply theorem
\ref{llsllfthm} to show that
$\cPalphaF$ is generated by algebras ${\mathcal C}_i$ where
each ${\mathcal C}_i$
is an algebra of operators supported on some set $C_i$ of diameter $O(R)$ with $C_i \subset \Bd(F)$.  Note that $C_i$ is a set of sites, not of supersites.
The algebras ${\mathcal C}_i$ all commute with each other and have trivial center.

Let ${\mathcal C}_i$ have dimension $D_i^2$.
Note that
\be
\label{dimeq}
\prod_i D_i = \prod_{j \in \Bd(F) \cap F} \overline d_j,
\ee
where $d_j$ is the Hilbert space dimension on site $j$.

We tensor with additional degrees of freedom as follows.  For each algebra ${\mathcal C}_i$, we tensor with an ancilla
degree of freedom with dimension $D_i$ on any (arbitrarily chosen) site $j\in \Bd(F)$ such that $f(j)$ is in on the support of ${\mathcal C}_i$.  
 Let $c_i$ denote the added site corresponding to algebra ${\mathcal C}_i$.

We now define QCA $\beta$.  It will be defined to agree with $\alpha$ on $\Int(\Int(F))$.  We define $\beta$ to agree with $\Id$ on the original degrees of freedom which are in the complement of $F$.

The Hilbert space with the additional degrees of freedom can be written as a tensor product of three Hilbert spaces, corresponding to the original degrees of freedom in the complement of $F$, to the original degrees of freedom in $F$, and to the additional degrees of freedom (which are all supported on $\Bd(F)$.  
Write these three Hilbert spaces as ${\mathcal H}_{\overline F},{\mathcal H}_F,{\mathcal H}_A$, respectively,
and write ${\mathcal A}(F),{\mathcal A}(\overline F),{\mathcal A}(A)$ for the algebra of operators acting on these three spaces, respectively.
We say that an operator is supported on one of these three Hilbert spaces if it is equal to some operator on that Hilbert space tensored with
the identity on the other Hilbert spaces.
We have defined $\beta(O)=O$ for $O$ supported on ${\mathcal H}_{\overline F}$.
We now define $\beta(O)$ for $O$ supported on ${\mathcal H}_F$ or ${\mathcal H}_A$.
For use later, write ${\mathcal H}_{\Int(F)}$ for the Hilbert space of the original degrees of freedom in $F$ and ${\mathcal H}_{F \cap \Bd(F)}$ for
the Hilbert space of the original degrees of freedom in $F \cap \Bd(F)$, and write ${\mathcal A}(\Int(F)),{\mathcal A}(F \cap \Bd(F))$ for the algebras of
operators acting on those Hilbert spaces, respectively.

We define $\beta(O)$ for $O$ supported on ${\mathcal H}_F$ as follows.
For any such $O$,  $\alpha(O)$ 
 can be written as a sum of products $\sum_{\alpha} O_I^\alpha O_B^\alpha$ where
$O_I^\alpha$ is supported on the degrees of freedom in $\Int(F)$ and $O_B^\alpha$ is supported on $F \cap \Bd(F)$ and $O_B^\alpha$ is in $\cPalphaF$.
Define (arbitrary) isomorphisms from operators acting on each ${\mathcal C}_i$ to operators
acting on the corresponding added site $c_i$.
This gives an isomorphism from the tensor product of ${\mathcal C}_i$ to ${\mathcal A}(A)$; call this isomorphism $h(\cdot)$.  
We define $\beta$ so that $\beta(O)=\sum_\alpha O_I^\alpha \otimes h(O_B^\alpha)$.
Thus, while $\alpha$ maps ${\mathcal A}(F)={\mathcal A}(\Int(F)) \otimes {\mathcal A}(F \cap \Bd(F))$ to ${\mathcal A}(\Int(F))$ times the boundary algebra, $\beta$ maps ${\mathcal A}(F))={\mathcal A}(\Int(F)) \otimes {\mathcal A}(F \cap \Bd(F))$ to ${\mathcal A}(\Int(F))$ times ${\mathcal A}(A)$.

For $O$ supported on $\Int(\Int(F))$, $\alpha(O)$ is supported on $\Int(F)$ and so $\alpha$ and $\beta$ agree on $\Int(\Int(F))$.

Finally, we define the action of $\beta$ on the sites $c_i$.
Roughly, the idea is to "use wires to map ${\mathcal A}(A)$ to ${\mathcal A}(F \cap \Bd(F))$".
Here, the algebra of operators on the original boundary degrees of freedom is ${\mathcal A}(F \cap \Bd(F))$ and a "wire" is (roughly) a one-dimensional QCA.
Let us describe a first attempt before giving the actual construction.
Suppose can find some site $c_i$ with an additional degree of freedom with dimension $D_i$ and find some site $j \in \Bd(F) \cap F$ such that the original degree of freedom has dimension $d_j$ such that
$D_i=d_j$, then 
we tensor with additional degrees of freedom, with these degrees of freedom located
at
some sites $s_1,s_2,\ldots,s_k$, in $\Bd(F) \cap F$.  The additional degrees of freedom will all have dimension $D_i$, and the $s_i$ are chosen so that so that ${\rm dist}(s_i,s_{i+1})\leq R$ and ${\rm dist}(c_i,s_1)\leq R$ and ${\rm dist}(s_k,j)\leq R$.
We then define $\beta$ to act as a shift as follows: it maps an operator on $c_i$ to the corresponding operator on $s_1$, maps an operator on $s_i$ to the corresponding operator on $s_{i+1}$, and maps an operator on $s_k$ to the corresponding operator on $j$.
When we say "corresponding", this requires choosing some (arbitrary) isomorphisms between these algebras which all have the same dimension.

We would like to pair up all additional degrees of freedom $c_i$ in this manner.
Unfortunately, this runs into a technical detail.  It is possible that we cannot pair off all sites $c_i$ in this manner.  For example, it may be possible that there is one site $c_1$ with dimension $d_1=15$ and two sites $j=1,2$ with dimensions $d_1=3,d_2=5$, respectively.
However, this problem is simple to solve.  We factor each $D_i$ into prime factor and then write the additional degree of freedom Hilbert space on site $c_i$ as a tensor product of Hilbert space with dimension corresponding to the prime factors of $D_i$, and similarly factor the original degree of freedom for each $j \in \Bd(F) \cap F$
into a tensor product of Hilbert space with dimension corresponding to the prime factors of $d_j$.
We then find a one-to-one matching between tensor factors of sites $c_i$ and tensor factors of sites $j\in \Bd(F) \cap F$.  Such a matching is possible 
by Eq.~(\ref{dimeq}) and by uniqueness of prime factorization.
We then use the same wire construction described in the above paragraph: for each pair of tensor factors we tensor with additional degrees of freedom on
sites $s_1,s_2,\ldots,s_k$, all with dimension equal to the given prime factor of some $D_i$, and then define a shift that maps an operator on the tensor factor of some ${\mathcal C}_i$ to an operator acting on $s_1$, 
maps an operator on $s_i$ to the corresponding operator on $s_{i+1}$, and maps an operator on $s_k$ to the corresponding operator on a tensor factor of some $j$.

Having defined the action of $\beta$ for operators supported on any one of the Hilbert spaces, ${\mathcal H}_{\overline F},{\mathcal H}_F,{\mathcal H}_A$, we
define $\beta$ for an operator which is a product $O_{\overline F} O_F O_A$ of operators supported on those spaces by
$\beta(O_{\overline F} O_F O_A)=\beta(O_{\overline F}) \beta(O_F) \beta(O_A)$ and we define $\beta$ for
a sum of products by linearity.

This completes the definition of $\beta$.
It remains to verify that $\beta$ has range $O(R)$ as claimed and that $\beta$ indeed is a $*$-algebra automorphism.
The range follows immediately from the construction. 

We now verify that $\beta$ is an automorphism, in particular we show that
for any operators $O,P$ we have $\beta(O) \beta(P)=\beta(OP)$.  
The QCA $\beta$ agrees with the identity on ${\mathcal H}_{\overline F}$, i.e., $\beta$ is equal to the identity automorphism on ${\mathcal H}_{\overline F}$ tensored with some map on ${\mathcal H}_F \otimes {\mathcal H}_A$.  So,
it suffices to consider $O,P$ which are supported on ${\mathcal H}_F \otimes {\mathcal H}_A$.
For $O,P$ both supported on ${\mathcal H}_F$, $\beta(O) \beta(P)=\beta(OP)$
follows from the fact that $\alpha$ is a QCA and $h(\cdot)$ is an isomorphism.
For $O,P$ both supported on ${\mathcal H}_A$, one may verify it from the construction.
For $O$ supported on ${\mathcal H}_A$ and $P$ supported on ${\mathcal H}_F$, one may verify that $[\beta(O),\beta(P)]=0$ since $\beta(P)$ is
supported on ${\mathcal H}(\Int(F)) \otimes {\mathcal H}_A$ and $\beta(O)$ acts trivially on those spaces.
So, for any $O=O_F O_A$ with $O_F,O_A$ supported on ${\mathcal H}_F,{\mathcal H}_A$ respectively, and any $P=P_F P_A$ with
$P_F,P_A$ supported on ${\mathcal H}_F,{\mathcal H}_A$ respectively,
$\beta(O) \beta(P)=\beta(O_F O_A) \beta(P_F P_A)=\beta(O_F) \beta(O_A) \beta(P_F) \beta(P_A)=\beta(O_F) \beta(P_F) \beta(O_A) \beta(P_A)=
\beta(O_F P_F) \beta(O_A P_A)=\beta(O_F P_F O_A P_A)=\beta(O P)$.
\end{proof}
\end{thm}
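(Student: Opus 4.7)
The plan is to exploit the structural theorems already proved about visibly simple, locally factorizable algebras to decompose the boundary of $\alpha(\mathcal{A}(F))$ into a tensor product of simple matrix-algebra pieces, then replace those pieces by ancillae and connect them to the original boundary sites by local ``wires.'' Concretely, I first invoke Theorems \ref{thmfac} and \ref{thmls} to write $\alpha(\mathcal{A}(F)) = \langle \mathcal{A}(\Int(F)), \mathcal{P}(\alpha,F)\rangle$ with $\mathcal{P}(\alpha,F)$ supported on $\Bd(F)$, $O(R)$-visibly simple, and $O(R)$-locally factorizable. Using the bounded-distortion map $f\colon \Bd(F) \to \{1,\dots,J\}$ to coarse-grain $\Bd(F)$ into ``supersites'' arranged on a circle, Theorem \ref{llsllfthm} then provides a decomposition $\mathcal{P}(\alpha,F) = \langle \mathcal{C}_1,\dots,\mathcal{C}_m\rangle$, where each $\mathcal{C}_i$ is simple, supported on a set of diameter $O(R)$, pairwise commuting, with disjoint supports except for nearest-neighbor overlap.

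With $\dim\mathcal{C}_i = D_i^2$, I place an ancilla of dimension $D_i$ on a chosen site $c_i \in \Bd(F)\cap F$ inside the support of $\mathcal{C}_i$. Because $\alpha$ preserves total Hilbert space dimension and $\mathcal{A}(\Int(F))$ is a common tensor factor on both sides, one has $\prod_i D_i = \prod_{j \in \Bd(F)\cap F} d_j$. Fixing isomorphisms $h_i\colon \mathcal{C}_i \xrightarrow{\sim} \mathrm{End}(\mathbb{C}^{D_i})$ on the ancillae and extending to a homomorphism $h$ on the commuting product $\langle \mathcal{C}_i\rangle$ (no obstruction arises since the $\mathcal{C}_i$ are mutually commuting simple factors of $\alpha(\mathcal{A}(F))$), I define $\beta$ on $\mathcal{A}(F)$ by: write $\alpha(O) = \sum_\gamma I_\gamma B_\gamma$ with $I_\gamma \in \mathcal{A}(\Int(F))$ and $B_\gamma \in \langle \mathcal{C}_i\rangle$, and set $\beta(O) = \sum_\gamma I_\gamma\, h(B_\gamma)$. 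On $\mathcal{A}(\Ext(F))$ I declare $\beta = \Id$.

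It remains to define $\beta$ on the ancilla algebra and on $\mathcal{A}(\Bd(F)\cap F)$: here I use ``wires.'' Prime-factoring each $D_i$ and each $d_j$, the dimension identity together with unique factorization lets me pick a bijection between prime-factor tensor components on ancilla sites and prime-factor tensor components on the original boundary sites. For each matched pair I pick a short path along $\Bd(F)$, insert intermediate ancillae of the matching prime dimension, and let $\beta$ act as a sequence of nearest-neighbor swaps along the path; the paths can be routed disjointly within a neighborhood of $\Bd(F)$ of size $O(R)$. This piece of $\beta$ is manifestly an automorphism (a product of local swaps), and it commutes with the previously defined piece because $\beta(\mathcal{A}(F))$ lives in $\mathcal{A}(\Int(F))\otimes \mathcal{A}(\text{ancilla})$ while the wires act only on ancilla$\,\cup\,(\Bd(F)\cap F)$ but trivially on the ancilla tensor factors actually touched by $h$. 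For $O$ supported in $\Int(\Int(F))$, $\alpha(O)$ already lies in $\mathcal{A}(\Int(F))$, so $\beta(O) = \alpha(O)$; and $\beta$ is $\Id$ on $\Ext(F)$ by construction.

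The range of $\beta$ is $O(R)$ with constants depending on $C,C'$ because the $\mathcal{C}_i$ have diameter $O(R)$, the ancilla sites $c_i$ sit within the corresponding $\mathcal{C}_i$-supports, the wires are short paths inside the bounded-distortion neighborhood of $\Bd(F)$, and $\alpha$ itself has range $R$. The step I expect to be the main obstacle is the dimensional bookkeeping in the wire construction: matching the $D_i$ to the $d_j$ is not a one-to-one pairing in general, and I expect to need the prime-factor trick together with a careful argument that the resulting wires can be routed without crossings and without inflating the range. A secondary technical point is verifying that $h$ genuinely extends consistently to a $*$-homomorphism on the full algebra $\langle \mathcal{C}_i\rangle$, which relies on the $\mathcal{C}_i$ being independent simple tensor factors inside the image of $\alpha$ rather than merely commuting subalgebras with potentially shared generators.
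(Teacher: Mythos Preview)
Your proposal is correct and follows essentially the same approach as the paper: invoke Theorems~\ref{thmfac} and~\ref{thmls}, coarse-grain via the bounded-distortion map, apply Theorem~\ref{llsllfthm} to split $\mathcal{P}(\alpha,F)$ into commuting simple factors $\mathcal{C}_i$, attach ancillae of matching dimension, reroute the boundary piece of $\alpha$ onto the ancillae via an isomorphism $h$, and then use ``wires'' with the prime-factorization trick to send the ancillae onto the original boundary sites.

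Two small points of imprecision are worth flagging. First, the wires are \emph{shifts}, not products of swaps; this matters because $\beta$ must be a single QCA, and its range is $O(R)$ because each hop of the shift is at most $R$---the total wire length is irrelevant to the range (and indeed the paper remarks that wires may need to run all the way around $\Bd(F)$). Second, your commutativity justification is slightly off: the wires do act nontrivially on the ancilla factors $c_i$; the correct statement is that the \emph{image} $\beta(\mathcal{A}(F))\subset \mathcal{A}(\Int(F))\otimes\mathcal{A}(\{c_i\})$ and the \emph{image} $\beta(\mathcal{A}(\{c_i\}))\subset\mathcal{A}(\text{wire sites})$ are supported on disjoint tensor factors, which is what guarantees the pieces assemble into a genuine $*$-automorphism.
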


We now give some examples using the graph $H(L_0)$ from before and picking $F$ to be $F(L)$ as above.
\begin{remark}
We emphasize that it might be necessary to connect far separated added sites with wires.  For example, suppose all sites have the same dimension and suppose $\alpha$ acts as the identity everywhere in $S$ except along some line of sites, where this line is the sites with $i=L/2$.  There, $\alpha$ acts as a shift, mapping an operator supported on $(i,j)$ to the corresponding operator supported on $(i,j+1)$.
Along this line, we have a nontrivial GNVW index.
To do the wiring, we must connect some site at the top (i.e., at $(i,L)$ to some site at the bottom, i.e., at $(i,1)$.  This must be done by running the wire all the way around the boundary since we do not let the wires enter the interior.

Indeed, it may be necessary to run many wires.  Suppose that $\alpha$ acts as the shift on all columns, mapping an operator supported on $(i,j)$ to the corresponding operator supported on $(i,j+1)$ for all $i$.  Then, we need to run many wires (a number proportional to $L$ from the top to the bottom.
\end{remark}

\begin{remark}
Fnally, we make an aesthetic remark. As we noted above, it may be necessary to run many wires.  This may seem unaesthetic as the number of sites that we tensor with is proportional to $L^2$, i.e., there are roughly $L$ wires, each requiring roughly $L$ added sites.
Then, locally we may need to add of order $L$ sites even to a small local patch.
If this seems unaesthetic, then one can reduce this number by "running the wires into the bulk": one can move the wire out from $S$, have it move through $\Ext(S)$ in some path, and finally return to $S$.  Since this is purely an aesthetic question, we leave the details of this to the reader.
\end{remark}

\subsection{No Homotopy Invariants}
\label{noho}
In this subsection, we show 
quantum circuit equivalence between various QCAs.  We will often phrase the result as a stable path equivalence, but the given paths will imply a stable equivalence up to quantum circuits.
First we note a general result which holds for an arbitrary graph.

\begin{defin}
Let $S,T$ be any two subsets of the vertex set $V$.  We say that there is a (discrete) deformation from
$S$ to $T$ 
if the following holds for some constants $R,R'$.
There is a map $f(i,t)$ with $i\in S$ and $t$ an integer and with the range of $f$ being vertices of $G$,
such that $f(i,0)=i$, $f(i,T)\in T$ for some integer $T>0$,
${\rm dist}(f(i,t),f(i,t+1))\leq R'$ and for any $i,j$, if ${\rm dist}(i,j)\leq R$ then ${\rm dist}(f(i,t),f(j,t))\leq R'$ for all $t$.

If we wish to make the constants $R,R'$ explicit, we say that there is an $(R,R')$-deformation from $S$ to $T$.
\end{defin}
\begin{lemma}
\label{interiorthm}
Let $\alpha$ be a QCA with range $R$ such that $\alpha$ agrees with $\Id$ on $V\setminus S$ for some set $S$ (i.e., $\alpha$ is nontrivial only on $S$).  
Suppose there is an $(R,R')$-deformation from $S$ to $T$.  Then,
$\alpha$ is stably $3R'$-equivalent to some QCA $\beta$ such that
$\beta$ agrees with $\Id$ on $V \setminus T$.
\begin{proof}
For each site $i$ in $S$, let $i$ have dimension $d_i$.  Tensor with additional degrees of freedom on all sites $f(i,t)$ for $t=1,\ldots,T$ with dimension $d_i$.
Let $(i,t)$ for $t>0$ denote the degree of freedom corresponding to $i$ that we tensor with on site $f(i,t)$ and let $(i,0)$ denote the original degree of freedom on site $i$.
Define $\Swap_{(i,t_1),(i,t_2)}$ to swap degree of freedom $(i,t_1)$ with degree of freedom $(i,t_2)$.  Note that for each site $j$, it is possible that $j$ we tensor with several additional degrees of freedom on $j$ since $j$ may be the image of $f(i,t)$ for several different pairs $i,t$.

\begin{remark}
We slightly overload notation since previously we were using pairs $(i,j)$ to label sites in a two-dimensional lattice and here $(i,t)$ is used differently; no confusion should arise.
\end{remark}

Consider the discrete sequence of QCA
\be
\gamma_t=\Bigl( \prod_{i\in S} \Swap_{(i,t),(i,0)}\Bigr) (\alpha\otimes \Id) \Bigl( \prod_{i \in S} \Swap_{(i,t),(i,0)} \Bigr),
\ee
so that $\gamma_0=\alpha\otimes \Id$ and we define $\beta=\gamma_T$.

From this discrete sequence of $\gamma_t$, we can construct a continuous path of QCA from $\gamma_0$ to $\gamma_T=\beta$ as follows.
We will first construct such a continuous path from $\gamma_{t-1}$ to $\gamma_t$ for each $t=1,\ldots,T$ and then  obtain the path from $\gamma_0$ to
$\gamma_T$ by composing those paths.

Let
$W_1=V_1$ and let $V_t=V_t V_{t-1}^\dagger$.
Then, $W_t \gamma_{t-1} W_t^\dagger=\gamma_t$ for $t\geq 1$.
Each $W_t$ is equal to  $\prod_{i\in S}\Swap_{(i,t),(i,t-1)}$.
Each $\Swap_{(i,t),(i,t-1)}$ acts only on two degrees of freedom which are a distance at most $R'$ apart,
and so $W_t$ is $R'$-path equivalent to $\Id$.
Let
$\delta_{t,s}$ denote the QCA in this path, where $s$ is the path parameter with $\delta_{t,0}=\Id$ and $\delta_{t,s}=W_t$.
Hence, 
the composition of QCA $\delta_{t,s} \circ \gamma_{t-1} \circ \delta_{t,s}$ gives a continuous path from 
from $\gamma_{t-1}$ to $\gamma_t$.
By the assumption that there is an $(R,R')$ deformation from $S$ to $T$, each $\gamma_{t-1}$ has range at most $R'$.
Hence, since $\delta_{t,s}$ and $\gamma_{t-1}$ have range $R'$, the
composition $\delta_{t,s} \circ \gamma_{t-1} \circ \delta_{t,s}$ has range at most $3R'$.
Hence, the result follows.
\end{proof}
\end{lemma}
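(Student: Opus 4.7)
The plan is to realize the stable equivalence by ``pushing'' the action of $\alpha$ from $S$ to $T$ one deformation step at a time, using local swaps along the trajectories $\{f(i,t)\}$.

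First I would stabilize by introducing, for each $i \in S$ and each $t = 1,\ldots,T$, an ancilla on site $f(i,t)$ whose local Hilbert space has the same dimension $d_i$ as the original degree of freedom $(i,0)$ at site $i$. Write $\Sigma_t = \prod_{i \in S} \Swap_{(i,0),(i,t)}$ for the product of swaps between the $t$-th layer of ancillas and the original sites, and set
\[
\gamma_t \;=\; \Sigma_t\,(\alpha \otimes \Id)\,\Sigma_t^{-1},
\]
so that $\gamma_0 = \alpha \otimes \Id$. I would then define $\beta := \gamma_T$. Since $\alpha$ acts trivially off $S$ and the original degrees of freedom on $S$ are now swapped with ancillas at $\{f(i,T)\} \subset T$, the QCA $\beta$ acts as $\Id$ on $V \setminus T$, which is the target conclusion.

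The next step is a range estimate. Each $\gamma_t$ is nontrivial only on the $t$-th layer of ancillas, since every other degree of freedom is either untouched by $\Sigma_t$ or is swapped with an ancilla on which $\alpha \otimes \Id$ acts trivially. The compatibility condition $\mathrm{dist}(f(i,t),f(j,t)) \le R'$ whenever $\mathrm{dist}(i,j) \le R$ then shows that $\gamma_t$ has range at most $R'$: an operator at $f(i,t)$ is pulled back by $\Sigma_t^{-1}$ to site $i$, spread by $\alpha$ over $\{j : \mathrm{dist}(i,j)\le R\}$, and pushed forward by $\Sigma_t$ onto sites $\{f(j,t)\}$, all within $R'$ of $f(i,t)$.

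To obtain an actual \emph{continuous path}, I would interpolate between consecutive $\gamma_{t-1}$ and $\gamma_t$ using the one-step transition operator
\[
W_t \;=\; \Sigma_t\Sigma_{t-1}^{-1} \;=\; \prod_{i \in S} \Swap_{(i,t-1),(i,t)},
\]
a depth-one circuit of disjoint swaps, each acting on a pair of sites at distance $\le R'$. Any such swap is manifestly connected to $\Id$ through a continuous path of unitaries supported on that same pair; call the resulting one-parameter family $\delta_{t,s}$ with $\delta_{t,0}=\Id$ and $\delta_{t,1}=W_t$. Then $s \mapsto \delta_{t,s}\,\gamma_{t-1}\,\delta_{t,s}^{\dagger}$ is a continuous path of QCA from $\gamma_{t-1}$ to $\gamma_t$, each of range at most $R' + R' + R' = 3R'$. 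Concatenating these paths for $t=1,\ldots,T$ gives the desired $3R'$-path equivalence from $\alpha \otimes \Id$ to $\beta \otimes \Id$.

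The only genuine care point, rather than an obstacle, is checking that $\gamma_{t-1}$ really does have range $R'$ (not something larger, like $R'\cdot t$ from the accumulated swaps): this works precisely because $\alpha$ is the identity on all ancillas, so the long conjugating swap $\Sigma_{t-1}$ effectively cancels except on the single active $t{-}1$ layer, and the range on that layer is controlled by the deformation hypothesis applied \emph{at one fixed time}. Everything else is bookkeeping.
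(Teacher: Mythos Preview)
Your proof is correct and follows essentially the same approach as the paper: stabilize with ancilla layers along the deformation trajectories, define $\gamma_t$ by conjugating $\alpha\otimes\Id$ with the layer-$t$ swap, and interpolate between consecutive $\gamma_{t-1},\gamma_t$ using the nearest-neighbor layer swap $\prod_i \Swap_{(i,t-1),(i,t)}$, bounding the range by $3R'$. One small cosmetic slip: for $t\ge 2$ the product $\Sigma_t\Sigma_{t-1}^{-1}$ is a $3$-cycle on $(i,0),(i,t-1),(i,t)$ rather than literally $\prod_i \Swap_{(i,t-1),(i,t)}$, but since $\gamma_{t-1}$ acts trivially on $(i,0)$ both choices conjugate $\gamma_{t-1}$ to $\gamma_t$, so your argument goes through unchanged with $W_t$ defined directly as the nearest-neighbor swap.
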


Now we show an example application of this:
\begin{thm}
\label{applic}
Consider graph $H(L_0)$; recall that this is the graph for a control space which is a $2$-torus.
Let $\alpha$ be a QCA with range $R$.
Then, $\alpha$ is stably $O(R)$-equivalent to a QCA $\gamma$ such that $\gamma$ agrees with the identity everywhere
except the set of sites $(i,j)$ such that $i=1$ or $j=1$ or both; we call this set of sites $Q$.
\begin{proof}
\begin{remark}
We can regard $\beta$ as a QCA that acts trivially everywhere except for two one-dimensional lines.
\end{remark}

Pick some $L$ roughly equal to $L_0/2$.  Then $G$ contains a subgraph $F(L)$ which has a one-dimensional boundary with bounded distortion. Apply theorem \ref{extendtotriv}
to find a QCA $\beta$ which agrees with $\alpha$ on $\Int(\Int(S))$ and $\beta$ agrees with $\Id$ on $\Ext(S)$.

By lemma \ref{interiorthm}, $\beta$ is $O(R)$-stably equivalent to $\Id$ (the set of sites on which $\beta$ acts nontrivially has an $(R,O(R))$  
deformation to a single site) and also so is $\beta^{-1}$.
Hence, $\alpha$ is  $O(R)$-stably equivalent to $\alpha \otimes \beta^{-1}$.
Let $\Swap_S$ be the product over sites in $S$ of a unitary that swaps original and ancilla degrees of freedom; this is a QCA of range $0$.
Since $\Swap_S$ is a product over sites of a unitary that acts only on that site, $\Swap_S$ is $0$-path equivalent to $\Id$.
Let $\Swap_S(t)$ describe such a continuous path, with $\Swap_S(0)=\Id$ and $\Swap_S(t)=\Swap_S$.
Consider the path of QCA
$$\Bigl(\alpha \otimes \Id\Bigr) \circ \Swap_{S}(t)^{-1} \circ \Bigl( \Id \otimes \beta^{-1} \Bigr) \circ \Swap_S(t).$$
This shows that
$\alpha \otimes \beta^{-1}$ is $O(R)$-stably equivalent to
$$\Bigl(\alpha \otimes \Id\Bigr) \circ \Swap_{S} \circ \Bigl( \Id \otimes \beta^{-1} \Bigr) \circ \Swap_S,$$
which agrees with $\Id$ on $\Int(\Int(S))$ and agrees with $\alpha\otimes \Id$ on $\Ext(S)$.

(The trick that we use here, considering a path from $\Swap_S$ to $\Id$ is a slight modification of an unpublished idea of A. Kitaev.  There, a swap QCA was defined which swapped degrees of freedom on {\it all} cites, and a path from that QCA to the identity was used to show that, for any QCA $\alpha$ on any graph, there is a continuous path from $\alpha \otimes \alpha^{-1}$ to $\Id$.  See also \cite{arrighi2011unitarity}.)

Let $T=V\setminus \Int(\Int(S))$.
Then, note that there is an $(R,O(R))$-deformation from $T$ to $Q$.
Hence, the result follows by theorem \ref{interiorthm}.
\end{proof}
\end{thm}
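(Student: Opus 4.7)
The plan is to combine the two-dimensional blending result (Theorem \ref{extendtotriv}) with the discrete deformation lemma (Lemma \ref{interiorthm}): first confine the non-trivial action of $\alpha$ to a topological disc on the torus by blending to $\Id$ outside it, and then collapse the resulting support onto the two coordinate circles $Q$.

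First, I would pick $L$ roughly equal to $L_0/2$ so that $F(L) \subset H(L_0)$ is a large square subregion whose boundary has bounded distortion. Applying Theorem \ref{extendtotriv} to $\alpha$ and $F(L)$ yields, after tensoring with ancillae on $\Bd(F(L)) \cap F(L)$, a QCA $\beta$ of range $O(R)$ such that $\beta$ agrees with $\alpha \otimes \Id$ on $\Int(\Int(F(L)))$ and with $\Id$ on $\Ext(F(L))$. In particular $\beta$, and hence $\beta^{-1}$, is non-trivial only inside the disc $F(L)$.

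The next step is to cancel the non-trivial action of $\alpha$ on the deep interior by a Kitaev-style swap trick. Consider $\alpha \otimes \beta^{-1}$: on $\Int(\Int(F(L)))$ it acts as $\alpha \otimes \alpha^{-1}$, which becomes the identity after swapping the original and ancilla factors site by site, while on $\Ext(F(L))$ it acts as $\alpha \otimes \Id$. Since the on-site product swap over the interior sites is $0$-path equivalent to $\Id$ via a continuous family of on-site unitaries $\Swap(t)$, conjugating $\Id \otimes \beta^{-1}$ by $\Swap(t)$ and composing with $\alpha \otimes \Id$ produces a continuous path of QCAs of range $O(R)$ ending at a QCA that is trivial on $\Int(\Int(F(L)))$ and equal to $\alpha \otimes \Id$ on $\Ext(F(L))$. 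Hence $\alpha$ is stably $O(R)$-equivalent to a QCA whose support lies in $T = V \setminus \Int(\Int(F(L)))$, i.e., a thickened L-shape on the torus.

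For the last step, $T$ is a collar of the L-shaped complement of the inner disc and deformation-retracts onto $Q$. I would construct an $(R,O(R))$-discrete deformation $f(i,t)$ by pushing each site of the thick L-strip one step at a time toward the nearest of the two distinguished lines $i=1$ or $j=1$; this verifies the hypothesis of Lemma \ref{interiorthm}, which then produces a QCA $\gamma$ of range $O(R)$, stably equivalent to $\alpha$, and trivial outside $Q$.

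The main obstacle I expect is verifying the bounded-distortion bound for $f$: the explicit collapse of the L-strip onto two transversely crossing loops must be organized so that $R$-neighbors remain $O(R)$-neighbors throughout, with the distortion constants independent of $L_0$. This is a combinatorial exercise that turns on the bookkeeping of sites near the corner where the two arms of $Q$ meet; it is the only place where the specific toroidal geometry enters, the rest of the argument being a formal consequence of the preceding blending and deformation results.
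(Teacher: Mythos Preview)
Your approach is essentially the same as the paper's: blend $\alpha$ to the identity inside a disc via Theorem~\ref{extendtotriv}, use the Kitaev swap trick to produce a QCA trivial on the deep interior, and then push the remaining support onto $Q$ via Lemma~\ref{interiorthm}.

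There is one genuine gap. Your swap path
\[
(\alpha\otimes\Id)\circ\Swap(t)^{-1}\circ(\Id\otimes\beta^{-1})\circ\Swap(t)
\]
starts at $t=0$ with the QCA $\alpha\otimes\beta^{-1}$, not with $\alpha\otimes\Id$. So what you have actually shown is that $\alpha\otimes\beta^{-1}$ is stably $O(R)$-equivalent to a QCA supported on $T$; you have not yet connected $\alpha\otimes\Id$ to $\alpha\otimes\beta^{-1}$. The paper closes this gap with one extra observation: since $\beta$ (and hence $\beta^{-1}$) is nontrivial only on the disc $F(L)$, and that disc admits an $(R,O(R))$-deformation to a single site, Lemma~\ref{interiorthm} gives that $\beta^{-1}$ is stably $O(R)$-equivalent to $\Id$. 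Composing with $\alpha$ on the first tensor factor then yields $\alpha\otimes\Id$ stably equivalent to $\alpha\otimes\beta^{-1}$, after which your swap argument goes through verbatim. You should insert this step before invoking the swap path.
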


Theorem \ref{applic} is just a special case of a general result for any two dimensional manifold.  
\begin{thm}
\label{twodfull}
Consider a (sequence of) metrics on some manifold with bounded local geometry.  Given any (sequence of) QCAs on such a manifold, with range $R=O(1)$, each QCA is stably $O(R)$-equivalent to a shift QCA acting on degrees of freedom on a set of cycles which form a basis for the first homology of the manifold.
\end{thm}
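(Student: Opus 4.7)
The plan is to generalize Theorem \ref{applic} by replacing the ad hoc square/line decomposition of the torus with a genuine handle decomposition of the manifold. First I would fix a cell decomposition (or equivalently a handle decomposition) of $M$ chosen so that the 1-skeleton is a graph $G \subset M$ whose fundamental cycles span $H_1(M;\Z)$ (or $H_1^{\mathrm{lf}}$ in the noncompact case), and so that the complement $M \setminus G$ is a disjoint union of open 2-cells. The bounded local geometry hypothesis lets us thicken this decomposition so that every 2-cell $D_i$, regarded as a vertex subset of the underlying graph, has a one-dimensional boundary with bounded distortion in the sense of Section \ref{nolo}. For concreteness one may imagine, in the closed orientable genus-$g$ case, the standard $4g$-gon representation: a single 0-cell, $2g$ edges carrying a symplectic basis of $H_1$, and a single 2-cell.

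Next I would reduce the QCA to a neighborhood of $G$ exactly as in the proof of Theorem \ref{applic}. For each 2-cell $D_i$, apply Theorem \ref{extendtotriv} to produce a QCA $\beta_i$ that agrees with $\alpha$ on $\Int(\Int(D_i))$ and with $\Id$ outside $D_i$. Because $D_i$ is contractible, its support set admits a discrete $(R,O(R))$-deformation to a single site, so Lemma \ref{interiorthm} shows $\beta_i$ is stably $O(R)$-equivalent to $\Id$. The swap trick used in Theorem \ref{applic}, applied to $\alpha \otimes \beta_i^{-1}$, then lets us replace $\alpha$ by a stably $O(R)$-equivalent QCA that acts as $\Id$ inside $D_i$. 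Iterating over all 2-cells (in parallel, grouped by a bounded-chromatic coloring of the 2-cells so that the swap tricks do not interfere) yields a QCA $\alpha'$ supported only on a thickened version of $G$.

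I would then trivialize $\alpha'$ near the 0-cells and collapse bands to cycles. Applying Theorem \ref{extendtotriv} to a small disk around each 0-cell (whose boundary in $G$ is a finite disjoint union of arcs, hence one-dimensional with bounded distortion) gives a QCA $\alpha''$ supported only on a disjoint union of annular neighborhoods of the open 1-cells. Each such annular neighborhood $(R,O(R))$-deforms onto the core edge itself, so Lemma \ref{interiorthm} produces a stably equivalent QCA $\alpha'''$ supported on a graph $G' \subset G$ that is a disjoint union of circles, one per generator of a chosen basis of $H_1(M)$ (edges that are null-homologous after the vertex surgery, or that close up into contractible loops, are deformed to $\Id$ by another application of Lemma \ref{interiorthm}). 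On each such circle, $\alpha'''$ restricts to a one-dimensional QCA; by the GNVW classification of Section \ref{GNVWreview}, it is stably path equivalent to a shift QCA whose flux equals its GNVW index, proving the theorem.

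The main obstacle is synchronizing the three successive applications of blending and deformation so that the range remains $O(R)$ throughout: the vertex disk in step three must simultaneously (a) contain a neighborhood of every arc endpoint left over from step two, (b) still have a bounded-distortion one-dimensional boundary, and (c) be disjoint enough from neighboring vertex disks that the $O(R)$-stable equivalences in different regions compose cleanly. This is essentially bookkeeping but requires choosing the handle decomposition with care and verifying that the constants in Theorem \ref{extendtotriv} and Lemma \ref{interiorthm} depend only on the bounded local geometry of $M$ and not on its global topology.
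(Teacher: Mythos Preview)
Your first two steps---fixing a cell decomposition and using Theorem \ref{extendtotriv} together with the swap trick and Lemma \ref{interiorthm} to push the QCA onto a thickened $1$-skeleton---are essentially what the paper does, and they are fine.

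Step three, however, has a real gap. After you apply Theorem \ref{extendtotriv} and the swap trick to a disk $D$ around a $0$-cell, the resulting QCA $\alpha''$ is \emph{not} supported on a disjoint union of neighborhoods of the open $1$-cells. The construction in Theorem \ref{extendtotriv} introduces wires that run along $\partial D$, and these wires carry exactly the GNVW flux that was entering the vertex along the incident edges; they therefore reconnect the edge-pieces rather than separating them. More bluntly: take the single-vertex $4g$-gon model and a shift QCA on one edge-loop $a$ with flux $\log d$. If your step truly produced a QCA supported on the open $1$-cells acting as $\Id$ near their endpoints, then on the interval $a\setminus\{v\}$ you would have a one-dimensional QCA with trivial action near both ends and hence GNVW index $0$, contradicting the fact that the flux was $\log d$. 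So ``annular neighborhoods of the open $1$-cells'' is not what you get, and the subsequent deformation to a ``disjoint union of circles, one per generator'' cannot proceed as written.

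The paper sidesteps this entirely. After reducing to the $1$-skeleton it does not attempt to physically separate the graph into disjoint cycles. Instead it reads off the GNVW index on each (long) $1$-cell, observes that these assemble into a $1$-cycle $\vec v$ (the ``flow''), and then shows that any two QCAs whose flows are homologous are stably $O(R)$-equivalent: for $\vec v - \vec v' = \partial \vec w$, one runs a local Eilenberg swindle around each $2$-cell $F_a$ (whose diameter is taken to be $O(R)$) to create the boundary flow $\partial(\vec w_a)$ by a depth-$O(1)$, range-$O(R)$ circuit. Since a shift QCA on a chosen basis of cycles realizes any prescribed homology class of the flow, this finishes the argument without ever resolving the vertices. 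If you want to salvage your geometric picture, you would need to keep track of the boundary wires from Theorem \ref{extendtotriv} and show explicitly how they reroute the flux onto representative cycles; but at that point you are effectively reproving the homology statement the paper uses directly.
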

This theorem should be compared to theorem \ref{thmcohomology} which shows that a QCA $\alpha$ determines a class $[\alpha]^\wedge \in H_1^\textup{lf}(X;M)/\textup{Torsion}$.  Theorem \ref{twodfull} implies that (if torsion is not present), all QCA in the same class are stably $O(R)$-equivalent.

\begin{proof}
Give a cell decomposition of the manifold.  For some $R'=O(R)$,
the QCA is stably $R'$-equivalent to a QCA that agrees with the identity on degrees of freedom on the interior of the $2$-cells.
Take the length of the $1$-cells to be much larger than $R'$ but still $O(R)$.  Also take the distance between any two $1$-cells which are not attached to the same $0$-cell to be much larger than $R'$ but still $O(R)$.  Take the diameter of the $2$-cells to be $O(R)$.
The resulting QCA is a shift QCA on the degrees of freedom on the $1$-cells up to a quantum circuit, since the length of the $1$-cells is large compared to $R,R'$.

On each $1$-cell, we can define the GNVW index.   Since the range of the QCA is short compared to the length of the $1$-cell is possible to define a GNVW index for each edge independently.
We pick an arbitrary orientation on each edge to define the sign of this index.
Define a $1$-chain with coefficient on each edge equal to this GNVW index.
The coefficients are logarithms of positive rationals, with addition as the group multiplication rule.
One may verify that the boundary of this $1$-chain is equal to zero, i.e., it is a $1$-cycle.
Let us call this $1$-cycle the "flow" and write it $\vec v$.

We claim that any two flows which represent the same homology class define QCA which are equivalent up to a 
a quantum circuit of depth $O(1)$ and range $O(R)$.
To see this, note that if $\vec v$ is homologous to $\vec v'$, then $\vec v=\vec v'+\partial \vec w$ for some $2$-chain $\vec w$.
Call the $2$-cells $F_a$, labelled by some index $a$.
Given this $2$-chain $\vec w$, for each $a$,
we tensor with ancilla degrees on each site on the edges around the boundary of the $2$-cell $F_a$ and find some quantum circuit that gives a flow on the edges of the boundary equal to the gradient of the $\vec w_a$.
This can be done by a quantum circuit of depth $O(1)$ and range $O(R)$ because we assume that each $2$-cell has diameter $O(R)$.
We can do these quantum circuits for each square in parallel; call the resulting circuit $\delta$.
This gives the equivalence between two QCA with homologous flows.
The Eilenberg swindle that we discussed in the introduction is one specific realization of this construction.
\end{proof}

\section{Real Projective Spaces}
\label{RPn}
We finally consider QCA on $X=RP^n$ for various $n$.  We have $H_1(X;\Z)=\Z_2$ for $n\geq 2$.  Hence, the index of section \ref{topologysection} is not useful in classifying such QCA as the first homology group modulo torsion is trivial.

This motives the following question: consider an $RP^1$ inside $RP^n$.  Consider a system with degrees of freedom (for example, qubits, but any other degree of freedom may be consider) on this $RP^1$, and consider a QCA that implements a shift of such degrees of freedom.  Can such a QCA be realized a quantum circuit?  To be explicit about what we mean by a shift QCA, since $RP^1$ is diffeomorphic to $S^1$, we consider equally spaced degrees of freedom arranged on a circle, and imagine a shift QCA acting on those degrees of freedom.

Note first that with qubit degrees of freedom, any two such QCAs whose fluxes differ by an {\it even} multiple of $\log(2)$ are related by a quantum circuit.  More generally, any two QCAs whose fluxes differ by {\it even} multiples of $\log(d)$ (with qudit degrees of freedom) are related by a quantum circuit.  For $RP^2$, this follows because the $RP^1$ inside $RP^n$ has $D^2$ attached to it by a two-to-one covering map.  One can construct a quantum circuit on $D^2$ which generates a flux on the boundary which is an arbitrary multiple of $\log(d)$.  The two-to-one covering map doubles this flux.  For higher $n$, one considers the $RP^2$ inside $RP^n$.

For $RP^2$, the results of section \ref{algebraicsection} show indeed that such a shift QCA can {\it not} be realized by a finite depth quantum circuit.
However, we can also give a more direct argument:
\begin{lemma}
\label{rp2}
The shift QCA on $RP^1$ inside $RP^2$ cannot be realized by a fdqc.
\begin{proof}
Suppose there were such a fdqc.
Take a sufficiently large radius tubular neighborhood ( a Mobius band) around $RP^1$ inside $RP^2$ and run the fdqc on this neighborhood, i.e., include only the gates of the circuit supported on this neighborhood.
It will produce single site hoping on $RP^1$ and an "error" QCA on near the boundary of the Mobius band. The GNVW index of the error term is an integer and when we dimensionally reduce everything to $RP^1$ that integer gets multiplied by two. The result is a flux which is an odd multiple of $\log(d)$ on a circle created by a fdqc, giving a contradiction.
\end{proof}
\end{lemma}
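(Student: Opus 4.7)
The plan is to argue by contradiction. Assume a finite depth quantum circuit $C$ realizes the shift QCA on $RP^1\subset RP^2$ with GNVW flux $k\log d$ for some odd integer $k$; the strategy is to restrict $C$ to a Mobius band tubular neighborhood of $RP^1$, dimensionally reduce to $RP^1$, and extract a parity contradiction. Concretely, I would first fix a tubular neighborhood $M$ of $RP^1$ in $RP^2$ whose thickness vastly exceeds $(\text{depth of }C)\times(\text{range of }C)$; since the normal bundle of $RP^1$ in $RP^2$ is nontrivial, $M$ is a Mobius band. Letting $C_M$ denote the subcircuit of $C$ that retains only gates entirely supported in $M$, we obtain a fdqc on $M$; because $M$ is thick, no gates near the core were dropped, so $C_M$ still performs the shift on the core with flux $k\log d$. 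The only new effect is an ``error'' QCA $E$ in a thin collar of $\partial M$, produced by the discarded gates. As $\partial M$ is a single embedded circle, and the collar is thin compared to its length, the 1D GNVW theory (Theorem \ref{graphthm}) gives $E$ a well-defined flux $m\log d$ for some integer $m$.

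Next I would dimensionally reduce $M$ to $RP^1$ by aggregating sites in each transverse slice. The reduced automorphism $C_M^{\mathrm{red}}$ remains a fdqc on $RP^1$ since aggregation preserves the fdqc structure, so its 1D GNVW flux is zero. On the other hand, the reduced flux decomposes as a sum of two contributions: the core $RP^1$ projects identically, contributing $k\log d$, while the boundary $\partial M$ projects onto $RP^1$ via a 2-to-1 covering, so each aggregated $RP^1$-site absorbs both preimage sheets of $\partial M$ and the error's flux is counted twice, contributing $2m\log d$. Equating the two expressions gives $(k+2m)\log d=0$, forcing $k=-2m$ to be even, contradicting the hypothesis that $k$ is odd.

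The principal obstacle is justifying the doubling in the boundary's contribution to the reduced flux. I would make this precise by a local computation: over any short arc of $RP^1$, the preimage in $\partial M$ is two disjoint parallel arcs, and the restriction of $E$ becomes, after aggregation, a pair of commuting QCAs acting on two disjoint tensor factors of each aggregated $RP^1$-site, each of flux $m\log d$; multiplicativity of the GNVW index under tensor product then yields total flux $2m\log d$. The remaining ingredients---that the subcircuit on $M$ is a fdqc, that the error is confined to a collar of $\partial M$ of width comparable to the circuit's spatial footprint, and that 1D fdqcs have vanishing GNVW flux---are standard consequences of the theory developed earlier in the paper.
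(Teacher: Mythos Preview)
Your proposal is correct and follows essentially the same argument as the paper: restrict the putative fdqc to a M\"obius band neighborhood, identify the error QCA near $\partial M$, dimensionally reduce to $RP^1$, and use that $\partial M \to RP^1$ is a $2$-to-$1$ cover to obtain the parity contradiction $k + 2m = 0$. Your treatment is in fact more explicit than the paper's in justifying the doubling of the boundary flux via the local two-sheet picture, which is exactly the point the paper compresses into the phrase ``that integer gets multiplied by two.''
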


Now consider $RP^3$.  We will also be able to show that
\begin{lemma}
\label{rp3}
The shift QCA on $RP^1$ inside $RP^3$ cannot be realized by a fdqc.
\begin{proof}
Let $c$ be such a circuit.  We will show a contradiction.  Delete all gates in $c$ lying in a ball $B^3\subset RP^3$, with the ball disjoint from $RP^1$, calling the result $c^-$.

The circuit $c^-$ products a shift on $RP^1$.  The circuit $c^-$ has some uncontrolled behavior (it is still local, but we do not know what it is) near the $S^2=\partial B^3$.  We refer to such uncontrolled behavior as "garbage".
Finally, $c^-$ acts as the identity on all sites in $RP^3\setminus B^3$ which are sufficiently far from $\partial B^3$ and are not in $RP^1$.  

Let $E$ be the submanifold of $RP^3\setminus B^3$ consisting of sites which are not near $S^2=\partial B^3$, so that on $E$, the circuit $c^-$ acts as the identity on most sites except as the shift on those sites in $RP^1$.  $E$ is a twisted interval bundle over $RP^2$ with projection $\pi:E\rightarrow RP^2$ and $\partial E=S^2$.

The circuit $c^-$ defines a QCA $U(c^-)$ near $E$.  So, $U(c^-)$ can be decomposed as
\be
U(c^-)=g \otimes \Id \otimes h,
\ee
where $g$ is the "garbage" near $S^2$, $h$ is the shift on $RP^1$ and $\Id$ is the identity on remaining sites in $E$.

$g$ acts on a collar $S^2 \otimes [0,1]$ which can be dimensionally reduced to $S^2$.  By the results of section \ref{algebraicsection}, there exists a fdqc $d$ which realizes $g^{-1}$.
So, the concatenation $d \circ c^-$ is a fdqc on $E$ producing the shift $h$.  However, the projection $\pi$  dimensionally reduces $d \circ c^-$ to a fdqc on $RP^2$ producing the shift on $RP^1$, contradicting lemma \ref{rp2}.
\end{proof}
\end{lemma}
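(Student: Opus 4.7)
The plan is to reduce to Lemma \ref{rp2} by excising a ball from $RP^3$ and projecting the resulting circuit onto a copy of $RP^2$. Suppose for contradiction that $c$ is a fdqc of bounded depth and range realizing the shift on $RP^1 \subset RP^3$. Choose a small 3-ball $B^3 \subset RP^3$ disjoint from $RP^1$ and from a fixed neighborhood $\mathcal{N}$ of $RP^1$. Let $c^-$ be the circuit obtained by discarding all gates of $c$ whose support meets $B^3$. Then $c^-$ acts as the shift on operators supported in $\mathcal{N}$ (since the gates near $RP^1$ are untouched), acts as $\Id$ on sites far from both $RP^1$ and $\partial B^3$, and acts as some uncontrolled local automorphism -- call it $\gamma$ -- on sites within range $O(R)$ of $\partial B^3 \cong S^2$.

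The next step is to eliminate $\gamma$. Interpreting $\gamma$ as the germ of a QCA on a collar $S^2 \times (-\epsilon,\epsilon)$ and dimensionally reducing to $S^2$, we obtain a QCA on $S^2$. Since $H_1(S^2;M)=0$ and $S^2$ is simply connected, the full classification from section \ref{algebraicsection} (in particular theorem \ref{twodfull} together with theorem \ref{extendtotriv}) implies that every QCA on $S^2$ is stably equivalent to $\Id$ via a fdqc of depth and range $O(R)$. Let $d$ be such a fdqc inverting $\gamma$; then $d\circ c^-$ is a bona fide fdqc supported on $RP^3\setminus B^3$ which implements the shift on $RP^1$ and acts as the identity on every other site.

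The third step is dimensional reduction. The complement $RP^3 \setminus B^3$ deformation retracts onto $RP^2\subset RP^3$ (it is the mapping cylinder of the double cover $S^2 \to RP^2$ arising from the standard CW decomposition of $RP^3$), and the retraction $\pi: RP^3\setminus B^3 \to RP^2$ has fibers of bounded diameter. Aggregating the degrees of freedom on each fiber into a single super-site on $RP^2$ converts $d\circ c^-$ into a fdqc $\bar c$ on $RP^2$ of depth and range $O(R)$. Since $RP^1$ lies in the image of $\pi$ (it is the $1$-skeleton of $RP^2$) and no fiber of $\pi$ crosses $RP^1$, the action of $\bar c$ on the sites of $RP^1$ is again the shift. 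This contradicts Lemma \ref{rp2}, completing the proof.

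The main obstacle is the second step: verifying that the ragged germ $\gamma$ on a collar of $\partial B^3$ can be cleaned up by a genuine fdqc on $S^2$. One must check that $\gamma$ is close enough to a QCA on $S^2$ (after suitable extension along the lines of theorem \ref{extend}), and then invoke the 2-dimensional classification to produce the fdqc $d$; otherwise the ``garbage'' near $\partial B^3$ might itself smuggle in hidden flux that invalidates the final dimensional reduction.
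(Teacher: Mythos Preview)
Your proposal is correct and follows essentially the same route as the paper: excise a ball $B^3$ disjoint from $RP^1$, observe that the truncated circuit produces the shift on $RP^1$ plus garbage on a collar of $S^2=\partial B^3$, invoke the two-dimensional classification on $S^2$ to cancel the garbage by a fdqc, and then dimensionally reduce the resulting fdqc on $RP^3\setminus B^3$ along the $I$-bundle projection to $RP^2$, contradicting Lemma~\ref{rp2}. The paper phrases the complement as a twisted interval bundle $E\to RP^2$ rather than a mapping cylinder, but this is the same object; your closing caveat about needing to promote the ragged germ $\gamma$ to a genuine two-dimensional QCA before applying the classification is exactly the point the paper glosses over with the phrase ``by the results of section~\ref{algebraicsection}.''
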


It remains open whether or not the shift QCA on $RP^1$ inside $RP^n$ for $n>3$ can be realized by a fdqc.
We remark that there is an unpublished result of A. Kitaev that any QCA on $d$ dimensions can be realized as the boundary of a fdqc in $d+1$ dimensions.  Precisely, let $\alpha$ be any QCA on any control space $X$.  The product $\alpha \otimes \alpha^{-1}$ can be realized by a fdqc.  Hence, $(\alpha \otimes \alpha^{-1}) \otimes (\alpha \otimes \alpha^{-1}) \otimes \ldots$ can be realized by a fdqc on $X \times \N$, where $\N=\{0,1,2,\ldots\}$  Follow this fdqc by $\Id \otimes (\alpha^{-1} \otimes \alpha) \otimes (\alpha^{-1} \otimes \alpha) \otimes \ldots$, which can also be realized by a fdqc.  The result is $\alpha \otimes \Id \otimes \Id \ldots$, realizing the QCA $\alpha$ on the boundary of control space $X \times \N$.
Similarly, we can realize an arbitrary QCA $\alpha$ on $S^n$ as the boundary of a quantum circuit on $B^{n+1}$: use a similar construction to
realize$\alpha\otimes \Id \otimes \Id \ldots \otimes \Id \otimes \Id \otimes \alpha^{-1}$ on $S^n \times \{0,1,\ldots,L\}$ for some odd $L>0$.  Then, let $S^n\times \{0\}$ be on the boundary of $B^{n+1}$, let $S^n\times \{1\}$ be a shell slightly inside the boundary, let $S^n\times \{2\}$ be a shell slightly further inside, and so on, with $S^n \times \{L\}$ being a single point in the center of $B^{n+1}$ so that the QCA $\alpha^{-1}$ on $S^n\times \{L\}$ can be realized by a fdqc (indeed by a single gate).
So, with this result, one can realize arbitrary $n-1$ dimensional QCA on the boundary of a $B^n$ inside $RP^n$.  Since for $n\geq 3$, the $n-1$ dimensional QCA might not itself be a quantum circuit in $(n-1)$-dimensions\cite{FHH}, this prevents direct application of the
proof of lemma \ref{rp3} to the cases $n>3$.

\bibliography{qcac-ref}
\end{document}